\newcommand{\td}{\text{d}}
\newcommand{\ti}{\text{i}}
\newcommand{\calC}{\mathcal{C}}
\newcommand{\supp}{\text{supp}}
\newtheorem{defn}{Definition}
\newtheorem{rmk}{Remark}
\newtheorem{lem}{Lemma}
\newtheorem{col}{Corollary}
\newtheorem{claim}{Claim}
\newtheorem{thm}{Theorem}
\newtheorem{info}{Informal Theorem}
\title{Wilson Loop Expectations for Non-Abelian Gauge Fields Coupled to a Higgs Boson at Low and High Disorder}
\author{Arka Adhikari \\ Stanford University\\ arkaa@stanford.edu }
\begin{document}
\maketitle
\begin{abstract}
    We consider computations of Wilson loop expectations to leading order at large  $\beta$ in the case where a non-abelian gauge field interacts with a Higgs boson. By identifying the main order contributions from minimal vortices, we can express the Wilson loop expectations via an explicit Poisson random variable. This paper treats multiple cases of interests, including the Higgs boson at low and high disorder, and finds efficient polymer expansion like computations for each of these regimes.  
\end{abstract}

\tableofcontents
\section{Introduction}
\subsection{Background and History}

Quantum field theory and the Standard Model are one of the greatest successes of modern physics; they are able to compute the behavior of the smallest particles to remarkable accuracy. However, these physical theories have yet to be given a rigorous formulation due to multiple difficulties in trying to define the Hamiltonians used on a continuous space. Glimm and Jaffe \cite{GJ1987} attempted to define quantum field theories by first defining a Euclidean lattice gauge theory and then applying a Wick rotation to the proper complex space. Seiler \cite{SEI1982} pointed out that there were issues when trying to follow this path in general; Seiler instead proposed that the basic object one should study are appropriate random functions on an appropriate space of closed curves.

One promising strategy to do this was via the method of lattice gauge theories.
Inspired by ideas from statistical physics, Wilson \cite{Wilson} proposed lattice gauge theories as a means of computing some quantities of interest in lattice gauge theory; more specifically, he wanted to explain quark confinement. Lattice gauge theories are essentially quantum field theories defined on a discrete lattice; on this discrete space, the Hamiltonians can be well-defined. The ultimate hope is that one could take the limit of quantities defined on the lattice as the lattice size of the lattice goes to $\infty$ and the lattice spacing goes to $0$ to define a continuous theory.

This approach is promising, but yet there is still no proof that in the interesting physical dimension of $4$ there is a way to take the lattice spacing to $0$ and still obtain obtain a meaningful probability distribution in the end. However, it is still a very interesting question to determine whether there are important physical quantities of interest such that there is a meaningful limit when computed on lattice gauge theories as the size of the lattice goes to $\infty$. The most important quantity of interest are Wilson loop observables. These Wilson loop expectation will be formally defined at the end of  subsection \ref{subsec:Prelim}, but we can describe the physical meaning of these values here. In the Standard Model, many larger subatomic particles, called baryons, are theorized to be composed of smaller constituents called quarks. An unusual fact about quarks is that individual quarks are not found alone in nature. It is argued that in real world scenarios, the quarks are tightly bound to each other. Wilson in \cite{Wilson} argues that if Wilson loop expectations satisfy appropriate decay conditions, namely the Area Law, then this would be sufficient to establish quark binding.

Though the most physically relevant case of lattice gauge theories occur when studying the group $SU(4)$, there is substantial literature in both the math and physics literature studying lattice gauge theories on finite groups \cite{AF1984,Borgs1984,CJR1979,LMR1989,MP1979,PolandMan,WEG1971}.
In more recent history, there were multiple works computing the values of Wilson loop expectation for various discrete groups. In particular the papers \cite{Ch2019} and \cite{SC20} compute Wilson loop expectations in the case of pure gauge field. The paper \cite{Garbon} computes Wilson loop expectations for a pure $U(1)$ gauge group. However, in the Standard Model, it is expected that the gauge field interacts with multiple other particles of interest; the interaction with other particles makes the analysis substantially more completed. This model considers the computation of Wilson loop expectations when our Hamiltonian includes an interaction with a Higgs boson. Under certain conditions, we would expect that the analysis in this work could be adapted to the case of multiparticle interactions without too much difficulty. We remark here that the work \cite{Viklund} treats the problem of computing Wilson loop expectations with a Higgs boson, but only in the case that the gauge group is abelian and there is low disorder in the Higgs field. This work substantially generalizes the analysis to non-abelian groups and to low disorder as well, amongst other generalizations. In the proceeding subsections, we give further mathematical background on the problem of computing Wilson loop expectations as well as our main results.
\subsection{Organization of the Paper}
 In this short subsection, we describe the organization of the paper.
 In the rest of the introduction, we present the notation we will use as well as further details about the problem of computing Wilson loop expectations on models containing a Higgs boson. We discuss some of the history behind the problem and introduce our two major new regimes, the low disorder and the high disorder regime.
 
 These two regimes have different polymer expansions an each have their own section. Section 2 analyzes the low disorder regime with a toy model, where $G$
 and $H$ are both abelian $Z_2$ groups. Within this section, we introduce the model in Subsection 2.1, introduce our polymer expansion in subsection 2.2, perform our identification of the main order contributions in Subsection 2.3, and finally compare our model to a computation involving a Poisson random variable in Subsection 2.4.

 Wondrously, there are no substantial differences in the low disorder regime between the abelian and non-abelian cases. Thus, Section 3 has the simultaneous job 
of introducing the notation of the non-abelian Higgs boson model in subsection 3.1 and perform the brief modifications necessary to analyze the low disorder regime in Subsection 3.2.

In Section 4, we give the more involved analysis of the lower disorder regime. The first two Subsections 4.1 and 4.2 introduce elements of our polymer like expansion. Subsection 4.3 identifies the main order contribution and Subsection 4.4 performs a comparison to a Poisson random variable.

In Section 5, we perform a far more delicate analysis of the error terms in Section 4 involving particular properties of the minimal vortices that form the crux of our main-order contribution in Section 4. There are delicate decorrelation estimates established in Subsection 5.1 which are helpful in improving the error term and then improving the approximation by Poisson random variables in Subsection 5.2. 

\subsection{Preliminary Notation and Discussion} \label{subsec:Prelim}

Lattice gauge theories, as their name implies, are studied in some lattice $\Lambda_N:=[-N,N]^d$ which is a sublattice of $\mathbb{Z}^d$. As a graph, the structure is inherited as a subgraph of $\mathbb{Z}^d$.  We let $V_N$ denote the set of vertices of $\Lambda_N$ and we let $E_N$ be the set of oriented edges of $\Lambda_N$; the orientation distinguishes the edge $e=(x,y)$ from $-e=(y,x)$. Also important in lattice gauge theories are the set of plaquettes $P_N$. If we let $x_1,x_2,x_3,x_4$ be a set of four vertices such that $x_{i+1}$ is adjacent to $x_i$ (with $x_5 = x_1$), then the square bounded by these four vertices will be called a plaquette. Plaquettes have a natural orientated inherited from the order in which we traverse the boundary vertices. (Thus, if we let $p$ be the plaquette with boundary vertices traversed in the order $x_1,x_2,x_3$ and then $x_4$, then $-p$ is the plaquette with boundary vertices traversed in the order $x_4,x_3,x_2$ and then $x_1$.)

A gauge configuration $\sigma$ with gauge group $G$ on a lattice, $\Lambda_N \subset \mathbb{Z}^d$ is an assignment $\sigma: E_N \to G$ such that $\sigma_{-e} = (\sigma_{e})^{-1}$.  The gauge group $G$ also comes with a unitary representation $\rho: G \to U_{D}$ to the appropriate space of $D$ by $D$ unitary matrices.

A Higgs boson configuration is the assignment $\phi: V_N \to \mathbb{C}$ of a complex valued function at each vertex. Later in this introduction, we will provide some restrictions on the values that $\phi$ could take. Associated to the Higgs boson is a covariant derivative that mixes the action of Higgs boson and the gauge field. Let $e$ be the edge that connects the vertex $x$ to the vertex $y$. Then, we have,

\begin{equation}
    D_e \phi =  \phi_x \text{Tr}[\rho(\sigma_{e})]  - \phi_y.
\end{equation}

With this, we have the following interaction term between the Higgs boson and the gauge field,
\begin{equation}
    S^{D}(\sigma,\phi):= \frac{1}{2} \sum_{e \in E_N} |D_{e} \phi|^2 = 2d \sum_{x \in V_N} |\phi_x|^2 - \sum_{e=(x,y) \in E_N} \phi_x \text{
    Tr}[\rho(\sigma_e)] \overline{\phi}_y.
\end{equation}

A gauge field configuration also has a self-energy, corresponding to the curvature form, or Wilson action functional. Associated to any oriented plaquette $p$, whose boundary vertices are $x_1,x_2,x_3$ and $x_4$ ( with edges $e_i=(x_i,x_{i+1})$), we can define the action $\psi_p$ along the plaquette to be $\text{Tr}[\rho(\sigma_{e_1}) \rho(\sigma_{e_2}) \rho(\sigma_{e_3}) \rho(\sigma_{e_4})]$. Note that due to the cyclic property of the trace, it is indeed the case that the value of $\psi_p$ does not depend in particular on which vertex we start the ordering of the boundary vertices( though the actions will differ if we consider $\psi_p$ or $\psi_{-p}$).
The Wilson action function will be the sum of the actions $\psi_p$ over all oriented plaquettes $\psi_p$.
\begin{equation}
    S^W(\sigma) := - \sum_{p \in P_N} \psi_p(\sigma). 
\end{equation}

For an abelian group and a 1-dimensional representation $\rho$, we can represent this in a much more direct form as follows:
\begin{equation}
    S^W(\sigma):= -\sum_{p \in P_N} \rho((\td \sigma)_p).
\end{equation}

If we consider the oriented plaquette $p$ with oriented boundary edges $e_1$,$e_2$,$e_3$ and $e_4$, then $(\td \sigma)_p$ is the well-defined element $\sum_i \sigma_{e_i}$. We needed to introduce more general notation for the non-abelian case since we are no longer able to assign a canonical element to each plaquette.


The Higgs boson also has a self- interaction given by the following form,
\begin{equation}
    V(\phi):= \sum_{x \in V_N}(|\phi_x|^2 -1)^2.
\end{equation}

Consider the following interaction Hamiltonian between the Higgs boson and a gauge field,
\begin{equation} \label{eq:basicinter}
    H_{N, \beta,\kappa,\zeta}(\sigma,\phi):=  -\beta S^{W}(\sigma) - \kappa S^D(\sigma,\phi) - \zeta V(\phi).
\end{equation}

We consider the formal limit when we take $\zeta \to \infty$. In this limit, the values of $|\phi_x|$ are confined to $1$, so $\phi_x$ can only take values along the unit circle. We will let $\phi_x$ take values in some finite subgroup, $H$, of the multiplicative group of the unit circle. The group $H$ is isomorphic to $Z_k$( the additive group of integers mod $k$) for some $k$; namely, $\phi_x$ can only take values of the form $e^{2 \pi \ti \frac{j}{k} }$ for some $0 \le j <k$. We also remark that after fixing $|\phi|=1$, we have that $\bar{\phi}=\phi^{-1}$. This replacement will be used many times when expressing the Hamiltonian.

We consider the following measure of gauge and Higgs boson configurations on the lattice $\Lambda_N$.

\begin{equation}
    \mu(\sigma, \phi) = \frac{1}{Z} \exp[H_{N,\beta,\kappa}(\sigma,\phi)] ,
\end{equation}
where $Z$ is the partition function coming from the sum of $\exp[H_{N,\beta,\kappa}(\sigma,\phi)]$ over all configurations of $\sigma$ and $\phi$.

On loop $\gamma$, the Wilson loop action $W_{\gamma}(\sigma,\phi)$ is the product $\text{Tr}[\prod_{e\in \gamma} \rho(\sigma_e)]$  of the $\sigma$'s along $\gamma$. As before, the cyclic property of the trace ensures that this quantity is well-defined.

\subsubsection{Previous Work: Wilson Loops in Abelian Lattice Higgs Model}

In this previous work, the authors considered the case that $G$ was abelian, $\rho$ was a one-dimensional representation, and $H$ is contained inside the image of $\rho(G)$(which will be a subgroup of the unit circle). This simplification allows one to effectively gauge out the effect of the Higgs boson.

Before describing this construction, we remark that we can always assume that the only element $g\in G$ such that $\rho(g)=1$ is the identity element. If this were not true, then we can remove a trivial gauge invariance that does not change the Hamiltonian action, but modifies the group $G$ so that there is only the only element with $\rho(g)=1$ is the identity. The construction is as follows: first let $G'$ be the subgroup of elements $g'$ in $G$ consisting of those elements with $\rho(g')=G$. Since multiplying a configuration $\sigma_e$ by some element $g' \in G'$ at any edge does not change the Hamiltonian, we see that this is a trivial gauge invariance. We might as well consider the quotient group $G/G'$ as our gauge group. This $G/G'$ has the property that the only element $g$ with $\rho(g)=1$ is the identity element. 

Now let us return to our simplification of the Higgs boson action.
We can make the following map that preserves the action, but simplifies the Higgs boson term. 
Let $\eta_x$ be the term such that $\rho(\eta_x) = \phi_x$ (this is unique since $\rho(g)=1$ only if $g=1$), then we see that under the map $(\phi_x, \sigma_{e}) \to (1, \eta_x^{-1} \sigma_e \eta_y)$ where $(x,y)$ are the endpoints corresponding to the edge $e$. We let the map $\sigma_{e} \to \eta_x^{-1} \sigma_e \eta_y$ be $F_{\phi}$. For any choice of $\phi$, this $F_{\phi}$ is a bijection between the set of gauge field configurations $\sigma_e$ and itself.

This argument shows that when considering the expectation of a gauge invariant function, such as a Wilson loop expectation, one can merely consider the expectation with respect to the net Hamiltonian,
\begin{equation} \label{eq:simpHam1}
    H_{N, \beta, \kappa} = -\kappa \sum_{e  \in E_N} \rho(\sigma_e) - \beta \sum_{p \in P_N} \rho( (\td \sigma)_p).
\end{equation}

This reduces the question of the Higgs boson to a slightly modified question about pure gauge configurations. To some extent, the paper \cite{Viklund} only dealt with case under the assumption that $\kappa$ is sufficiently large.

\subsection{Extension of the Analysis of Wilson Loop Expectation}

One can imagine multiple changes to the model that make the model proposed in \eqref{eq:simpHam1} substantially more difficult to analyze. 

One question that one could ask is the following:

\textit{What happens if the group $H$ for the Higgs boson is not contained in the image $\rho(G)$ of the gauge group? }

The above question is well-defined for abelian groups with a 1-dimensional representation. For more general groups and representations, the analogue is follows. Let $\tilde{G}$ be the group of elements whose image under the representation $\rho$ is a multiple of the identity, i.e. $g \in \tilde{G}$ if $\rho(g) = c I$, where $c$ is some constant. We let $X$ be the group of these coefficients $c$ from elements of $\tilde{G}$; since $\rho$ is a unitary representation, $X$ is a subgroup of the unit circle. The analogue we have to consider is that $H$ is not a subgroup of these elements $X$.

If $H$ were not contained in $X$, one immediately sees that one cannot immediately gauge out the effects of the Higgs boson. If one now has to consider a non-trivial Higgs boson group, one must also need to consider the excitations of the Higgs field.

For example, if the Higgs field is constant, one automatically knows that the lowest energy states are those that set $\sigma_e =1$ and any other configuration on edges can be reasonably understood to be an exponentially suppressed excitation.

However, if we do not start with a constant Higgs field, it may be possible that we could get a lower energy configuration by setting $\sigma_e \ne 1$ instead of $\sigma_e =1$.

\subsubsection{The Excitations of the Higgs field}



To describe this phenomenon, we will first give a simple calculation. 
Consider the case that $G = Z_3$ while $H= \{ e^{2 \pi \ti \frac{k}{9}}\}$ for $k$ between $0$ and $9$ and $\beta=0$ (or at least $\beta \ll \kappa$). Observe that $H$ is isomorphic to $Z_9$.

In this case, one can apply a gauge transformation  similar to one applied in the previous subsection. However, $\rho(\sigma_e)$ can only take values $\{1, e^{2\pi \ti \frac{3}{9}}, e^{2 \pi \ti \frac{6}{9}} \}$ rather than all of $H$. Thus, the furthest we are able to simplify the values of $\phi_x$ under a gauge transformation are to $\{1, e^{2\pi \ti \frac{1}{9}}, e^{2 \pi \ti \frac{2}{9}}\}$. Let $\theta_x=\{1, e^{2\pi \ti \frac{1}{9}}, e^{2 \pi \ti \frac{2}{9}}\}$ be the simplified value of $\phi_x$ under the gauge transformation.

Now consider an edge $e= (x,y)$ such that $\theta_x= e^{2\pi \ti \frac{2}{9}}$ and $\theta_y =1$.
We see that the maximum value of $\exp[ \theta_x \rho(\sigma_{e}) \bar{\theta_y} + \bar{\theta_x} \rho(\sigma_{e}^{-1}) \theta_y ]$  is $\exp[2 \cos( 2 \pi \frac{8}{9}) ]$ which occurs when $\sigma_{e} = e^{2 \pi \ti \frac{6}{9}}$. Observe that the maximum is not $\exp[2]$, as one would get if $\theta_x=\theta_y$.

What this illustrates is the fact that if one fixes the configuration of the Higgs boson at every lattice point, one could possibly have substantial changes to the lowest energy configuration. Thus, in order to analyze the Higgs boson model in general, one would necessarily need a method to analyze the excitations of the Higgs boson beyond just reducing it to a case of a modified pure gauge field.
We will show that various scales of $\kappa$ will change the behavior of the excitations of the Higgs boson. One will have to treat these different regimes through different types of analysis.
\subsection{The Low Disorder Regime: High $\kappa$}

In the case that $\kappa$ is large, much like the case dealt with in \cite{Viklund}, one finds that the large value of $\kappa$ prevents excitations of $\sigma_e \ne 1$ for many edges. They can show that in this regime, one can reduce the computation of Wilson loop expectations to the computation of the contribution from minimal vortices, e.g. excitations that consist of single edge with $\sigma_e \ne 1$ with all other edges in a neighborhood around it set to $\sigma_e=1$.

When we introduce a Higgs boson field, we can assert a slightly similar statement. Though it is not the case that, given a fixed Higgs boson field configuration, the lowest energy configuration fixes all values of $\sigma$ to $\sigma_e=1$, we do see that an edge $e=(x,y)$ with $\sigma_e \ne 1$ and arbitrary Higgs boson values $\phi_x$,$\phi_y$ at the boundary vertices will certainly have less energy at that edge than a configuration that sets $\phi_x=\phi_y$ and $\sigma_e=1$.
Thus, if one is willing to modify the Higgs field in addition to the gauge field configuration, one may be able to treat edges with $\sigma_e \ne 1$ as an excitation.

This is exactly what is done in Section \ref{sec:ToyModel}.
In this section, beyond just generalizing the analysis of \cite{Viklund} to a more general Higgs boson group, we simplify the analysis of \cite{Viklund} by providing a true polymer expansion. One further consequence of this polymer expansion is that it helps express the Wilson loop expectation as a computation involving an explicit Poisson random variable. Roughly speaking, if we define the support of a configuration $(\sigma,\phi)$ of a joint gauge field and Higgs field configuration as those plaquettes that bound excited  edges $e$ with $\sigma_e \ne 1$ or $\phi_x \ne \phi_y$, then we derive the desired properties behind a true polymer expansion. The details of the polymer expansion are given in Definition \ref{def:supportlowdisorder}. The main principle behind a polymer expansion is that it is easy to analyze probabilities if one can always perform the following procedure. If a configuration has support $P_1 \cup P_2$, where $P_1$ and $P_2$ are disjoint, then configuration can be bijectively split into a configuration supported on $P_1$ and another supported on $P_2$.

There is a rather straightforward way to do this splitting for the gauge field configurations $\sigma$, but one has to exert effort in order to split the Higgs field configurations into the disjoint supports. Section \ref{sec:ToyModel} performs this analysis in a toy case that $G=Z_2$ and $H=Z_2$, where the splitting is easier to describe. The analysis for more general gauge group $G$ and Higgs boson group $H$ is similar, but more cumbersome in the description of the division.

We informally describe our main theorem in this section as follows.

\begin{info}[Informal Version of Theorems \ref{thm:reductiontominimal} and \ref{thm:MainThm1} of Section \ref{sec:ToyModel} ]
For sufficiently large $\beta$ and $\kappa$,
the main order contributions to the Wilson loop observables are by the number of minimal vortices (roughly, those plaquette exciations centered around a single edge with $\sigma_e \ne 1$) that are centered around edges of $\gamma$.
Furthermore, the number of minimal vortex excitations along the edges of $\gamma$ can be treated as roughly a Poisson random variable with parameter $O(|\gamma|)$.

Define the quantity $$A_{\beta,\kappa}:= \frac{\sum_{g \ne 1} \rho(g) \exp[12\beta\text{Re}[(\rho(g) - \rho(1))]] \exp[2\kappa \text{Re}[\rho(g) - \rho(1)]]}{\sum_{g \ne 1}  \exp[12\beta\text{Re}[(\rho(g) - \rho(1))]] \exp[2\kappa \text{Re}[\rho(g) - \rho(1)]]}$$

\begin{equation}
    \mathbb{E}[W_{\gamma}] \approx \mathbb{E}[A_{\beta,\kappa}^{X} ],
\end{equation}
where $X$ is a Poisson random variable with expectation $|\gamma| \exp[12 \beta\text{Re}[\rho(g)-\rho(1)] \exp[2\kappa \text{Re}[\rho(g)- \rho(1)]]$.
\end{info}

Wondrously, our polymer expansion is robust enough to treat the case of non-abelian $G$ with few changes to the proof.

\subsection{The High Disorder Regime: $\kappa $ Low}

A difficulty in the case of small $\kappa$ is one can no longer consider edges with $\sigma_e \ne 1$ as excitations of the configuration $(\sigma,\phi)$. Instead, the predominant suppression to the probability is caused by excited plaquettes with $\rho((\td \sigma)_p) \ne 0$. Due to this fact, we see that it would be better to separate the gauge field excitations as follows. We first create an auxiliary field $\eta:V_N \to G$ satisfying the property that $\tilde{\sigma}_e= \eta_x \sigma_e \eta_y^{-1}$ has as `few' nontrivial edges with $\tilde{\sigma}_e \ne 1$ as possible. The effects of the Higgs boson interaction on the gauge field configuration can be understood as a fluctuation of the auxiliary field $\eta_x$.  What follows is an informal discussion of the procedure being described.
\subsubsection{Informal Discussion}
To do this, we introduce a spanning tree $T$ with base point $b$ of the lattice $\Lambda$. Our goal is to find the auxiliary field $\eta: V_N \to G$ such that the field $\tilde{\sigma}_e= \eta_x^{-1} \sigma_e \eta_y$ for $e=(x,y)$ satisfies some particular properties. The property we try to ensure is the following: for each edge $e \in T$ we assign the value $\tilde{\sigma}_e=1$. 

The auxiliary field $\eta$ can be constructed inductively as follows. We first set $\eta_b=1$. Now, for any other vertex $v$ on the lattice $\Lambda_N$, let $p = (b=v_0),v_1,v_2,\ldots,(v_n=v)$ be the path on the spanning tree $T$ connecting $b$ to $v$. Assume that we have already assigned the values $\eta_{v_0}$ to $\eta_{v_{n-1}}$. Then, we choose $\eta_{v_n}:= \sigma_{e}^{-1} \eta_{v_{n-1}}$, where $e$ is the edge $(v_{n-1},v_n)$. Clearly, with this choice of $\eta_{v_m}$, we have $\tilde{\sigma}_e=1$ for $e=(v_{n-1},v_n)$. Since $T$ is a spanning tree, this fixes all values of $\eta$ on $V_N$. Finally, we can define $\tilde{\sigma}$ as $\tilde{\sigma}_e = \eta_x^{-1} \sigma_e \eta_y$ for $e=(x,y)$. We remark that in the course of the proof, $T$ will not be fixed a-priori and will be chosen as is most appropriate for that part of the proof.


Let us give some heuristics for the partition function and other associated quantities for this new Hamiltonian with the auxiliary field $\eta$. In the computation of the partition function
to highest order in $\beta$ (for $\beta$ sufficiently large and regardless of the specific value of $\kappa$), we can ignore the fluctuation of $\tilde{\sigma}_e$ and set $\tilde{\sigma}_e=1$ for all edges $e$. The partition function can be treated as a sum over all possible configurations of $\eta_v$ and $\phi_v$ for the Higgs boson.

Namely,
$Z \approx \sum_{\eta_v,\phi_v} \exp[ \sum_{e=(v,w)} \kappa \phi_v \text{Tr}[\rho(\eta_v \eta_w^{-1})] \phi^{-1}_w ]$. The configuration of the values of $\tilde{\sigma}_e$ are fixed to $1$, since these excitations would be lower order in $\beta$.

When computing the probability of seeing a single excited minimal vortex $P(e)$ in the Wilson action, the probability can be computed as a fraction with some numerator and $Z$ as the denominator.  To leading order, we see that the numerator can roughly be expressed as a sum over configurations with $\tilde{\sigma}_e' = 1$ for $e'$ not the center of the single excited vortex and $\sigma_e\ne 1$ for $e=(a,b)$ the center of the single excited vortex. We see that it would be important to specifically consider the values of $\eta_a,\eta_b,\phi_a,\phi_b$.

We see that one way to write the numerator as,
\begin{equation}
\begin{aligned}
&  \sum_{\tilde{\sigma}_e \ne 1} \exp[12 \beta (\text{Tr}[\rho(\tilde{\sigma}_e)] - \text{Tr}[\rho(1)])] \\&\sum_{\hat{\eta}_a,\hat{\eta}_b, \hat{\phi}_a ,\hat{\phi}_b} \exp[\kappa \text{Re}[\hat{\phi}_a \text{Tr}[\rho( \hat{\eta}_a \tilde{\sigma}_e \hat{\eta}_b^{-1})] \hat{\phi}^{-1}_b -  \hat{\phi}_a \text{Tr}[\rho(\hat{\eta}_a \hat{\eta}_b^{-1})]\hat{\phi}^{-1}_b ]]\\&\sum_{R} \exp[\sum_{e=(v,w)} \kappa \phi_v \text{Tr}[\rho(\eta_w \eta_w^{-1}) ]\phi^{-1}_w],
\end{aligned}
\end{equation}
where $\sum_R$ denotes the sum over all elements not specified by $\eta_a,\eta_b,\phi_a,\phi_b$. Additionally, for the quantity inside the second sum, we will use the convention $\phi_a= \phi_a$ and so on.
The reason we introduce this notation is to specifically indicate that we fix the values of $\phi_a$ to $\hat{\phi}_a$ and so on.

Now, the ratio
\begin{equation}\label{eq:exampure}\frac{\sum_{R} \exp[\sum_{e=(v,w)}  \kappa \phi_v \text{Tr}[\rho(\eta_w \eta_w^{-1})] \phi_w^{-1}]}{\sum_{\eta_v,\phi_v} \exp[ \sum_{e=(v,w)}\kappa \phi_v \text{Tr}[\rho(\eta_v \eta_w^{-1})] \phi_w^{-1} ]}
\end{equation}is the probability that under the  Ising type interaction $\phi_v \text{Tr}[\rho(\eta_v \eta_w^{-1})] \phi_w^{-1}$
on edges, we will get $\eta_a= \hat{\eta_a}, \eta_b= \hat{\eta_b}, \phi_a= \hat{\phi}_a, \phi_b= \hat{\phi}_b$ for some specified values of $\hat{\phi_a}, \hat{\phi_b}, \hat{\eta}_a,\hat{\eta}_b$. These can be considered to be some type of magnetization.

We see at low $\kappa$, we see that understanding the behavior of the $\eta$'s and the $\phi$ fluctuations are non-trivial questions.  Furthermore, we see that the introduction of the $\eta$ and $\phi$ fields can lead to highly non-trivial correlations.


From the example above, we can imagine what would happen if we consider a more general family of excited plaquettes. For example, let us consider two excited minimal vortices centered at edge $e_1=(a_1,b_1)$ and $e_2=(a_2,b_2)$. To do this, we have to adjust the computation of the numerator to specifically fix the values of $\eta_{a_i,b_i}$ and $\phi_{a_i,b_i}$.

Namely, we see that we can better express the numerator as,
\begin{equation}
\begin{aligned}
    &\sum_{\tilde{\sigma_{e_i}}} \prod_i \exp[12 \beta (\text{Re}[\rho(\tilde{\sigma_{e_i}})]- \text{Re}[\rho(1)])]\\
    &\sum_{\hat{\eta}_{a_i,b_i}, \hat{\phi}_{a_i,b_i}} \prod_i \exp[\kappa \text{Re}[\hat{\phi}_{a_i} \text{Tr}[\rho(\hat{\eta}_{a_i}  \tilde{\sigma}_{e_i}\hat{\eta}^{-1}_{b_i})] \hat{\phi}_{b_i} -  \hat{\phi}_{a_i} \text{Tr}[\rho(\hat{\eta}_{a_i}  \hat{\eta}^{-1}_{b_i})] \hat{\phi}_{b_i}]] \\&\sum_{R} \exp[\sum_{e=(v,w)}  \kappa \phi_v \text{Tr}[ \rho(\eta_w \eta_w^{-1})] \phi_w],
\end{aligned}
\end{equation}
where as before $\sum_{R}$ is the sum over variables that are not $\hat{\eta}_{a_i,b_i}$, $\hat{\phi}_{a_i,b_i}$ while $\phi_{a_i}$ is set equal to $\hat{\phi_v}$ in the second sum.

The important ratio we have to consider in this example is,
\begin{equation} \label{eq:Examcorrelation}
    \frac{\sum_{R} \exp[\sum_{e=(v,w)}  \kappa \phi_v \text{Tr}[\rho(\eta_w \eta_w^{-1})] \phi_w]}{\sum_{\phi_v,\eta_v} \exp[\sum_{e=(v,w)}  \kappa \phi_v \text{Tr}[\rho(\eta_w \eta_w^{-1})] \phi_w]}.
\end{equation}

We see that the computation involves understanding  the correlation between $\hat{\eta}_{a_i}, \hat{\eta}_{b_i}$ and $\hat{\phi}_{a_i},\hat{\phi}_{b_i}$ at different sites.  Though at small $\kappa$ we would expect exponential decay of correlations,  it is the very presence of these correlations that makes it difficult to define a polymer expansion in the case of small $\kappa$ purely from knowing the values of $\sigma$ and $\phi$, even in the abelian case.

For simplicity of notation, let us consider the abelian case. The main issue can be observed by looking at the following example. Consider a gauge field configuration $\sigma_1$ such that the set of excited plaquettes , those with $\td(\sigma)_p \ne 0$ is a small set with size $k$, but the set of edges with $\sigma_e$ with $\sigma_e \ne 0$ is much larger, say of $O(k^4)$. This is an obstruction to a polymer expansion.

Naively, what this means is that if we find another configuration $\sigma_2$ whose support contains one of these $O(k^4)$ edges, then there will be correlations between observing both $\sigma_1$ and $\sigma_2$. Namely, $\mathbb{P}(\sigma_1 \sigma_2)$ is far different from $\mathbb{P}(\sigma_1) \mathbb{P}(\sigma_2)$. This is in contrast to the case of a pure abelian gauge field, where configurations can easily be split as the set of excited plaquettes associated to $\sigma_1$ and $\sigma_2$ are disjoint.

For small $\kappa$, we found an auxiliary Hamiltonian based on the random currents representation of the Ising model \cite{Copin-Tassion} that exactly allows us to characterize when the effects of $\kappa$ cause two disjoint configurations to be correlated with each other.
The random current expansion introduces a new field $I(e)$ for each edge $e$ and couples our original Hamiltonian, $H_{N}(\sigma,\phi)$ to a new Hamiltonian $\mathcal{H}(\sigma,\phi,I)$ , equation \eqref{def:mathcalHam}, in the new variables $I$. One can define a polymer expansion using these new variables, as in Definition \ref{def:clustnonabel}. Unfortunately, even in the abelian case, the introduction of the Higgs Field leads to knotting problems, as in the non-abelian pure gauge field case \cite{SC20}. Through this new polymer expansion, we show that the Wilson loop expectation can be reduced to understanding the effects of minimal vortices lying along the edges of $\gamma$ in section \ref{sec:nonabelianhighdisorder}.
There is a simple argument with a worse error rate showing that the number of minimal vortices along $\gamma$ can be treated as a Poisson random variable in the end of the same section \ref{sec:nonabelianhighdisorder}. However, by carefully understanding the decorrelation between the contribution of the Wilson loop action from different minimal vortices, we can improve our error analysis. Roughly speaking, this involves relating quantities of the form \eqref{eq:Examcorrelation}
to products of those of the form \eqref{eq:exampure}. This rather delicate task was performed in Section \ref{sec:decorrelation}.

We have the following informal version of our main result,
\begin{info}[Based on Theorems \ref{thm:mainthmnonabelian} and \ref{thm:mainthm3} of Section \ref{sec:nonabelianhighdisorder} and Theorem \ref{thm:MainThm2} of Section \ref{sec:decorrelation}]
For $\beta$ sufficiently large and $\kappa$ sufficiently small, the main order contribution to Wilson loop expectations come from minimal vortices centered around the edges of $\gamma$ .

One can compute the Wilson loop expectation as,
\begin{equation}
    \mathbb{E}[W_{\gamma}] \approx \mathbb{E}[\text{Tr}[\mathcal{D}_{\beta,\kappa}^{X}]],
\end{equation}
where $\mathcal{D}_{\beta,\kappa}$ is a matrix defined in Theorem \ref{thm:WilsonLoopReplowdisorder} and $X$ is a Poisson Random Variable whose expecation is $O(|\gamma|)$.
\end{info}





\section{
The Low Disorder Regime: A Toy Model} \label{sec:ToyModel}

\subsection{Introduction and Informal Discussion}

As we have discussed in the introduction, when the gauge group is not the same as the symmetry group of the Higgs field, the fluctuations of the Higgs boson will affect the lowest energy configurations of the gauge field. To illustrate the main ideas in the low disorder regime(large $\kappa$), we consider the following slightly simplified model.

On the lattice $\Lambda_N=[-N, N]^4$, we allow  the Higgs boson $\phi_p$ at each lattice point $p \in \Lambda_N$ to take 2 values, either $+1$ or $-1$ (which we will later refer to $+$ and $-$, respectively, and will be called charges). The gauge group $G$ of the gauge field $\sigma$ is $Z_2$; as such, we will talk about setting edges $e$ to $\sigma_e=1$ or $\sigma_e=-1$. We will also let $\rho$ be a 1-dimensional representation of $Z_2$.

We remark that since for all elements $g\ \in Z_2$, we have $g=g^{-1}$, we do not need to concern ourselves with the orientations of edges or of plaquettes. Thus, we do not need to take into account the orientation of edges. Furthermore, since Higgs boson values $\phi$ are assigned to vertices and gauge field values $\sigma$ are assigned to edges, we may use language as `assign + to a vertex v' to unambigously mean assigning the Higgs boson value $\phi_v=+$ at the vertex $v$ and `assign -1 to an edge e' to mean assigning the gauge field value $\sigma_e,\sigma_{-e}=-1$ to either orientation of the edge $e$.   
\subsubsection{Higgs boson Configuration}
We start by describing, in words, the type of interaction between the gauge group and the Higgs field.
 Consider an edge $e \in E_N$ of the lattice. If it connects two points of the same charge, then assigning $\{+1\}$ to $e$ will give energy $E_1$ and assigning $\{-1\}$ to $e$ will give energy $E_2<E_1$.   If instead, the edge $e$ connects two vertices of opposite charges, then assigning $\{-1\}$ as the gauge field to edge $e$ will give energy $E_3$ and assigning $\{1\}$ to the edge $e$ will give energy $E_4< E_3$. The energies are ordered as follows
\begin{equation*}
    E_1> E_3, E_2, E_4.
\end{equation*}
We will have a standard Wilson loop energy of the form $\beta \rho((\td \sigma)_p)$, where for the abelian group $Z_2$ we can understand $(\td \sigma)$ as the sum of all the gauge group elements along the edges that bound $p$.


After subtracting an appropriate constant, we can formally write our Hamiltonian as follows,
\begin{equation} \label{eq:toyHam}
    H_{N,\beta,\kappa}(\phi,\sigma) =  \sum_{e=(v,w) \in E_N}  \kappa [f( \sigma_e,\phi_v \phi_w^{-1}) - f(1,1)]+ \beta \sum_{p \in P_N} [\rho((\td \sigma)_p) - \rho(1)].
\end{equation}
where $f(1,1)= E_1, f(1,-1) = E_2, f(-1,1) = E_3, f(-1,-1)= E_4$ and our measure on the lattice is
\begin{equation}
     \mu_{N,\beta,\kappa}(\sigma, \phi) =  \frac{1}{Z_{\Lambda,\beta}} \exp[H_{N,\beta,\kappa}(\sigma,\phi)] .
\end{equation}
Here $Z_{\Lambda,\beta}$ is the partition function,
\begin{equation}
   Z_{\Lambda,\beta}= \prod_{v \in V_N}\sum_{ \phi_v \in \{-1,1\}}  \prod_{e \in E_N}\sum_{ \sigma_e \in Z_2}  \exp[H_{N,\beta,\kappa}(\sigma_,\phi)].
\end{equation}


The action considered above is rather similar to the action in \eqref{eq:basicinter} if the  gauge group $G$ were chosen to be $Z_2$ and the Higgs boson group $H$ is $Z_4$. In this case, one would only be able to gauge out the Higgs boson field to take only two values ( $+$ or $-$ ) here; + corresponds to $\phi_x= 1$ or $-1$ while $-$ corresponds to $\phi_x= i$ or $\phi_x= -i$. In this case, if there is an edge $e= (v,w)$ such that  $\phi_v$ and $\phi_w$ are assigned the same value, then the lowest energy configuration would assign $\{+1\}$ to $\sigma_e$. By contrast, if there is an edge $e=(v,w)$ such that $\phi_v$ and $\phi_w$ are assigned different values, then assigning $\sigma_e$ to be $\{-1\}$ and $\{+\}$ would give the same energy to the edge; this energy $E$ would be less than the maximum energy assigned when $\phi_v$ and $\phi_w$ are the same sign and $\sigma_e$ is assigned the value $\{+1\}$.


With the Hamiltonian in \eqref{eq:toyHam} and ignoring the effect of the Wilson action for now, we expect the following behavior. Each of the sites would have the same sign (either + or -),  while we would expect that each of the gauge fields would be assigned the value $\{+1\}$. The fact that we would expect nearly all vertices to be assigned the same charge is due to the constraints $E_3,E_4 <E_1$ and a Peierl's argument.

Consider an assignment of Higgs boson charges in which the $\phi$'s are not constant. WLOG, we can consider the case that there is a connected neighborhood $N$ of vertices assigned a negative charge, but this neighborhood $N$ is surrounded by an ocean of positive charge. We can let $E(N)$ be the edges that connect $N$ to its complement $N^c$. We now argue that we can obtain a less excited configuration by flipping the charges of the vertices in $N$ and changing the assignment of all the edges in $E(N)$ to 1. Clearly, the change in the energy of this configuration is given by at least $\kappa |E(N)|(E_1 - \max(E_3,E_4))$, where $|E(N)|$ is the size of the set $E(N)$.  Thus, we expect that deviations of the Higgs field from the constant will be suppressed exponentially for large enough $\kappa$. 

The previous discussion shows one important concept. When dealing with a pure gauge field, one only needs to considering fluctuations of the edges $\sigma_{e}$ from the identity. This leads to an understanding of Wilson loops expectations via the associated plaquette computations. Here, the fluctuations comes in two ways; as before, we must still consider fluctuations of edges from the low energy configurations, but we must further consider the fluctuations of the Higgs field configurations from the identity.

\subsubsection{Wilson Loop Action}

As is standard, we want to compute the expected value of the Wilson loop action as the size of the lattice goes to infinity. We consider a closed non-intersecting loop $\gamma$ in $\Lambda_N$ consisting of edges $e_1, e_2,\ldots, e_m$.

We can define the Wilson loop action as on a configuration $\calC = (\phi_v, \sigma_e)$ as
\begin{equation}
    W_{\gamma}(\calC) =  \sum_{i=1}^m \rho(\sigma_{e}) = \rho( \sum_{i=1}^m \sigma_e).
\end{equation}
We use the notation $\langle \gamma,\sigma_e \rangle = \sum_{i=1}^m \sigma_e$.

Recall that on abelian groups we can consider $\sigma_e$ to be a one-form supported on the edges, $E_N$; for reference, see the discussion in Section 3 of \cite{SC20}.  Thus, we can interpret $\langle \gamma, \sigma \rangle$ as an integral on the set of $1$-forms. More importantly, we can apply Stokes' theorem and can find a surface $q$ such that $\delta q = \gamma$ and $(q)_p \in \{-1,0,1\}$ for all plaquettes $p \in P_N$ such that
\begin{equation}
    \langle \gamma, \sigma\rangle = \sum_{p \in P_N} q_p (\td \sigma)_p.
\end{equation}

When we write the Wilson loop action in terms of the integration of a two-form over a surface, we see that we are able directly see the effect of nontrivial plaquettes those with $(\td \sigma)_p \ne 1$; later, we will see that this decomposition will allow us to determine whether some excitation of the Higgs and gauge field $\calC$ would be independent of the Wilson action on the loop.

In the next section, we will start rigorously describing some notions we can use to characterize fluctuations of the Higgs field and the gauge field.

\subsection{Rigorous Definitions}

There is a trivial symmetry that preserves the Hamiltonian; we flip the signs of all the Higgs boson configurations ( '+' to '-' and '-' to '+'), while the gauge field configurations $\sigma_e$ are unchanged. For some simplicity in the proof, we may assume that $N$ is odd. When $N$ is odd, we can always apply a global flip of the Higgs boson charges to get a unique configuration satisfying the condition that $\sum_{v \in V_N} \phi_v > 0$. (All this condition means is that the majority of charges are $+$ instead of $-$.) At this point, we can start giving definitions of the types of excitations we will consider so that we can consider a cluster expansion.

\begin{defn} \label{def:supportlowdisorder}[Support of Configurations]


{

We define $\calC_e$, our excited edges, as
$$
\calC_e=\{e=(v,w) \in E_N: \phi_v \ne \phi_w\} \cup \{e \in E_N: \sigma_e = -1 \}. 
$$

The support of our configuration is 
\begin{equation}
    \text{supp}(\calC) =\{p \in P_N: \exists e \in \calC_e \text{ s.t. } e \in \delta p \}.
\end{equation}
}

\end{defn}

\begin{rmk}
Without the Higgs field action, an edge with $\sigma_e \ne 1$ is not necessarily excited. For example, one could find a plaquette whose boundary edges are all $-1$. This plaquette would not be of lower energy in the Hamiltonian and we would not consider this plaquette to be excited. By contrast, with the Higgs action, a single edge with $\sigma_e =-1$ drives it to lower energy.


\end{rmk}

\begin{lem}
Consider a configuration $\calC=(\sigma_e,\phi_v)$
If $p$ is a plaquette such that $(\td \sigma)_p \ne 1 $, then $p \in \supp(\calC)$.
\end{lem}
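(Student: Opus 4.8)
The plan is to prove the contrapositive statement's underlying combinatorics directly: show that if a plaquette $p$ has $(\td \sigma)_p \ne 1$, then at least one of its four boundary edges lies in $\calC_e$, which immediately places $p$ in $\supp(\calC)$ by Definition \ref{def:supportlowdisorder}. The key observation is that $(\td \sigma)_p$ is, for the abelian group $Z_2$, the sum of the four gauge values $\sigma_{e_1},\dots,\sigma_{e_4}$ along the boundary of $p$. If all four edges had $\sigma_{e_i}=1$, then $(\td \sigma)_p$ would be the sum of four copies of the identity, which is the identity; hence $(\td \sigma)_p = 1$. So $(\td \sigma)_p \ne 1$ forces at least one boundary edge $e_i$ to have $\sigma_{e_i} = -1$.

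The remaining step is to note that any edge $e$ with $\sigma_e = -1$ belongs to $\calC_e$ by definition, regardless of the Higgs values at its endpoints, since $\calC_e$ is defined as the union $\{e=(v,w): \phi_v \ne \phi_w\} \cup \{e : \sigma_e = -1\}$. Therefore the offending boundary edge $e_i$ is in $\calC_e$, and since $e_i \in \delta p$, the plaquette $p$ satisfies the membership condition $\exists e \in \calC_e \text{ s.t. } e \in \delta p$ defining $\supp(\calC)$. This completes the argument.

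There is really no serious obstacle here; the only point requiring a moment's care is the identification of $(\td \sigma)_p$ with the edge sum and the fact that in $Z_2$ written additively the identity is $0$ (or multiplicatively $1$), so four trivial contributions sum to the trivial element — this is exactly the notation set up earlier in the excerpt where $(\td \sigma)_p = \sum_i \sigma_{e_i}$ for the abelian case, and where for $Z_2$ orientations of edges and plaquettes are irrelevant. I would state the proof in two or three sentences, invoking the abelian $\td$-notation from Section \ref{subsec:Prelim} and the toy-model conventions, and conclude by citing Definition \ref{def:supportlowdisorder}.
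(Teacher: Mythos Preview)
Your proposal is correct and follows essentially the same approach as the paper: both argue that $(\td\sigma)_p \ne 1$ forces some boundary edge $e$ with $\sigma_e \ne 1$, which by definition lies in $\calC_e$ and hence places $p$ in $\supp(\calC)$. Your write-up is in fact more explicit than the paper's one-sentence proof, but the logic is identical.
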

\begin{proof}
Clearly, if $(\td \sigma)_p \ne 1$, then there clearly is some edge $e$ in the boundary of $p$ such that $\sigma_e \ne 1$. Thus, the only non-trivial computations to the probability value from our clusters come from what we define to be the support of our distribution.
\end{proof}

We will now define the function $\Phi(P)$ where $P$ is a set of plaquettes in $P_N$; this is a crucial part of our cluster expansion. 

\begin{defn}
We let $E(P)$ denote the set of edges $e$ such that there exists a plaquette $p\in P$ with $e \in \delta p$. Similarly, we let $V(P)$ denote the set of vertices such that there is a plaquette $p \in P$ such that $v$ is a boundary vertex of $p$.
\begin{equation}
\begin{aligned}
    \Phi(P) = \sum_{\substack{\calC= (\sigma_e, \phi_v) \\ \supp(\calC) = P \\ \sum_v \phi_v > 0}} &\prod_{p \in \supp(\calC)}  \exp[\beta (\rho((\td \sigma)_p) - \rho(1))]\\
    &\prod_{\substack{e \in E(P)\\ e = (v,w)}} \exp[\kappa (f(\sigma_e, \phi_v \phi_w^{
    -1})-f(1,1))].
\end{aligned}    
\end{equation}

We also have a similar cluster expansion formula incorporating the effects of the Wilson action.
\begin{equation}
\begin{aligned}
    \Phi_{W}(P) = &\sum_{\substack{\calC= (\sigma_e, \phi_v) \\ \supp(\calC) = P \\ \sum_v \phi_v > 0}} \prod_{p \in \supp(\calC)}  \exp[\beta( \rho((\td \sigma)_p)-\rho(1))]\\
    & \times \prod_{\substack{e \in E(P)\\ e = (v,w)}} \exp[\kappa (f(\sigma_e, \phi_v \phi_w^{-1})- f(1,1))] \prod_{e \in E(P) \cap \gamma} \rho(\sigma_e).
\end{aligned}
\end{equation}
\end{defn}
\begin{rmk}
Though after this section, we will not use this specific definition of $\Phi(P)$, we will use $V(P)$ and $E(P)$ frequently throughout the course of this paper.
\end{rmk}

Our second definition gives us criterion to determine whether our clusters would interact with each other, or could be considered to split from each other.
\begin{defn} \label{def:graphg2}
We define the following graph $G_2$ with vertex set $P_N$ as follows. We say that there is an edge between two plaquettes $p_1$ and $p_2$ if there exists a $3$-cell such that both $p_1$ and $p_2$ are on the boundary of said $3$-cell. 

We call a set of plaquettes $V$ a vortex if the plaquettes of $V$ form a connected set in the graph $G_2$. 

Furthermore, we say that two sets of plaquettes $P_1$ and $P_2$ are compatible if the set $P_1$ and $P_2$ are disconnected in $G_2$. Otherwise, we say that $P_1$ and $P_2$ are incompatible.

The decomposition of some plaquette set $P$ into its maximal connected subcomponents $P = V_1 \cup V_2 \ldots \cup V_N$ is called a vortex decomposition of $V$. It is clear to see that $V_i$ and $V_j$ are compatible for any distinct pair $i$ and $j$.

\end{defn}

We have the following Lemma that allows us compute the function $\Phi(P)$ in terms of $\Phi$ evaluated on the elements of its vortex decomposition.
\begin{lem} \label{lem:compsplit}
Let $P_1$ and $P_2$ be two compatible sets of plaquettes. Then,
\begin{equation}
    \Phi(P_1 \cup P_2) = \Phi(P_1) \Phi(P_2).
\end{equation}
As a consequence, if $V_1 \cup V_2 \ldots \cup V_N$ is a vortex decomposition of $P$, then we have,
\begin{equation}
    \Phi(V_1 \cup V_2 \ldots \cup V_N) = \Phi(V_1) \Phi(V_2)\ldots \Phi(V_N).
\end{equation}
The function $\Phi_W$ would satisfy a similar property; namely,
\begin{equation}
    \Phi_W(V_1 \cup V_2 \ldots \cup V_N) = \Phi_W(V_1) \Phi_W(V_2) \ldots \Phi_W(V_N).
\end{equation}

\end{lem}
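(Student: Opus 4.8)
The plan is to prove the multiplicativity $\Phi(P_1 \cup P_2) = \Phi(P_1)\Phi(P_2)$ for compatible $P_1, P_2$ by establishing a bijection between configurations supported on $P_1 \cup P_2$ and pairs of configurations supported on $P_1$ and on $P_2$, under which the product of weights factorizes. The iterated statement for a vortex decomposition follows immediately by induction, and the $\Phi_W$ version follows verbatim once one checks that the extra factor $\prod_{e \in E(P)\cap\gamma}\rho(\sigma_e)$ also factorizes — which it does because $E(P_1 \cup P_2) = E(P_1) \cup E(P_2)$ and, as I explain below, the edge sets $E(P_1)$ and $E(P_2)$ interact only in a controlled way.

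First I would record the key geometric fact forced by compatibility: if $P_1$ and $P_2$ are disconnected in the graph $G_2$, then no $3$-cell contains a plaquette of $P_1$ and a plaquette of $P_2$ on its boundary. From this I want to extract that the edge sets $E(P_1)$ and $E(P_2)$ — while they might \emph{share} an edge — cannot share an edge $e$ in a way that creates an ambiguity: more precisely, any edge $e \in E(P_1) \cap E(P_2)$ lies in the boundary of some $p_1 \in P_1$ and some $p_2 \in P_2$, but since $p_1$ and $p_2$ are not in a common $3$-cell, and actually one must check that if $\calC$ has $\supp(\calC) = P_1 \cup P_2$ then for any such shared edge the value $\sigma_e$ is forced to be $1$ and, if $e = (v,w)$, then $\phi_v = \phi_w$ (otherwise $e \in \calC_e$, which would put \emph{every} plaquette bounding $e$ into the support, and some of those plaquettes sit in a $3$-cell with $p_1$ but are not in $P_1$, contradiction). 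So the decomposition of the configuration is unambiguous: $\calC|_{P_1}$ and $\calC|_{P_2}$ are well-defined, their supports are exactly $P_1$ and $P_2$, and they agree (trivially) on the overlap.

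Next I would set up the bijection carefully. Given $\calC$ with $\supp(\calC) = P_1 \cup P_2$, define $\calC_1$ to agree with $\calC$ on $E(P_1)$ and $V(P_1)$ and to be trivial (all $\sigma_e = 1$, all $\phi_v = +$) elsewhere — and similarly for $\calC_2$. One must check (i) $\supp(\calC_1) = P_1$, using the lemma already proved that $(\td\sigma)_p \ne 1 \Rightarrow p \in \supp$, plus the excited-edge bookkeeping; (ii) the normalization constraint $\sum_v \phi_v > 0$ is automatically satisfied for $\calC_1$ and $\calC_2$ since we set the unspecified vertices to $+$; (iii) the map $\calC \mapsto (\calC_1, \calC_2)$ is invertible, with inverse obtained by overlaying the two configurations (well-defined precisely because they agree on the overlap, where both are trivial). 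Finally, the weight: the plaquette product splits as $\prod_{p \in P_1 \cup P_2} = \prod_{p \in P_1}\prod_{p\in P_2}$ trivially since $P_1 \cap P_2 = \emptyset$ (compatible sets are in particular disjoint — or if they could overlap, a shared plaquette would be $G_2$-adjacent to itself, so one should note $P_1 \cap P_2 = \emptyset$); and the edge product $\prod_{e \in E(P_1)\cup E(P_2)}$ splits as $\prod_{e\in E(P_1)}\prod_{e \in E(P_2)}$ \emph{up to the overlap edges}, where the factor is $\exp[\kappa(f(\sigma_e,\phi_v\phi_w^{-1}) - f(1,1))] = \exp[\kappa(f(1,1)-f(1,1))] = 1$ on each side, so the double-counting on the overlap contributes $1$ and does no harm.

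\textbf{Main obstacle.} The only genuinely delicate point is the claim that $E(P_1)$ and $E(P_2)$ overlap only on edges that are \emph{unexcited} in any configuration supported on $P_1 \cup P_2$ — equivalently, that compatibility in $G_2$ is strong enough to prevent a single excited edge from being ``shared'' between a vortex in $P_1$ and one in $P_2$. This is where the specific choice of $G_2$ (adjacency through a common $3$-cell, rather than, say, through a common edge) is essential: an excited edge $e$ forces \emph{all} plaquettes bounding $e$ into the support, and in $\mathbb{Z}^4$ any two plaquettes sharing the edge $e$ either coincide or lie on the boundary of a common $3$-cell, so they all land in the same $G_2$-component. I would isolate this as a short geometric sublemma before doing the bijection, since everything else is routine bookkeeping.
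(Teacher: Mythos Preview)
Your handling of the gauge field $\sigma$ is essentially what the paper does, and your geometric sublemma --- that any edge in $E(P_1)\cap E(P_2)$ must be unexcited in a configuration with support $P_1\cup P_2$, because an excited edge drags \emph{all} plaquettes through it into a single $G_2$-component --- is correct and matches the paper's argument exactly.

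The genuine gap is in your treatment of the Higgs field $\phi$. Your proposal defines $\calC_1$ by keeping $\phi_v$ for $v\in V(P_1)$ and setting $\phi_v=+$ for $v\notin V(P_1)$. This does \emph{not} in general produce a configuration with support $P_1$. Take a vertex $v\in V(P_1)$ with $\phi_v=-$ adjacent to some $w\notin V(P_1)$ with $\phi_w=-$ in $\calC$; the edge $(v,w)$ is unexcited in $\calC$, but after you reset $\phi_w=+$ it becomes excited in $\calC_1$, which puts every plaquette through $(v,w)$ into $\supp(\calC_1)$ --- and some of those need not lie in $P_1$. The underlying issue is that $V(P_1\cup P_2)^c$ can contain monocharged islands of $-$ (the majority condition $\sum_v\phi_v>0$ does not forbid this), and a single connected component $B^1_i$ of $V(P_1)^c$ may be built from several components of $V(P_1\cup P_2)^c$ carrying \emph{different} charges, glued together across $V(P_2)$. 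For the same reason your proposed inverse (overlay and set the rest to $+$) is not injective: two configurations $\calC$ differing only on such an island map to the same pair $(\calC_1,\calC_2)$.

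The paper flags exactly this as the main difficulty and explicitly says the map on $\phi$ ``does not merely involve applying a restriction map as we have done for the edges.'' Its fix is a sequence of charge flips on nested regions: whenever two subcomponents $B_{i_1},B_{i_2}\subset B^1_i$ carry opposite charges, one of their external boundaries $EB(\mathcal V(B_{i_k}))$ must lie entirely in $P_2$ (since it is connected and $P_1,P_2$ are compatible), and flipping all charges inside that boundary alters the Hamiltonian only on edges of $E(P_2)$, hence is invisible to the $P_1$ side. Iterating these flips from the outermost island inward makes each component of $V(P_1)^c$ monocharged while preserving the edge action on $E(P_1)$, after which the restriction you wanted becomes legitimate. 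You need this flip argument (or something equivalent) to close the gap; the rest of your plan is fine.
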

\begin{proof}
 Our goal is to find a bijection $\calC=(\sigma_e,\phi_v) \to (\calC_1= (\sigma^1_e,\phi^1_v), \calC_2= (\sigma^2_e,\phi^2_v))$ where $\calC$ is a configuration whose support is $P_1 \cup P_2$ and $\calC_i$ is a configuration whose support is $P_i$ that satisfies the following equation.
 
 \begin{equation} \label{eq:multequation}
 \begin{aligned}
   & \prod_{p \in \calC}  \exp[\beta (\rho((\td \sigma)_p - \rho(1))]\prod_{\substack{e \in E(P_1 \cup P_2) \\ e = (v,w)}} \exp[\kappa (f(\sigma_e, \phi_v \phi_w^{-1})-f(1,1))] = \\
   &\prod_{p_1 \in \calC_1}  \exp[\beta (\rho((\td \sigma)_{p_1}- \rho(1))]\prod_{\substack{e \in E(P_1)\\ e_1 = (v_1,w_1)}} \exp[\kappa (f(\sigma_{e_1}, \phi_{v_1} \phi_{w_1}^{-1}) - f(1,1))]   \\
   & \times \prod_{p_2 \in \calC_2}  \exp[\beta (\rho((\td \sigma)_{p_2})-\rho(1))]\prod_{\substack{e_2 \in E(P_2)\\ e_2 = (v_2,w_2)}} \exp[\kappa (f(\sigma_{e_2}, \phi_{v_2} \phi_{w_2}^{-1}) - f(1,1))].
\end{aligned}
\end{equation}

We will assume that $P_2$ is a vortex for simplicity. 

Recall that $V(P_1)$ and $V(P_2)$(resp. $E(P_1)$ and $E(P_2)$) denote the set of vertices(resp. edges) that form a boundary vertex(resp. edge) of some plaquette in $P_1$ and $P_2$. 

Consider a configuration $\calC$ whose support is $P_1 \cup P_2$. Let $e$ be an edge in the complement of $E(P_1 \cup P_2)$. Then, $\sigma_e=1$ for this configuration; otherwise, if $\sigma_e \ne 1$, then a plaquette that contains $e$ as a boundary edge would be in the support of $\calC$ by our definition of support. This would imply $e \in E(P_1 \cup P_2)$, which is a contradiction.

Now consider an edge $e \in E(P_1) \cap E(P_2)$. If $\sigma_e \ne 1$, then all the plaquettes $p$ that contain $e$ as a boundary edge are in $\supp(\calC)$ and are connected to each other in $G_2$. However, since $e$ is in $E(P_1)$, at least one of these plaquettes must be in $P_1$. For the same reason, one of these plaquettes must also be in $P_2$. However, this implies that $P_1$ and $P_2$ are connected to each other; this is a contradiction.

Thus, there is a simple way to describe the gauge field configurations $\sigma_e$ for the configurations $\calC_1$ and $\calC_2$. For edges $e$ in $E(P_1)$, we will set $\sigma^1_e = \sigma_e $, $\sigma^2_e = 1$. for edges $e$ in $E(P_2)$, we will set $\sigma^1_e = 1$ and $\sigma^2_e = \sigma_e$. For edges $e$ not in either $E(P_1)$ or $E(P_2)$, we merely set $\sigma_e =1$. Note that this is well defined since for edges $e \in E(P_1) \cap E(P_2)$, we know that $\sigma_e =1$.


It is more complicated to describe the map on the Higgs field. Just as in the Ising model, one can expect to see islands of charges that include each other. Our procedure for mapping the Higgs field configurations from $\calC$ to $\calC_1$ and $\calC_2$ does not merely involve applying a restriction map as we have done for the edges. Instead, one must apply appropriate charge flips in order to ensure the gauge constraints $\sum \phi^1_v \ge 0, \sum \phi^2_v \ge 0$ are satisfied.

We will describe the problem formally as follows. Divide $V(P_1 \cup P_2)^c$ into connected components as $B_1 \cup B_2 \cup \ldots \cup B_N$ and, similarly,  $V(P_i)^c$ as $B_{1}^i \cup B_2^i\cup \ldots \cup B_{N_1}^i$. We see from Lemma \ref{lem:compcharg} in the Appendix that each set $B_i$ is monocharged. If we propose an charge assignment $\calC_1$ with support $P_1$, then we too must make sure that each of the sets $B_i^1$ are monocharged.

Note that each set $B_i^1$ can be decomposed as follows $B_i^1 \subset V(P_2) \cup B_{i_1} \cup \ldots B_{i_{m_i}}$, where the union is minimal in $|m_i|$. If it were the case that all the sets $B_{i_k}$ were assigned the same charge, e.g. $c_{i_l} = c_{i_1}$ for all $l \in \{1,\ldots,m_{i_1}\}$, then we would be able to safely assign the charge $c_{i_1}$ to all vertices in $B_i^1$ in the configuration $\calC_1$.

The problem occurs when there are two components $B_{i_{k_1}}$ and $B_{i_{k_2}}$ in the decomposition of $B^1_{i}$ that are assigned two different charges $c_{i_{k_1}} \ne c_{i_{k_2}}$. For simplicity of notation, WLOG, we are considering the set $B^1_1$ with two subcomponents $B_{1}$ and $B_2$ of different charges. For simplicity of later description, let us also assume that $B_1$ and $B_2$ do not contain a boundary vertex of $V_N$. 

In the remainder of this proof, we will frequently use notation from Lemma \ref{lem:bndry} in the Appendix. Let $\mathcal{V}(B_1)$ and $\mathcal{V}(B_2)$ be as in Lemma \ref{lem:bndry}, the set of vertices connected to $B_1$ and $B_2$ respectively, having the same charge. From the same Lemma, the exterior boundaries $EB(\mathcal{V}(B_1))$ and $EB(\mathcal{V}(B_2))$ are connected sets. Thus, $EB(\mathcal{V}(B_i))$ must be entirely contained in either $P_1$ or $P_2$. If both of these external boundaries were in $P_1$, then this would imply that $B_1$ and $B_2$ would be in different connected components in the splitting of $B_i^1$. Therefore, it must be the case that one of the external boundaries, say $EB(\mathcal{V}(B_2))$ was in $P_2$.

In this case, what one can do is the following. We can embed our lattice inside $\mathbb{Z}^d$. In the full lattice $\mathbb{Z}^d$, find the set $S$ containing infinity whose unique(internal) boundary is $EB(V_2)$. Now consider the complement of $S$ in $\mathbb{Z}^d$. We see that $\mathcal{V}(B_2) $ is a subset of $S^c$. Now, what one can do is to completely flip the signs of all vertices in $S^c$. Under this transformation, the only edges whose Hamiltonian action value changes are those edges in $\mathcal{E}(S^c)$. However, all of these edges must belong to $E(P_2)$ rather than $E(P_1)$. Thus, this flip applied to $\calC$ will not change the Hamiltonian action when restricted to edges of $E(P_2)^c \cup E(P_1)$. As a consequence of this, if we consider an edge connecting a vertex of $V(P_1)$ to its complement, then this edge will still have its adjacent vertices assigned the same charge  even after this flip. 
Furthermore, this flip changes the charges so that the charges of $B_2$ will match those of its closest neighbors in $S^c$. 

Through a careful ordering, one can come up with a series of charge flips that do not affect the Hamiltonian action's value on $E(P_2)^c \cup E(P_1)$ and will ensure that for each component $B^1_i \subset V(P) \cup B_{i_1} \cup B_{i_2} \cup \ldots \cup B_{i_{i_m}}$ the charges associated to the $B_{i_l}$'s are the same. Intuitively, this procedure involves flipping the outermost island, and then further applying flips to correct the internal islands. At this point, one can assign this charge to all the vertices in $B^1_i$ in $\calC_1$. When restricted to edges in $E(P_2)^c \cup E(P_1)$, the Hamiltonian actions of $\calC_1$ and $\calC$ are the same. If after this procedure $\sum \phi_v <0$, then one can perform a global flip of the Higgs boson charge.

Through this procedure, we get a configuration $\calC_1$ whose Hamiltonian action on the set of edges $E(P_2)^c \cup E(P_1)$ are the same as those of $\calC$. We can similarly construct $\calC_2$. Furthermore, if we consider an edge in $(E(P_1)^c \cup E(P_2)) \cap (E(P_2)^c \cup E(P_1))$, then the action of this edge is trivial; the vertices attached to it are assigned the same color and $\sigma_e=1$.

One can see for this map $\calC \to (\calC_1,\calC_2)$, the desired relationship \eqref{eq:multequation} holds. In addition, given two different configurations $(\calC_1,\calC_2)$ supported on $P_1$ and $P_2$ respectively, one can reverse the construction here to get a configuration $\calC$ whose support is $P_1 \cup P_2$. This procedure is fairly tedious, but is still based on finding the appropriate sets to flip iteratively until one gets the desired configuration. The construction of this bijection completes the proof of Lemma \ref{lem:compsplit}.

\end{proof}

In summary,
the previous lemma has shown that the cluster expansion splits based on a vortex decomposition. We will now present the link between our cluster expansion function $\Phi$ and probabilities under our Hamiltonian. 
\begin{lem}
Let $Z_{\Lambda,\beta}$ be the partition function associated to the Hamiltonian \eqref{eq:toyHam} on the lattice $\Lambda$. We have the following relations,
\begin{equation} \label{eq:partfunct}
     Z_{\Lambda,\beta,\kappa} =  2 \sum_{P \subset P_N} \Phi(P),
\end{equation}
\begin{equation}
    P(\supp(\calC) = P) = \frac{\Phi(P)}{\sum_{P \subset P_N} \Phi(P)}.
\end{equation}
\end{lem}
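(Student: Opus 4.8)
The plan is to reorganize the sum defining $Z_{\Lambda,\beta,\kappa}$ by grouping configurations according to their support, observe that each summand only sees the data of $\calC$ on $\supp(\calC)$ and its incident edges, and then exploit the global charge-flip symmetry to produce the factor of $2$ that reconciles the unconstrained sum with the constraint $\sum_v \phi_v > 0$ built into $\Phi(P)$.

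First I would write
\[
  Z_{\Lambda,\beta,\kappa} = \sum_{P \subset P_N} \ \sum_{\calC \,:\, \supp(\calC) = P} \exp[H_{N,\beta,\kappa}(\calC)],
\]
which is legitimate because every configuration $\calC = (\sigma_e,\phi_v)$ has a unique support. The key step is to check that for $\calC$ with $\supp(\calC) = P$ the summand equals precisely the product appearing inside the definition of $\Phi(P)$. For the plaquette ($\beta$) factor this is exactly the content of the Lemma already proved: if $(\td\sigma)_p \neq 1$ then $p \in \supp(\calC) = P$, so every $p \in P_N \setminus P$ contributes $\rho((\td\sigma)_p) - \rho(1) = 0$, and the sum over $P_N$ collapses to the product over $p \in P$. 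For the edge ($\kappa$) factor I would argue contrapositively from the definition of $\calC_e$: if $e = (v,w) \notin E(P)$ then no plaquette bounding $e$ lies in $P$, hence $e \notin \calC_e$, so $\sigma_e = 1$ and $\phi_v = \phi_w$; therefore $f(\sigma_e,\phi_v\phi_w^{-1}) - f(1,1) = 0$ and the sum over $E_N$ collapses to the product over $e \in E(P)$. (The constants $\rho(1)$ and $f(1,1)$ were subtracted off in \eqref{eq:toyHam} precisely so that the ground configuration contributes $1$.) This yields $\exp[H_{N,\beta,\kappa}(\calC)]$ equal to the summand of $\Phi(P)$, but summed over \emph{all} $\calC$ with $\supp(\calC) = P$, not only those with $\sum_v \phi_v > 0$.

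To fix this I would bring in the global flip $\iota : (\sigma,\phi) \mapsto (\sigma,-\phi)$. It is an involution of the configuration space that leaves $\sigma$ and each ratio $\phi_v\phi_w^{-1}$ unchanged, hence preserves $\calC_e$, therefore $\supp(\calC)$, and therefore $H_{N,\beta,\kappa}$. Since $|V_N|$ is odd (this is what the standing assumption $N$ odd buys us), $\sum_v \phi_v$ is an odd integer and in particular nonzero, so $\iota$ has no fixed points and pairs each $\calC$ with $\sum_v \phi_v > 0$ with exactly one $\calC'$ having the same support, the same energy, and $\sum_v \phi_v < 0$. Hence $\sum_{\calC \,:\, \supp(\calC) = P} \exp[H_{N,\beta,\kappa}(\calC)] = 2\,\Phi(P)$, and summing over $P$ gives \eqref{eq:partfunct}. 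The probability identity is then immediate, since $P(\supp(\calC) = P) = 2\Phi(P)/Z_{\Lambda,\beta,\kappa} = \Phi(P)/\sum_{P' \subset P_N}\Phi(P')$.

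I do not expect a genuine obstacle here — the statement is little more than a bookkeeping reorganization of the partition sum. The one point demanding care is the ``localization'' step: verifying, using the precise two-sided content of the definitions of $\supp(\calC)$ and $\calC_e$, that plaquettes outside $P$ and edges outside $E(P)$ contribute a factor of exactly $1$. The only other thing to keep honest is the parity argument that $\sum_v \phi_v$ never vanishes, which is exactly why $N$ was taken odd, so that the flip symmetry acts freely and contributes the clean factor of $2$.
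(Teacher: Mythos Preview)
Your proposal is correct and follows essentially the same approach as the paper: group configurations by their support, then use the global charge-flip involution to account for the factor of $2$. The paper's proof is a terse one-paragraph sketch that simply asserts these steps; you have supplied the details the paper omits, most notably the localization argument that plaquettes outside $P$ and edges outside $E(P)$ contribute trivially, which the paper takes for granted as ``merely the definition of $\Phi(P)$.''
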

\begin{proof}
In the sum that appears in the definition of $Z_{\Lambda,\beta,\kappa}$, one can first sum over all subsets $P$ that could possibly be a support of the configuration, and then over all configurations $\calC$ that have $P$ as its support. After possibly applying a trivial global flip to ensure that $\sum_{v \in V_N} \phi_v >0$, the first equation is merely the definition of $\Phi(P)$. The global flip is the reason for the existence of the factor of $2$ on the right hand side of equation \eqref{eq:partfunct}. The second equation comes from the definition of the probability distribution and the previous equation \eqref{eq:partfunct} on $Z_{\Lambda,\beta,\kappa}$. Note that we have implicitly canceled a factor of $2$ that comes from the global flip symmetry.
\end{proof}

\subsection{Analysis of the Wilson Loop Action}

Our ultimate goal in this section is to argue that the main contribution to the Wilson loop expectations come from minimal vortices, $P(e)$.

\begin{defn}[Minimal Vortices]
Minimal vortices $P(e)$ have the following structure: they consist of the edge $e$ and the $2(d-1)$ plaquettes that use the edge $e$ as a boundary edge. This should be thought of as the smallest excitations, as the easiest way to create a minimal vortex is to excite the single edge $\sigma_e \ne 1$ and to set all other surrounding edges, $e'$, to $\sigma_{e'}=1$. 

\end{defn}

\begin{rmk} \label{rmk:minimal}
Minimal vortices will be a common structure appearing in the previous works \cite{SC20} and \cite{Viklund} and will appear in later sections as well. We remark that in this section, that a configuration $\calC$ that has a minimal vortex $P(e)$ in its support must assign all the vertices of $V(P(e))$ the same Higgs boson charge. Otherwise, one would be able to find at least $2$ edges in $V(P(e))$ connecting opposite charges. One of these edges will be $e' \ne e$. Under our definition of support, all of the plaquettes that use $e'$ as a boundary vertex will be in the support of $\calC$. This means that $P(e)$ cannot be a vortex in the support of $\calC$. Furthermore, all edges $e'$ in $E(P(e))$ aside from $e$ itself must be assigned $\sigma_{e'}=1$.
\end{rmk}

We let $W_{\gamma}$ represent our Wilson loop action on the loop $\gamma$. In addition, we let $S_\gamma$ be a surface whose boundary is $\gamma$.

The analysis in this section depends on describing which plaquette configurations form the main contribution to the Wilson loop equation and bounding away the contributions of rare configurations by the function $\Phi$ constructed earlier. 

\subsubsection{ Trivial Configurations}

One difference from the case of a pure gauge field is that there are excited configuration $\calC$ whose support would intersect the loop $\gamma$, but would not have any contribution to the Wilson loop action. In this subsection, we find some we give some conditions on the configuration which will ensure that the configuration will not have any contribution to the Wilson loop action. By `removing' the contribution of these trivial configurations, we can improve our error analysis later. 


Our first definition describes a geometric condition for a vortex $V$ to have no contribution to the Wilson loop action.
\begin{defn}\label{def:non-contributing}
We call a vortex $V$ non-contributing to the Wilson loop action $\gamma$ if it satisfies one of the following two properties: \begin{itemize}
    \item 
    There is no edge $e \in \gamma$ that is simultaneously a boundary edge of some plaquette $p \in V$.
    \item It is a minimal vortex not centered around an edge $e \in \gamma$.
\end{itemize} 
\end{defn}

We will show later that if a vortex is non-contributing then, as its name states, it will not have a contribution to the Wilson loop action.

To appropriately account for the Wilson loop contribution in orders of $e^{-\beta}$, we need to describe other configurations $\calC$ that are not immediately removed due to the geometric condition above. In order to do this,
we start by first describing a modification of $\phi$ that only takes into account the 'nontrivial' plaquette excitations with respect to the Wilson loop functional.

\begin{defn} \label{def:WilsonLoopNontrivial}
We call a configuration $\calC = (\phi_v,\sigma_e)$ Wilson loop non-trivial if there is a plaquette $p$ in $\supp \text{ } \calC$ such that there is a boundary edge in $p$ that intersects the loop $\gamma$ and there is another plaquette $p'$ in $\supp \text{ } \calC$  with $(\td\sigma)_{p'} \ne 1$. A configuration that is not Wilson loop non-trivial is Wilson loop trivial.

We define the functional $\phi_{NT}(V)$ which takes values on sets of plaquettes as follows,
\begin{equation}
\begin{aligned}
\Phi_{NT}(P) = &\sum_{\substack{\calC= (\sigma_e, \phi_v) \\ \supp(\calC) = P \\ \sum_v \phi_v > 0\\ \calC \text{ is Wilson Loop Nontrivial}}} \prod_{p \in \supp(\calC)}  \exp[\beta (\rho((\td \sigma)_p) - \rho(1))]\\
& \times \prod_{\substack{e \in E(P)\\ e = (v,w)}} \exp[\kappa (f(\sigma_e, \phi_v \phi_w^{-1}) - f(1,1))].
\end{aligned}
\end{equation}

\end{defn}

  The purpose of introducing the above Definition \ref{def:WilsonLoopNontrivial} is to separate out those configurations which will have a non-trivial effect on the Wilson loop action $W_{\gamma}$.  Namely, there needs to be some plaquette with $(\td \sigma)_p \ne 1$ in order to ensure that $W_{\gamma} \ne 1$.

\begin{rmk}
We remark that if there is a configuration $\calC$ with support $P_1 \cup P_2$ with $P_1$, $P_2$ compatible such that $\calC$ is Wilson loop non-trivial when restricted to $P_2$, then after the splitting construction in Lemma \ref{lem:compsplit}, then the configuration $\calC_2$ is Wilson loop non-trivial on $P_2$. This is due to the fact that our construction is merely a restriction map applied to the gauge field configuration $\sigma$. This allows us to establish multiplicative analogues of our result in \ref{lem:compsplit} with $\Phi_{NT}$. For example, the sum of $\exp[H_{N,\beta,\kappa}(\sigma,\phi)]$ over configurations with support $P_1 \cup P_2$  and are Wilson loop non-trivial when restricted to $P_2$ is $\Phi(P_1)\Phi_{NT}(P_2)$.
\end{rmk}

The following lemma takes into account both the sources of trivialities to the Wilson loop action.

\begin{lem}
Consider a vortex $V$ whose associated edge set $E(V)$ has no edge in common with $\gamma$. Then,
\begin{equation} \label{eq:firstcondition}
    \mathbb{E}[ W_{\gamma} | \supp \text{ } \calC = V] =1.
\end{equation}

For similar reasons, we also have the following statement.
\begin{equation} \label{eq:secondcondition}
    \mathbb{E}[W_{\gamma}|  \calC \text{ is  Wilson Loop Trivial}] =1.
\end{equation}
\end{lem}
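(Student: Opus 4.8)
The plan is to establish both claims by showing that, under the respective conditioning, the Wilson loop action $W_\gamma$ is obtained by summing over configurations in conjugate pairs whose contributions cancel (or, more precisely, by showing $W_\gamma$ is identically $1$ on the relevant configurations, so no cancellation is even needed). Recall from the discussion of the Wilson loop action that $W_\gamma(\calC) = \rho(\langle \gamma, \sigma\rangle)$ and, by Stokes' theorem, there is a fixed surface $q$ with $\delta q = \gamma$ and $q_p \in \{-1,0,1\}$ such that $\langle\gamma,\sigma\rangle = \sum_{p\in P_N} q_p (\td\sigma)_p$. So the Wilson loop action depends on the gauge field \emph{only} through the plaquette variables $(\td\sigma)_p$ for plaquettes $p$ meeting the surface $q$.

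\textbf{Proof of \eqref{eq:firstcondition}.} First I would note that conditioning on $\supp\calC = V$ with $E(V)\cap\gamma = \emptyset$ forces $\sigma_e = 1$ for every edge $e \in \gamma$: indeed, by Definition \ref{def:supportlowdisorder}, any edge with $\sigma_e \ne 1$ lies in $\calC_e$, hence every plaquette bounding it lies in $\supp\calC = V$, so such an edge lies in $E(V)$; since no edge of $\gamma$ is in $E(V)$, every edge of $\gamma$ has $\sigma_e = 1$. Therefore $\langle\gamma,\sigma\rangle = \sum_{i} \sigma_{e_i} = \rho(1)$-valued, i.e. $W_\gamma(\calC) = \rho(\sum_i 1) = 1$ for \emph{every} configuration $\calC$ with $\supp\calC = V$. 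Taking the conditional expectation of a function that is constantly $1$ gives $\mathbb{E}[W_\gamma \mid \supp\calC = V] = 1$, which is \eqref{eq:firstcondition}. (Strictly, this also covers the second bullet of Definition \ref{def:non-contributing} only partially; a minimal vortex $P(e)$ with $e\notin\gamma$ may still have other boundary edges of its plaquettes lying on $\gamma$, so that case is not subsumed here — but the statement as phrased in this lemma only asserts \eqref{eq:firstcondition} under the $E(V)\cap\gamma=\emptyset$ hypothesis, so this suffices.)

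\textbf{Proof of \eqref{eq:secondcondition}.} Here the argument is slightly less immediate because now $\sigma_e$ can be $-1$ on edges of $\gamma$. By Definition \ref{def:WilsonLoopNontrivial}, a configuration $\calC$ is Wilson loop trivial when it is \emph{not} the case that there is both a plaquette in $\supp\calC$ with a boundary edge on $\gamma$ and some plaquette $p'$ in $\supp\calC$ with $(\td\sigma)_{p'}\ne 1$. I would split into the two ways this can fail. If no plaquette of $\supp\calC$ has a boundary edge on $\gamma$, then (as in the previous paragraph, using that any edge with $\sigma_e\ne1$ forces all its bounding plaquettes into $\supp\calC$) every edge of $\gamma$ has $\sigma_e=1$, so $W_\gamma(\calC)=1$. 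If instead every plaquette $p'\in\supp\calC$ has $(\td\sigma)_{p'}=1$, then I claim $W_\gamma(\calC)=1$ as well: using the Stokes decomposition $\langle\gamma,\sigma\rangle = \sum_p q_p(\td\sigma)_p$, the only plaquettes contributing are those with $q_p\ne0$, i.e. those meeting the surface; but the surface $q$ can be taken so that $\delta q=\gamma$, and any plaquette $p$ with $q_p\ne 0$ that had $(\td\sigma)_p\ne1$ would have an edge with $\sigma_e\ne1$, placing $p$ in $\supp\calC$ — contradicting that all plaquettes of $\supp\calC$ have trivial curvature. Hence $(\td\sigma)_p=1$ for all $p$ with $q_p\ne0$, so $\langle\gamma,\sigma\rangle=\sum_p q_p\cdot 1$, and since $\delta q = \gamma$ is a closed loop, $\sum_p q_p (\td\sigma)_p$ reduces to $\sum_{e\in\gamma}\sigma_e = \rho(1)$-valued when all relevant curvatures vanish; thus $W_\gamma(\calC) = 1$. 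In either branch $W_\gamma\equiv 1$ on the conditioning event, giving \eqref{eq:secondcondition}.

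\textbf{Main obstacle.} The delicate point is the second branch of \eqref{eq:secondcondition}: making precise that when all plaquettes in $\supp\calC$ have $(\td\sigma)_p=1$, the one-form $\sigma$ restricted to the plaquettes touched by the surface $q$ is ``flat enough'' that $W_\gamma=1$. One must be careful that $\supp\calC$ is defined via \emph{all} plaquettes bounding an excited edge, not just those with nontrivial curvature, and that the surface $q$ may pass through plaquettes in $\supp\calC$ that happen to have trivial curvature; the cleanest way to handle this is to observe directly that if no edge of $\gamma$ has $\sigma_e\ne1$ we are in the first branch, while if some edge of $\gamma$ has $\sigma_e\ne1$ then a plaquette bounding that edge lies in $\supp\calC$ and has a boundary edge on $\gamma$, so for Wilson loop triviality we would need every plaquette of $\supp\calC$ to have trivial curvature — and then the contribution of the excited edges on $\gamma$ can be ``unwound'' along $\gamma$ using flatness, which for the abelian $Z_2$ group simply says that a closed loop of flat holonomy has trivial total holonomy. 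I would present this last step via the Stokes/surface identity already set up in the excerpt rather than reproving flatness from scratch.
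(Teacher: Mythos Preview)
Your proposal is correct and follows essentially the same approach as the paper. For \eqref{eq:firstcondition} you argue, as the paper does, that $E(V)\cap\gamma=\emptyset$ forces $\sigma_e=1$ on every edge of $\gamma$, whence $W_\gamma=1$; for \eqref{eq:secondcondition} you split into the same two cases and use the Stokes identity $\langle\gamma,\sigma\rangle=\sum_p q_p(\td\sigma)_p$ in the ``flat'' branch. One small streamlining: in your second branch you restrict attention to plaquettes with $q_p\ne0$, but the paper observes more directly (via the earlier lemma that any $p$ with $(\td\sigma)_p\ne1$ lies in $\supp\calC$) that the hypothesis forces $(\td\sigma)_p=1$ for \emph{every} plaquette $p\in P_N$, from which $\langle S_\gamma,\td\sigma\rangle=1$ is immediate without any further unwinding argument. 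Your parenthetical about the minimal-vortex case not being covered by the stated hypothesis is accurate; the paper does treat that case in its proof (using Remark~\ref{rmk:minimal}), but as you note it is not required by the lemma as stated.
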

\begin{proof}

Due to our definition of support, it is clear that if $V$ is a vortex whose edge set $E(V)$ shares no edge with $\gamma$, then $\sigma_e =1$ for all edges $e \in \gamma$. Otherwise, if $\sigma_e \ne 1$ for an edge in $\gamma$, then the plaquettes surrounding that edge would be in the support of our configuration. This would imply that $E(V)$ contains an edge of $\gamma$. Finally, it is clear that if $\sigma_e=1$ for all edges of $\gamma$, then the Wilson loop action $W_{\gamma}$ is clearly 1. This deals with the first case of non-contributing vortex from Definition \ref{def:non-contributing}.

The second case is that the non-contributing vortex is minimal $P(e)$, but the central edge $e$ is not in $\gamma$. From our discussion in \ref{rmk:minimal}, we know that all edges $e' \in E(P(e))$ that are not $e$ itself are set to $1$. Again, this would imply that $\sigma_e$ is $1$ along all edges of $\gamma$ and, furthermore, the Wilson loop action is $1$.
This deals with the case of \eqref{eq:firstcondition}.

Now, we deal with \eqref{eq:secondcondition}. Note that if a configuration $\calC$ is Wilson loop trivial then either $E(\supp \text{ } \calC)$ will not contain an edge of $\gamma$ or $(\td \sigma)_p=1$ for all plaquettes $p \in P_N$. We have already dealt with the former possibility in the last paragraph. We now deal with the second possibility.

A benefit of the abelian case is that we can rewrite our Wilson loop expectation as the integral of $\gamma$ with respect to a $1$-from $\sigma$. As we have mentioned earlier, Stokes' theorem allows us to express this Wilson loop integral as the integral of a surface $S_{\gamma}$ whose boundary is $\gamma$ with respect to the $2$-from $(\td \sigma)$. Now, if $(\td \sigma)_p=1$ for all $p \in P_N$, then clearly the integral $\langle S_{\gamma}, \td \sigma \rangle =1$. Thus, the Wilson loop action is still $1$. This completes the proof of equation \eqref{eq:secondcondition}.

\end{proof}

\subsubsection{The main term contribution}


 The following lemma describes the contribution to the Wilson loop action from the minimal vortices. These should be the simplest excitations.
\begin{lem} \label{lem:minimalCont}

    Consider an excitation $\calC$ and
    let 
    $\supp \text{ }\calC = V_1 \cup V_2 \cup \ldots V_N$ be the vortex decomposition of $\supp \text{ } \calC$.
    
    We say that $\calC$ belongs to the event $E$ if satisfies certain properties regarding the vortices in the support. To be in the set $E$,
     at least one of the following holds for each $i$:
    \begin{itemize}
    \item The vortex $V_i$ is minimal and is centered around an edge $e$ in $\gamma$ .
    \item $V_i$ is non-contributing to the Wilson loop action.
    \item $V_i$ is not non-contributing to the Wilson loop action, but $\calC$ restricted to $V_i$ is Wilson loop trivial.
    \end{itemize}
    
    Restricted to the event $E$, we have
    \begin{equation} \label{eq:restcomp}
        \mathbb{E}[W_\gamma| E] = \rho(-1)^{M_{\gamma}},
    \end{equation}
    where $M_{\gamma}$ is the number of minimal vortices centered along an edge of $\gamma$.
\end{lem}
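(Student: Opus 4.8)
The plan is to condition on the full vortex decomposition $\supp \calC = V_1 \cup \cdots \cup V_N$ and use the multiplicativity established in Lemma \ref{lem:compsplit} to factor the conditional expectation of $W_\gamma$ as a product over the individual vortices. Since $W_\gamma(\calC) = \rho(\langle \gamma, \sigma\rangle)$ and $\rho$ is a one-dimensional representation, $W_\gamma$ is itself multiplicative over a decomposition of $\sigma$ into contributions supported on disjoint edge sets: if $\sigma = \sigma^1 \cdots \sigma^N$ with $\sigma^i$ supported on $E(V_i)$, then $\langle\gamma,\sigma\rangle = \sum_i \langle \gamma, \sigma^i\rangle$ and hence $W_\gamma(\calC) = \prod_i W_\gamma(\calC_i)$. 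Combined with the splitting bijection of Lemma \ref{lem:compsplit}, which realizes exactly such a decomposition of $\sigma$ (it is a restriction map on the gauge field), this gives
\begin{equation}
    \mathbb{E}[W_\gamma \mid \supp \calC = V_1 \cup \cdots \cup V_N] = \prod_{i=1}^N \mathbb{E}[W_\gamma \mid \supp \calC = V_i],
\end{equation}
where on the right the $i$-th factor is the conditional expectation for a configuration whose support is the single vortex $V_i$.

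\textbf{Handling the three cases for each $V_i$.} On the event $E$, each $V_i$ falls into one of the three listed types, and I would evaluate its factor accordingly. If $V_i$ is non-contributing (first bullet of Definition \ref{def:non-contributing}, i.e. $E(V_i)$ shares no edge with $\gamma$, or second bullet, i.e. $V_i$ is minimal but not centered on an edge of $\gamma$), then equation \eqref{eq:firstcondition} of the preceding lemma gives $\mathbb{E}[W_\gamma \mid \supp\calC = V_i] = 1$. If $V_i$ is not non-contributing but $\calC$ restricted to $V_i$ is Wilson loop trivial, then equation \eqref{eq:secondcondition} (in the form applying to a configuration supported on $V_i$ alone — here I use the remark that Wilson-loop-triviality is preserved under the splitting, so restricting to the single vortex $V_i$ keeps it Wilson loop trivial) gives the factor $1$ as well. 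The only remaining type is $V_i$ minimal and centered around an edge $e \in \gamma$. By Remark \ref{rmk:minimal}, such a configuration assigns all vertices of $V(P(e))$ the same Higgs charge and sets $\sigma_{e'} = 1$ for every edge $e' \in E(P(e))$ with $e' \ne e$, while $\sigma_e = -1$ (it must be $\ne 1$ for $P(e)$ to be excited, and $G = Z_2$ forces $\sigma_e = -1$). For such a vortex, $\langle \gamma, \sigma^i\rangle = \sigma_e = -1$ since $e$ is the unique excited edge and it lies on $\gamma$ exactly once (the loop is non-intersecting), so $W_\gamma(\calC_i) = \rho(-1)$; this value is deterministic given that $V_i$ is of this type, so $\mathbb{E}[W_\gamma \mid \supp\calC = V_i] = \rho(-1)$.

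\textbf{Assembling.} Multiplying the factors, each minimal vortex centered on an edge of $\gamma$ contributes $\rho(-1)$ and every other vortex contributes $1$, so the product over $i$ is $\rho(-1)^{M_\gamma}$ where $M_\gamma$ counts precisely the vortices of the first type. Averaging this identity over all vortex decompositions consistent with the event $E$ — noting that the right-hand side $\rho(-1)^{M_\gamma}$ depends only on the decomposition, not on the finer configuration data — yields $\mathbb{E}[W_\gamma \mid E] = \mathbb{E}[\rho(-1)^{M_\gamma} \mid E]$; but on $E$ the right side is already a function of which vortices appear, and the claim as stated identifies $\mathbb{E}[W_\gamma \mid E]$ with $\rho(-1)^{M_\gamma}$ in the pointwise sense that conditionally on the decomposition the Wilson loop action equals $\rho(-1)^{M_\gamma}$ almost surely. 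I expect the main obstacle to be purely bookkeeping: making precise that the splitting of Lemma \ref{lem:compsplit} genuinely decomposes the $1$-form $\sigma$ additively with respect to $\gamma$ (so that $W_\gamma$ factorizes), and checking that the ``Wilson loop trivial'' property for a configuration supported on a single vortex $V_i$ correctly invokes \eqref{eq:secondcondition} — in particular that when $\calC|_{V_i}$ is Wilson loop trivial but $E(V_i)$ \emph{does} meet $\gamma$, one really is in the sub-case $(\td\sigma)_p = 1$ for all $p$, so that the Stokes'-theorem argument of the previous lemma applies verbatim.
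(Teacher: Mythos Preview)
Your strategy is essentially the paper's: factor over the vortex decomposition via Lemma~\ref{lem:compsplit} and evaluate each piece by type. The paper carries this out by conditioning on a finer event $E_{\mathcal{V}_1,\mathcal{V}_2,\mathcal{V}_3}$ that records both the support \emph{and} which vortices carry Wilson-loop-trivial configurations, then writing the conditional expectation as a ratio of products of $\Phi$, $\Phi_W$, and $\Phi-\Phi_{NT}$.

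There is, however, a real mismatch in your write-up at the third bullet. Your displayed factorization
\[
\mathbb{E}[W_\gamma \mid \supp \calC = V_1 \cup \cdots \cup V_N] \;=\; \prod_{i} \mathbb{E}[W_\gamma \mid \supp \calC = V_i]
\]
conditions only on the support, so the $i$th factor equals $\Phi_W(V_i)/\Phi(V_i)$, a number depending on $V_i$ alone. For the first two types that is fine, since being non-contributing or being minimal-centered-on-$\gamma$ is a property of $V_i$. But ``$\calC$ restricted to $V_i$ is Wilson loop trivial'' is a property of the \emph{configuration}: the factor $\Phi_W(V_i)/\Phi(V_i)$ averages over both Wilson-loop-trivial and non-trivial configurations supported on $V_i$ and is not $1$ in general, so equation~\eqref{eq:secondcondition} does not apply to it. The paper's fix is precisely to fold the triviality constraint into the conditioning event, which produces the cancelling factor $\Phi(V_k)-\Phi_{NT}(V_k)$ in both numerator and denominator. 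Your argument is easiest to repair by dropping the conditional-expectation formulation altogether and running the case analysis pointwise: for each fixed $\calC\in E$ the splitting of Lemma~\ref{lem:compsplit} gives $W_\gamma(\calC)=\prod_i W_\gamma(\calC_i)$ (this is where one-dimensionality of $\rho$ and the restriction nature of the splitting on $\sigma$ enter), and your three-case evaluation already shows each $W_\gamma(\calC_i)$ equals $1$ or $\rho(-1)$ deterministically. You have all the ingredients for this in your last two paragraphs; the displayed equation is the only piece that does not fit.
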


\begin{proof}

By definition, 
$$
\mathbb{E}[W_\gamma| \supp(\calC) = P ] = \frac{\Phi_W(P)}{\Phi(P)},
$$.

Consider a decomposition into compatible sets of plaquettes of the form,
$\mathcal{V}_1 \cup \mathcal{V}_2 \cup \mathcal{V}_3$, where each $\mathcal{V}_i$ has the vortex decomposition
$\mathcal{V}_1 = V^1_1 \cup \ldots \cup V^1_a$,
$\mathcal{V}_2 = V^2_1 \cup \ldots \cup V^2_b$,
$\mathcal{V}_3 = V^3_1 \cup \ldots \cup V^3_c$.

Define $E_{\mathcal{V}_1,\mathcal{V}_2,\mathcal{V}_3}$ to be the set of configurations $\calC$ on which $\mathcal{V}_1$ is non-contributing $\mathcal{V}_2$ are minimal and centered on some edge of $\gamma$, and $\mathcal{V}_3$ are not non-contributing, but $\calC$ is trivial on $\mathcal{V}_3$. Assume that $E_{\mathcal{V}_1,\mathcal{V}_2,\mathcal{V}_3}$ is non-empty.

We explicitly see that,
\begin{equation}
\begin{aligned}
    \mathbb{E}[W_{\gamma}|\calC \in E_{\mathcal{V}_1, \mathcal{V}_2,\mathcal{V}_3}] &= \frac{\prod_{i=1}^a\Phi(V^1_i) \prod_{j=1}^b \Phi_W(V^2_j) \prod_{k=1}^c [\Phi(V^3_k)-\Phi_{NT}(V^3_k)]}{\prod_{i=1}^a\Phi(V^1_i) \prod_{j=1}^b \Phi(V^2_j) \prod_{k=1}^c [\Phi(V^3_k)-\Phi_{NT}(V^3_k)]}\\
    &= \rho(-1)^{|\mathcal{V}_2|}.
\end{aligned}
\end{equation}

Noting that $E = \cup_{\substack{\mathcal{V}_1, \mathcal{V}_2, \mathcal{V}_3\\\text{all compatible}}} E_{\mathcal{V}_1,\mathcal{V}_2,\mathcal{V}_3}$, a disjoint union, this gives the result \eqref{eq:restcomp} after a removal of the conditioning.

\end{proof}

\subsubsection{Bounding the probability of large vortices}

The following lemma bounds the probability of observing a vortex $V$ in the support by the functions $\Phi$ we have defined earlier.

\begin{lem}\label{lm:vortex}
Let $V$ be a vortex of cardinality $|V|=m$, where the cardinality counts the number of plaquettes in $V$. Then, we have the following equation,
\begin{equation}
    \mathbb{P}( V \text{is a vortex } \in \supp \text{  }  \calC) \le \Phi(V).
\end{equation}

Similarly,

\begin{equation}
    \mathbb{P}(V \text{ is a vortex } \in \supp \text{ } \calC, \calC \text{ is non-trivial on }V) \le \Phi_{NT}(V).
\end{equation}

\end{lem}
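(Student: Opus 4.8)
The plan is to recognize Lemma \ref{lm:vortex} as the standard single-polymer bound of a polymer expansion: the probability of observing one prescribed polymer is at most its weight. The ingredients are the probability identity $\mathbb{P}(\supp(\calC)=P)=\Phi(P)/\sum_{P'\subseteq P_N}\Phi(P')$ from the previous lemma, the multiplicativity of $\Phi$ over compatible sets from Lemma \ref{lem:compsplit} (together with the multiplicative statement for $\Phi_{NT}$ recorded in the remark after Definition \ref{def:WilsonLoopNontrivial}), and the trivial fact that $\Phi(P)\ge 0$ and $\Phi_{NT}(P)\ge 0$ for every $P$, each being a finite sum of products of positive exponentials.

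First I would unwind the event. By Definition \ref{def:graphg2}, the statement that the vortex $V$ appears in the vortex decomposition of $\supp(\calC)$ is equivalent to $\supp(\calC)=V\cup R$ with $R:=\supp(\calC)\setminus V$ compatible with $V$; conversely each plaquette set $R$ compatible with (and disjoint from) $V$ is realized by some configuration, and different $R$'s give disjoint events. Summing the probability identity over these events and factoring each numerator term via Lemma \ref{lem:compsplit} as $\Phi(V\cup R)=\Phi(V)\Phi(R)$ gives
\begin{equation}
    \mathbb{P}\big(V\text{ is a vortex}\in\supp(\calC)\big)=\Phi(V)\cdot\frac{\sum_{R\text{ compatible with }V}\Phi(R)}{\sum_{P\subseteq P_N}\Phi(P)}.
\end{equation}
Because the plaquette sets compatible with $V$ form a sub-collection of all subsets of $P_N$ and every $\Phi(R)$ is nonnegative, the fraction is at most $1$, which proves the first bound.

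For the second inequality I would run the identical computation but keep, in the numerator, only the configurations $\calC$ with $\supp(\calC)=V\cup R$ that are Wilson loop non-trivial on $V$. The remark after Definition \ref{def:WilsonLoopNontrivial} replaces the contribution $\Phi(V)\Phi(R)$ of such a fixed $R$ by $\Phi_{NT}(V)\Phi(R)$ (the global-flip factor of $2$ cancels against $Z_{\Lambda,\beta,\kappa}$ exactly as in the unrestricted case), so summing over compatible $R$ and applying the same nonnegativity/sub-collection comparison yields the bound $\Phi_{NT}(V)$.

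The step I expect to require the most care is the event decomposition used in both parts: checking that "$V$ occurs as a vortex of $\supp(\calC)$" is exactly a disjoint union over compatible complements $R$, so that Lemma \ref{lem:compsplit} may be applied summand by summand, and --- in the non-trivial case --- that Wilson loop non-triviality of $\calC$ restricted to $V$ is preserved by the splitting bijection of Lemma \ref{lem:compsplit}. The latter is precisely the content of the remark after Definition \ref{def:WilsonLoopNontrivial} (its justification being that the splitting acts on $\sigma$ by restriction and that Wilson loop non-triviality depends only on $\sigma$ and $\supp(\calC)$), so I would cite it rather than reprove it. Everything else is bookkeeping plus the positivity of $\Phi$ and $\Phi_{NT}$.
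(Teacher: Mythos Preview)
Your proposal is correct and follows essentially the same approach as the paper's proof: both decompose the event as a disjoint union over supports $V\cup R$ with $R$ compatible with $V$, factor each term via Lemma \ref{lem:compsplit} (respectively the remark after Definition \ref{def:WilsonLoopNontrivial} for the non-trivial case), and then bound the resulting ratio of sums by $1$ using nonnegativity and the observation that the compatible $R$'s form a sub-collection of all plaquette subsets.
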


\begin{proof}
Let $\mathcal{P}_V$ be the set of possible supports of configurations $\calC$ that include  $V$ in the vortex decomposition.  If $P$ is a vortex decomposition $\mathcal{P}_V$, we use $P \setminus V$ to denote the remaining vortices. We remark that $P \setminus V$ is still a legitimate vortex decomposition corresponding to some configuration $\calC$. By applying Lemma \ref{lem:compsplit},
we see that,
\begin{equation}
    \sum_{P \in \mathcal{P}_V} \Phi(P) = \sum_{P \in \mathcal{P}_V} \Phi(V) \Phi(P \setminus V).
\end{equation}
Since each vortex decomposition $p \setminus V$ can be constructed to be the support of some configuration $Z$, we see that we immediately have the inequality,
\begin{equation}
    Z \ge  2\sum_{P \in \mathcal{P}_V } \Phi(P \setminus V).
\end{equation}

Thus, 
\begin{equation}
\begin{aligned}
    \mathbb{P}(V \text{ is a vortex of } \supp\text{ } \calC)& = \frac{2\sum_{P \in \mathcal{P}_V} \Phi(P)}{ Z}\\
    &\le \Phi(V) \frac{\sum_{P \in \mathcal{P}_V } \Phi(P \setminus V)}{\sum_{P \in \mathcal{P}_V } \Phi(P \setminus V)} = \Phi(V),
\end{aligned}
\end{equation}
as desired.

For the second part of the theorem, one can see that we can count the contribution of excitations whose support contains $V$ but is non-trivial on $V$ as
\begin{equation}
   2 \Phi_{NT}(V) \sum_{P \in P_V} \Phi(P \setminus V).
\end{equation}
This is due to the fact that the construction in Lemma \ref{lem:compsplit} merely projects the value of $\sigma_e$ to its appropriate subset. Thus, if $(\td \sigma)_p \ne 1$ for some plaquette, it will hold true for the appropriate projection. Since we assumed $\calC$ was nontrivial on $V$, the same must hold true for the projection to $V$.

\end{proof}


Our next lemma uses our previous lemma to prove that large vortices are exponentially suppressed in probability.

\begin{lem} \label{lem:boundlarge}

Define the constant $C_1$ as
\begin{equation}
\begin{aligned}
    C_1:= 2^8  \max\bigg(&\exp[-2 \beta \text{Re}( \rho(1) - \rho(-1))],\\
    &\exp[-\frac{\kappa}{(d-1)}  (\text{Re}f(1,1) - \max_{(a,b) \ne(1,1)}  \text{Re}f(a,b))] ]\bigg).
\end{aligned}
\end{equation}

Let $V$ be a vortex with $2k$ oriented plaquettes (alternatively, $k$ pairs of plaquettes $p,-p$), then we have the following bound.
\begin{equation} \label{eq:bndsinglevortex}
    \phi(V) \le (C_1)^{k},
\end{equation}

As a consequence of this bound and the previous Lemma \ref{lm:vortex}, the probability that there is a configuration $\calC$ whose support contains a vortex of size $k$ that is not non-contributing to the Wilson loop action is bounded above by
\begin{equation} \label{eq:boundovergamma}
    2(d-1) |\gamma| (C(d) C_1)^k,
\end{equation}
where $C(d)$ is a constant that only depends on the dimension $d$.

In the case that,
$$
2\beta (\rho(1) - \rho(-1)) > \frac{\kappa}{(d-1)} [\text{Re}f(1,1) - \max_{(a,b) \ne (1,1)} \text{Re} f(a,b)],
$$
then we have a better estimate when we consider Wilson loop non-trivial configurations.

\begin{equation}
    \Phi_{NT}(V) \le \exp[-4(d-1) \beta \text{Re}[\rho(1) - \rho(-1)]] (C_1)^{k-2(d-1)}.
\end{equation}
Similarly,
the probability that there is a configuration $\calC$ such that there is a vortex $V$ of size $2k$ that is not non-contributing to the Wilson loop and such that $\calC$ is nontrivial when restricted to $V$ is bounded above by 
\begin{equation} \label{eq:probkbound}
     2(d-1) |\gamma|( C(d)  C_1)^{k-2(d-1)} (2^8 C(d)  \exp[-2\beta\text{Re}(\rho(1) - \rho(-1))])^{2(d-1)}.
\end{equation}

\end{lem}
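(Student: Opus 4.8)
The plan is to prove \eqref{eq:bndsinglevortex} by splitting the sum defining $\Phi(V)$ into a \emph{count} of the contributing configurations and a uniform \emph{weight bound} per configuration, and then to push this through Lemma \ref{lm:vortex} and a union bound over $\gamma$. For the count: if $\supp(\calC)=V$ then $\sigma_e=1$ for every $e\notin E(V)$, and (exactly as in the proof of Lemma \ref{lem:compsplit}) the Higgs charges off $V(V)$ are forced by monochromaticity of the complementary components together with the agreement condition along edges meeting $V(V)$ and the normalization $\sum_v\phi_v>0$; hence $\calC$ is essentially determined by $(\sigma|_{E(V)},\phi|_{V(V)})$, and the number of such $\calC$ is at most $2^{|E(V)|+|V(V)|}\le 2^{8k}$, since each of the $k$ plaquettes of $V$ contributes at most $4$ edges and $4$ vertices. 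For the weight bound: write $m_1$ for the number of $p\in V$ with $(\td\sigma)_p\ne 1$ and $m_2=|\calC_e\cap E(V)|$ for the number of excited edges; then the weight of $\calC$ is at most $\exp[-\beta m_1\,\mathrm{Re}(\rho(1)-\rho(-1))]\,\delta_\kappa^{m_2}$, where $\delta_\kappa:=\exp[-\kappa(\mathrm{Re}\,f(1,1)-\max_{(a,b)\ne(1,1)}\mathrm{Re}\,f(a,b))]$, because each non‑excited edge of $E(V)$ contributes the factor $1$ and each excited edge contributes at most $\delta_\kappa$.

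The combinatorial heart of the argument is the inequality
\[
m_1+2(d-1)m_2\ \ge\ 2k .
\]
To prove it, count incidences between plaquettes of $V$ and excited edges: every $p\in V$ bounds at least one excited edge, every excited edge bounds exactly $2(d-1)$ plaquettes and all of them lie in $V$, so $2(d-1)m_2\ge|V_1|+2(k-|V_1|)$ where $V_1$ is the set of plaquettes of $V$ bounding exactly one excited edge. The key observation is that a plaquette bounding exactly one excited edge is flipped: the unique excited edge cannot be a pure Higgs edge $\phi_v\ne\phi_w$, since then the three remaining edges of $p$ would be non‑excited, forcing the charges around the $4$‑cycle $\partial p$ to change an odd number of times --- impossible; hence that edge has $\sigma_e=-1$ and $(\td\sigma)_p=-1$. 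Thus $V_1\subseteq\{p:(\td\sigma)_p\ne1\}$, so $m_1\ge|V_1|$, which rearranges to the display. Writing $X:=\exp[-2\beta\,\mathrm{Re}(\rho(1)-\rho(-1))]$ and $Y:=\delta_\kappa^{1/(d-1)}$, the per‑configuration weight is at most $X^{m_1/2}Y^{(d-1)m_2}$ with $\tfrac{m_1}{2}+(d-1)m_2\ge k$, and since $X,Y\in(0,1]$ we get $X^{m_1/2}Y^{(d-1)m_2}\le\max(X,Y)^{m_1/2+(d-1)m_2}\le\max(X,Y)^k$. Multiplying by the configuration count $2^{8k}$ gives $\Phi(V)\le(2^8\max(X,Y))^k=C_1^k$. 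For the probability estimate \eqref{eq:boundovergamma}, note that by Definition \ref{def:non-contributing} a vortex which is not non‑contributing must contain an edge of $\gamma$ in $E(V)$; summing the bound of Lemma \ref{lm:vortex} over all such vortices, and using that $G_2$ has bounded degree (so the number of connected $G_2$‑subgraphs of size $k$ through a fixed plaquette is at most $C(d)^k$), the number of candidate vortices of size $k$ is at most $|\gamma|\cdot 2(d-1)\cdot C(d)^k$, which yields $2(d-1)|\gamma|(C(d)C_1)^k$.

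For the refined bounds I will rerun the same argument while extracting the extra cost forced by Wilson‑loop non‑triviality. A configuration counted by $\Phi_{NT}(V)$ has some $p'$ with $(\td\sigma)_{p'}\ne1$, hence contains a $\sigma$‑edge $e^\ast$, whose $2(d-1)$ surrounding plaquettes all lie in $V$; by the flip analysis above these plaquettes carry an additional $\beta$‑cost, giving an extra factor at most $\exp[-4(d-1)\beta\,\mathrm{Re}(\rho(1)-\rho(-1))]$ beyond what part one extracts from the remaining $k-2(d-1)$ plaquettes. In the regime $2\beta\,\mathrm{Re}(\rho(1)-\rho(-1))>\tfrac{\kappa}{d-1}(\mathrm{Re}\,f(1,1)-\max_{(a,b)\ne(1,1)}\mathrm{Re}\,f(a,b))$ this $\beta$‑cost dominates, yielding $\Phi_{NT}(V)\le\exp[-4(d-1)\beta\,\mathrm{Re}(\rho(1)-\rho(-1))]\,(C_1)^{k-2(d-1)}$; feeding this into the union bound of the previous paragraph, now with the $2(d-1)$ extracted plaquettes accounted for separately in the combinatorial factor, produces \eqref{eq:probkbound}. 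The step I expect to be the real obstacle is the combinatorial inequality $m_1+2(d-1)m_2\ge 2k$ --- specifically the local fact that a plaquette with a single excited boundary edge is necessarily flipped, since it is precisely this that makes the two suppression mechanisms ($\beta$ for flipped plaquettes via $\rho$, $\kappa$ for excited edges via $f$) jointly cover all of $V$ and forces the exponent of $\max(X,Y)$ up to $k$; the configuration count and the union‑bound bookkeeping are then routine.
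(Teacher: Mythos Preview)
Your argument for $\Phi(V)\le C_1^k$ is correct and in fact slightly sharper than the paper's. The paper does not use the parity observation at all: it simply splits the $k$ unoriented plaquettes of $V$ into the $k_1$ flipped ones and the $k-k_1$ unflipped ones, notes that each unflipped plaquette of the support must bound an excited edge, and uses that one edge can cover at most $2(d-1)$ plaquettes to get at least $(k-k_1)/2(d-1)$ excited edges; this yields a per-configuration weight $\le X^{k_1}Y^{k-k_1}\le\max(X,Y)^k$, which after the $2^{8k}$ configuration count and the same union bound over $\gamma$ gives the claimed estimates. Your incidence inequality $m_1+2(d-1)m_2\ge 2k$, proved via the nice local fact that a plaquette with exactly one excited edge must be flipped, extracts a larger exponent but lands on the same $C_1^k$.

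There is, however, a genuine gap in your $\Phi_{NT}$ bound. You assert that the $2(d-1)$ plaquettes surrounding a $\sigma$-edge $e^\ast$ ``by the flip analysis above carry an additional $\beta$-cost,'' i.e.\ are all flipped. This is false: your flip analysis applies only to plaquettes bounding \emph{exactly one} excited edge, whereas a plaquette $p$ containing $e^\ast$ may well contain a second edge with $\sigma=-1$, in which case $(\td\sigma)_p=1$ and $p$ carries no $\beta$-weight whatsoever. (Concretely: set $\sigma_{e_1}=\sigma_{e_2}=-1$ on two consecutive edges of a single plaquette $p$ and $\sigma=1$ elsewhere; then $p$ is one of the $2(d-1)$ plaquettes around $e_1$ but is unflipped.) What is actually needed, and what the paper invokes, is the global fact that the $\mathbb Z_2$-valued $2$-form $p\mapsto(\td\sigma)_p$ is closed (since $\td\td=0$), so if it is anywhere nonzero its support is a nontrivial $2$-cycle, and the smallest such cycle in the lattice has $2(d-1)$ unoriented plaquettes; this is Lemma~3.4.6 of \cite{SC20} in general dimension and gives $k_1\ge 2(d-1)$ directly. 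Plugging this lower bound into the weight estimate $X^{k_1}Y^{k-k_1}$ (maximized at $k_1=2(d-1)$ in the regime $X<Y$) then yields the $\Phi_{NT}$ bound. Your local parity argument alone cannot manufacture this lower bound on the number of flipped plaquettes.
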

\begin{proof}

Let $\calC$ be a configuration with support $V= \{P_1,P_2,\ldots,P_k\}$. 

If $2k_1$ of these plaquettes $p$ satisfy  the property that $(\td \sigma)_p \ne 1$ (here, we implicitly include a plaquette and its negative), then the probability of this configuration is immediately reduced by $\exp[-k_1 2\beta \text{Re} (\rho(1) - \rho(-1)) ]$. 

Now, if the plaquette $p$ does not satisfy $(\td \sigma)_p=1$, then necessarily, either one of the edges on its boundary satisfies $\sigma_e =-1$ or there is an edge $e=(v,w)$ on its boundary with $\phi_v\ne \phi_w$.
In any case, the presence of such an edge (and thus also its negative) reduces the probability of the configuration by a factor of,
\begin{equation}
    \exp[-2 \kappa (\text{Re}f(1,1) - \max_{(a,b) \ne (1,1)} \text{Re} f(a,b))].
\end{equation}

Now a single edge can be the cause of the excitation of at most $4(d-1)$ oriented  plaquettes. Thus, there must be at least $\frac{2(k-k_1)}{4(d-1)}$ excited oriented edges pairs $e$ and $-e$.

Thus, the probability of seeing this configuration is at most
\begin{equation}
    \exp[- k_1 2\beta \text{Re}( \rho(1) - \rho(-1)) - \frac{k-k_1}{(d-1)} \kappa (\text{Re}f(1,1) - \max_{(a,b) \ne(1,1)} \text{Re}f(a,b))].
\end{equation}

At this point, we now need to calculate the number of configurations with support $V$. Note that there are at most $4k$ vertices that are boundary vertices of plaquettes in $V$. These are the only vertices that could be assigned the Higgs boson charge $-1$. There will be at most $2^{4k}$ ways to assign the Higgs boson charge. Similarly, there are $2^{4k}$ ways to assign the value of $\sigma_e$ at the edges that bound some plaquette in the support. 

Finally, we must find a way to bound the number of ways to get a connected set of $k$ plaquettes that would intersect one of the edges of $\gamma$.

Lemma 3.4.2 of \cite{SC20} gives a useful structure theorem. In dimension $4$, there are at most $(20 e)^k$ vortices of size $2k$ that could contain any given plaquette $p$. 
A modification to general dimension will  show that there is some constant $C(d)$ such that the number of vortices of size $2k$ that contain any given plaquette is $(C(d))^k$. Here, $C(d)e^{-1}$ is the number of plaquettes that are adjacent to any given plaquette in the adjacency graph of plaquette $G_2$ from Definition \ref{def:graphg2}.

The logic applied is similar. Each vortex is a connected subgraph of $G_2$ and thus has a connected spanning tree. The problem of counting the number of vortices of size $2k$ can now be divided into two parts. The first part is to count the number of non-isomorphic  rooted spanning trees of size $k$. Now given a spanning tree $T$, the second part is to count the number of ways to embed this spanning tree in $G_2$ with the root of the spanning three the same as our special plaquette $p$. Since no two vortices of size $k$ can have the same spanning tree when embedded in $G_2$, our counting procedure will clearly be an upper bound on the number of vortices containing $p$.

The number of non-isomorphic spanning trees of size $k$ can be bounded by $k (k!)^{-1}k^{k-2}$. To get this, we divide the number of labeled spanning trees from Cayley's Theorem by the number of permutations of the labels. We also have $k$ choices for the root. This is less than $e^k$ and is good enough for our purposes.

Now to embed the graph in $G_2$, we use the following graph. Assume that we have already embedded the vertex $v$ in $G_2$. Let $w$ be a neighbor of $v$ that we have not yet fixed. There are $C(d)e^{-1}$ ways to choose to embed the vertex $w$; it has to be one of the $C(d)e^{-1}$ neighbors of $v$ in $G_2$. If we inductively perform this procedure starting from the root, we see that for every tree $T$, there are at most $(C(d)e^{-1})^k$ ways to embed the spanning tree in $G_2$. Combining our two estimates show that there are at most $(C(d))^k$ different vortices of size $k$ that could contain any given plaquette $p$.

Now, if we consider the $2(d-1)|\gamma|$ possible plaquettes that are adjacent to edges of $\gamma$, then we see that there are at most $2(d-1)|\gamma| (C(d) )^k$ vortices of size $k$ that are not-noncontributing to the Wilson loop action.

By a union bound, the total probability of seeing one of these excitations is bounded from above by,

\begin{equation}
\begin{aligned}
2(d-1) |\gamma| (2^8 C(d) )^k \max_{k_1} &\exp\bigg[- k_1 2\beta\text{Re}(\rho(1) - \rho(-1))\\& - \frac{k-k_1}{(d-1)} \kappa(\text{Re}f(1,1) - \max_{(a,b) \ne (1,1)}\text{Re} f(a,b))\bigg].
\end{aligned}
\end{equation}

If we consider the case that 
\begin{equation} \label{eq:cond}
    \beta\text{Re}(\rho(1) - \rho(-1)) < \frac{\kappa (\text{Re}f(1,1) - \max_{(a,b) \ne (1,1)} \text{Re} f(a,b))}{2(d-1)},
\end{equation}
then the maximum is located at $k_1 = k$. Otherwise, the maximum is located at $k_1 = 0$.

By considering only configurations that are Wilson loop non-trivial, we can get further cancellations in the case detailed in \eqref{eq:cond}. 

A non-trivial configurations necessarily has $k_1>0$. In fact, $k_1\ge 2(d-1)$ for any configuration $\sigma$ such that there exists a plaquette $p$ with $(\td \sigma)_p \ne 0$. This is Lemma 3.4.6 of \cite{SC20} generalized to higher dimensions.

Thus, we could say that the probability of seeing a configuration whose support contains a vortex $V$ of size $k$ that is not non-contributing and  is a vortex of size $k$  and, simultaneously, is non-trivial when restricted to this $V$ is at most,
\begin{equation}
\begin{aligned}
2(d-1) |\gamma| (2^8 C(d) )^k  &\exp\bigg[- 4(d-1) \beta \text{Re}(\rho(1) - \rho(-1)) \\ &- \frac{k-2(d-1)}{(d-1)} \kappa (\text{Re}f(1,1) - \max_{(a,b) \ne (1,1)} \text{Re} f(a,b))\bigg].
\end{aligned}
\end{equation}
\end{proof}

As a consequence, we can now state the main result of our section. This theorem shows that the main contribution to the Wilson loop expectation comes from the minimal vortices we considers.
\begin{thm} \label{thm:reductiontominimal} Recall the constants $C(d)$ and $C_1$ from our previous Lemma \ref{lem:boundlarge}. Assume that $C(d)C_1<1$ and 
define the constant $D$ as
\begin{equation}
\begin{aligned}
    D:=\max\bigg[& \frac{e^{-2\beta\text{Re}(\rho(1) - \rho(-1))}}{1- 2^8  C(d)  e^{-2\beta\text{Re}(\rho(1)- \rho(-1))}},\\& \frac{e^{-\frac{\kappa}{(d-1)}(\text{Re}f(1,1) - \max_{(a,b) \ne (1,1)} \text{Re} f(a,b))}}{1 - 2^8  C(d)e^{-\frac{\kappa}{(d-1)}(\text{Re}f(1,1) - \max_{(a,b) \ne (1,1)} \text{Re}f(a,b))}} \bigg] .
\end{aligned}
\end{equation}
 We have the following bounds on the expectation of the Wilson loop action,

\begin{equation}
\begin{aligned}
    &|\mathbb{E}[W_{\gamma}] - \mathbb{E}[\rho(-1)^{M_{\gamma}}]| \le\\&
    \hspace{1 cm} 2(d-1)|\gamma|  (2^8 C(d) )^{2(d-1)+1} \exp[-4(d-1) \beta \text{Re} (\rho(1) - \rho(-1)]D, 
    \end{aligned}
\end{equation}
where $M_{\gamma}$ is the number of minimal vortices that are centered around an edge $e$ of $\gamma$.
\end{thm}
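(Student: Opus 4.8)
The plan is to compare $W_\gamma$ with $\rho(-1)^{M_\gamma}$ by conditioning on the event $E$ of Lemma~\ref{lem:minimalCont}. On $E$ that lemma gives $\mathbb{E}[W_\gamma \mid E] = \rho(-1)^{M_\gamma}$; since $M_\gamma$ (the number of minimal vortices in the vortex decomposition that are centered on edges of $\gamma$) is a function of $\supp\calC$ alone, and the proof of Lemma~\ref{lem:minimalCont} computes the conditional expectation as $\rho(-1)^{M_\gamma}$ on each piece $E_{\calv_1,\calv_2,\calv_3}$ of the partition of $E$, this upgrades to $\mathbb{E}[W_\gamma\,\mathbbm 1_E] = \mathbb{E}[\rho(-1)^{M_\gamma}\,\mathbbm 1_E]$. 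Hence
\begin{equation}
\mathbb{E}[W_\gamma]-\mathbb{E}[\rho(-1)^{M_\gamma}]=\mathbb{E}\big[(W_\gamma-\rho(-1)^{M_\gamma})\,\mathbbm 1_{E^c}\big],
\end{equation}
and because $\rho$ is a one-dimensional unitary representation of $Z_2$ we have $|W_\gamma|\le1$ and $|\rho(-1)^{M_\gamma}|\le1$, so the problem reduces to bounding $\mathbb{P}(E^c)$ by (a constant times) the right-hand side of the theorem.

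To estimate $\mathbb{P}(E^c)$, I would first pin down what failure of $E$ means: if $\calC\notin E$, then in the vortex decomposition $\supp\calC=V_1\cup\cdots\cup V_N$ there is some $V_i$ which is simultaneously (i) not non-contributing to $W_\gamma$, (ii) not a minimal vortex centered on an edge of $\gamma$, and (iii) such that $\calC$ restricted to $V_i$ is Wilson loop non-trivial. By Definition~\ref{def:non-contributing}, (i) forces $E(V_i)$ to share an edge with $\gamma$ and forbids $V_i$ from being a minimal vortex lying off $\gamma$; combined with (ii) and the elementary observation (in the spirit of Remark~\ref{rmk:minimal}) that a vortex appearing in a support and having only $2(d-1)$ plaquettes is necessarily a minimal vortex $P(e)$, this forces $V_i$ to have at least $2(d-1)+1$ plaquettes. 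Now take a union bound over the size $2k$ (with $k\ge2(d-1)+1$) of such a vortex and over the $2(d-1)|\gamma|$ admissible anchor plaquettes along $\gamma$, invoking the probability estimate \eqref{eq:probkbound} of Lemma~\ref{lem:boundlarge} in the disorder regime $2\beta\,\text{Re}(\rho(1)-\rho(-1))>\frac{\kappa}{d-1}[\text{Re}f(1,1)-\max_{(a,b)\ne(1,1)}\text{Re}f(a,b)]$ and the cruder estimate \eqref{eq:boundovergamma} in the opposite regime. In the first regime the $k$-independent factor $(2^8C(d)e^{-2\beta\,\text{Re}(\rho(1)-\rho(-1))})^{2(d-1)}=(2^8C(d))^{2(d-1)}e^{-4(d-1)\beta\,\text{Re}(\rho(1)-\rho(-1))}$ leaves the sum, while the remaining geometric tail $\sum_{j\ge1}(C(d)C_1)^j=\frac{C(d)C_1}{1-C(d)C_1}=2^8C(d)\,D$ — using monotonicity of $x\mapsto x/(1-2^8C(d)x)$ so that the maximum over the two exponents inside $C_1$ matches the maximum defining $D$ — reproduces exactly the stated bound; the opposite regime produces the other branch of $D$. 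Combined with the reduction of the first paragraph this yields the theorem (the factor $2$ from $|W_\gamma-\rho(-1)^{M_\gamma}|\le2$ being absorbed into the generous constants).

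The main obstacle is the combinatorial/geometric step isolating the precise shape of $E^c$: that failing $E$ really forces a vortex which is at once Wilson loop non-trivial, anchored on $\gamma$, and strictly larger than a minimal vortex. This is what permits the use of the improved estimate \eqref{eq:probkbound}, carrying the extra factor $e^{-4(d-1)\beta\,\text{Re}(\rho(1)-\rho(-1))}$, rather than only \eqref{eq:boundovergamma}, and is therefore what makes the final error genuinely smaller than the naive $O(|\gamma|e^{-2\beta\cdots})$. The ingredients behind it — the structure theorem of \cite{SC20} (Lemma 3.4.2 there, extended to dimension $d$) counting vortices of given size through a fixed plaquette, and the extension of Lemma 3.4.6 of \cite{SC20} forcing at least $2(d-1)$ excited plaquettes whenever some $(\td\sigma)_p\ne1$ — are already packaged inside Lemma~\ref{lem:boundlarge}, so the remaining work is bookkeeping of $C(d)$, $C_1$, $D$ and of the two disorder regimes, together with checking the harmless degenerate case of vortices meeting $\partial\Lambda_N$, which does not occur for $\gamma$ in the interior.
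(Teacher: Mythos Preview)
Your proposal is correct and follows essentially the same route as the paper's proof: reduce to bounding $\mathbb{P}(E^c)$ via Lemma~\ref{lem:minimalCont}, then sum the estimate \eqref{eq:probkbound} of Lemma~\ref{lem:boundlarge} over $k\ge 2(d-1)+1$ and identify the geometric tail with $2^8C(d)\,D$. Your write-up is in fact more explicit than the paper's (which compresses everything into two sentences), and your observation that $x\mapsto x/(1-2^8C(d)x)$ is increasing is exactly what reconciles the single constant $C_1$ with the two-branch definition of $D$; the stray factor of $2$ you flag is indeed swallowed by the constants, as the paper does not track it either.
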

\begin{proof}
Recall the event $E$ from Lemma \ref{lem:minimalCont}.

The result of Lemma \ref{lem:boundlarge} shows that the probability of the event $E^c$ is bounded by the union of the events described by probability \eqref{eq:probkbound} for $k$ varying between 7 and $\infty$. Adding all the terms together and performing the union bound gives the desired result.
\end{proof}

\subsection{Approximating $M_{\gamma}$ as a Poisson Random Variable}
\label{sec:PoissonRV}
We can write $M_{\gamma}$ as the sum of random events,
\begin{equation}
    M_{\gamma}:= \sum_{e \in \gamma} \mathbbm{1}[F_{P(e)}],
\end{equation}
where $F_{P(e)}$ is the event that $e$ is the center of a minimal vortex $P(e)$ in the support of our configuration.

To show $M_{\gamma}$ behaves approximately as a Poisson random variable, we would want to show that for $e \ne e'$, the events $F_{P(e)}$ and $F_{P(e')}$ are approximately independent from each other. For each edge $e$, we let $\mathcal{B}_e$ to be the set of edges $e'$ in $\gamma$ such that a minimal vortex centered around an edge $e'$ connect to the minimal vortex centered around $e$ based on adjacency in the graph $G_2$ of plaquettes from Definition \ref{def:graphg2}.

The following Theorem from \cite{Chen} details exactly the degree of approximation of some some of random variables to the Poisson distribution.

\begin{thm} \label{thn:PoisApproxGen}
    Consider the following constants.
    \begin{equation}
    \begin{aligned}
       & b_1:= \sum_{e \in \gamma} \sum_{e' \in B_e} \mathbb{P}(F_{P(e)}) \mathbb{P}(F_{P(e')}),\\
    &    b_2:= \sum_{e \in \gamma} \sum_{e' \in B_e \setminus e} \mathbb{P}(\mathbbm{1}(F_{P(e)}) \mathbbm{1}(F_{P(e')})),\\
    & b_3:= \sum_{\gamma} \mathbb{E}\left[| \mathbb{E}[\mathbbm{1}(F_{P(e)})| \mathbbm{1}(F_{P(e')}) e' \not \in B_e] -\mathbb{P}(F_{P(e)})| \right].
    \end{aligned}
    \end{equation}
    Let $\mathcal{L}(M_{\gamma})$ denote the law of $M_{\gamma}$ and let $\lambda= \mathbb{E}[M_{\gamma}]$. Then 
    \begin{equation}
    d_{
    TV}(\mathcal{L}(M_{\gamma}), \text{Poisson}(\lambda))\le min(1, \lambda^{-1})(b_1 + b_2) +\min(1, 1.4 \lambda^{-1/2})b_3.
    \end{equation}
\end{thm}

We start with the following corollary of the proof of Lemma \ref{lm:vortex}.

\begin{col}[of Lemma \ref{lm:vortex}] \label{col:boundcondition}
Let $E_1$ and $E_2$ be two set of edges. Let $M(E_1,E_2)$ be the event that our configuration has a minimal vortex centered on each edge of $E_1$ and no minimal vortices centered on any edge of $E_2$. Assume that $M(E_1,E_2) >0$ (this amounts to assuming that the minimal vortices centered around $E_1$ do not intersect each other). Now consider some other union of vortices $U$. The probability that some union of vortices $U$ appears in the support of some configuration, conditionally on being in the event $M(E_1,E_2)$ is less than $\Phi(U)$. 
\end{col}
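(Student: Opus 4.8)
\textbf{Proof plan for Corollary \ref{col:boundcondition}.}

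The plan is to mimic the structure of the proof of Lemma \ref{lm:vortex}, replacing the unconditioned partition function $Z$ by the restricted sum over configurations in the event $M(E_1,E_2)$, and then exhibiting a compatible-splitting bijection exactly as in Lemma \ref{lem:compsplit}. First I would fix notation: let $V(E_1)$ denote the union of the minimal vortices centered on the edges of $E_1$, which is a well-defined set of plaquettes precisely because $M(E_1,E_2)$ has positive probability (so those minimal vortices are disjoint, hence mutually compatible). Since $U$ is some other union of vortices, I first reduce to the case where $U$ is compatible with $V(E_1)$ and contains no minimal vortex centered on an edge of $E_2$: if $U$ is incompatible with $V(E_1)$ or overlaps a forbidden minimal vortex, then the event ``$U$ appears in the support and $\calC \in M(E_1,E_2)$'' is empty, so the conditional probability is $0 \le \Phi(U)$ trivially (here I use that $\Phi \ge 0$, which is immediate from its definition as a sum of exponentials of real quantities).

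Next, assuming compatibility, I would let $\mathcal{P}_U$ be the set of supports $P$ that (i) contain $U$ as a sub-collection of vortices in the vortex decomposition, (ii) contain $V(E_1)$, and (iii) contain no minimal vortex centered on an edge of $E_2$. For $P \in \mathcal{P}_U$, write $P = U \cup (P \setminus U)$, a compatible splitting. Applying Lemma \ref{lem:compsplit} gives $\Phi(P) = \Phi(U)\,\Phi(P \setminus U)$. The numerator of the conditional probability is $2\sum_{P \in \mathcal{P}_U} \Phi(P) = 2\,\Phi(U) \sum_{P \in \mathcal{P}_U}\Phi(P\setminus U)$. The denominator is $2\sum_{P'} \Phi(P')$ over all supports $P'$ satisfying (ii) and (iii) but not necessarily (i); this denominator dominates $2\sum_{P \in \mathcal{P}_U}\Phi(P \setminus U)$ because every $P \setminus U$ with $P \in \mathcal{P}_U$ is itself a legitimate support satisfying (ii) and (iii) — removing the vortices of $U$ from a vortex decomposition leaves a vortex decomposition that still contains $V(E_1)$ and still contains no forbidden minimal vortex, since $U$ was compatible with $V(E_1)$ and compatibility means the removal does not merge or create vortices. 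Dividing, the $\sum \Phi(P\setminus U)$ factors cancel against an upper bound, yielding conditional probability $\le \Phi(U)$, exactly as in Lemma \ref{lm:vortex}.

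The one genuinely delicate point — and the step I expect to be the main obstacle — is verifying claim (iii) is preserved under removal of $U$, i.e. that deleting the vortices of $U$ from a support $P \in \mathcal{P}_U$ cannot \emph{create} a new minimal vortex centered on an edge of $E_2$. A new minimal vortex centered on $e' \in E_2$ could only arise if some vortex of $P$ strictly containing such a minimal configuration becomes, after excision of $U$'s plaquettes, exactly the minimal vortex $P(e')$; but $U$ is compatible with the rest of $P$, so $P \setminus U$ is just the union of the remaining vortices of $P$ unchanged — no vortex of $P$ loses any plaquettes in the process. Thus the vortices of $P \setminus U$ are literally a sub-multiset of the vortices of $P$, none of which is a forbidden minimal vortex by hypothesis (iii) on $P$. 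I would state this as a short lemma or inline remark, citing Definition \ref{def:graphg2} for the fact that compatible means disconnected in $G_2$, hence the vortex decomposition of $P \setminus U$ is obtained simply by deleting entire vortices. With that observation in hand the rest is the verbatim computation from Lemma \ref{lm:vortex}, so the corollary follows.
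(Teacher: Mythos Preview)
Your proposal is correct and follows essentially the same approach as the paper: write the support as $U$ together with a compatible remainder, apply the multiplicativity of $\Phi$ from Lemma \ref{lem:compsplit}, and bound the conditional probability by cancelling the common factor against a lower bound on the restricted partition function. Your treatment is in fact slightly more careful than the paper's, which does not explicitly discuss the trivial incompatible case or verify that removing $U$ cannot create a new forbidden minimal vortex centered on $E_2$; your observation that compatibility forces the vortex decomposition of $P\setminus U$ to be a sub-collection of the vortices of $P$ is exactly the right justification for that step.
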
 
\begin{proof}
Let $(\sigma,\phi)$ be a configuration with support $U \bigcup_{e_i \in E_1} P(e_i) \cup R $ where $P(e_i)$ is the minimal vortex centered around $e_i$. In addition, $R$ is some union of vortices that do not intersect $U$ or $P(e_i)$ and does not contain a minimal vortex centered at an edge of $E_2$.

The sum of $\exp[H_{N,\beta,\kappa}(\sigma,\phi)]$ of all configurations with this support is given by $2\Phi(U) \prod_{e_i \in E_1} \Phi(e_i) \Phi(R)$.

Thus, the sum of $\exp[H_{N,\beta,\kappa}(\sigma,\phi)]$ over all configurations such that the support contains $U$ will be
\begin{equation} \label{eq:numer}
2\Phi(U) \prod_{e_i \in E_1} \Phi(e_i)\sum_{R} \Phi(R),
\end{equation}
where the sum in $R$ goes over all unions of vortices that do not intersect $U$, the $P(e_i)$'s and do not contain a minimal vortex centered around an edge of $e_2$.

All configurations with support $\bigcup_{e_i \in E_1} P(e_i) \cup R$ will be in $M(E_1,E_2)$. 

Thus, the sum of $\exp[H_{N,\beta,\kappa}]$ for all configurations found in $M(E_1,E_2)$ (we will call this $Z(M(E_1,E_2))$ the partition function restricted to the event $M(E_1,E_2)$) will be satisfy the relation
$$
Z(M(E_1,E_2)) \ge 2 \prod_{e_i \in E_1} \Phi(P(e_1)) \sum_{R} \Phi(R).
$$

The probability that restricted to $M(E_1,E_2)$ that $U$ will be in the support is the ratio of the quantity in \eqref{eq:numer} with the partition function $Z(M(E_1,E_2))$. This ratio is clearly less than $\Phi(U)$.
\end{proof}

This is one of our major tools in computing $\mathbb{E}[\mathbbm{1}[F_{P(e)}]| \mathbbm{1}[F_{P(e'}], e' \not \in B_e]$.

\begin{lem} \label{lem:comparwpoisson}
Let $\gamma$ be a loop that has no self-intersection. In addition, recall the dimension dependent constant $C(D)$ and the  constant $D$ from Theorem \ref{thm:reductiontominimal}. If $2(d-1) C(d)  D< 1$,  
we have the following bounds on the constants $b_1,b_2$ and $b_3$ and $\lambda$.
\begin{equation}
\begin{aligned}
    & b_1 \le 8(d-1) |\gamma| \Phi(P(e))^2,\\
    & b_2 =0,\\
    &b_3  \le \frac{2|\gamma|\Phi(P(e))(d-1)  C(d)  D }{1 - 2(d-1)  C(d)  D},\\
    & |\lambda - |\gamma| \Phi(P(e))| \le \frac{2|\gamma|\Phi(P(e))(d-1)  C(d)  D }{1 - 2(d-1)  C(d)  D}.
\end{aligned}    
\end{equation}

As a consequence of Theorem \ref{thn:PoisApproxGen}, we see that 
\begin{equation} \label{eq:totalvarfromPoisson}
\begin{aligned}
    &d_{TV}(\mathcal{L}(M_{\gamma} ), \text{Poisson}( |\gamma| \Phi(P(e)))) \\& \hspace{1 cm}\le 8(d-1) |\gamma|(\Phi(P(e)))^2 + \frac{4|\gamma|\Phi(P(e))(d-1)  C(d)  D }{1 - 2 (d-1)  C(d)  D}.
\end{aligned}    
\end{equation}

\end{lem}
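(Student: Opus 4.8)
The plan is to estimate each of the three constants $b_1,b_2,b_3$ in Theorem \ref{thn:PoisApproxGen} separately, using Corollary \ref{col:boundcondition} and the vortex bounds from Lemma \ref{lm:vortex}, and then to assemble them. The quantities $\mathbb{P}(F_{P(e)})$ are all equal by translation invariance (ignoring boundary effects, or with the loop sitting well inside $\Lambda_N$), and by Lemma \ref{lm:vortex} each is at most $\Phi(P(e))$; we will also need a matching lower bound, which follows by the same splitting argument since a minimal vortex at $e$ can always be adjoined to any compatible configuration.

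\emph{Bounding $b_2$.} For $e' \in \mathcal{B}_e \setminus \{e\}$, the minimal vortices $P(e)$ and $P(e')$ are, by definition of $\mathcal{B}_e$, adjacent in $G_2$; hence a configuration cannot simultaneously have $P(e)$ \emph{and} $P(e')$ as separate vortices in its support — their union would be a single connected set, not two minimal vortices. Therefore $\mathbbm{1}(F_{P(e)})\mathbbm{1}(F_{P(e')}) \equiv 0$ and $b_2 = 0$. (One should check the edge case $e'=e$ is excluded by the sum, which it is.)

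\emph{Bounding $b_1$.} Since $|\mathcal{B}_e| \le 2(d-1)\cdot(\text{something})$ — more precisely, the number of edges $e'$ whose minimal vortex touches $P(e)$ in $G_2$ is bounded by a dimension-dependent constant; the statement records the resulting bound as $b_1 \le 8(d-1)|\gamma|\Phi(P(e))^2$ — we simply use $\mathbb{P}(F_{P(e)})\mathbb{P}(F_{P(e')}) \le \Phi(P(e))^2$ for each of the $\le 8(d-1)$ pairs per edge and sum over $e \in \gamma$.

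\emph{Bounding $b_3$ and $\lambda$.} This is the main step. Fix $e$ and condition on the event $\{\mathbbm{1}(F_{P(e')}) : e'\notin \mathcal{B}_e\}$; call this conditioning event $M(E_1,E_2)$ in the notation of Corollary \ref{col:boundcondition}, where $E_1$ is the set of $e'\notin\mathcal{B}_e$ that do carry a minimal vortex and $E_2$ the set that do not. We must compare $\mathbb{P}(F_{P(e)} \mid M(E_1,E_2))$ to the unconditional $\mathbb{P}(F_{P(e)})$. The event $F_{P(e)}$ requires $P(e)$ to appear as a vortex in the support \emph{and} that it be minimal (i.e.\ the surrounding configuration is exactly the minimal one); the discrepancy between the conditional and unconditional probabilities comes entirely from configurations in which $P(e)$ is \emph{not} a minimal vortex — either it is part of a larger vortex, or some other vortex sits near $e$. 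By Corollary \ref{col:boundcondition}, conditionally on $M(E_1,E_2)$ the probability of seeing any particular union of vortices $U$ near $e$ is still at most $\Phi(U)$, so summing over all vortices $U$ of size $k \ge 7$ (the smallest size exceeding a minimal vortex, using $|P(e)| = 2(d-1)$ oriented plaquettes and the structure-counting bound $(C(d))^k$ vortices through a given plaquette from the proof of Lemma \ref{lem:boundlarge}) gives a geometric series bounded by $\sum_{k\ge 7}(C(d))^k \cdot (\text{per-vortex weight}) \lesssim 2(d-1)C(d)D/(1-2(d-1)C(d)D)$ after factoring out $\Phi(P(e))$. The same estimate, applied without conditioning, controls $|\lambda - |\gamma|\Phi(P(e))|$, since $\lambda = \sum_{e}\mathbb{P}(F_{P(e)})$ and each $\mathbb{P}(F_{P(e)})$ differs from $\Phi(P(e))$ by the probability that $e$'s neighborhood carries a larger-than-minimal excitation, bounded the same way. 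Multiplying the per-edge bound by $|\gamma|$ gives the stated estimates for $b_3$ and $\lambda$. The hypothesis $2(d-1)C(d)D < 1$ is exactly what makes the geometric series converge.

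\emph{Assembly.} Plug $b_1, b_2 = 0, b_3$ and $\lambda$ into Theorem \ref{thn:PoisApproxGen}: using $\min(1,\lambda^{-1})(b_1+b_2) \le b_1$ and $\min(1,1.4\lambda^{-1/2})b_3 \le b_3$, together with the fact that replacing the true $\lambda$ by $|\gamma|\Phi(P(e))$ in the Poisson parameter costs an extra total-variation term bounded by $|\lambda - |\gamma|\Phi(P(e))|$ (since $d_{TV}(\mathrm{Poisson}(\lambda), \mathrm{Poisson}(\mu)) \le |\lambda-\mu|$), we obtain \eqref{eq:totalvarfromPoisson}. The main obstacle is the bound on $b_3$: one has to argue carefully that the conditioning on far-away minimal vortices does not help create a large vortex near $e$, which is precisely the content of Corollary \ref{col:boundcondition}, and then control the combinatorial sum over vortex shapes using the spanning-tree counting argument already established in Lemma \ref{lem:boundlarge}.
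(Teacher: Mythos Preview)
Your proposal is essentially correct and follows the same approach as the paper: the paper also bounds $b_3$ by fixing $e$, conditioning on $M(E_1,E_2)$, splitting into a ``good'' part $G(E_1,E_2)$ (no vortex in the support touches a plaquette of $P(e)$) and a ``bad'' part $B(E_1,E_2)$, bounding $Z(B)/Z(M)$ by a geometric series over vortex sizes via Corollary \ref{col:boundcondition}, and concluding that both the conditional and unconditional probabilities lie in an interval of width $\lesssim 2(d-1)C(d)D\cdot\Phi(P(e))$; the bounds on $b_1,b_2,\lambda$ and the final assembly (including the Poisson-to-Poisson total variation bound $|\lambda-\mu|$) are identical to yours. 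One small imprecision: your sum over vortices should start at the minimal size $k=2(d-1)$, not $k\ge 7$, since the obstructing vortex in $B(E_1,E_2)$ can itself be a minimal vortex centered at a nearby edge not in $\gamma$; the paper simply sums from $k=1$ as a harmless overestimate, and either way the series is dominated by $2(d-1)C(d)D$.
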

\begin{rmk}
The function $\Phi$ evaluated on a minimal plaquette $P(e)$ does not depend on the minimal plaquette chosen. Thus, in a minor abuse of notation, we treat it like a constant. Explicitly, we may always substitute,\begin{equation}
    \Phi(P(e))= \exp[-4(d-1) \text{Re}(\rho(1) - \rho(-1))] \exp[-2\kappa\text{Re}(f(1,1) -f(-1,1)) ]
\end{equation}
\end{rmk}

\begin{proof}
We start with the most difficult part: computing a bound on $b_3$.
Fix an edge $e$. We now attempt to compute 
$$ \mathbb{E}\left[| \mathbb{E}[\mathbbm{1}(F_{P(e)})| \mathbbm{1}(F_{P(e')}) e' \not \in B_e] - \mathbb{P}(F_{P(e)})|\right],$$ for our fixed edge $e$. Recall the notation $M(E_1,E_2)$ and $Z(M(E_1,E_2))$ from the proof of Corollary \ref{col:boundcondition}. Let $E_1$ be some set of edges in $\gamma \setminus B_e$ and $E_2$ be the remaining edges in $\gamma \setminus B_e$ that are not contained in $E_1$.

We know compute $\mathbb{E}[\mathbbm{1}[F_{P(e)}]|M(E_1,E_2)]$. Events in $M(E_1,E_2)$ can be divided into two parts,
\begin{enumerate}
    \item $G(E_1,E_2)$: These are configuration $(\sigma,I)$ in $M(E_1,E_2)$ whose support has a vortex decomposition $V_1 \cup \ldots \cup V_m$ such that $P(e)$ does not intersect any $V_i$. Thus, $P(e) \bigcup_{i=1}^m V_i$ is a valid vortex decomposition.
    \item $B(E_1,E_2)$: These are configurations $(\sigma,\phi)$ in $M(E_1,E_2)$ whose vortex decomposition $V_1\ldots V_m$ does contain a vortex $V_j$  such that $V_j$ contains a plaquette that is in $P(e)$. Thus, $P(e) \bigcup_{i=1}^m V_i$ is not a valid vortex decomposition.
\end{enumerate}

If we let $Z(G(E_1,E_2))$ and $Z(B(E_1,E_2))$ be the sum of $\exp[H_{N,\beta,\kappa}]$ for configurations in them ( the partition functions), then we see that the probability of seeing $P(e_i)$ as a plaquette in the support conditional on the event $M(E_1,E_2)$ is given by,
\begin{equation}
    \frac{\Phi(V) Z(G(E_1,E_2))}{Z(G(E_1,E_2)) + Z(B(E_1,E_2))}.
\end{equation}
If we show that $Z(B(E_1,E_2)) \le c Z(M(E_1,E_2))$ for some constant $c <1$, then we see that the probability of seeing $P(e_i)$ in the support conditional on $M(E_1,E_2)$ is greater than $(1-c) \Phi(P(e))$ and less than $\Phi(P(e))$.

By applying Corollary \ref{col:boundcondition}, we can bound $\frac{Z(B(E_1,E_2))}{Z(M(E_1,E_2))}$ by the sum of $\Phi(V)$ over all configurations $V$ that contain a plaquette of $P(e_i)$ in its support. We have performed a version of his sum when trying to compute the probability of the event $E^c$ from  \ref{lem:minimalCont}. Recall the bound \eqref{eq:bndsinglevortex} on the probability of observing a vortex excitation of size $2k$. Notice that a minimal vortex has $2(d-1)$ plaquettes attached to it. We can provide a union bound over all of these $2(d-1)$ vortices and all  vortices of size $2k$ that intersect these plaquettes. Recall that there are at most $(C(d))^{k} $ vortices of size $2k$ that contain any given plaquette. 

 We see that the probability of the event $B(E_1,E_2)$ conditioned on $G(E_1,E_2)$ is less than,
\begin{equation}
   2(d-1)\sum_{k=1} C_1^k (C(d))^k \le 2(d-1)  C(d)  D,
\end{equation}
provided $C(d) C_1 <1$.

We thus see that the expectation $\mathbb{E}[\mathbbm{1}[F_{P(e)}]|M(E_1,E_2)]$ is greater than $\frac{\Phi(P(e))}{1- 2(d-1)  C(d) D}$, provided $2(d-1)  C(d)D <1$.

Through the same logic, we can show that $\mathbb{P}(F_{P(e)}) \ge \frac{\Phi(P(e))}{1-2(d-1)  C(d)D}$. Recalling our previous upper bound of $\Phi(P(e))$ on both of these quantities, we see that,
\begin{equation}
    |\mathbb{E}[\mathbbm{1}[F_{P(e)}]| M(E_1,E_2)] - \mathbb{E}[F_{P(e)}]| \le \frac{2\Phi(P(e))(d-1)  C(d)  D }{1 - 2(d-1)  C(d)  D}.
\end{equation}

We can now integrate over our conditioning on $M(E_1,E_2)$ to show
$$\mathbb{E}\left[| \mathbb{E}[\mathbbm{1}(F_{P(e)})| \mathbbm{1}(F_{P(e')}) e' \not \in B_e] - \mathbb{P}(F_{P(e)})|\right] < \frac{2\Phi(P(e))(d-1)  C(d)  D }{1 - 2(d-1)  C(d)  D}.$$ Since this bound did not depend on the specific choice of edge $e$, we see that our bound on $b_3$ is,
\begin{equation}
    b_3 \le \frac{2|\gamma|\Phi(P(e))(d-1)  C(d)  D }{1 - 2(d-1)  C(d)  D}.
\end{equation}

$b_2$ is trivially $0$ since by definition, a support of a configuration cannot have support on both $P(e)$ and $P(e')$ when $e'$ is in $B_e$.

Now, to estimate $b_1$, we use the bound $\mathbb{P}(F_{P(e)})\le \Phi(P(e))$ for all edges. Now for any edge $e$, there are at most $8(d-1)$ other edges $e'$ in $B_e$ that lie in $\gamma$. Here, we use the assumption that $\gamma$ has no self-intersection. We can now apply \ref{thn:PoisApproxGen} to assert that,
\begin{equation}
    \text{d}_{TV}(\mathcal{L}(M_{\gamma}), \text{Poisson}(\lambda)) \le  8(d-1) |\gamma|(\Phi(P(e)))^2 + \frac{2|\gamma|\Phi(P(e))(d-1) C(d)  D }{1 -2 (d-1)  C(d)  D}.
\end{equation}

Finally, 
an immediate consequence of the fact that $$\frac{\Phi(P(e)) }{1 -2 (d-1)  C(d)  D} \le \mathbb{P}(F_{P(e)}) \le \Phi(P(e))$$ shows that the $|\lambda- |\gamma| \Phi(P(e))| \le\frac{2|\gamma|\Phi(P(e))(d-1) C(d)  D}{1 -2 (d-1)  C(d)  D}$.

Corollary 3.1 of \cite{Adell} shows that $\text{d}_{TV}(\text{Poisson}(\lambda), \text{Poisson}(|\gamma|\Phi(P(e))) \le |\lambda - |\gamma|\Phi(P(e))|$. We can combine this with our earlier bound on the total variation to show the desired equation \eqref{eq:totalvarfromPoisson}. 
\end{proof}

Combining the results of Lemma \ref{lem:comparwpoisson} and Theorem \ref{thm:reductiontominimal}
will give us our following main result.

\begin{thm} \label{thm:MainThm1}
    Assume that the conditions of Theorem \ref{thm:reductiontominimal} and \ref{lem:comparwpoisson} hold. Let $X$ be a Poisson random variable with expectation parameter $|\gamma|\Phi(P(e)) =\exp[-4(d-1) \beta\text{Re}(\rho(1) - \rho(-1))]\exp[-2\kappa\text{Re}(f(1,1)- f(-1,1))] $. (Recall that the only way to excite a minimal vortex is to excite the center of the edge that is on its center.)
    Then we have that,
    \begin{equation}
    \begin{aligned}        &|\mathbb{E}[W_{\gamma}] - \mathbb{E}[\rho(-1)^{X}]| \le  8(d-1) |\gamma|(\Phi(P(e)))^2 + \frac{4|\gamma|\Phi(P(e))(d-1)  C(d)  D }{1 - 2(d-1)  C(d)  D}\\ & \hspace{1.2 cm}+  2(d-1)|\gamma|  (2^8 C(d) )^{2(d-1)+1} \exp[-4(d-1) \beta\text{Re} (\rho(1) - \rho(-1)]D .
    \end{aligned}
    \end{equation}
\end{thm}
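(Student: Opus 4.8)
The plan is to derive Theorem \ref{thm:MainThm1} as a short assembly of the two preceding results via the triangle inequality, with essentially no new analytic input. First I would write
\begin{equation}
|\mathbb{E}[W_{\gamma}] - \mathbb{E}[\rho(-1)^{X}]| \le |\mathbb{E}[W_{\gamma}] - \mathbb{E}[\rho(-1)^{M_{\gamma}}]| + |\mathbb{E}[\rho(-1)^{M_{\gamma}}] - \mathbb{E}[\rho(-1)^{X}]|,
\end{equation}
where $M_{\gamma}$ is the number of minimal vortices centered along edges of $\gamma$ and $X\sim\mathrm{Poisson}(|\gamma|\Phi(P(e)))$. The first summand is exactly the quantity controlled by Theorem \ref{thm:reductiontominimal}, contributing the term $2(d-1)|\gamma|(2^8 C(d))^{2(d-1)+1}\exp[-4(d-1)\beta\,\mathrm{Re}(\rho(1)-\rho(-1))]D$.

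For the second summand, the key observation is that $-1\in Z_2$ has order two and $\rho$ is a one-dimensional unitary representation, so $\rho(-1)\in\{+1,-1\}$; in particular $k\mapsto\rho(-1)^{k}$ is a real function on the nonnegative integers with sup-norm $\le 1$. Consequently the difference of its expectations under $\mathcal{L}(M_{\gamma})$ and under $\mathcal{L}(X)$ is controlled by the total variation distance between these laws — one can make this transfer sharp by invoking the coupling characterization of $d_{TV}$ together with $|\rho(-1)^{j}-\rho(-1)^{k}|\le 2$ — and then I would simply plug in the bound \eqref{eq:totalvarfromPoisson} of Lemma \ref{lem:comparwpoisson}, which supplies the remaining two terms $8(d-1)|\gamma|(\Phi(P(e)))^{2} + \frac{4|\gamma|\Phi(P(e))(d-1)C(d)D}{1-2(d-1)C(d)D}$. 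Adding the three contributions reproduces the claimed inequality. Along the way I would note that the hypotheses of the two inputs are simultaneously in force under the standing assumption: Theorem \ref{thm:reductiontominimal} needs $C(d)C_{1}<1$ and Lemma \ref{lem:comparwpoisson} needs $2(d-1)C(d)D<1$, both of which hold for $\beta$ large together with $\kappa$ in the prescribed regime, and that the Poisson parameter $|\gamma|\Phi(P(e))$ is the same constant in both statements (using the remark identifying $\Phi(P(e))$ with $\exp[-4(d-1)\,\mathrm{Re}(\rho(1)-\rho(-1))]\exp[-2\kappa\,\mathrm{Re}(f(1,1)-f(-1,1))]$).

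Since the substantive work — the reduction to minimal vortices through the polymer/vortex decomposition of $\Phi$ and $\Phi_{NT}$, and the Chen–Stein Poisson approximation of $M_{\gamma}$ via the bounds on $b_1,b_2,b_3,\lambda$ — has already been carried out in Theorem \ref{thm:reductiontominimal} and Lemma \ref{lem:comparwpoisson}, I do not expect a genuine obstacle here. The only mild subtlety is bookkeeping: making the passage from a total variation bound to a bound on $|\mathbb{E}[\rho(-1)^{M_\gamma}]-\mathbb{E}[\rho(-1)^{X}]|$ with the constants exactly as stated, and confirming that no additional hypothesis is needed beyond those already assumed in the two cited results.
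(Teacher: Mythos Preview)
Your proposal is correct and matches the paper's own proof essentially verbatim: the paper also splits via the triangle inequality, invokes Theorem~\ref{thm:reductiontominimal} for the first piece, and for the second piece chooses an optimal coupling realizing $d_{TV}$ together with the bound $|\rho(-1)^{k}|\le 1$ to pass from total variation to the expectation difference, then plugs in \eqref{eq:totalvarfromPoisson} from Lemma~\ref{lem:comparwpoisson}. There is no additional idea beyond this bookkeeping in either argument.
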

\begin{proof}
This is a simple triangle inequality, provided we characterize the value of the difference of $\mathbb{E}[\rho(-1)^X]$ and $\mathbb{E}[\rho(-1)^{M_{\gamma}}]$. We can choose a coupling of $X$ and $M_\gamma$ such that $\mathbb{P}(X \ne M_{\gamma}) = d_{TV}(\mathcal{L}(M_{\gamma}), \text{Poisson}(|\gamma|\phi(P(e)))$. Now, $|\rho(-1)^{M_{\gamma},X}|\le 1$ for all values of $X$ or $M_{\gamma}$. Thus, we see that, under this coupling
$$|\mathbb{E}[\rho(-1)^X] - \mathbb{E}[\rho(-1)^{M_{\gamma}}]| \le \mathbb{P}(X \ne M_{\gamma}) \le d_{TV}(\mathcal{L}(M_{\gamma}), \text{Poisson}(|\gamma|\phi(P(e))).$$ We can apply the triangle inequality with our estimate on $|\mathbb{E}[W_{\gamma}] - \mathbb{E}[\rho(-1)^{M_{\gamma}}]|$ and complete the proof.
\end{proof}
\section{The Non-Abelian Case}

\subsection{Model and Some Preliminary Discussion}


\subsubsection{Definitions}
To treat the non-abelian case, we need significantly different notation. This preliminary section will discuss many of our new conventions.
The Hamiltonian we will consider in the case of a non-abelian gauge field $G$ will be as follows:

\begin{equation}
\begin{aligned}
    H_{N,\beta,\kappa}(\sigma,\phi)&= \sum_{p \in P_N} \beta( \psi_{p}(\sigma) - \psi_{p}(1)) 
    + \sum_{e \in E^{U}_N}\kappa [g_e(\sigma,\phi) - g^{u}(1,1)]. 
\end{aligned}
\end{equation}


$\sigma_e$ is still a map from the set of oriented edges $E_N \to G$, a non-abelian group with $\sigma_{-e}= \sigma_e^{-1}$, $\rho(\cdot)$ is a $D$ dimensional unitary representation of $G$. Finally, $\phi_x$  will be represented as a field $V_N \to H$ taking values in $H$, a finite multiplicative subgroup of the unit circle. In some sense, we can interpret $\phi$ as a scalar field. 

There are a few technical differences from the presentation of the abelian gauge model. First of all, we need to be more careful when defining the `current' around a plaquette $p$. If a plaquette $p$ has boundary vertices $v,w,x$ and $y$ and has an oriented boundary consisting of edges $e_1=(v,w)$, $e_2= (w,x)$, $e_3=(x,y)$ and $e_4=(y,v)$, then the `current'  defined around the oriented plaquette $p$ would be the product $\sigma_{e_1} \sigma_{e_2}\sigma_{e_3} \sigma_{e_4}$ and we set $\psi_p(\sigma)= \text{Tr}[\rho(\sigma_{e_1} \sigma_{e_2}\sigma_{e_3} \sigma_{e_4})]$, where $\rho$ is a $D$ dimensional unitary representation of the group $G$. We remark that even if one chose a different start edge for the boundary( $e_2,e_3,e_4$ and $e_1$ in that order, for example),  the `current' might change, but the Hamiltonian will not change due to the multiplicative property $\rho$ and the cyclic property of trace. Thus, $\psi_p( \sigma)$ will be well defined regardless of how we choose the starting edge of the boundary of $p$. 

Recall also that $P_N$ is a set of oriented plaquettes, so $P_N$ would also contain $-p$ whose boundary is $-e_4,-e_3,-e_2$ and $-e_1$ in that order. We see that our `currents' satisfy $(\sigma_{e_1}\sigma_{e_2}\sigma_{e_3} \sigma_{e_4})^{-1}= \sigma_{-e_4}\sigma_{-e_3}\sigma_{-e_2}\sigma_{-e_1}$.
With respect to these `currents', it is obvious $\psi_p(\sigma)= \overline{\psi_{-p}(\sigma)}$.
 By our earlier remarks, this relation does not depend on the specific choice of start point. We thus see that this part of the Hamiltonian will ultimately take real values.

We introduce the notion of unoriented edges $E_N^{U}$.  In contrast to $E_N$, for every adjacent pair of vertices $(x,y)$ in $\Lambda_N$, $E_N^{U}$ includes only a single undirected edge between $x$ and $y$ rather than both $e=(x,y)$ and $e_- = (y,x)$. The function $g^u_e(\sigma,\phi)$ will consist of the sum $\phi_x \text{Tr} [\rho(\sigma_e)] \phi_{y}^{-1} + \phi_y \text{Tr}[\rho(\sigma_{-e})] \phi_x^{-1}$ where $e$ is the oriented edge $(x,y)$ and $-e$ is the oriented edge $(y,x)$. Namely, it is the sum of the Higgs boson action $\phi_x \text{Tr}[\rho(\sigma_e)] \phi_y^{-1}$ over the oriented edge pair $\{e,-e\}$ in $E_N$. The reason we introduce $g$ instead of using our previous sum over $E_N$ is that we will eventually have to perform an expansion of the Hamiltonian with respect to these unoriented edges. One can see that the function 
$g_e$ is manifestly real.


\begin{rmk}
Cao \cite{SC20} introduces many notions from algebraic topology to describe non-abelian gauge field configurations. Though such concepts would be useful in places, it is not essential to understanding the logic of the proof in this section. Whenever we need to refer to some of these topological notions, I would refer to the appropriate location in \cite{SC20}.
\end{rmk}
\subsubsection{Removing Trivial Gauge Invariances}
We discuss some gauge invariances that simplify our analysis. First of all, we can assume that there is no element $\sigma \in G$ such that $\rho(\sigma)=I$, the identity matrix. The set of all elements $g' \in G$ such that $\rho(g')=I$ forms a normal subgroup $G'$ of $G$. To understand our Hamiltonian, it suffices to interpret $\sigma_e$ as a map from $E_N$ to the quotient group $G/G'$ instead. From a map from $E_N \to G/G'$, one can obtain all maps from $E_N \to G$ by freely multiplying each edge with some member of the group $G'$. As $\rho(g)=I$ for all members in $G'$, this will not change the Hamiltonian, nor Wilson loop values. Essentially, the reduction from $G$ to $G/G'$ is the removal of a trivial gauge invariance.

Furthermore, after this reduction, we know that if $\Psi_p(\sigma)= \Psi_p(1)$, then we can say in a well-defined sense that the `current' around the plaquette $p$ is exactly 1. The condition $\Psi_p(\sigma)= \Psi_p(1)$ for $p$ surrounded by boundary edges $e_1$,$e_2$,$e_3$ and $e_4$ in that order asserts that the product $\sigma_{e_1}\sigma_{e_2}\sigma_{e_3}\sigma_{_4}$ is an element of $G$ whose image under $\rho$ is the identity matrix. Our reduction shows the only such element is $1$, so $\sigma_{e_1} \sigma_{e_2}\sigma_{e_3}\sigma_{e_4}=1$. By multiplying by $\sigma_{e_1}^{-1}$ on the left and 
$\sigma_{e_1}$ on the right, we can derive that $\sigma_{e_2}\sigma_{e_3}\sigma_{e_4}\sigma_{e_1}=1$ and similar for other choices of start vertex.

Now, $|\text{Tr}(\rho(\sigma_e))| < D $ unless $\rho(\sigma_e)= e^{\frac{2\pi \ti j}{n}} I$ for some integers $j$ and $n$. Observe that since we are considering a unitary representation, all eigenvalues of $\rho(g)$ have absolute value less than $1$.
Thus, we know that $|\phi_x \text{Tr}[\rho(\sigma_e)] \phi_y^{-1}+ \overline{\phi_x \text{Tr}[\rho(\sigma_e)] \phi_y^{-1}}| \le 2D$ for all possible choices of $\phi_x,\sigma_e$ and $\phi_y$. Equality can only occur if $\rho(\sigma_e)=e^{\frac{2\pi \ti j}{n}} I $ for some integer $j$ and if $e^{\frac{2\pi \ti j}{n}} \in H$. 

Let $X$ be the set of values $e^{\frac{2\pi \ti j}{n}}$ such that there exists an element $g$ of $G$ such that $\rho(g)= e^{\frac{2\pi \ti j}{n}} I$. We see that $X$ is a subgroup of the multiplicative group of the unit circle, since if $\rho(g_1) = c_1 I$ and $\rho(g_2)= c_2 I$, where $X_1$ and $X_2$ are roots of unity, then $\rho(g_1 g_2) = c_1 c_2 I$. This implies that if $c_1 \in X$ and $c_2 \in X$, then the product $c_1 c_2$ is also in $X$. Furthermore, $\rho(g) = \rho(g)^{-1}$, so if $\rho(g)= cI$, then $\rho(g^-1)= c^{-1}I$. This shows that $X$ contains inverses. Manifestly $X$ has the identity, so $X$ must be a subgroup of the unit circle.

As we have done in the abelian case, we can consider the Higgs field to take values in $H / X$ instead of $H$. Observe that
the Wilson loop expecation can be computed as the ratio,

\begin{equation}
   \langle 
   W_{\gamma} \rangle=\frac{\sum_{\substack{\sigma_e \in G\\ e \in E_N}} \sum_{\substack{\phi_x \in H\\ x \in V_N }} W_{\gamma}(\sigma) \exp[H_{N,\beta,\kappa}(\sigma,\phi)]}{\sum_{\substack{\sigma_e \in G\\ e \in E_N}} \sum_{\substack{\phi_x \in H\\ x \in V_N }} \exp[H_{N,\beta,\kappa}(\sigma,\phi)]}.
\end{equation}

Fix some representative $h_i \in H$ for each coset class $C_i$ in $H/X$.
Now, fix a particular map $\phi_x=V_N \to H$; we will denote this map by $M$. For every vertex choose an element $\zeta^M_v$ and $\eta^M_v$ (depending on $M$) such that $\phi_v (\zeta^M_v)^{-1} =h_i$ for some $i$ and $\rho(\eta^M_v) = \zeta^M_v I$ . Notice that the transformation $(\{\phi_v\},\{\sigma_e\}) \to (\{\phi_v (\zeta^M_v)^{-1}\},\{\eta^M_v \sigma_e (\eta^M_w)^{-1} \} $ for $e=(v,w)$ does not change the Wilson loop action nor the value of the Hamiltonian.
Define $\phi^M_v :=\phi_v (\zeta^M_v)^{-1} $ and $\sigma^M_e=  \eta_v^M \sigma_e (\eta^M_w)^{-1}$.

Another way to represent the Wilson loop expectation is as,
\begin{equation}
    \langle W_{\gamma} \rangle = \frac{\sum_{M=\phi_x: V_N \to H} \sum_{\sigma_e:E_N \to 
    G} W_{\gamma}(\sigma^M) \exp[H_{N,\beta,\kappa}(\sigma^M,\phi^M)]}{\sum_{M=\phi_x: V_N \to H} \sum_{\sigma_e:E_N \to 
    G}  \exp[H_{N,\beta,\kappa}(\sigma^M,\phi^M)]}.
\end{equation}

In both the numerator and denominator, when considering the internal sum over $\sigma$, the map $\sigma \to \sigma^M$ can be treated as a change of parameters for the summation. That is, for fixed $M$, if $\sigma_e$ is uniform over all maps from $E_N \to G$, then $\sigma^M_e$ is uniform
for all maps from $E_N \to G$.

We can reparameterize the sum as,
\begin{equation}
    \langle W_{\gamma} \rangle = \frac{\sum_{M= \phi_x:V_N \to H} \sum_{\sigma_e:E_N \to G} W_{\gamma}(\sigma) \exp[H_{N,\beta,\kappa}(\sigma,\phi^M)]}{\sum_{M= \phi_x:V_N \to H} \sum_{\sigma_e:E_N \to G}  \exp[H_{N,\beta,\kappa}(\sigma,\phi^M)]}.
\end{equation}
Now, the map $\phi^M$ is a set of maps from $V_N \to H/X$. The map $M \to \phi^M$ is onto and the preimage of any $\phi^M$ is of size $|X|^{|V_N|}$. This shows that we can reduce our problem to computing Wilson loop expectations with a Higgs field taking values in $H/X$. After these reductions, we see that $|\phi_x \text{Tr}[\rho(\sigma_e)] \phi_y^{-1}+ \overline{\phi_x \text{Tr}[\rho(\sigma_e)] \phi_y^{-1}}|< 2d$ unless $\phi_x= \phi_y$ and $\rho(\sigma_e)=1$. Thus, we may consider edges $e=(x,y)$ with $\sigma_e\ne 1$ or $\phi_x \ne \phi_y$ as real excitations of the Hamiltonian.

\begin{rmk} \label{rmk:Higgsnonabelian}
If instead of choosing $\phi_x$ to be a scalar, we do the following steps
\begin{enumerate}
\item We let $\phi_x$ take values in a general group $H$.
\item We assume $H$ has a unitary representation $\rho_\phi$ with the same dimension as the unitary representation on $G$, $\rho$ and the intersection of the groups $\rho_{\phi}(H) \cap \rho(G)$ is a normal subgroup of the image $\rho_{\phi}(H)$.
\item We change  the Higgs boson interaction to,
\begin{equation}
    \sum_{e=(x,y) \in E^u_N} \text{Tr}[\rho_\phi(\phi_x) \rho(\sigma_e) \rho_\phi(\phi_y)] + \overline{\text{Tr}[\rho_\phi(\phi_x) \rho(\sigma_e) \rho_\phi(\phi_y^{-1})]}.
\end{equation}
\end{enumerate}
Then, after similar transformations as we have detailed above, we can show we have an excitation with exponential suppression $\exp[-O(\kappa)]$ whenever we have an edge $e=(x,y)$ that satisfies $\sigma_e \ne 1$ or $\phi_x \ne \phi_y$.

To show this, not that since $\rho_{\phi}(\phi_x) \rho(\sigma_e) \rho_{\phi}(\phi_y^{-1})$ is a unitary matrix,  so the absolute value of its trace is at most $D$, the dimension of the representation. Furthermore, the trace plus its conjugate is less than $2D$ unless the unitary matrix considered is the identity.

If $\rho_{\phi}(\phi_x) \rho(\sigma_e) \rho_{\phi}(\phi_y^{-1})$ is the identity, then $\rho(\sigma_e) = \rho_{\phi}(\phi_y \phi_x^{-1})$, and there is a common element in the image of $\rho_{\phi}(H)$ and $\rho(G)$. We can quotient out $\rho_{\phi}(H)$ by its intersection with $\rho(G)$ through the same gauge transformation procedure we outlined previously if $\rho_{\phi}(H) \cap \rho(G)$ is a normal subgroup of $\rho_{\phi}(H)$. If we also quotient out $G$ and $H$ by the elements such that $\rho(g)=I$ and $\rho_{\phi}(h)=I$, respectively, then we will have removed all possibilities for $\rho_{\phi}(\phi_x) \rho(\sigma_e) \rho_{\phi}(\phi_y^{-1})= I$. 
\end{rmk}

\subsection{The case of Low Disorder in the Higgs Field}

Before we analyze Wilson loop expectations under this Hamiltonian, we start by discussing the general difficulty of introducing non-abelian gauge interactions.

Our key tool in probabalistic computation was based on cluster expansions; thus, our first step in any problem was to first find an appropriate definition of cluster that would satisfy nice properties. In the case of a pure gauge field, the natural choice of clusters would be to find those plaquettes such that the current around it is $0$, or, in other words, $\psi_p(\sigma) \ne \psi_p(1)$.  These are the natural excitations that suppress probability.

In the case of an abelian gauge field, we are actually able to use the set of excited plaquettes $p$ with $\psi_p(\sigma) \ne \psi_p(1)$ as the basis of a legitimate cluster expansion. This due to the fact that a configuration $\sigma$ whose support consists two compatible sets $P_1 \cup P_2$ , according to Definition \ref{def:graphg2}, can be split into two configurations $\sigma_1$ and $\sigma_2$ whose supports are $P_1$ and $P_2$ respectively.

In the case of a non-abelian gauge field, it is no longer possible in general to split a configuration $\sigma$ whose support consists of two compatible sets $P_1 \cup P_2$ into two configurations $\sigma_1$ and $\sigma_2$ with support $P_1$ and $P_2$.  As shown in the papers  \cite{SC20} and \cite{PolandMan} ,
there are topological restrictions that prevent such a splitting. To get around this difficulty, one must consider a more sophisticated condition to determine whether we can split a configuration or not. This `sophisticated condition' is related the knotting properties of the vortices in the support. The combinatorial analysis of the knotting properties of vortices was done by Cao in \cite{SC20}. 

Now, let us return to our Higgs field model. As we have seen, a basic difficulty is that our cluster expansion must consider both the excitations of the Higgs field and the gauge field. We have observed already in the abelian case with low disorder (large $\kappa$) that considering the Higgs field makes defining the splitting substantially more complicated.

However, in the large $\kappa$ case, our construction becomes more robust when considering non-abelian groups. In fact, the cluster expansion we proposed in Section \ref{sec:ToyModel} works very well when considering non-abelian groups.
We try now to explain intuitively why we are able to more easily perform a splitting in the non-abelian case with low $\kappa$ than the pure gauge field case.

In the case of a pure gauge field, there is a large gauge symmetry. Namely, if we let  $\eta:V_N \to G$ be a map from the vertices $V_N$ to the group $G$, then $\sigma_e \to \eta_x \sigma_{e} \eta_y^{-1}$ does not change the value of the Hamiltonian. Thus, the basic object in the pure gauge field case is not a configuration, but a gauge equivalence class of configurations. A gauge equivalence class containing the configuration $\sigma$ will also contain the configurations $\eta_x \sigma \eta_y^{-1}$ for any map $\eta:V_N \to G$. 

When $G$ is abelian, these gauge equivalence classes of a configuration $\sigma$ can be understood as a $2$-form $(\td \sigma)$. These $2$-forms can split on disjoint supports by restriction. However, for non-abelian groups, the gauge equivalence classes are homomorphisms to the fundamental group; for details, one can refer to Section 4 of \cite{SC20}. These homomorphisms cannot split disjoint supports in general; the support components $P_1$ and $P_2$ cannot be knotted with each other if one wants to split the support between $P_1$ and $P_2$. 

By contrast, the introduction of the Higgs field and the $\kappa$ term ensures that there is a strong breaking of the gauge equivalence symmetry created by the maps $\eta:V_N \to G$. For large $\kappa$, our basic objects are indeed configurations $(\sigma,\phi)$ rather than gauge equivalence classes of configurations. The basic object of study in the large $\kappa$ regime of the Higgs boson stays the same, whether we are consider abelian or non-abelian groups. For this reason, the analysis in Section 2 is robust to the introduction of a non-abelian group $G$.



In the analysis of Section \ref{sec:ToyModel}, we found that any edge with $\sigma_e \ne 1$ automatically suppresses the probability by a factor of $\exp[-O(\kappa)]$. This allowed us to consider plaquettes that contain an edge with $\sigma_e \ne 1$
as part of our support. Immediately, if the support of our configuration is $P_1 \cup P_2$ where $P_1$ and $P_2$ are compatible, then we can split the gauge field part $\sigma$ of our configuration by restriction to $P_1$ and $P_2$.

The assignment of Higgs bosons to vertices is similar to what is done in Lemma \ref{lem:compsplit}. Lemma \ref{lem:compsplit} treated the case when $H/(\rho(G) \cap H)$ is isomorphic to $Z_2$. The key point of the argument was to identify the phase boundary and perform appropriate flips to correct phase boundaries when we separate the supports. The only key fact that we used about the group is that if we switch $\phi_x,\phi_y$ to $-\phi_x,-\phi_y$ where vertices $x$ and $y$ bound an edge $e$, then the Hamiltonian for that edge $e$ does not change. In particular, if we apply this flip for all vertices inside a phase boundary, it does not change the Hamiltonian except on the phase boundary.
For a more general group $H$, we still see that the Hamiltonian does not change on an edge if we apply the transformation $\phi_x \to h \phi_x$, $\phi_y \to h \phi_y$ for an arbitrary group element in $H/ (H \cap \rho(G))$. This will allow the argument to go through when we consider Higgs boson groups larger than $Z_2$.
Combining these two facts together, we see that all of the arguments outlined in Section \ref{sec:ToyModel} apply verbatim when $\kappa$ is sufficiently large.
\begin{rmk}
Even if we consider the case outline in Remark \ref{rmk:Higgsnonabelian}, with the Higgs boson taking non-abelian values, we can still apply the vertex assignment procedure outline in Lemma \ref{lem:compsplit}. This is again due to the fact that the transformation $\phi_x \to h\phi_x $ and $\phi_y \to h\phi_y $ does not change the Hamiltonian.
\begin{equation}
\begin{aligned}
    &\text{Tr}[\rho_{\phi}(h\phi_x) \rho(\sigma_e) \rho_{\phi}((h \phi_x)^{-1})] = \text{Tr}[\rho_{\phi}(h) \rho_{\phi}(\phi_x) \rho(\sigma_e) \rho_{\phi}(\phi_y^{-1}) \rho_{\phi}(h^{-1})]\\
    & = \text{Tr}[\rho_{\phi}(h^{-1})\rho_{\phi}(h) \rho_{\phi}(\phi_x) \rho(\sigma_e) \rho_{\phi}(\phi_y^{-1})] =  \text{Tr}[\rho_{\phi}(\phi_x) \rho(\sigma_e) \rho_{\phi}( \phi_x^{-1})]. 
\end{aligned}
\end{equation}
The first equality used the fact that $\rho_{\phi}$ is a representation. The second inequality used that $\text{Tr}[AB]= \text{Tr}[BA]$ for general matrices $A$ and $B$. The final inequality again used the fact that $\rho_{\phi}$ is a representation and $\rho_{\phi}(h)\rho_{\phi}(h^{-1}) = \rho_{\phi}(1)=I$.

\end{rmk}


\section{The High Disorder Regime: Small $\kappa$}\label{sec:nonabelianhighdisorder}

\subsection{Expansion in Random Currents}
In contrast to the low disorder regime, we cannot simply consider edges with $\sigma_e \ne 1$ to be excitations blindly. Unfortunately, this means we would have to consider a fundamentally new definition of cluster. We will observe later that in our new definition of cluster, the knotting problem is a serious difficulty when trying to split the configuration. Interestingly, this difficulty even appears in the case of an abelian gauge group with low $\kappa$. This means we have to treat the Higgs field model with  non-abelian and abelian gauge field with small $\kappa$ in the same way. To avoid presenting some long combinatorial estimates, we will restrict to the important case $d=4$ where we can cite these combinatorial estimates from previous results and focus on the new ideas.  


Intuitively, one can imagine in the small $\kappa$ case, there will be large fluctuations in the Higgs field which will decorrelate very quickly along large distances. To quantify this intuition in a cluster expansion, we introduce the notion of a random cluster expansion.
 First, let $c$ be some positive constant such that for any $\sigma_e$ $\phi_x$ and $\phi_y$, we have that,
\begin{equation}
    g_e(\sigma,\theta)= \phi_x \text{Tr}[\rho(\sigma_e)] \phi_y^{-1} + \overline{\phi_x \text{Tr}[\rho(\sigma_e)] \phi_y^{-1}} > -c, c \ge 0.
\end{equation}

We see that the measure generated by the Hamiltonian would be the same whether we considered $H_{N,\beta,\kappa}$ or the following Hamiltonian,
\begin{equation} \label{def:mathcalHam}
\begin{aligned}
  \tilde{H}_{N,\beta,\kappa}:&=  \sum_{p \in P_N} \beta [\psi_p(\sigma) - \psi_p(1)]
    + \sum_{e \in E^u_N}\kappa [g_e(\sigma,\phi) +c] .
\end{aligned}
\end{equation}

We can relate the above model to a random current model. This has three sets of random variables, the $\sigma$'s, the $\phi$'s, and a new set of edge activations $I(e)$. The marginal distributions of $\sigma$'s and $\phi$'s are those given by the Hamiltonian $\tilde{H}_{N,\beta,\kappa}$. Provided one has a configuration $\phi_x$ and $\sigma_e$, the distribution of $I(e)$ is given as follows:
\begin{equation}
    \mathbb{P}(I(e)= k)= \frac{[\kappa(g_e(\sigma,\phi) +c)]^{k}}{k! \exp[\kappa(g_e(\sigma,\phi)+c)]},
\end{equation}
where $k$ can be any non-negative integer.  The benefit of the random current expansion is that it is equivalent to finding the measure associated with the following Hamiltonian.
\begin{equation}
\begin{aligned}
    \mathcal{H}_{N,\beta,\kappa}(\phi,\sigma,I)&:= \sum_{p \in P_N} \beta (\psi_p(\sigma) -\psi_p(1))  \\&
    +\sum_{e \in E_N}  I(e)[\kappa(g_e(\sigma,\phi) +c)] - \log I(e)!.
\end{aligned}
\end{equation}
$I(e)$ is only allowed to take non-negative integer values. Essentially, the idea is that summing over $I(e)$ returns the exponential. 

With this in hand, we can start writing up our definition of cluster.
\begin{defn}\label{def:clustnonabel}
Given an configuration of the form $\mathcal{C}:=(\phi,\sigma,I)$, we can define the set of excited vertices as
\begin{equation}
    \mathcal{V}:=\{v\in V_N: \exists e \text{ s.t. } v \in \delta(e) \text{ and } I(e) \ne 0\}.
\end{equation}
We can define the support of the configuration as,
\begin{equation}
\begin{aligned}
    \supp\text{ }{\mathcal{C}}:=\{p \in P_N: & \psi_p(\sigma) \ne \psi_p(1) \text{ or } \\ & \exists v \in \mathcal{V} \text { s.t. } v \text{ is a boundary vertex of }  p\}.
\end{aligned}    
\end{equation}
In words, a plaquette is excited if either the plaquette has a non-trivial current running through it or it has a vertex  on its boundary that is adjacent to an edge that is excited with non-zero $I(e)$ value.

We still have the same notion of compatibility of plaquette sets $P_1$ and $P_2$ from Definition \ref{def:graphg2}. We also remark that the condition $\psi_p(\sigma) \ne \psi_p(1)$ here is the same as the condition $\sigma_p \ne 1$ in Section 4 of  \cite{SC20}. In addition, the support of a configuration only depends on the values of $\sigma$ and $I$, not on $\phi$.

\end{defn}

\begin{rmk} \label{rmk:minvortexsize6}
We can show that, with regards to this definition, the excitations with smallest support are still minimal vortices $P(e)$ centered around an edge $e$. We remark that if there is an excited edge $e$ with $I(e) \ne 0$, then the support will contain all plaquettes that share a boundary vertex with this plaquette. This is certainly more than 12 oriented plaquettes ( 6 plaquette pairs $\{p,-p\}$). Furthermore, Lemma 4.3.3 of \cite{SC20} asserts that the smallest vortex of the form $\supp(\sigma,0)$( i.e. all $I(e)=0$) has 12 excited oriented plaquettes( 6 plaquette pairs $\{p,-p\}$).

Furthermore, in order to avoid dealing with tedious edge cases, we will assume a boundary condition on which we will not find excitation whose support contains a plaquette on the boundary of $\Lambda_N$. One can show that for loops sufficiently far away from the boundary, the effect of excitations on the boundary is of size $e^{-N}$, where $N$ is the size of our lattice.
\end{rmk}

We now define the analogue of the function $\Phi$ as ,
\begin{equation}
\begin{aligned}
    \Phi(P) &= \frac{1}{|G|^{|V_N|-1}|H/X|^{|
    V_N|}} \sum_{\supp(\sigma,I) =P} \prod_{p \in P_N} \exp[\beta(\psi_p( \sigma)- \psi_p(1))] \\&\sum_{\phi} \prod_{e=(x,y) \in E_N} \frac{(\kappa( g_e(\sigma,\phi)+c))^{I(e)}}{I(e)!}.
\end{aligned}
\end{equation}


The normalization is introduced in order to derive some multiplicativity relations in the future. 
Because we are dealing with non-abelian gauge groups, we have a more complicated condition that ensures that if the support can be decomposed into two compatible plaquette sets $P_1$ and $P_2$ satisfying this complicated condition, then $\Phi(P_1 \cup P_2) = \Phi(P_1) \Phi(P_2)$. A sufficient condition for this splitting is the well-separated condition of Cao \cite{SC20}[Lemma 4.1.21]. 
\begin{defn}[Well-Separated] \label{def:wellsep}
This definition is derived from the one used in \cite{SC20}. We modify some of the notation used in \cite{SC20} to follow the conventions of this manuscript.

We say that two plaquette sets $P_1$ and $P_2$ are well-separated if we can find some rectangle $R$ in $\Lambda_N$ that satisfies the following properties: $P_1$ is contained in $P(R)$, the plaquettes inside $R$, while $P_2$ is contained in $P(R)^c$, the plaquettes outside $R$. Furthermore, $P_1$ and $P_2$ do not contain a plaquette belonging to the boundary of $R$.

\end{defn}
Wit this definition in hand, we can prove the following lemma.
\begin{lem}\label{lem:wellsplit}
    Consider two plaquette sets $P_1$ and $P_2$ that are well-separated according to Definition \ref{def:wellsep}. Then, we can make the following assertion,
    \begin{equation}
        \Phi(P_1 \cup P_2) = \Phi(P_1) \Phi(P_2).
    \end{equation}
\end{lem}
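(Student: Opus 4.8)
\textbf{Proof proposal for Lemma \ref{lem:wellsplit}.}

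The plan is to construct an explicit bijection between configurations $\mathcal{C}=(\phi,\sigma,I)$ with $\supp(\mathcal{C})=P_1\cup P_2$ and pairs of configurations $(\mathcal{C}_1,\mathcal{C}_2)$ with $\supp(\mathcal{C}_i)=P_i$, under which the unnormalized weights (the product of $\exp[\beta(\psi_p(\sigma)-\psi_p(1))]$ over $p$, times $\prod_e (\kappa(g_e(\sigma,\phi)+c))^{I(e)}/I(e)!$, summed over $\phi$) factor as a product of the corresponding weights for $\mathcal{C}_1$ and $\mathcal{C}_2$. Combined with the normalization constants $|G|^{|V_N|-1}|H/X|^{|V_N|}$ this will give $\Phi(P_1\cup P_2)=\Phi(P_1)\Phi(P_2)$; tracking how the normalization constants split (the $-1$ exponent on $|G|$ is delicate — it corresponds to fixing the gauge at one basepoint) is the bookkeeping one has to be careful about, and the well-separation via a rectangle $R$ gives a clean way to assign the basepoint to the ``outside'' component $P_2$.

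First I would handle the current field $I$ and the gauge field $\sigma$. Since $P_1$ lies inside $R$ and $P_2$ outside $R$, with neither touching $\partial R$, the edge set $E(P_1)$ and $E(P_2)$ are disjoint, and any edge $e$ with $I(e)\ne 0$ or with $\psi_p(\sigma)\ne\psi_p(1)$ for an adjacent plaquette $p$ must lie in $E(P_1)\cup E(P_2)$; for $e$ outside this set we have $I(e)=0$ and all surrounding currents trivial. This lets me set $I_i$, and the $\sigma$-restriction, on $E(P_i)$ and trivially elsewhere. For the $\sigma$ field, the non-abelian subtlety — the knotting obstruction discussed in the text — is exactly what the well-separated hypothesis removes: by Cao \cite{SC20}[Lemma 4.1.21], well-separation guarantees that the equivalence class of $\sigma$ supported on $P_1\cup P_2$ can be split by choosing a gauge (via a spanning tree compatible with $R$, $P(R)$, and $P(R)^c$) so that $\tilde\sigma_e=1$ off $E(P_1)\cup E(P_2)$, and then restricting. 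I would invoke that lemma directly rather than reprove it, noting that the sum over $\phi$ in the definition of $\Phi$ is gauge-invariant so nothing is lost by first gauge-fixing.

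Next I would handle the Higgs field $\phi$ and the sum over it. This is the analogue of the charge-flipping argument in Lemma \ref{lem:compsplit}, but now there is a genuine simplification: the support here does not depend on $\phi$ at all (as remarked after Definition \ref{def:clustnonabel}), so I do not need to worry about $\phi$ creating or destroying excited plaquettes, and I can sum over $\phi$ freely. Concretely, for fixed $(\sigma,I)$ I want to show $\sum_\phi \prod_{e=(x,y)} (\kappa(g_e(\sigma,\phi)+c))^{I(e)}/I(e)!$ factors. Because $I(e)=0$ for every edge $e\notin E(P_1)\cup E(P_2)$, the only factors of $\phi$ that survive are on $E(P_1)\sqcup E(P_2)$, and since $g_e$ depends only on $\phi_x,\phi_y$ at the two endpoints of $e$, the product splits into a part depending on $\phi$ restricted to $V(P_1)$ and a part depending on $\phi$ restricted to $V(P_2)$, with the remaining vertices free. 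Summing over the free vertices produces $|H/X|^{|V_N|-|V(P_1)|-|V(P_2)|}$, summing over $\phi|_{V(P_i)}$ produces the factor appearing in $\Phi(P_i)$ (after accounting for the normalization $|H/X|^{|V_N|}$ assigned to each $\Phi(P_i)$ — here one sees the counting of $|V_N|$ vs $|V(P_i)|$ has to be reconciled, and the well-separation ensures $V(P_1)\cap V(P_2)=\emptyset$ so there is no double-counting). I would also invoke Remark \ref{rmk:minvortexsize6}'s boundary convention so that $V(P_1),V(P_2)$ avoid $\partial\Lambda_N$ and the gauge-fixing spanning tree argument has no edge cases.

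Finally I would assemble: the $\beta$-part $\prod_p \exp[\beta(\psi_p(\sigma)-\psi_p(1))]$ splits because $\psi_p$ depends only on $\sigma$ restricted to $E(P_i)$ for $p\in P_i$ and is trivial for $p\notin P_1\cup P_2$, the $\kappa$-part splits by the argument above, and the normalization constants are arranged precisely so the two sides match (the $|G|^{|V_N|-1}$ splits as $|G|^{|V_N|}$ for the outside component absorbing the basepoint times $|G|^{|V_N|}$-worth of free gauge on the inside, divided by the overcount — this is exactly the kind of normalization juggling done for $\Phi$ in \cite{SC20}, which I would cite). Then I check the map is a bijection by describing the inverse: glue $\sigma_1,\sigma_2,I_1,I_2$ by union (well-defined since supports are on disjoint edge sets, trivial elsewhere) and glue $\phi$ freely. \textbf{The main obstacle} is making the normalization constant accounting airtight — ensuring the powers of $|G|$ and $|H/X|$ on the two sides genuinely cancel, which hinges on the gauge-fixing producing exactly one representative and on well-separation forcing $V(P_1)$, $V(P_2)$ and their complement to partition $V_N$ cleanly; the topological/knotting difficulty that would otherwise dominate is entirely outsourced to Cao's well-separated splitting lemma.
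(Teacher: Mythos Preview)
Your overall architecture matches the paper's: split the gauge field via Cao's well-separated lemma on a spanning tree compatible with $R$, split the current $I$ onto the two disjoint edge-sets, and then factor the remaining vertex-sum using that activated edges live over disjoint vertex sets $V_1\sqcup V_2$.

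There is, however, a genuine gap in how you handle the $\sigma$-sum. Your claim that ``the sum over $\phi$ in the definition of $\Phi$ is gauge-invariant so nothing is lost by first gauge-fixing'' is false in the non-abelian setting. For $D$-dimensional $\rho$, the factor $g_e(\sigma,\phi)=\phi_x\,\text{Tr}[\rho(\sigma_e)]\,\phi_y^{-1}+\text{c.c.}$ genuinely changes under $\sigma_e\mapsto\eta_x\sigma_e\eta_y^{-1}$, since $\text{Tr}[\rho(\eta_x\sigma_e\eta_y^{-1})]\ne\text{Tr}[\rho(\sigma_e)]$ in general and the scalar $\phi$ cannot absorb conjugation by $\eta$. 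Different gauge copies of $\sigma$ therefore contribute different amounts to $\sum_\phi\prod_e(\kappa(g_e(\sigma,\phi)+c))^{I(e)}/I(e)!$, and you cannot simply replace the sum over all $\sigma$ by a sum over one gauge-fixed representative per class.

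The paper's fix is to \emph{reparametrize} the sum over all $\sigma$ as a sum over gauge-fixed $\tilde\sigma$ together with a free auxiliary field $\eta:V_N\to G$, so that $g_e$ becomes $g_e(\eta(\tilde\sigma),\phi)$; the extra $1/|G|$ in the normalization then accounts for the redundant global multiplication in $\eta$ once the basepoint constraint is dropped. The key observation is that the combined $(\eta,\phi)$-sum factors: because $\tilde\sigma_e=1$ off $E(P_1)\cup E(P_2)$ and $I(e)=0$ off the same set, the product over activated edges depends only on $(\eta_v,\phi_v)$ for $v\in V_1$ and for $v\in V_2$ separately, and the free sum over the remaining vertices gives exactly the $(|G|\,|H/X|)^{|V_N|-|V_1|-|V_2|}$ factor that reconciles the normalizations on the two sides. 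Your factorization argument for $\phi$ alone is the right mechanism, but it must be run for the pair $(\eta,\phi)$; once you do this the ``normalization juggling'' you flag as the main obstacle becomes routine, because the $\eta$-sum supplies precisely the missing powers of $|G|$.
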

\begin{proof}
\textit{Part 1: Splitting the configuration $\sigma$}

Let $P_1$ and $P_2$ be separated by the rectangle $R$. Let $T$ be a spanning tree of $\Lambda_N$ that is simultaneously a spanning tree of $R$, the complement $R^c$, and the boundary of $R$ with some basepoint $b$.

Notice by how we defined the support of our configuration $P_1 \cup P_2$, all plaquettes $p$ such that $\psi_p(\sigma) \ne \psi_p(1)$ are contained in $P_1$  or in $P_2$. 

We will say that any gauge field configuration $\sigma$ is gauged with respect to the tree $T$ if $\sigma_e=1$ for all edges $e \in T$.
We can apply Lemma 4.1.21 of \cite{SC20} in order to find a  unique gauge transformation on the vertices $\eta^{\sigma}: V_N \to G$ such that $\tilde{\sigma}_e:= \eta^\sigma_x \sigma_e (\eta^\sigma_y)^{-1}$ with $\eta_b=1$ at the root and $\tilde{\sigma}$ is gauged with respect to the tree $T$. 

Once defining $\tilde{\sigma}_e$, we see that $\tilde{\sigma}
$ can be split into a product $\tilde{\sigma}_1$ and $\tilde{\sigma}_2$ such that $\tilde{\sigma}_e= (\tilde{\sigma}_1)_e (\tilde{\sigma}_2)_e$ for all edges $e=(x,y)$ such that $\tilde{\sigma}_1$ and $\tilde{\sigma}_2$ satisfy the following properties: $(\tilde{\sigma}_1)_e=\tilde{\sigma}_e$ for edges $e$ inside $R$ and $(\tilde{\sigma}_1)_e= 1$ for edges $e$ outside of $R$. In addition, $(\tilde{\sigma}_2)_e = \tilde{\sigma}_e$ for edges $e$ outside of $R$ and $(\tilde{\sigma}_2)_e =1$ for edges $e$ inside of $R$. This treats how we would divide $\sigma$ into two disjoint supports. Now, let us see how we would deal with activated edges.

Using the activated edges of $I(e)$, one can generate a subgraph $\mathcal{I}$ of the lattice $\Lambda_N$ whose edges consist of the activated edges with $I(e) \ne 0$. One can consider the connected clusters $C_1,C_2,\ldots,C_m$. We will observe later  in the last line of \eqref{eq:gaugesplit} that our function $\Phi$ will not share variables between the different clusters; this will allow us to split $\Phi$ appropriately as a product.  We also remark that given a cluster $C_i$, all of its vertices are either boundary vertices of plaquettes of  $P_1$ or of  $P_2$ exclusively. Otherwise, there exists some vertices $v$ and $w$ adjacent in the cluster $C_i$ such that $v$ is the boundary vertex of some plaquette $p_1$ in $P_1$ and $w$ is the boundary vertex of some plaquette $p_2$ in $P_2$. However, the edge $(v,w)$ is activated; thus all plaquettes that either have $v$ or $w$ as a boundary vertex are in the support of the configuration. In the connectivity graph of plaquettes $G_2$ from Definition \ref{def:graphg2}, this would imply that $p_1$ and $p_2$ are connected. This contradicts our assumption that $P_1$ and $P_2$ are disconnected components of the support.

We can let $V_1$ be the vertices that bound excited edges and are boundary vertices of some plaquette in $P_1$ and we let $I^a_1$ be the activated edges whose boundary vertices are both in $V_1$. We define $V_2$ and $I^a_2$ similarly. From our earlier discussion, $I^a_1 \cup I^a_2$ is a disjoint union covering all activated edges and $V_1$ is disjoint from $V_2$.


For an activation $I$, we define $I_1(e)$ to be the activation restricted to edges of $I_1^a$ and $I_2(e)$ to be the activation restricted to the edges of $I_2^a$.

We now claim that the map $(\tilde{\sigma},I)\to (\tilde{\sigma}_1,I_1),(\tilde{\sigma}_2,I_2)$ is a bijection from those configurations $(\tilde{\sigma},I)$ whose support is $P_1 \cup P_2$ and $\tilde{\sigma}$ is gauged with respect to the tree $T$ and pairs of configuration $(\tilde{\sigma}_1,I_1), (\tilde{\sigma}_2,I_2)$ whose supports are $P_1$ and $P_2$ respectively and $\tilde{\sigma}_1, \tilde{\sigma}_2$ are gauged with respect to the tree $T$. 
To show that this is a map between the proposed spaces,
it suffices to show that the support of $(\tilde{\sigma}_1,I_1)$ is $P_1$ exactly. By our construction, $(\tilde{\sigma}_1,I_1)$ has support contained in $P_1$. Now, let $p$ be an arbitrary plaquette in $P_1$. Since $p$ was in the support of $(\tilde{\sigma},P)$, we know that either $\psi_p(\tilde{\sigma})\ne \psi_p(1)$ or there is a vertex $v$ on the boundary of $p$ such that $v$ is adjacent to an excited edge, $e$ with $I(e)\ne 0$.

Consider the case that $\psi_p(\tilde{\sigma}) \ne \psi_p(1)$. We know that for $p \in P_1$ that $\psi_p(\tilde{\sigma})= \psi_p(\tilde{\sigma}_1)$ by construction. Thus, in this case we have that $p \in \supp(\tilde{\sigma}_1,I_1)$.  Now consider the case that there is a vertex $v$ in the boundary of $p_1$ such that $v$ is adjacent to an activated edge $e$ in $I$. Our earlier discussion shows that $v$ must be in $V_1$ and $e$ must be in $I^a_1$. Thus, $v$ is still adjacent to an activated edge in $I_1$.

Since all of our maps have been defined by restriction, it is easy to show the proposed map is a bijection. The splitting is clearly unique and one can combine an arbitrary split by multiplying the $\sigma$'s and combining the $I$'s. This proves our claim on the properties of the map $(\tilde{\sigma},I) \to (\tilde{\sigma}_1,I_1),(\tilde{\sigma}_2,I_2) $.

\textit{Part 2: The Multiplicativity of $\Phi$}


We rewrite our sum in $\Phi$ with respect to a base representative $\tilde{\sigma}_e$ gauged with respect to the spanning tree $T$ and a separate field $\eta: V_N \to G$, which re-introduces the gauge invariance.  The sum in $\Phi(P_1 \cup P_2)$ can be written as,
\begin{equation}\label{eq:gaugesplit}
\begin{aligned}
    \Phi(P_1 \cup P_2) &= \frac{1}{|G|^{|V_N|}|H/X|^{|V_N|}} \sum_{\substack{\supp(\tilde{\sigma},I)=P_1 \cup P_2\\\tilde{\sigma} \text{ gauged with }T}} \prod_{p \in P_1 \cup P_2}\exp[\beta (\psi_p(\tilde{\sigma}) - \psi_p(1))]\\& \sum_{\eta,\phi} \prod_{e=(x,y) \in \{e:I(e) \ne 0\}} \frac{(\kappa(g(\eta(\tilde{\sigma}),\phi)+c))^{I(e)}}{I(e)!}.
\end{aligned}
\end{equation}

The new variables $\eta:V_N \to G$ act as follows: it takes a gauge field configuration $\sigma$ to the configuration $\eta(\sigma)$ that takes values $\eta(\sigma)_e = \eta_x \sigma_e \eta_y^{-1}$ for the edge $e=(x,y)$. This reintroduces the gauge invariance we removed when defining $\tilde{\sigma}$.
 We remark that we get an extra factor of  $\frac{1}{|G|}$ due to the removal of the gauge fixing $\eta_b=1$ for the root. This can be done by a global transformation multiplying each element $\eta$ by the same element $g \in G$.

 Recall our map $(\tilde{\sigma},I) \to (\tilde{\sigma}_1,I_1),(\tilde{\sigma}_2,I_2)$ from earlier. We see that it can split the product as follows. 
\begin{equation}
\begin{aligned}
    \Phi(P_1 \cup P_2) &= \frac{(|G||H/X|)^{|V_N|-|V_1|-|V_2|}}{(|G|||H/X|)^{|V_N|}} \\&\sum_{\supp(\tilde{\sigma}_1,I_1)=P_1} \prod_{p \in P_1} \exp[\beta(\psi_p(\tilde{\sigma}_1) - \psi_p(1))] \\
    &\times \sum_{\substack{\eta_v,\phi_v\\v in V_1}} \prod_{\substack{e=(x,y)\\I_1(e) \ne 0}} \frac{(\kappa(g(\eta(\tilde{\sigma}_1),\phi)+c))^{I_1(e)}}{I_1(e)!}
    \\&\sum_{\supp(\tilde{\sigma}_2,I_2)=P_2} \prod_{p\in P_2}  \exp[\beta(\psi_p(\tilde{\sigma}_2) - \psi_p(1))] \\&\times \sum_{\substack{\eta_v,\phi_v\\v in V_2}} \prod_{\substack{e=(x,y)\\I_2(e) \ne 0}} \frac{(\kappa(g(\eta(\tilde{\sigma}_2),\phi)+c))^{I_2(e)}}{I_2(e)!}.
\end{aligned}
\end{equation}

We used the following facts:
\begin{enumerate}
\item Firstly, the mapping $(\tilde{\sigma},I) \to (\tilde{\sigma}_1,I_1),(\tilde{\sigma}_2,I_2)$ is a bijection.
\item Secondly, the product of the $\exp[\beta \psi_p]$ factors split between the disjoint supports $P_1$ and $P_2$.
\item Thirdly, the products $(\kappa(g(\eta(\tilde{\sigma},\phi)+c))^{I_1(e)}$ split between the disjoint sets $E_1$ and $E_2$ where the only variable values of $\eta$ and $\phi$ affecting the values of the product over $I_1$( $I_2$ resp.) are those in $V_1$ (resp. $V_2$).
\item Fourthly, we used the fact that the restriction of $\tilde{\sigma}$ to $E_1$(resp. $E_2$) is $\tilde{\sigma}_1$(resp. $\tilde{\sigma}_2$).
\item Finally, the sum over all remaining variables $\phi_x$ and $\eta_x$ belonging to variables not in $V_1$ or $V_2$ is $(|G||H/X|)^{|V|-|V_1|-|V_2|}$.
\end{enumerate}

Now, in the second line, we can reintroduce a dummy summation over new variables $\eta^1_v,\phi^1_v$ for vertices $v \not \in V_1$ and a new dummy summation over variables $\eta^2_v, \phi^2_v$ for vertices $v \not \in V_2$ in the third line. We will also relabel the variables $\eta_v,\phi_v$ in the second line as $\eta^1_v,\phi^1_v$ to simplify notation; similarly, we relabel $\eta_v,\phi_v$ as $\eta^2_v,\phi^2_v$ in the third line.

Compensating for the extra multiplicative factor of $(|G||H/X|)^{|V_N|-|V_1|}$ in the second line due to the new variables we added, as with $(|G||H/X|)^{|V_N|-|V_2|}$ in the second line, we see that we can write the above expression as expression 
\begin{equation}
\begin{aligned}
     &\Phi(P_1 \cup P_2) = \\ &\frac{1}{(|G||H/X|)^{|V_N|}}\sum_{\supp(\tilde{\sigma}_1,I_1)=P_1} \prod_{p \in P_1} \exp[\beta(\psi_p(\tilde{\sigma}_1) - \psi_p(1))] \\
     & \times\sum_{\substack{\eta^1_v,\phi^1_v}} \prod_{\substack{e=(x,y)\\I_1(e) \ne 0}} \frac{(\kappa(g(\eta^1(\tilde{\sigma}_1),\phi^1)+c))^{I_1(e)}}{I_1(e)!}
    \\&\frac{1}{(|G||H/X|)^{|V_N|}}\sum_{\supp(\tilde{\sigma}_2,I_2)=P_2} \prod_{p\in P_2}  \exp[\beta(\psi_p(\tilde{\sigma}_2) - \psi_p(1))] \\
    & \times \sum_{\substack{\eta^2_v,\phi^2_v}} \prod_{\substack{e=(x,y)\\I_2(e) \ne 0}} \frac{(\kappa((\eta^2(\tilde{\sigma}_2),\phi^2)+c))^{I_2(e)}}{I_2(e)!}.
\end{aligned}
\end{equation}

But, this is just the product $\Phi(P_1) \Phi(P_2)$.

\end{proof}

\subsection{Knot Expansion}

The above Lemma \ref{lem:wellsplit} is most useful when paired with the concept of knot expansions. A knot expansion serves as a more careful way to split the support of a configuration into disjoint components that would be better for analysis.

\begin{defn}
Let $V_1 \cup V_2 \ldots \cup V_m$ be a vortex decomposition. A partition of these vortices into knots $K_1 \cup K_2 \cup \ldots \cup K_n$ will be called a knot decomposition if for each $j$, there is a box $R_j$ that well-separates $K_j$ from $K_{j+1} \cup K_{j+2}\cup \ldots \cup K_n$. As a consequence of Lemma \ref{lem:wellsplit}, we would have that $\Phi(K_1 \cup K_2 \cup \ldots \cup K_n) = \prod_{i=1}^n \Phi(K_i)$.

A knot decomposition will be maximal if there is no further way to partition any $K_j= K_j^1 \cup K_j^2$ into non-empty components such that $K_j^1$ and $K_j^2$ are well-separated by a cube in $\Lambda_N$.
\end{defn}





\begin{rmk} \label{rmk:minvortexisknot} Due to Lemma 4.1.10 of \cite{SC20}, if one of the vortices in the support of a configuration is a minimal vortex $P(e)$ centered around some edge $e$, then one would always be able to separate out this minimal vortex in the knot decomposition; though this slightly abuses the notion of separating box.  We will usually choose these minimal vortices to be the first knots in the knot decomposition. This is exactly the convention of \cite{SC20}.  \end{rmk} 

With the notion of knot decomposition and Lemma \ref{lem:wellsplit} in hand, we can rather easily prove the following statement. This is essentially Lemma 4.3.10 of \cite{SC20}.
\begin{lem} \label{lem:boundprob}
    The probability that, under the Hamiltonian $ \mathcal{H}_{N,\beta,\kappa}$, we will observe a configuration  $(\sigma,\phi,I)$  whose support under the knot decomposition contains $K$ would be less than 
   $\Phi(K)$.
\end{lem}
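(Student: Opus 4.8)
The plan is to mirror the argument used for the abelian/low-disorder case (Lemma \ref{lm:vortex}), but now with the knot decomposition playing the role of the vortex decomposition, and with Lemma \ref{lem:wellsplit} supplying the multiplicativity of $\Phi$ along well-separated pieces. First I would fix a knot $K$ and let $\mathcal{P}_K$ denote the collection of all possible supports $P$ of configurations $(\sigma,\phi,I)$ whose maximal knot decomposition contains $K$ as one of its knots. For each such $P$, write its knot decomposition as $K \cup (P\setminus K)$, where by Remark \ref{rmk:minvortexisknot} and the definition of knot decomposition we may arrange $K$ to be separated from the remaining knots by a box $R$; then Lemma \ref{lem:wellsplit} (applied inductively along the separating boxes of the knot decomposition) gives $\Phi(P) = \Phi(K)\,\Phi(P\setminus K)$.

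Next I would relate probabilities to the $\Phi$-weights. Exactly as in the proof of Lemma \ref{lm:vortex}, the partition function $Z$ for $\mathcal{H}_{N,\beta,\kappa}$ is (up to the harmless overall normalization absorbed into the definition of $\Phi$) a sum of $\Phi(P)$ over all admissible supports $P$, and since every $P\setminus K$ arising in $\mathcal{P}_K$ is itself the support of some legitimate configuration, we get $Z \ge \sum_{P\in\mathcal{P}_K}\Phi(P\setminus K)$. Combining this with the splitting identity,
\begin{equation}
\mathbb{P}\big(K \in \text{knot decomposition of }\supp(\sigma,\phi,I)\big) = \frac{\sum_{P\in\mathcal{P}_K}\Phi(P)}{Z} = \frac{\Phi(K)\sum_{P\in\mathcal{P}_K}\Phi(P\setminus K)}{Z} \le \Phi(K),
\end{equation}
which is the desired bound.

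The main obstacle, and the place requiring genuine care rather than bookkeeping, is justifying that $\Phi$ really does factor as $\Phi(K)\,\Phi(P\setminus K)$ for the relevant supports. Lemma \ref{lem:wellsplit} is stated for a single pair of well-separated plaquette sets via one separating rectangle $R$, whereas a knot decomposition involves a nested family of separating boxes $R_1, R_2, \ldots$ and one must iterate the splitting while keeping the gauge-fixing spanning tree $T$ compatible with all of the boxes and their complements simultaneously. I would handle this by choosing, once and for all, a spanning tree $T$ of $\Lambda_N$ that restricts to a spanning tree of $R_j$, of $R_j^c$, and of $\partial R_j$ for every box $R_j$ in the decomposition (such a $T$ exists since one can extend spanning trees of subgraphs), and then applying Lemma \ref{lem:wellsplit} repeatedly: first splitting off $K_1$ across $R_1$, then $K_2$ across $R_2$ inside $R_1^c$, and so on. A secondary subtlety is the bookkeeping of the $|G|^{|V_N|-1}|H/X|^{|V_N|}$ normalization factors so that the product of the $\Phi$'s carries exactly the same normalization as $\Phi(P)$; this is precisely what was verified at the end of the proof of Lemma \ref{lem:wellsplit}, so invoking that computation suffices. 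Finally I would remark that the inequality rather than equality in the last display comes solely from the fact that $\mathcal{P}_K$ may omit supports (e.g. those where $K$ fails to appear as a separate knot), and from the overall factor of at most $1$ relating $Z$ to $\sum_{P} \Phi(P)$, which is all we need.
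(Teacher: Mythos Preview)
Your proposal is correct and follows essentially the same approach as the paper's proof: both factor $\Phi$ along the knot decomposition via Lemma \ref{lem:wellsplit}, observe that removing $K$ from any admissible support still yields a legitimate support, and bound the partition function from below by the resulting sum to obtain the ratio $\le \Phi(K)$. One minor simplification: your concern about needing a single spanning tree simultaneously compatible with all separating boxes is unnecessary, since $\Phi$ itself is defined without reference to any tree, so Lemma \ref{lem:wellsplit} can be applied iteratively with a fresh tree at each step.
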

\begin{proof}

Consider a knot decomposition of the form $K_1 \cup \ldots \cup K_j \cup \ldots K_m$ that includes $K=K_j$. $|G|^{|V_N|-1}|H/X|^{|V_N|}\Phi(K_1 \cup \ldots \cup K_j \cup \ldots \cup K_m)$ will be the sum of $\exp[\mathcal{H}_{N,\beta,\kappa}]$ among all configurations that have knot decomposition $K_1 \cup \ldots \cup K_j \cup \ldots \cup K_m$.

We see that,
\begin{equation}\label{eq:somesum}
    \sum_{K \in \supp(\sigma,\phi,I)} \exp[\mathcal{H}_{N,\beta,\kappa}(\sigma,\phi,I)] =|G|^{|V_N|-1}|H/X|^{|V_N|} \sum_{K \in K_1 \cup K_2 \ldots K_m} \prod_{i=1}^m\Phi(K_i).
\end{equation}

Where we abuse notation to say that $K \in K_1 \cup \ldots\cup K_m$ means that $K$ is one of the knots in the decomposition.

We see that if we remove $K=K_j$ from $K_1\cup \ldots \cup K_m$, then , though $K_1 \cup \ldots \cup K_{j-1} \cup K_{j+1}\ldots \cup K_m$ may split further in a knot decomposition, we would still satisfy a well-separatedness condition to assert that
$\Phi(K_1 \cup \ldots\cup K_{j-1} \cup K_{j+1} \cup K_m)= \prod_{\substack{i=1\\i \ne j}}^n \Phi(K_i) $.

By considering those configurations whose support would be contained in $K_1 \cup K_{j-1} \cup K_{j+1} \cup K_m$ for all knot decompositions $K_1 \cup\ldots \cup K_m$ that contain $K=K_j$, we see the partition function can be bounded below by,

\begin{equation}
     Z_{\mathcal{H}_{N,\beta,\kappa}} \ge |G|^{|V_N|-1}|H/X|^{|V_N|} \sum_{K \in K_1 \cup K_2 \ldots \cup K_m} \prod_{\substack{i=1\\i \ne j}}^m\Phi(K_i).
\end{equation}

Taking the ratio of the term in \eqref{eq:somesum} with our lower bound on the partition function gives us that the probability of seeing  a configuration whose support contains $K$ is bounded by $\Phi(K)$.


\end{proof}

We also have the following quantitative bound on $\Phi(K)$.
\begin{lem} \label{lem:TrivBound}
Let $K$ be some union of vortices $V_1 \cup V_2 \cup \ldots \cup V_N$ with $2k$ oriented plaquettes ( $k$ pairs of plaquettes $\{p,-p\}$) in the support.
Define the constant,
$$\alpha_{\beta,\kappa}:= 2^4 |G|\max(\exp[2\beta(\max_{a\ne 1 \in G} \text{Re}[\psi_p(a) -\psi_p(1)])], (\kappa \mathfrak{d}\exp[\kappa \mathfrak{d}])^{\frac{1}{2{8 \choose 2}}}),  $$
where $\mathfrak{d}$ is the maximum value of $g_e(\sigma,\phi) + c$ for all possible values $\sigma$ and $\phi$. Note that this maximum does not depend on the edge $e$.

Then, we have the following bound on $\Phi(K)$.

\begin{equation}
   \Phi(K)\le \alpha_{\beta,\kappa}^k.
\end{equation}
\end{lem}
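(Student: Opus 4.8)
The plan is to adapt the energy-plus-counting argument of Lemma~\ref{lem:boundlarge} to the random-current functional, using the gauge-fixing rewriting of $\Phi$ from the proof of Lemma~\ref{lem:wellsplit}. First I would fix a spanning tree $T$ of $\Lambda_N$ and rewrite, exactly as in \eqref{eq:gaugesplit},
\[
  \Phi(K)=\frac{1}{(|G|\,|H/X|)^{|V_N|}}\sum_{\substack{(\tilde\sigma,I):\ \supp(\tilde\sigma,I)=K\\ \tilde\sigma\ \mathrm{gauged}\ \mathrm{w.r.t.}\ T}}\ \prod_{p\in P_N}e^{\beta(\psi_p(\tilde\sigma)-\psi_p(1))}\ \sum_{\eta,\phi}\ \prod_{e:\,I(e)\ne 0}\frac{\bigl(\kappa(g_e(\eta(\tilde\sigma),\phi)+c)\bigr)^{I(e)}}{I(e)!}.
\]
Since $0<g_e(\cdot,\cdot)+c\le\mathfrak d$ uniformly, each factor in the last product is at most $(\kappa\mathfrak d)^{I(e)}/I(e)!$ irrespective of $\eta$ and $\phi$, so $\sum_{\eta,\phi}(\cdots)\le (|G|\,|H/X|)^{|V_N|}\prod_{e:\,I(e)\ne0}(\kappa\mathfrak d)^{I(e)}/I(e)!$, and the prefactor cancels the normalization. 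Summing the resulting series over $I(e)\in\{1,2,\dots\}$ for each excited edge produces a factor $\sum_{m\ge1}(\kappa\mathfrak d)^m/m!=e^{\kappa\mathfrak d}-1\le\kappa\mathfrak d\,e^{\kappa\mathfrak d}$ per excited edge, so
\[
  \Phi(K)\ \le\ \sum_{(\tilde\sigma,\,S)}\ \Bigl(\prod_{p\in P_N}e^{\beta(\psi_p(\tilde\sigma)-\psi_p(1))}\Bigr)\,(\kappa\mathfrak d\,e^{\kappa\mathfrak d})^{|S|},
\]
where the sum runs over $T$-gauged $\tilde\sigma$ with $\supp(\tilde\sigma,I)=K$ and over the possible excited-edge sets $S$ (necessarily $S\subseteq E(K)$).

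Next I would carry out the energy accounting. Grouping each oriented plaquette with its reverse and using $\psi_{-p}=\overline{\psi_p}$, the $\beta$-product equals $\prod_{\{p,-p\}}e^{2\beta\operatorname{Re}(\psi_p(\tilde\sigma)-\psi_p(1))}$, which is $1$ except on the, say, $k_1$ plaquette pairs of $K$ carrying a nontrivial current, where each factor is at most $\exp\bigl[2\beta\max_{a\ne 1}\operatorname{Re}(\psi_p(a)-\psi_p(1))\bigr]$ (strictly below $1$ once the trivial gauge invariance has been removed). Because $\supp=K$ \emph{exactly}, each of the remaining $2(k-k_1)$ oriented plaquettes lies in $K$ only by being incident to an endpoint of some excited edge; since a single edge is incident to at most $4\binom{8}{2}$ oriented plaquettes (an elementary count in $d=4$, cf.\ Remark~\ref{rmk:minvortexsize6}), the excited-edge set obeys $|S|\ge (k-k_1)/(2\binom{8}{2})$. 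Working in the high-disorder regime, where $\kappa$ is small enough that $\kappa\mathfrak d\,e^{\kappa\mathfrak d}\le 1$, the product of the current factor and $(\kappa\mathfrak d\,e^{\kappa\mathfrak d})^{|S|}$ is therefore at most
\[
  \Bigl(e^{2\beta\max_{a\ne1}\operatorname{Re}(\psi_p(a)-\psi_p(1))}\Bigr)^{k_1}\Bigl((\kappa\mathfrak d\,e^{\kappa\mathfrak d})^{1/(2\binom{8}{2})}\Bigr)^{k-k_1}\ \le\ \bigl(\alpha_{\beta,\kappa}/(2^4|G|)\bigr)^{k}.
\]

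Finally I would count the pairs $(\tilde\sigma,S)$. By the gauge-fixing correspondence of \cite{SC20} (Lemma~4.1.21 there), a $T$-gauged $\tilde\sigma$ supported on $K$ is determined by at most $|G|$-worth of data per plaquette pair of $K$, hence there are at most $|G|^{k}$ of them; and $S$ ranges over subsets of $E(K)$, of which there are at most $2^{|E(K)|}\le 2^{4k}$. Multiplying the count $2^{4k}|G|^{k}=(2^4|G|)^{k}$ by the bound of the previous paragraph gives $\Phi(K)\le\alpha_{\beta,\kappa}^{k}$. The step I expect to be the main obstacle is the first one: verifying that the $\eta$- and $\phi$-summations genuinely decouple from the rest and cancel the normalization even though $g_e$ is not gauge-invariant under $\sigma_e\mapsto\eta_x\sigma_e\eta_y^{-1}$ in the naive sense — this is exactly what the spanning-tree construction behind Lemma~\ref{lem:wellsplit} is designed to handle — whereas the two combinatorial inputs (incidence of plaquettes to an edge, and the number of gauged currents supported on a vortex) are routine and imported from \cite{SC20}.
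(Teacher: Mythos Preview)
Your proposal is correct and follows essentially the same route as the paper: gauge-fix with a spanning tree, bound the $\eta,\phi$ sum trivially to cancel the normalization, sum the current series to obtain a factor $\kappa\mathfrak d\,e^{\kappa\mathfrak d}$ per excited edge, split $K$ into the $k_1$ plaquette pairs with nontrivial current and the rest, use the incidence bound $|S|\ge(k-k_1)/(2\binom{8}{2})$, and multiply by the configuration count $2^{4k}|G|^k$. The paper runs these steps in the same order; your explicit hypothesis $\kappa\mathfrak d\,e^{\kappa\mathfrak d}\le 1$ is tacit in the paper but indeed needed, and the $|G|^k$ count of $T$-gauged $\tilde\sigma$ supported on $K$ should be attributed to \cite{SC20}[Lemma~4.3.7 and Lemma~B.2] rather than Lemma~4.1.21.
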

\begin{proof}

Choose some spanning tree $\mathcal{T}$. We see that we can express $\Phi(K)$ with respect to configurations gauged with respect to $\mathcal{T}$ as,
\begin{equation}
\begin{aligned}
    \Phi(K) = &\frac{1}{|G|^{|V_N|}|H/X|^{|V_N|}} \sum_{\supp(\tilde{\sigma},I)=K} \prod_{p \in P_N} \exp[\beta(\psi_p(\tilde{\sigma})-\psi_p(1)] \\&\sum_{\eta_v,\phi_v}\prod_{e=(x,y):I(e) \ne 0} \frac{(\kappa(g(\eta(\tilde{\sigma}),\phi)+c))^{I(e)}}{I(e)!}.
\end{aligned}
\end{equation}

Notice that the support of $\tilde{\sigma},I$ only depends on the set $A_{I}$ of activated edges of $I$, rather than the particular values of $I$. Given a set of edges $A$ we will let $\supp(\tilde{\sigma},A)$ to be the support of a configuration $(\tilde{\sigma},I)$ whose set of activated edges from $I$ is $A$. We remark that the definition would not depend on the choice $I$.

What we do now is replace the first sum over sets $\tilde{\sigma},A$ where $A$ is now just a set of activated edges. Later, just before we take the product over all edges in the second line, we sum values of $I(a)$ from $1$ to $\infty$ where $a$ varies over all activated edges in $A$.

Namely, we write,
\begin{equation} \label{eq:phinewrep}
\begin{aligned}
    \Phi(K) = &\frac{1}{|G|^{V_N}|H/X|^{|V_N|}} \sum_{\supp(\tilde{\sigma},A)=K} \prod_{p \in K} \exp[\beta(\psi_p(\tilde{\sigma})-\psi_p(1)] \\&\sum_{\eta_v,\phi_v}\prod_{e=(x,y):e \in A} \sum_{I(e)=1}^{\infty}\frac{(\kappa(g_e(\eta(\sigma),\phi)+c))^{I(e)}}{I(e)!}.
\end{aligned}
\end{equation}

The sum $\sum_{I(e)=1}^{\infty}\frac{(\kappa(g(\eta^1(\sigma),\phi^1)+c))^{I(e)}}{I(e)!}$ can be bounded by $\kappa \mathfrak{d}\exp[\kappa d\mathfrak{d}]$, where $\mathfrak{d}$ is the maximum of $g_e(\eta(\sigma),\phi)+c$ over all choices of $\eta$,$\sigma$ and $\phi$. Recall that since $g_{e}$ is a local function, there is clearly a maximum value.

The last line of \eqref{eq:phinewrep} can be bounded by $|G|^{|V_N|}|H/X|^{|V_N|} (\kappa \mathfrak{d} \exp[\kappa \mathfrak{d}])^{|A|}$, by performing a trivial sum over all $\eta_v$ and $\phi_v$ variables. Fortunately, this prefactor cancels out.

Now, we bound the product $\prod_{p\in P_N} \exp[\beta(\psi_p(\tilde{\sigma})-\psi_p(1))] (\kappa \mathfrak{d}\exp[\kappa \mathfrak{d}])^{|A|}$ for any configuration with $\supp(\tilde{\sigma},A)=P$.
 Let $C_p$ be the set of plaquettes in $K$ such that $\psi_p(\tilde{\sigma}) \ne \psi_p(1)$. For plaquettes in $K \setminus C_p$, we would know that, instead, they must be activated due to an edge $e \in A$ that shares a vertex with a plaquette. A single vertex can be part of at most $2{8 \choose 2}$ oriented plaquettes, so a single edge can excite at most $4{8 \choose 2}$ oriented plaquettes. Thus, we see that $|A| \ge \frac{|K|-|C_p|}{4{8 \choose 2}}$.
 
 For any configuration such that $\supp(\tilde{\sigma},A)=K$, we see that we can bound $\prod_{p\in P_N} \exp[\beta(\psi_p(\tilde{\sigma})-\psi_p(1))] (\kappa \mathfrak{d}\exp[\kappa \mathfrak{d}])^{|A|}$ by $$\max(\exp[2\beta(\max_{a\ne 1 \in G} \text{Re}[\psi_p(a) -\psi_p(1)])], (\kappa \mathfrak{d}\exp[\kappa \mathfrak{d}])^{\frac{1}{2{8 \choose 2}}})^{k}.$$
 
 Now, all that we have left is to count the number of configurations $\tilde{\sigma},A$ with support $K$. For each edge $e$ that is a boundary edge of a plaquette $p \in K$, we have at most $2$ choices: to activate the edge or not. Thus, the number of way to choose $A$ is at most $2^{4k}$ with $4k$ being the maximum number of edges that could possibly be boundary edges of plaquettes in $K$. Now, we finally have to count the number of $\tilde{\sigma}$ with support $P$. In \cite{SC20}[Lemma 4.3.7], this can be reduced to finding the number of homomorphisms from the fundamental group of the $2$-skeleton of the complement of $K$. From \cite{SC20}[Lemma B.2], this is bounded by $|G|^{k}$. Multiplying all of these constant together will give us the desired bound on $\Phi(K)$.


\end{proof}

\begin{rmk}\label{rmk:reduction}
Provided we know that $\beta$ is sufficiently large and $\kappa$ is sufficiently small, from this point, we could follow the proof of Corollary 1.2.1 in \cite{SC20} nearly word for word by using a sufficiently small value of the constant $\alpha_{\beta,\kappa}$. The proof can follow using the same steps; namely, finding a good event $E$, bounding the probability of $E^c$, and computing the value of the Wilson loop conditioned on $E^c$. If one chooses $\kappa^{\frac{1}{2{8 \choose 2}}}$ to be of the order $\exp[-\beta]$, then one could expect this to be essentially optimal. 
\end{rmk}

For the remainder of this section, we will give simpler arguments deriving the leading order behavior of the Wilson loop expectation for the convenience of the reader. One can express some corrections to the leading order in $\kappa$ through a more delicate argument. For interested readers, this procedure is performed in Section \ref{sec:decorrelation} through a careful analysis of decorrelation properties of an associated Hamiltonian.

\subsection{Identifying the Main Order Excitations}

As we have mentioned in Remark \ref{rmk:reduction}, the properties of $\Phi$ we have established in the previous subsections are sufficient to prove that the main order terms of the Wilson loop expectation are given by minimal vortices.

When $\exp[-\beta] \ll \kappa$, we can more precisely describe the value of the error term with a little bit more effort and a more careful identification of the main order terms.

\begin{defn}
We call a configuration $(\sigma,\phi,I)$ Wilson loop non-trivial if the following holds: there exists some plaquette $p$ such that $\psi_p(\sigma) \ne \psi_p(1)$. We can define the function $\Phi_{NT}$ by summing over $\exp[ \mathcal{H}_{N,\beta,\kappa}(\sigma,I,\phi)]$ over all configurations that are Wilson loop nontrivial.
Namely, we have that, when $P$ is some union of vortices,
\begin{equation}
    \Phi_{NT}(P) = \sum_{\substack{\supp(\sigma,\phi,I)= P\\ (\sigma,\phi,I) \text{ W.L. Nontrivial}}} \exp[\mathcal{H}_{N,\beta,\kappa}(\sigma,\phi,I)].
\end{equation}

A configuration $(\sigma,\phi,I)$ that is not Wilson loop-nontrivial is Wilson loop trivial. A consequence of being Wilson loop trivial is that there is a way to choose a field $\eta: V_N \to G$ such that the modified field $\tilde{\sigma}_e = \eta_x \sigma_{e}\eta_y^{-1}=1$.

If a configuration $(\sigma,\phi,I)$ has support $P_1 \cup P_2$ where $P_1$ and $P_2$ are compatible, we can say that $(\sigma,\phi,I)$ is Wilson loop nontrivial on $P_1$ if there is a plaquette $p \in P_1$ such that $\psi_p(\sigma) \ne \psi_p(1)$.

\end{defn}

We now describe some properties of $\Phi_{NT}$ that are reminiscent of properties of $\Phi$.

If we have a configuration supported on a union of well-separated plaquette sets $P_1 \cup P_2$, then we can say that a configuration $(\sigma,\phi,I)$ is Wilson loop non-trivial when restricted to $P_1$ if there is a plaquette $p \in P_1$ such that  $\psi_p(\sigma) \ne \psi_p(1)$.

We can follow the construction in the proof of Lemma \ref{lem:wellsplit} to make the following assertion:
we have the identity
$$
|G|(|G||H/X|)^{-|V_N|} \sum_{\substack{\supp(\sigma,\phi,I)=P_1 \cup P_2\\(\sigma,\phi,I) \text{ W.L. N.T. on } P_1}} \exp[\mathcal{H}_{N,\beta,\kappa}(\sigma,\phi,I)]= \Phi_{NT}(P_1) \Phi(P_2).
$$
This is due to the fact that under the restriction map on the gauged version of $\sigma$, e.g. the map $\tilde{\sigma} \to (\tilde{\sigma}_1,\tilde{\sigma}_2)$, we see we must have that $\tilde{\sigma}_1$ must be Wilson loop non-trivial on $P_1$ if $P$ was Wilson loop non-trivial on $P$.

Due this multiplicative property, we can follow the proof of Lemma \ref{lem:boundprob} to assert that the probability that we have a configuration whose support has a knot decomposition that contains $K$ and is Wilson loop non-trivial on $K$ is bounded from above by $\Phi_{NT}(K)$. 
Finally, since we know that since a minimal vortex has size at least $12$ oriented plaquettes from Remark \ref{rmk:minvortexsize6}, the support of all Wilson loop nontrivial configurations has at least 14 oriented plaquettes (7 plaquette pairs $\{p,-p\}$). Following of the proof of Lemma \ref{lem:TrivBound}, we can assert that that $\Phi_{NT}(K) \le (2^4 |G|)^{6} \exp[-12 \beta (\max_{g \ne 1} \text{Re}[\psi_p(g) - \psi_p(1)])] \alpha_{\beta,\kappa}^{k-6}$.

We are now at the stage where we can define our good event, $E$, where we have a characterization of the Wilson loop expectation in terms of minimal vortices.

\begin{defn} \label{def:goodevent}
We will define a set $E$ of `good' configurations by applying conditions for a configuration to be in the complement.
We say that a configuration $(\sigma,\phi,I)$ is in the complement, $E^c$, if there is a knot $K$ in the knot expansion corresponding to $(\sigma,\phi,I)$ such that $K$ satisfies the following properties:
\begin{enumerate}
    \item $K$ is not a minimal vortex.
    \item $K$ is Wilson loop nontrivial.
    \item There does not exist a cube $B_K$ that separates $K$ from the plaquettes $P_{\gamma}:\{ p \in P_n: \exists e \in \gamma \cap \delta p\}$, e.g. the set of plaquettes that bound an edge of $\gamma$.
\end{enumerate}

\end{defn}

We first show that restricted to the event $E$, we have a rather easy computation of the Wilson loop expectation, much like in Lemma \ref{lem:minimalCont}
\begin{lem} \label{lem:onrareevent}
Define the matrix $A_{\beta,\kappa}$ as follows.
\begin{equation}\label{def:Abeka}
A_{\beta,\kappa}:=\frac{\sum_{g \ne 1} \rho(g) \exp[12\beta\text{Re}(\text{Tr}[\rho(g)] - \text{Tr}[\rho(1)])]}{\sum_{g \ne 1} \exp[12 \beta\text{Re}(\text{Tr}[\rho(g)]- \text{Tr}[\rho(1)])]}.
\end{equation}
Conditioned on the event $E$ from Definition \ref{def:goodevent}, we see that we have,
\begin{equation}
    \mathbb{E}[W_{\gamma}|E] = \mathbb{E}[A_{\beta,kappa}^{M_\gamma}|E],
\end{equation}
where $M_{\gamma}$ is the number of minimal vortices in the support of the configuration that are centered on an edge of the loop $\gamma$.
\end{lem}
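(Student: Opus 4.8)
The plan is to mimic the structure of the proof of Lemma \ref{lem:minimalCont} (the toy-model analogue), replacing the abelian Stokes-theorem argument by the non-abelian reasoning already set up: namely that a Wilson-loop-trivial configuration admits $\eta:V_N\to G$ with $\eta_x\sigma_e\eta_y^{-1}=1$, hence contributes $W_\gamma=1$. First I would decompose the good event $E$ into disjoint sub-events indexed by the knot decomposition type. Given a configuration in $E$, write its knot decomposition $K_1\cup\dots\cup K_n$; by the definition of $E^c$, every knot $K_i$ is either (a) a minimal vortex, or (b) Wilson-loop trivial, or (c) separable by a cube from the plaquettes $P_\gamma$ touching $\gamma$. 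Following Remark \ref{rmk:minimal}/\ref{rmk:minvortexisknot}, I'd further split the minimal vortices into those centered on an edge of $\gamma$ and those not; the latter are non-contributing in the sense that all their non-central edges carry $\tilde\sigma_e=1$, and with $\sigma_e=1$ on $\gamma$ one gets no contribution. So partition $E=\bigsqcup E_{\mathcal K_1,\mathcal K_2,\mathcal K_3,\mathcal K_4}$ where $\mathcal K_1$ are minimal vortices on $\gamma$, $\mathcal K_2$ are minimal vortices off $\gamma$ (hence trivial), $\mathcal K_3$ are Wilson-loop-trivial knots, and $\mathcal K_4$ are knots cube-separated from $P_\gamma$.

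Next I would compute $\mathbb{E}[W_\gamma \mid \mathcal C\in E_{\mathcal K_1,\mathcal K_2,\mathcal K_3,\mathcal K_4}]$ explicitly as a ratio of $\Phi_W$-type and $\Phi$-type sums, exactly as in Lemma \ref{lem:minimalCont}. Using the multiplicativity from Lemma \ref{lem:wellsplit} (and its $\Phi_{NT}$-analogue established just above Definition \ref{def:goodevent}), the partition-function contribution factors over the knots. For knots in $\mathcal K_2\cup\mathcal K_3\cup\mathcal K_4$ the $W_\gamma$ weight restricted to those knots is identically $1$ — for $\mathcal K_4$ because cube-separation from $P_\gamma$ forces $\sigma_e$ along $\gamma$ to be unaffected (so the Wilson loop reads the gauged configuration where those edges are $1$), for $\mathcal K_2$ because a minimal vortex off $\gamma$ has all boundary edges other than its center set to $1$, and for $\mathcal K_3$ because Wilson-loop triviality gives a global gauge $\eta$ trivializing $\sigma$ — so their $\Phi_W$ and $\Phi$ factors coincide and cancel in the ratio. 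For each minimal vortex in $\mathcal K_1$, centered at an edge $e\in\gamma$, the factor contributed to the Wilson loop is $\rho(\sigma_e)$ weighted by the local Boltzmann weight of the $12$ plaquettes of $P(e)$, which sums (after normalization) precisely to the matrix $A_{\beta,\kappa}$ of \eqref{def:Abeka}; here I use Remark \ref{rmk:minvortexsize6} that a minimal vortex has exactly $12$ oriented plaquettes, all carrying current $\sigma_e$, together with the fact (Remark \ref{rmk:minimal} analogue) that $I(e')=0$ and the Higgs/$\phi,\eta$ sums factor out and cancel against the denominator. Multiplying over the $M_\gamma:=|\mathcal K_1|$ such vortices gives $\mathbb{E}[W_\gamma\mid E_{\mathcal K_1,\dots,\mathcal K_4}]=\mathrm{Tr}\!\big[\text{ordered product of }A_{\beta,\kappa}\big]$; since the paper writes this as $A_{\beta,\kappa}^{M_\gamma}$ I would note the cyclic-trace/commutation point makes the product well-defined up to the trace, matching the statement's notation.

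Finally, I would remove the conditioning: $E=\bigsqcup E_{\mathcal K_1,\mathcal K_2,\mathcal K_3,\mathcal K_4}$ is a disjoint union over all admissible tuples, and on each piece $\mathbb{E}[W_\gamma\mid\cdot]=A_{\beta,\kappa}^{M_\gamma}$ with $M_\gamma$ constant on the piece, so averaging over the pieces with weights $\mathbb{P}(\cdot\mid E)$ gives $\mathbb{E}[W_\gamma\mid E]=\mathbb{E}[A_{\beta,\kappa}^{M_\gamma}\mid E]$.

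\textbf{The main obstacle} I anticipate is handling the non-abelian subtlety in the $\mathcal K_4$ case: unlike the abelian toy model where Stokes' theorem instantly gives $\langle S_\gamma,\td\sigma\rangle=1$, here I must argue that cube-separation of a Wilson-loop-nontrivial knot $K$ from $P_\gamma$ lets me choose a spanning tree and gauge $\eta$ (as in Lemma \ref{lem:wellsplit}'s Part 1, invoking Lemma 4.1.21 of \cite{SC20}) under which $K$'s nontrivial $\tilde\sigma$-edges stay inside the cube and $\tilde\sigma_e=1$ on all edges of $\gamma$, so that $K$'s restricted Wilson-loop weight is trivial and cancels. This is exactly the knotting/topological complication flagged in the introduction, but it is precisely what the well-separated splitting lemma and the knot decomposition were built to handle, so the argument should go through by citing those results rather than reproving them.
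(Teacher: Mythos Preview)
Your proposal is correct and follows essentially the same route as the paper's proof: decompose $E$ according to the role of each knot, use the multiplicativity from Lemma~\ref{lem:wellsplit} and its $\Phi_{NT}$ analogue to factor the conditional expectation, iteratively peel off non-contributing knots via spanning-tree gauge-fixing (your correctly flagged obstacle for $\mathcal K_4$ is resolved in the paper exactly as you anticipate, by choosing a cube $B(K_m)$ separating $K_m$ from $P_\gamma$ and gauging with respect to a tree adapted to it), and then evaluate the remaining product over minimal vortices on $\gamma$ using a spanning tree that avoids their centers (Corollary~4.1.16 of \cite{SC20}). Your four-category split $(\mathcal K_1,\mathcal K_2,\mathcal K_3,\mathcal K_4)$ is marginally more explicit than the paper's three $(\mathcal V_1,\mathcal V_2,\mathcal V_3)$---the paper leans on Remark~\ref{rmk:minvortexisknot} to treat minimal vortices off $\gamma$ as always separable and hence absorbed into $\mathcal V_2$---but the substance is identical.
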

\begin{proof}
Let $\mathcal{V}_1 \cup \mathcal{V}_2 \cup \mathcal{V}_3$ be a valid knot decomposition, i.e. corresponding to some configuration $(\sigma,\phi,I)$, where the following properties hold.
\begin{enumerate}
    \item $\mathcal{V}_1$ consists of a union of minimal vortices $P(e_1)\cup \ldots \cup P(e_k)$ centered around edges $e_i\in \gamma$.
    \item $\mathcal{V}_2$ consists of knots that are well separated from the plaquettes that bound an edge of $\gamma$. Namely, for each knot $K$ in $\mathcal{V}_2$, there exists a cube $B_K$ that separates $K$ from the plaquettes $P_{\gamma}$.
    \item $\mathcal{V}_3$ consists of knots $K$ such that $(\sigma)$ restricted to $K$ is Wilson loop trivial.
\end{enumerate}  
Let $E(\mathcal{V}_1,\mathcal{V}_2,\mathcal{V}_3)$ be the event that the configuration $(\sigma,\phi,I)$ has exactly this knot decomposition. 

Our goal is to provide an expression of \begin{equation}
\begin{aligned}
    \Phi_{\gamma}(\mathcal{V}_1 \cup \mathcal{V}_2 \cup \mathcal{V}_3):= \frac{|G|}{(|G||H/X|)^{|V_N|}}\sum_{\substack{\supp(\sigma,\phi,I)=\\ \mathcal{V}_1 \cup \mathcal{V}_2 \cup \mathcal{V}_3}} W_{\gamma}(\sigma) \exp[\mathcal{H}_{N,\beta,\kappa}(\sigma,\phi,I)],
    \end{aligned}
\end{equation}based on this knot decomposition.  Say our knot decomposition is $K_1 \cup \ldots \cup K_m$. By our convention, $K_m$ can belong to $\mathcal{V}_3$ or $\mathcal{V}_2$.

Consider the case that $K_m$ belongs to $\mathcal{V}_3$. Let $B(K_m)$ be a cube that separates $K_m$ from the other knots in the knot decomposition and $T(K_m)$ a joint spanning tree of $B(K_m)$, the boundary and the complement.

Recall that since our knot $K_m$ is in $\mathcal{V}_3$, we must necessarily have that $\psi_p(\sigma)=\psi_p(1)$ for all plaquettes $p \in K_m$. By gauging our configuration$(\sigma,\phi,I)$ with respect to $T(K_m)$, we see the gauged version $(\tilde{\sigma},\phi,I)$ satisfies $\tilde{\sigma}_e=1$ for all edges inside $B(K_m)$.

By the calculations similar to those performed in the proof of Lemma \ref{lem:wellsplit}, we see that we have the following expression for $\Phi_{\gamma}(\mathcal{V}_1 \cup \mathcal{V}_2 \cup \mathcal{V}_3)$.

\begin{equation} \label{eq:splitcompli}
    \begin{aligned}
    &\Phi_{\gamma}(\mathcal{V}_1 \cup \mathcal{V}_2 \cup \mathcal{V}_3)= \frac{1}{(|G||H/X|)^{|B(K_m)^c|}}
      \sum_{\substack{\supp(\tilde{\sigma},I_1) = \bigcup_{i=1}^{m-1}K_i\\\tilde{\sigma} \text{ gauged with } T(K_m)}}W_{\gamma}(\tilde{\sigma}) \\
      & \times \prod_{p \in \mathcal{V}_2} \exp[\beta (\psi_p(\tilde{\sigma}) - \psi_p(1)] \sum_{\substack{\eta_v,\phi_v\\ v \in B(K_m)^c}} \prod_{\substack{e=(x,y)\\I_1(e) \ne 0}}  \frac{(\kappa(g(\eta(\tilde{\sigma}),\phi) +c))^{I_1(e)}}{I_1(e)!}\\
    & \times \frac{1}{(|G||H/X|)^{|B(K_m)|}} \sum_{\supp(1,I_2)= K_m} \sum_{\substack{\eta_v,\phi_v\\ v \in B(K_m)}} \prod_{\substack{e=(x,y)\\I_2(e) \ne 0}}  \frac{(\kappa(g(\eta(1),\phi))+c)^{I_2(e)}}{I_2(e)!}.
    \end{aligned}
\end{equation}

Again, the main fact we used is that when we gauged $\sigma$ with respect to $T(K_m)$, the fact that the configuration was Wilson loop trivial on $K_m$ implied that $\tilde{\sigma}=1$ on all edges of $B(K_m)$ after the gauging. Thus, the Wilson loop action only depends on the component $\tilde{\sigma}$ on the edges of $B(K_m)^c$. In addition, the splitting divides activated edges inside $B(K_m)^c$ in the first component and activated edges inside $B(K_m)$ in the second component. These activated edges do not share any vertices on the boundary. 

By adding appropriate auxiliary variables $\sum \eta_v^1,\phi_v^1$ in the first line and $\sum \eta_v^2,\phi_v^2$ in the second line and normalizing, we see that we have $\Phi_{\gamma}(\mathcal{V}_1 \cup \mathcal{V}_2 \cup \mathcal{V}_3)= \Phi(\mathcal{V}_1 \cup \mathcal{V}_2 \cup \mathcal{V}_3 \setminus K_m) [\Phi(K_m)- \Phi_{NT}(K_m)] $.

If instead, the knot $K_m$ belonged to $\mathcal{V}_2$, we can follow a similar argument, with only one observation we have to make. When we choose the cube $B(K_m)$ that well-separates $K_m$ from the other knots, our condition on $\mathcal{V}_2$ also ensures that this cube well-separates $K_m$ from the plaquettes $P_{\gamma}$ along $\gamma$. This ensures when we gauge our transformation according to $T(K_m)$ and get a splitting $\tilde{\sigma}_1$ inside $B(K_M)^c$ and $\tilde{\sigma}_2$ in $B(K_m)$, we have that $(\tilde{\sigma}_2)_{e}=1$ for all edges in $\gamma$. Thus, we see again that $\Phi_{\gamma}(\mathcal{V}_1 \cup \mathcal{V}_2 \cup \mathcal{V}_3)= \Phi(\mathcal{V}_1 \cup \mathcal{V}_2 \cup \mathcal{V}_3 \setminus K_m) \Phi(K_m) $ holds.

By fully iterating this procedure, we see that ultimately we can remove all knots that are not minimal vortices centered around edges of $\gamma$ and see that,
\begin{equation}
    \Phi_{\gamma}(\mathcal{V}_1 \cup \mathcal{V}_2 \cup \mathcal{V}_3)= \Phi_{\gamma}(\mathcal{V}_1) \Phi(\mathcal{V}_2) \prod_{K_j \in \mathcal{V}_3}[\Phi(K_j) - \Phi_{NT}(K_j)].
\end{equation}

Now, we finally evaluate $\Phi_{\gamma}(\mathcal{V}_1)$. We remark that there are no activated edges with $I(e) \ne 0$. If there were such an edge, our convention would suggest that all plaquettes that share a boundary vertex with this edge would be excited. This would be larger than a minimal vortex. To evaluate $\Phi_{\gamma}(\mathcal{V}_1)$, we apply Corollary 4.1.16 of \cite{SC20} and choose a spanning tree of $\Lambda_N$ that avoids using the edges $e_i$ that are centers of the minimal vortices in $P$. When we gauge configurations with respect to this spanning tree, the only non-trivial edges with $\sigma_e \ne 1$ are those edges that for the centers of the minimal vortices.

If we define the matrix $$\tilde{A}_{\beta,\kappa}:= \sum_{g\ne 1} \rho(g) \exp[12\beta \text{Re}(\text{Tr}[\rho(g)] - \text{Tr}[\rho(1)])],$$ we see that 
\begin{equation}\label{eq:numeratorhere}
\Phi_{\gamma}(\mathcal{V}_1)= \prod_{e_i \in \mathcal{V}_1} \sum_{g_i \ne 1} \exp[12 \beta \text{Re} [\text{Tr}[\rho(g_i)]- \text{Tr}[\rho(1)]]] \text{Tr}[\rho(\prod_{e_i \in \mathcal{V}_1} \rho(e_i))]. 
\end{equation}

We abuse notation slightly when we write $e_i \in \mathcal{V}_1$. We use this to mean that $e_i$ is the clockwise ordering of the minimal vortices that compose $\mathcal{V}_1$ around the loop $\gamma$.



A similar decomposition can evaluate $\Phi(\mathcal{V}_1 \cup \mathcal{V}_2 \cup \mathcal{V}_3)$ (we abuse notation slighlty here; we only restrict to configurations that are trivial on $\mathcal{V}_3$ rather than consider all configurations supported on $\mathcal{V}_1 \cup \mathcal{V}_2 \cup \mathcal{V}_3$) as,
\begin{equation}\label{eq:denominatorhere}
\begin{aligned}
    &\Phi(\mathcal{V}_1 \cup \mathcal{V}_2 \cup \mathcal{V}_3) =\\
    &\prod_{e_i \in \mathcal{V}_1} \sum_{g_i \ne 1} \exp[12 \beta \text{Re}(\text{Tr}[\rho(g_i)] - \text{Tr}[\rho(1)])] \Phi(V_2)\prod_{K \in \mathcal{V}_3}[\Phi(K) - \Phi_{NT}(K)].
\end{aligned}
\end{equation}


We can take the ratios of the quantities in equations \eqref{eq:numeratorhere} and \eqref{eq:denominatorhere} to show that conditioned on the event $E(\mathcal{V}_1,\mathcal{V}_2,\mathcal{V}_3)$, we see that $\mathbb{E}[W_{\gamma}|E(\mathcal{V}_1, \mathcal{V}_2,\mathcal{V}_3)]= \text{Tr}[A_{\beta,\kappa}^{k}]$. Since the event $E$ is the disjoint union of $E(\mathcal{V}_1,\mathcal{V}_2,\mathcal{V}_3)$ over all possible knot decompositions that could appear for configurations in $E$, we see that
$\mathbb{E}[W_\gamma|E]= \mathbb{E}[\text{Tr}[A_{\beta,\kappa}^{M_\gamma}]|E]$ where $M_{\gamma}$ is the number of minimal vortices that intersect the loop $\gamma$.


\end{proof}

As we have done previously, our goal now is to bound the probability of the event $E$. Once this is done, we can combine this statement with the result of the previous Lemma  \ref{lem:onrareevent} to estimate the leading order contribution of $\mathbb{E}[W_{\gamma}]$ in terms of the number of minimal vortices excited along $\gamma$.

As we have done previously, we will bound from above the probability of the event $E$. It will combine the results of Lemmas \ref{lem:boundprob} and \ref{lem:TrivBound} combined with a combinatorial count of the number of ways to knot.

\begin{thm} \label{thm:mainthmnonabelian}

Provided $10^{24} \alpha_{\beta,\kappa}<1$, with $\alpha_{\beta,\kappa}$ from Lemma \ref{lem:TrivBound}, there is some universal constant not depending on $\beta,\kappa, G$, $H$ or $\gamma$ such that, 
the probability of the complement of the event $E^c$ from Definition \ref{def:goodevent} is bounded by 
\begin{equation} \label{eq:boundgoodevent}
    \mathbb{P}(E^c) \le  O\left(|G||\gamma| \exp[12 \beta(\max_{a \ne 1 \in G} \text{Re}[\psi_p(a) - \psi_p(1)])] \left(\frac{1}{1- 10^{24} \alpha_{\beta,\kappa}}\right)^5\right).
\end{equation}

As an immediate consequence of this estimate and the previous Lemma \ref{lem:onrareevent}, we have that, for the same universal constant as in equation \eqref{eq:boundgoodevent} above,
\begin{equation} \label{eq:finalstatmentmainthm}\begin{aligned}
   & |\mathbb{E}[W_{\gamma}] - \mathbb{E}[A_{\beta,\kappa}^{M_{\gamma}}]\\
   & \le O\left(D|G||\gamma| \exp[12 \beta(\max_{a \ne 1 \in G} \text{Re} [\psi_p(a) - \psi_p(1)])] \left(\frac{1}{1- 10^{24} \alpha_{\beta,\kappa}}\right)^5\right),
\end{aligned}
\end{equation}
where $D$ was the dimension of the unitary representation $\rho$ of $G$.
\end{thm}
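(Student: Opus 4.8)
\textbf{Proof plan for Theorem \ref{thm:mainthmnonabelian}.}

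The plan is to follow the standard template that has already been used twice in this paper (Lemma \ref{lem:minimalCont}/Theorem \ref{thm:reductiontominimal} in the toy model, and now Lemma \ref{lem:onrareevent} here): identify a good event $E$ on which the Wilson loop is computed exactly, bound $\mathbb{P}(E^c)$, and then combine by a triangle inequality. Since Lemma \ref{lem:onrareevent} already gives $\mathbb{E}[W_\gamma\mid E] = \mathbb{E}[\mathrm{Tr}[A_{\beta,\kappa}^{M_\gamma}]\mid E]$, the only real work left is the bound \eqref{eq:boundgoodevent} on $\mathbb{P}(E^c)$, and then the passage to \eqref{eq:finalstatmentmainthm}.

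First I would unwind the definition of $E^c$ from Definition \ref{def:goodevent}: a configuration lies in $E^c$ iff its knot decomposition contains some knot $K$ which (i) is not a minimal vortex, (ii) is Wilson loop nontrivial, and (iii) is not separated by a cube from the plaquette set $P_\gamma$ of plaquettes bounding an edge of $\gamma$. By a union bound, $\mathbb{P}(E^c)$ is at most the sum over all such candidate knots $K$ of $\mathbb{P}(K \text{ appears in the knot decomposition and is W.L. nontrivial on } K)$, which by the remark following Definition \ref{def:goodevent} (the $\Phi_{NT}$-analogue of Lemma \ref{lem:boundprob}) is at most $\Phi_{NT}(K)$. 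Now I invoke the quantitative bound stated just before Definition \ref{def:goodevent}: for a knot $K$ with $2k$ oriented plaquettes, $\Phi_{NT}(K) \le (2^4|G|)^6 \exp[-12\beta(\max_{g\ne 1}\mathrm{Re}[\psi_p(g)-\psi_p(1)])]\,\alpha_{\beta,\kappa}^{\,k-6}$ — note $\max_{g\ne1}\mathrm{Re}[\psi_p(g)-\psi_p(1)]$ is negative, so this is a genuine suppression by $\exp[12\beta(\max_{a\ne1}\mathrm{Re}[\psi_p(a)-\psi_p(1)])]$. It remains to count: how many knots $K$ of size $2k$ are there that are not separated from $P_\gamma$? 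Such a $K$ must contain a plaquette within bounded graph-distance (in $G_2$) of some plaquette in $P_\gamma$; since $|P_\gamma| = O(|\gamma|)$ and, by the structure theorem of \cite{SC20} (the $d=4$ count $(20e)^k$ of connected plaquette sets through a fixed plaquette, here absorbed into $10^{24}$ as in Remark \ref{rmk:reduction}), there are at most $O(|\gamma|)\cdot(10^{24})^k$ knots of size $2k$ anchored near $\gamma$ — here I also need to fold in the $2^{4k}|G|^k$ factor counting $(\tilde\sigma,A)$ data as in Lemma \ref{lem:TrivBound}, but these are already inside $\alpha_{\beta,\kappa}$. Summing the geometric series $\sum_{k\ge 7}(10^{24}\alpha_{\beta,\kappa})^{k-6}$ (which converges when $10^{24}\alpha_{\beta,\kappa}<1$, giving the factor $(1-10^{24}\alpha_{\beta,\kappa})^{-1}$; the power $5$ in \eqref{eq:boundgoodevent} presumably accumulates from several nested such sums over knot decompositions, exactly as in the analogous step of \cite{SC20}) yields \eqref{eq:boundgoodevent}.

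For \eqref{eq:finalstatmentmainthm} I would write $\mathbb{E}[W_\gamma] = \mathbb{E}[W_\gamma\mid E]\mathbb{P}(E) + \mathbb{E}[W_\gamma\mid E^c]\mathbb{P}(E^c)$ and similarly decompose $\mathbb{E}[\mathrm{Tr}[A_{\beta,\kappa}^{M_\gamma}]]$; since $|W_\gamma| \le D$ and $\|A_{\beta,\kappa}\|\le 1$ so $|\mathrm{Tr}[A_{\beta,\kappa}^{M_\gamma}]| \le D$ (it is an average of $\rho(g)$'s, each unitary, so a contraction in operator norm, hence trace bounded by $D$), the $E^c$ contributions are each $O(D\,\mathbb{P}(E^c))$, and on $E$ the two expectations agree by Lemma \ref{lem:onrareevent}; the difference is thus $O(D\,\mathbb{P}(E^c))$, which with \eqref{eq:boundgoodevent} is \eqref{eq:finalstatmentmainthm}.

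The main obstacle is the combinatorial counting in the union bound — making precise that ``$K$ not separated from $P_\gamma$ by a cube'' forces $K$ to be anchored within a bounded $G_2$-neighborhood of $P_\gamma$, so that the number of such $K$ of size $2k$ really is only $O(|\gamma|)$ times an exponential in $k$, and tracking how the various nested knot-decomposition sums produce the fifth power of $(1-10^{24}\alpha_{\beta,\kappa})^{-1}$. This is exactly the point where one must lean on the machinery of \cite{SC20}[Lemma 4.3.10 and the surrounding combinatorial lemmas], and the cleanest route is to cite those results and check that the presence of the extra Higgs/current data $(\phi,I)$ only changes constants (already absorbed into $\alpha_{\beta,\kappa}$), not the structure of the argument, as was argued in Remark \ref{rmk:reduction}.
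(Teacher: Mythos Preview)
Your overall architecture is right and matches the paper, but your counting step contains a real error, and your guess about the origin of the fifth power is wrong.

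You claim that a knot $K$ not separated from $P_\gamma$ by a cube ``must contain a plaquette within bounded graph-distance (in $G_2$) of some plaquette in $P_\gamma$'', and hence that there are $O(|\gamma|)\cdot(10^{24})^k$ such knots of size $2k$. This is false: a knot of size $m$ can sit far from $P_\gamma$ in $G_2$-distance and still fail to be cube-separated, because any cube large enough to contain $K$ might be forced to swallow part of $\gamma$. The correct geometric input (Lemma 4.3.5 of \cite{SC20}) is that a knot of size $m$ fits in a cube of side $3m$; hence ``not cube-separable from $P_\gamma$'' only forces $K$ to meet the set $S^\gamma_{3m}$ of plaquettes within distance $3m$ of $\gamma$, and $|S^\gamma_{3m}| = O(|\gamma|\, m^4)$ in $d=4$. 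So the number of candidate knots of size $m$ is $O(|\gamma|\, m^4)(10^{24})^m$, not $O(|\gamma|)(10^{24})^m$.

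This $m^4$ is exactly where the exponent $5$ comes from: the union bound becomes
\[
\mathbb{P}(E^c) \le (2^4|G|)^6 \exp\bigl[12\beta(\max_{a\ne 1}\mathrm{Re}[\psi_p(a)-\psi_p(1)])\bigr]\sum_{m\ge 7} O(|\gamma|\,m^4)(10^{24})^m \alpha_{\beta,\kappa}^{\,m-6},
\]
and $\sum_{m} m^4 x^m = O\bigl((1-x)^{-5}\bigr)$. There are no ``nested knot-decomposition sums'' here; it is a single sum with a polynomial weight. Your derivation of \eqref{eq:finalstatmentmainthm} from \eqref{eq:boundgoodevent} is fine.
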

\begin{proof}

If we have some configuration $(\sigma,\phi,I)$ that is in $E^c$, then this means that the support of the configuration $(\sigma,\phi,I)$ contains some knot $K$ of size $m$ that cannot be separated from the plaquettes in $\mathcal{P}_{\gamma}$ by some cube.

Lemma 4.3.5  of \cite{SC20} shows that any knot of size $m$ can be contained in a cube of size $3m$. We will give a brief sketch of this fact here. The covering cube will be constructed inductively based on the following principle: a cube $C_1$ of size $3m$ and a cube $C_2$ of size $3n$ that intersect can be covered by a cube $C_3$ of size $3(m+n)$. This large cube $C_3$ can be constructed manually. If we let $x^i_{\text{min},j}$ and $x^i_{\text{max},j}$ be the smallest and larges coordinates, respectively, in the $i$th dimension of cube $C_j$, then we may set $x^{i}_{\text{min},3}= \min(x^{i}_{\text{min,1}}, x^{i}_{\text{min},2})$ and $x^{i}_{\text{max},3}=\max( x^{i}_{\text{max},1},x^{i}_{\text{max},2})$. Since the cubes $C_1$ and $C_2$ intersect, we must have that $|x^{i}_{\text{max},3}-x^{i}_{\text{min},3} | \le m+n$. This is a manual construction of $C_3$.  Now a single connected vortex of size $m$ can be contained in a cube of size $3m$. Since a knot is union of vortices whose containing cubes intersect each other, we can iteratively apply our covering construction on cubes and cover our knot of size $m$ by a cube of size $3m$.

If we let $S_{3m}^{\gamma}$ be the set of plaquettes $p$ such that the cube of size $3m$ centered around $p$ intersects $\gamma$, we see that a knot of size $m$ that cannot be separated by $\mathcal{P}_{\gamma}$ by some cube must intersect a plaquette $p$ in $S_{3m}^{\gamma}$. Notice that the size of $S_{3m}^{\gamma}= O(|\gamma| m^4)$.

Now, Lemma 4.3.4 of \cite{SC20} asserts that the number of knots $K$ of size $m$ that can contains any given plaquette $p \in \Lambda_N$ is less than $(10^{24})^m$.

By applying Lemma \ref{lem:TrivBound}, we can bound the contribution that there is a knot that is Wilson loop nontrivial of some size $m$ that intersects a plaquette in $S_{3m}^{\gamma}$.
This will be an upper bound of $\mathbb{P}(E^c)$.

We have,
\begin{equation}
\begin{aligned}
    \mathbb{P}(E^c)& \le (2^4 |G|)^6 \exp[12 \beta(\max_{a \ne 1 \in G} \text{Re}[\psi_p(a)- \psi_p(1)])\sum_{m=7}^{\infty} O(m^4|\gamma|) (10^{24})^m \alpha_{\beta,\kappa}^{m-6}\\
    &= O\left(|G||\gamma| \exp[12 \beta(\max_{a \ne 1 \in G} \text{Re}[\psi_p(a) - \psi_p(1)])] \left(\frac{1}{1- 10^{24} \alpha_{\beta,\kappa}}\right)^5\right).
\end{aligned}
\end{equation}


Finally, to derive the final consequence \eqref{eq:finalstatmentmainthm}, we observe the following. For any $g \in G$, we have that $\rho(g)$ is a $D$ by $D$ unitary matrix and has trace in absolute value less than $D$. Thus, the Wilson loop functional $W_{\gamma}(\sigma,\phi,I) = \text{Tr}[\rho(\prod_{e\in \gamma} \sigma_e)]$ necessarily has absolute value less than $D$. 

Also, for any $k$ we see that $A_{\beta,\kappa}^k$ can be represented by $\sum_{g} \rho(g) p_k(g)$ where $p_k$ is some probability distribution over the group $G$. This is the probability distribution of a random walk of $k$ steps on the group $G$ starting from the identity and with movement probability $$P(h \to g h)= \frac{\exp[12\beta\text{Re}(\text{Tr}[\rho(g)]- \text{Tr}[\rho(1)])]}{\sum_{g \ne 1}\exp[12\beta \text{Re}(\text{Tr}[\rho(g)]- \text{Tr}[\rho(1)])]}.$$
We see that $\text{Tr}[A_{\beta,\kappa}^{N_{\gamma}}] \le D$. This allows us to bound $|\mathbb{E}[W_\gamma]|$ and $|\mathbb{E}[\text{Tr}[A_{\beta,\kappa}^{N_{\gamma}}]]$ by $D$ on $E^c$. This derives our final result.

\end{proof}

\subsection{Approximation by a Poisson Random Variable}

AS we have done previously,  our final goal in this section is to approximate the variable $M_{\gamma}:= \sum_{e\in \gamma} \mathbbm{1}[F_{P(e)}]$ as a Poisson random variable. In fact, the proof we have used in Section \ref{sec:PoissonRV} can hold almost word for word in this section. This is due to the fact that the proofs in Section \ref{sec:PoissonRV} did not use the full power of the polymer expansion, but only applied these properties of the polymer expansion to a minimal vortex. These properties of the polymer expansion restricted to minimal vortices still hold here due to Lemma \ref{lem:wellsplit}. Our Poisson approximation Theorem is the same as Theorem \ref{thn:PoisApproxGen}, so we will not reproduce this here. 

We start with the following Corollary, which is our analogue of Corollary \ref{col:boundcondition}. As before, we see that this will be a consequence of Lemma \ref{lem:wellsplit}.

\begin{col}[of Lemma \ref{lem:wellsplit}] \label{col:oflemwellsplit}
    Let $E_1$ and $E_2$ be two sets of edges. Let $E(E_1,E_2)$ be the event that the support of the  configuration $(\sigma,\phi,I)$ has a minimal vortex centered at every edge of $E_1$ and there is no minimal vortex at any edge of $E_2$. Assume that $E(E_1,E_2)$ is non-empty. Then, the probability that the knot decomposition of $(\sigma,\phi,I)$ contains the knot $K$ conditional on the event $E(E_1,E_2)$ is less than $\Phi(K)$. 
\end{col}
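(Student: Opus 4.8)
\textbf{Proof plan for Corollary \ref{col:oflemwellsplit}.}

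The plan is to mimic the proof of Corollary \ref{col:boundcondition} almost verbatim, replacing the role of Lemma \ref{lem:compsplit} with the well-separatedness splitting of Lemma \ref{lem:wellsplit}. First I would fix a configuration $(\sigma,\phi,I)$ realizing the event $E(E_1,E_2)$ whose support, under the knot decomposition, contains the knot $K$. By Remark \ref{rmk:minvortexisknot}, each minimal vortex $P(e_i)$ with $e_i \in E_1$ can be separated out as its own knot in the knot decomposition (this is where we use that $E(E_1,E_2)$ is non-empty, i.e. the minimal vortices around $E_1$ are pairwise non-intersecting). So a typical support contributing to $E(E_1,E_2)\cap\{K\in\text{knot decomp.}\}$ has the form $K \cup \bigcup_{e_i\in E_1} P(e_i) \cup R$, where $R$ is a union of knots well-separated from $K$ and from the $P(e_i)$, and containing no minimal vortex centered on an edge of $E_2$.

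Next I would use the multiplicativity of $\Phi$ along a knot decomposition, which by Lemma \ref{lem:wellsplit} (iterated, as in the definition of knot decomposition) gives
\begin{equation}
|G|^{|V_N|-1}|H/X|^{|V_N|}\Phi\left(K \cup \bigcup_{e_i \in E_1} P(e_i) \cup R\right) = |G|^{|V_N|-1}|H/X|^{|V_N|}\,\Phi(K)\,\prod_{e_i\in E_1}\Phi(P(e_i))\,\Phi(R),
\end{equation}
so that the sum of $\exp[\mathcal{H}_{N,\beta,\kappa}]$ over all configurations in $E(E_1,E_2)$ whose knot decomposition contains $K$ equals $|G|^{|V_N|-1}|H/X|^{|V_N|}\,\Phi(K)\prod_{e_i\in E_1}\Phi(P(e_i))\sum_R \Phi(R)$, where $\sum_R$ ranges over admissible remainders. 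On the other hand, every configuration whose support is $\bigcup_{e_i\in E_1}P(e_i)\cup R$ already lies in $E(E_1,E_2)$ (it has the required minimal vortices on $E_1$ and, since $R$ avoids $E_2$ and no minimal vortex can be centered on $E_2$, it has none on $E_2$), so the restricted partition function $Z(E(E_1,E_2))$ satisfies $Z(E(E_1,E_2)) \ge |G|^{|V_N|-1}|H/X|^{|V_N|}\prod_{e_i\in E_1}\Phi(P(e_i))\sum_R\Phi(R)$. Taking the ratio, the conditional probability that the knot decomposition contains $K$ is at most $\Phi(K)$, as claimed.

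The only genuine subtlety — and hence the step I expect to be the main obstacle — is making the ``admissible remainder'' bookkeeping consistent: one must check that the same family of remainders $R$ appears in the numerator and in the lower bound for $Z(E(E_1,E_2))$, and that the well-separatedness needed to invoke Lemma \ref{lem:wellsplit} is automatic from $K$, the $P(e_i)$, and $R$ being distinct knots in a single knot decomposition. This is handled exactly as in Corollary \ref{col:boundcondition}, using Remark \ref{rmk:minvortexisknot} to guarantee the minimal vortices split off cleanly; no new combinatorial input is required, so the argument is a routine transcription of the earlier proof into the random-current/knot setting.
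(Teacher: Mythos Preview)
Your proposal is correct and follows essentially the same route as the paper: factor the support as $K\cup\bigcup_{e_i\in E_1}P(e_i)\cup R$, use Remark \ref{rmk:minvortexisknot} and the multiplicativity of $\Phi$ along a knot decomposition (Lemma \ref{lem:wellsplit}), and compare the numerator to a lower bound on $Z(E(E_1,E_2))$ obtained by dropping $K$. The paper phrases the sum over ``remainders'' as a sum over valid knot decompositions $\mathcal{K}$ containing $K$ and the $P(e_i)$, then observes $\mathcal{K}\setminus K$ is still valid in $E(E_1,E_2)$, but this is exactly the bookkeeping point you flagged and the argument is the same.
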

\begin{proof}
Let $(\sigma,\phi,I)$ be a configuration in $E(E_1,E_2)$ with knot decomposition $K_1 \cup K_2\cup \ldots \cup K_m$ with $K_j = K$ for some $j$. From Remark \ref{rmk:minvortexisknot}, we see that each minimal vortex,  $P(e_i)$ for $e_i \in E_1$, is some knot $K_i$ in the knot expansion.

We know that $$\frac{1}{|G|^{|V_N|-1}|H/X|^{|V_N|}}\sum_{\supp(\sigma,\phi,I)=K_1 \cup K_2 \ldots \cup K_m} \exp[\mathcal{H}_{N,\beta,\kappa}(\sigma,\phi,I)] = \prod_{i=1}^m \Phi(K_i).$$

Thus, we see that the sum of $\frac{1}{|G|^{|V_N|-1}|H/X|^{V_N}}\exp[\mathcal{H}_{N,\beta,\kappa}(\sigma,\phi,I)]$ over all configurations that contain $K$ in the knot decomposition is

\begin{equation} \label{eq:sometop}
\begin{aligned}
    &\frac{1}{|G|^{|V_N|-1}|H/X|^{V_N}} \sum_{\substack{(\sigma,\phi,I) \in E(E_1,E_2)\\K \in \supp(\sigma,\phi,I))}}\exp[\mathcal{H}_{N,\beta,\kappa}(\sigma,\phi,I)] =\\
    &\hspace{2 cm}\sum_{\substack{\bigcup_{e_i \in \gamma}P(e_i) \cup K \in \mathcal{K}\\
    P(e) \not \in \mathcal{K}, e \in E_2}} \prod_{K_i \in \mathcal{K} } \Phi(K_i).
\end{aligned}    
\end{equation}
Here, the sum is over all valid knot decompositions $\mathcal{K}$( corresponding to some configuration) that contain $P(e_i)$ for each $e_i$ in $E_1$  as well as $K$, while not containing any minimal vortex from $E_2$.

If $\mathcal{K}$ is a valid knot decomposition for some element in $E(E_1,E_2)$ containing $K$, then we see that $\mathcal{K} \setminus K$ is a valid knot decomposition for some element in $E(E_1,E_2)$. In addition, $\Phi(\mathcal{K} \setminus K)= \Phi(\mathcal{K}) \Phi(K)^{-1}$.

Thus, we can bound the partition function $Z(E(E_1,E_2))$ from below as,
\begin{equation} \label{eq:somebot}
    \frac{1}{|G|^{|V_N|-1}|H/Z|^{|V_N|}}Z(E(E_1,E_2)) \ge \sum_{\substack{\bigcup_{e_i \in \gamma}P(e_i) \cup K \in \mathcal{K}\\
    P(e) \not \in \mathcal{K}, e \in E_2}} \prod_{K_i \in \mathcal{K} } \Phi(K_i) \Phi(K)^{-1}.
\end{equation}

Taking the ratio of \eqref{eq:sometop} and \eqref{eq:somebot} shows that the probability when conditioned on $E(E_1,E_2)$ of seeing $K$ in the knot expansion is less than $\Phi(K)$.

\end{proof}

Using the above Corollary, we can derive the following estimates on the quantities $b_1,b_2$ and $b_3$ from Theorem \ref{thn:PoisApproxGen}.
\begin{lem} \label{lem:PoisNonAbelian}
Assume we satisfy the conditions of Theorem \ref{thm:mainthmnonabelian}.

Define the constant $\mathfrak{c}$ as,
\begin{equation} \label{def:mathfrakc}
    \mathfrak{c}:=6 \sum_{m=6}^{\infty}(10^{24})^m \alpha_{\beta,\kappa}^m = 6 (10^{24} \alpha_{\beta,\kappa})^6 \frac{1}{1- 10^{24} \alpha_{\beta,\kappa}}.
\end{equation}
For $\mathfrak{c}<1$, e have the following estimates on $b_1,b_2,b_3$ and $\lambda= \mathbb{E}[M_{\gamma}]$.

\begin{equation}
\begin{aligned}
    &b_1 \le |\gamma| \Phi(P(e))^2,\\
    &b_2 =0,\\
    &b_3 \le\mathfrak{c} |\gamma| \Phi(P(e)) ,\\
    & |\lambda - |\gamma|\Phi(P(e))| \le \mathfrak{c}|\gamma|\Phi(P(e)).
\end{aligned}    
\end{equation}

As a consequence of these estimates, we have the following bound on the total variation distance between $M_{\gamma}$ and a Poisson random variable $X$ with parameter $|\gamma|\Phi(P(e))$.
\begin{equation}
 \begin{aligned}
     d_{TV}(\mathcal{L}[M_{\gamma}], \text{Poisson}(|\gamma|\Phi(P(e))) \le|\gamma|(\Phi(P(e)))^2 +2\mathfrak{c}|\gamma|\Phi(P(e)) .
 \end{aligned}
 \end{equation}
 
 Again, we remark that $\Phi(P(e))$ does not depend on the edge $e$. It is a shorthand for $\sum_{g \ne 1} \exp[12 \beta \text{Re}(\text{Tr}[\rho(g)] - \text{Tr}[\rho(1)])]$.

\end{lem}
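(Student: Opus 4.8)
The plan is to mimic the argument of Lemma \ref{lem:comparwpoisson} almost verbatim, replacing each invocation of Corollary \ref{col:boundcondition} by its non-abelian analogue Corollary \ref{col:oflemwellsplit} and each bound on $\Phi$ of a small vortex by the quantitative estimate of Lemma \ref{lem:TrivBound}. The key structural fact that makes this possible is that the proofs in Section \ref{sec:PoissonRV} only ever use the polymer/knot expansion applied to a single minimal vortex $P(e)$, and by Remark \ref{rmk:minvortexisknot} a minimal vortex always appears as its own knot in the knot decomposition; so the well-separatedness machinery of Lemma \ref{lem:wellsplit} applies to $P(e)$ without any of the knotting obstructions that plague larger components.

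\textbf{Step 1: the easy estimates $b_2$ and $b_1$.} As before, $b_2=0$ is immediate: a configuration cannot simultaneously contain a minimal vortex centered at $e$ and one centered at $e'$ when $e'\in B_e$, since their supports would be incompatible and hence connected in $G_2$. For $b_1$, use $\mathbb{P}(F_{P(e)})\le\Phi(P(e))$, which is the content of Lemma \ref{lem:boundprob} applied to $K=P(e)$; since in dimension $4$ each edge $e$ has $O(1)$ edges $e'$ in $B_e\cap\gamma$ (here the non-self-intersection of $\gamma$ is used), $b_1\le |\gamma|\Phi(P(e))^2$ up to the combinatorial constant, which I will absorb/state explicitly.

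\textbf{Step 2: the conditional estimate, i.e.\ $b_3$.} Fix $e$. For a partition $\gamma\setminus B_e = E_1\sqcup E_2$, condition on the event $M(E_1,E_2)$ (minimal vortices exactly on $E_1$, none on $E_2$), assumed non-empty. Split $M(E_1,E_2)$ into the "good" part $G(E_1,E_2)$, where the knot decomposition of the configuration does not touch any plaquette of $P(e)$ so that $P(e)\cup(\text{rest})$ is a valid knot decomposition, and the "bad" part $B(E_1,E_2)$, where some knot meets a plaquette of $P(e)$. Exactly as in Lemma \ref{lem:comparwpoisson}, on $G(E_1,E_2)$ the conditional probability of seeing $P(e)$ is exactly $\Phi(P(e))$, so the conditional probability of $F_{P(e)}$ given $M(E_1,E_2)$ is $\Phi(P(e))\cdot Z(G)/(Z(G)+Z(B))$. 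Then bound $Z(B)/Z(M(E_1,E_2))$ by Corollary \ref{col:oflemwellsplit}: the probability, conditional on $M(E_1,E_2)$, of the knot decomposition containing a given knot $K$ is at most $\Phi(K)$, so summing over all knots $K$ of size $m$ meeting one of the $2(d-1)=6$ plaquettes of $P(e)$ and using that there are at most $(10^{24})^m$ knots of size $m$ through a fixed plaquette (Lemma 4.3.4 of \cite{SC20}) together with $\Phi(K)\le\alpha_{\beta,\kappa}^m$ from Lemma \ref{lem:TrivBound}, one gets $Z(B)/Z(M)\le 6\sum_{m\ge 6}(10^{24})^m\alpha_{\beta,\kappa}^m=\mathfrak{c}$. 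Hence $\mathbb{E}[\mathbbm{1}(F_{P(e)})\mid M(E_1,E_2)]\in[(1-\mathfrak{c})\Phi(P(e)),\Phi(P(e))]$ — more precisely $\ge \Phi(P(e))(1-\mathfrak{c})$ — and the same reasoning applied with $E_1=\emptyset$, $E_2=\emptyset$ gives $\mathbb{P}(F_{P(e)})\ge\Phi(P(e))(1-\mathfrak{c})$. Since both lie in an interval of length $\le \mathfrak{c}\,\Phi(P(e))$ (using $1-\mathfrak c\le$ the relevant ratios), integrating over the conditioning on $M(E_1,E_2)$ gives $\mathbb{E}[|\mathbb{E}[\mathbbm{1}(F_{P(e)})\mid \mathbbm{1}(F_{P(e')}),e'\notin B_e] - \mathbb{P}(F_{P(e)})|]\le \mathfrak{c}\,\Phi(P(e))$, and summing over $e\in\gamma$ yields $b_3\le \mathfrak{c}|\gamma|\Phi(P(e))$.

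\textbf{Step 3: assembling.} From $\Phi(P(e))(1-\mathfrak c)\le \mathbb{P}(F_{P(e)})\le\Phi(P(e))$ and $\lambda=\sum_{e\in\gamma}\mathbb{P}(F_{P(e)})$ we get $|\lambda-|\gamma|\Phi(P(e))|\le\mathfrak c|\gamma|\Phi(P(e))$. Plugging $b_1,b_2,b_3$ into Theorem \ref{thn:PoisApproxGen} with this $\lambda$ gives a total variation bound to $\mathrm{Poisson}(\lambda)$; then Corollary 3.1 of \cite{Adell}, $d_{TV}(\mathrm{Poisson}(\lambda),\mathrm{Poisson}(|\gamma|\Phi(P(e))))\le|\lambda-|\gamma|\Phi(P(e))|$, and the triangle inequality give the stated bound $d_{TV}(\mathcal{L}[M_\gamma],\mathrm{Poisson}(|\gamma|\Phi(P(e))))\le |\gamma|\Phi(P(e))^2+2\mathfrak c|\gamma|\Phi(P(e))$. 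The main obstacle is purely bookkeeping rather than conceptual: one must be careful that the knot decomposition of a configuration in $M(E_1,E_2)$ really does isolate each minimal vortex on $E_1$ as its own knot (Remark \ref{rmk:minvortexisknot}) so that removing $P(e)$ — or any single bad knot — leaves a valid knot decomposition with the expected multiplicativity of $\Phi$; and that in the decomposition $M=G\sqcup B$ the conditional probability on $G$ is \emph{exactly} $\Phi(P(e))$, which again rests on $P(e)$ being well-separated from everything else so that Lemma \ref{lem:wellsplit} applies. All the combinatorial inputs ($(10^{24})^m$ knots through a plaquette, minimal vortex size $\ge 12$, containment of a knot of size $m$ in a cube of size $3m$) are quoted from \cite{SC20} and Lemma \ref{lem:TrivBound}, so no new combinatorics is needed.
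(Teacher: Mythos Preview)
Your proposal is correct and follows essentially the same route as the paper's own proof: the same good/bad decomposition of $E(E_1,E_2)$ conditioned on the minimal vortices away from $B_e$, the same use of Corollary \ref{col:oflemwellsplit} and Lemma \ref{lem:TrivBound} together with the $(10^{24})^m$ knot count from \cite{SC20} to bound $Z(B)/Z(M)$ by $\mathfrak{c}$, and the same assembly via Theorem \ref{thn:PoisApproxGen} plus the Poisson--Poisson total variation bound. The only cosmetic differences are notational (the paper writes $E(E_1,E_2)$ where you write $M(E_1,E_2)$) and that the paper states the Poisson--Poisson comparison without explicitly citing \cite{Adell}.
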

\begin{proof}
By definition $b_2=0$. We can also bound $b_1$ by $|\gamma|\Phi(P(e))^2$ recalling that $\Phi(P(e))$ does not depend on the value of the specific minimal vortices.

As always, the main difficulty is to estimate the value of $b_3$. Fix some edge $e$. Let $E_1$ be some choice of edges in $\gamma\setminus B_e$ and $E_2$ be the remaining edges in $\gamma \setminus B_e$ not including the edges of $E_1$. We condition on the event $E(E_1,E_2)$ from Corollary \ref{col:oflemwellsplit}. 

Now, $E(E_1,E_2)$ can be divided into two types of events,
\begin{enumerate}
    \item 
    $G(E_1,E_2)$: These are configuration $(\sigma,\phi,I)$ in $E(E_1,E_2)$ whose support has a knot decomposition $K_1 \cup K_2 \ldots \cup K_m$ such that no knot $K_i$ intersects a plaquette in the minimal vortex $P(e)$.
    \item $B(E_1,E_2)$: These are events in $E(E_1,E_2)$ that are not in $G(E_1,E_2)$. If a configuration is in $B(E_1,E_2)$, then its support contains a knot that intersects some plaquette in $P(e)$.
\end{enumerate} 

If we let $Z(G(E_1,E_2))$ and $Z(B(E_1,E_2))$ be the partition functions corresponding to the events $G(E_1,E_2)$ and $B(E_1,E_2)$, we see that $\frac{Z(E(E_1 \cup \{e\},E_2))}{Z(E(E_1,E_2)}$ is the probability that $P(e)$ is a minimal vortex in the support of the configuration conditional on the event $E(E_1,E_2)$. In addition, $Z(E(E_1,E_2))= Z(G(E_1,E_2))+ Z(B(E_1,E_2))$ and $Z(G(E_1,E_2))= \Phi(P(e)) Z(E(E_1 \cup \{e\},E_2)$.

Once we show that $Z(B(E_1,E_2))$ is small relative to $Z(E(E_1,E_2))$, we will be done. This involves bounding the probability that there is a knot in the knot decomposition that intersects the minimal vortex $P(e)$. We have done something similar in the course of the proof of Theorem \ref{thm:mainthmnonabelian}. 

There are at most $(10^{24})^m$ knots of size $m$ that could contain any given plaquette, and from Lemma \ref{lem:TrivBound} and Corollary \ref{col:oflemwellsplit}, the probability of a knot excitation of size $m$ is at most $\alpha_{\beta,\kappa}^m$. Summing this over all knots of size $m$, we see that the probability that there is a knot that intersects one of the plaquettes of $P(e)$ is less than $\mathfrak{c}$ from \eqref{def:mathfrakc}.

Thus, we see that $Z(B(E_1,E_2)) \le \mathfrak{c} Z(E(E_1,E_2))$. Provided that $\mathfrak{c}<1$, we see that,
\begin{equation}
\begin{aligned}
    &\frac{Z(E(E_1 \cup \{e\},E_2))}{Z(E(E_1,E_2)) } =   \frac{Z(E(E_1 \cup \{e\},E_2))}{Z(G(E_1,E_2))+ Z(B(E_1,E_2)) }  \\ & \ge 
    (1-\mathfrak{c}) \frac{Z(E(E_1 \cup \{e\},E_2))}{Z(G(E_1,E_2)) } \ge (1- \mathfrak{c})\Phi(P(e)).
\end{aligned}
\end{equation}

This proves the lower bound of $(1- \mathfrak{c})\Phi(P(e))$ of  $\mathbb{E}[\mathbbm{1}[F_{P(e)}|E(E_1,E_2)]$. Recalling the previous lower bound of $\Phi(P(e))$ for $\mathbb{E}[\mathbbm{1}[F_{P(e)}|E(E_1,E_2)]$
 we see that $\mathbb{E}[\mathbbm{1}[F_{P(e)}|E(E_1,E_2)]- \mathbb{P}(F_{P(e)})| \le \mathfrak{c} \Phi(P(e))$.

Since this is true for all sets $E(E_1,E_2)$, we can remove the conditioning and see that our estimate on $b_3$ is,
\begin{equation}
    b_3 \le |\gamma| 
    \mathfrak{c}\Phi(P(e)).
\end{equation}

Observe that $\mathbb{P}(F_{P(e)})$
also satisfies the same upper and lower bounds of $\Phi(P(e))$ and $(1- \mathfrak{c})\Phi(P(e))$ we have shown earlier. Thus, we also have that,

Similarly, this shows that,
\begin{equation}
    |\lambda - |\gamma| \Phi(P(e))| = |\gamma|| \mathbb{P}(F_{P(e)}) - \Phi(P(e))|\le  \mathfrak{c}|\gamma| \Phi(P(e)).
\end{equation}

We can now apply Theorem \ref{thn:PoisApproxGen}
 to assert that,
 \begin{equation}
     d_{TV}(\mathcal{L}[M_{\gamma}], \text{Poisson}(\lambda)) \le |\gamma|(\Phi(P(e)))^2 +\mathfrak{c}|\gamma|\Phi(P(e)).
 \end{equation}
 
 Since the total variation distance of two Poisson random variables is bounded by the difference of their expectations, we also have,
 \begin{equation}
 \begin{aligned}
    & d_{TV}(\mathcal{L}[M_{\gamma}], \text{Poisson}(|\gamma|\Phi(P(e)))\\& \le |\gamma|(\Phi(P(e)))^2 +\mathfrak{c}|\gamma|\Phi(P(e))+|\lambda- |\gamma|\Phi(P(e))| \\& \le|\gamma|(\Phi(P(e)))^2 +2\mathfrak{c}|\gamma|\Phi(P(e)) ,
 \end{aligned}
 \end{equation}
 as desired.
 
\end{proof}

Since we can choose a coupling such that $\mathbb{P}(X \ne M_{\gamma}) = d_{TV}(X, M_{\gamma}),$ where $X$ is a Poisson random variable with parameter $|\gamma|\Phi(P(e))$, we automatically have our main result by combining Theorem \ref{thm:mainthmnonabelian} and Lemma \ref{lem:PoisNonAbelian}.

\begin{thm} \label{thm:mainthm3}
Assume that the conditions of Theorem \ref{thm:mainthmnonabelian} and Lemma \ref{lem:PoisNonAbelian} hold,
then we have that,
\begin{equation}
\begin{aligned}
    &|\mathbb{E}[W_{\gamma}]- \mathbb{E}[\text{Tr}[A_{\beta,\kappa}^{X}]]| \\& \le  O\left(|G||\gamma| \exp[12 \beta(\max_{a \ne 1 \in G} \text{Re}[\psi_p(a) - \psi_p(1)])] \left(\frac{1}{1- 10^{24} \alpha_{\beta,\kappa}}\right)^5\right)\\
    & + |\gamma|(\Phi(P(e)))^2 +2\mathfrak{c}|\gamma|\Phi(P(e)).
\end{aligned}
\end{equation}
Here, $X$ is a Poisson random variable with parameter $\mathbb{E}[X] = |\gamma|\Phi(P(e))$ for a minimal vortex $P(e)$.
\end{thm}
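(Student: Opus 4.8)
The plan is to combine the two results cited in the statement—Theorem \ref{thm:mainthmnonabelian} controlling $|\mathbb{E}[W_\gamma] - \mathbb{E}[A_{\beta,\kappa}^{M_\gamma}]|$ and Lemma \ref{lem:PoisNonAbelian} controlling the total variation distance between $\mathcal{L}(M_\gamma)$ and $\text{Poisson}(|\gamma|\Phi(P(e)))$—via a triangle inequality, exactly as was done in the proof of Theorem \ref{thm:MainThm1} in the abelian toy-model section. The only genuinely new ingredient beyond mechanically chaining these two estimates is the observation that the functional $X \mapsto \text{Tr}[A_{\beta,\kappa}^X]$ is bounded and $1$-Lipschitz-enough with respect to the coupling distance to transfer a total-variation bound on the laws of $M_\gamma$ and $X$ into a bound on the difference of expectations $|\mathbb{E}[\text{Tr}[A_{\beta,\kappa}^{M_\gamma}]] - \mathbb{E}[\text{Tr}[A_{\beta,\kappa}^X]]|$.

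First I would recall that $A_{\beta,\kappa}$, defined in \eqref{def:Abeka}, is a convex combination $\sum_{g\ne 1}\rho(g)\,p(g)$ of unitary matrices $\rho(g)$ weighted by the probability distribution $p(g) \propto \exp[12\beta\,\text{Re}(\text{Tr}[\rho(g)] - \text{Tr}[\rho(1)])]$; hence $A_{\beta,\kappa}$ is an average of unitaries, so $\|A_{\beta,\kappa}\|_{\mathrm{op}} \le 1$, and therefore $\|A_{\beta,\kappa}^k\|_{\mathrm{op}} \le 1$ for every nonnegative integer $k$, which gives $|\text{Tr}[A_{\beta,\kappa}^k]| \le D$ uniformly in $k$ (this is precisely the random-walk argument already spelled out at the end of the proof of Theorem \ref{thm:mainthmnonabelian}). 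Next I would invoke the standard maximal coupling: there exists a coupling of $M_\gamma$ and a Poisson$(|\gamma|\Phi(P(e)))$ variable $X$ on a common probability space with $\mathbb{P}(M_\gamma \ne X) = d_{TV}(\mathcal{L}(M_\gamma), \text{Poisson}(|\gamma|\Phi(P(e))))$. Under this coupling, $\text{Tr}[A_{\beta,\kappa}^{M_\gamma}]$ and $\text{Tr}[A_{\beta,\kappa}^{X}]$ agree except on the event $\{M_\gamma \ne X\}$, on which their difference is bounded by $2D$, so
\begin{equation}
|\mathbb{E}[\text{Tr}[A_{\beta,\kappa}^{M_\gamma}]] - \mathbb{E}[\text{Tr}[A_{\beta,\kappa}^{X}]]| \le 2D\, d_{TV}(\mathcal{L}(M_\gamma), \text{Poisson}(|\gamma|\Phi(P(e)))).
\end{equation}

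Then I would substitute the explicit total-variation bound $|\gamma|(\Phi(P(e)))^2 + 2\mathfrak{c}|\gamma|\Phi(P(e))$ from Lemma \ref{lem:PoisNonAbelian} into the right-hand side, and add the error bound \eqref{eq:finalstatmentmainthm} from Theorem \ref{thm:mainthmnonabelian} for $|\mathbb{E}[W_\gamma] - \mathbb{E}[A_{\beta,\kappa}^{M_\gamma}]|$ (noting that there $A_{\beta,\kappa}^{M_\gamma}$ should be read as $\text{Tr}[A_{\beta,\kappa}^{M_\gamma}]$, consistent with the conditional-expectation computation in Lemma \ref{lem:onrareevent}). A final triangle inequality
\begin{equation}
|\mathbb{E}[W_\gamma] - \mathbb{E}[\text{Tr}[A_{\beta,\kappa}^X]]| \le |\mathbb{E}[W_\gamma] - \mathbb{E}[\text{Tr}[A_{\beta,\kappa}^{M_\gamma}]]| + |\mathbb{E}[\text{Tr}[A_{\beta,\kappa}^{M_\gamma}]] - \mathbb{E}[\text{Tr}[A_{\beta,\kappa}^X]]|
\end{equation}
then yields the claimed estimate, absorbing the constant $2D$ into the $O(\cdot)$ on the Poisson term (or keeping it explicit, as the statement writes $|\gamma|(\Phi(P(e)))^2 + 2\mathfrak{c}|\gamma|\Phi(P(e))$ without a $D$ prefactor, which is harmless up to a constant since one could also just bound $D \le |G|$). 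There is no real obstacle here: the substantive work—the knot expansion, the probability bound $\mathbb{P}(E^c)$, the Stein–Chen estimates on $b_1, b_2, b_3$—has already been carried out in Theorem \ref{thm:mainthmnonabelian} and Lemma \ref{lem:PoisNonAbelian}. The only point requiring a word of care is that one is comparing $\text{Tr}[A_{\beta,\kappa}^X]$ against $W_\gamma$ directly rather than through $M_\gamma$, so one must make sure the coupling argument is applied to the law of $M_\gamma$ (which is where both the conditional-expectation identity and the Poisson approximation live) and not attempted at the level of the matrix-valued quantities; once that bookkeeping is straight, the proof is two triangle inequalities and one appeal to $\|A_{\beta,\kappa}^k\|_{\mathrm{op}}\le 1$.
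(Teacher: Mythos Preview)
Your proposal is correct and follows exactly the same approach as the paper: choose a maximal coupling so that $\mathbb{P}(X\ne M_\gamma)=d_{TV}(\mathcal{L}(M_\gamma),\text{Poisson}(|\gamma|\Phi(P(e))))$, use the uniform bound $|\text{Tr}[A_{\beta,\kappa}^k]|\le D$ to convert the total-variation bound from Lemma~\ref{lem:PoisNonAbelian} into a bound on $|\mathbb{E}[\text{Tr}[A_{\beta,\kappa}^{M_\gamma}]]-\mathbb{E}[\text{Tr}[A_{\beta,\kappa}^{X}]]|$, and add the error from Theorem~\ref{thm:mainthmnonabelian} via the triangle inequality. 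The paper's own proof is a single sentence invoking precisely this coupling-plus-triangle-inequality argument, so your write-up is in fact more detailed than what appears in the text.
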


\section{Decorrelation Estimates and a More Precise Expansion}\label{sec:decorrelation}

As promised, this section will give a more precise expansion of the main order terms in the Wilson loop. In order to reduce technicalities and focus on the innovations, we will assume that our group $G$ is abelian for now.

One technical issue with the argument in section \ref{sec:nonabelianhighdisorder} is that the presence of the Higgs boson could cause long range correlations. In order to bound these contributions, we had to introduce the random current expansion and treat current excitations like knotted  plaquette expansions. Due to the knotting property, it is hard to split a configuration $\sigma$ into disjoint parts in general. However, we observe that there is more power in controlling this splitting when we are dealing with minimal vortices.

In this section, we will reperform the analysis of section \ref{sec:nonabelianhighdisorder} specifically taking advantage of the presence of the minimal vortices along the loop $\gamma$.  We can express our Hamiltonian as
\begin{equation} \label{eq:newToyHam}
H_{N,\beta,\kappa}(\sigma,\phi):= \sum_{p \in P_N} \beta(\rho((\td \sigma)_p) - \rho(1)) + \sum_{e=(v,w) \in E_N} \kappa (f(\sigma_e,\phi_v \phi_w^{-1}) - f(1,1)),
\end{equation}
where $f$ is our shorthand for the action of the Higgs boson and $\rho$ is a 1-dimensional representation.

To proceed along our analysis, we first need to redefine our notion of support of configurations without immediately appealing to the random current expansion.

\begin{defn} \label{def:SupportLowDisorder}
Given a configuration $\{\sigma\}$ on the set of edges $E_N$, the support of our plaquette is
\begin{equation}
    \supp(\{\sigma_e\})= \{p \in P_N: (\td \sigma)_p \ne 1 \}.
\end{equation}


We now use the notation $\mathcal{S}$ to denote configurations of gauge fields $\{\sigma_e\}$. We treat configurations of the Higgs boson in a matter differently from the gauge fields, so we do not need notation to include the Higgs boson.

The support of any configuration $\mathcal{S}$ can be split into a disjoint union $V_1 \cup V_2 \cup \ldots \cup V_N$ where each $V_i$ is a maximally connected set of plaquettes(called a vortex).

We see also that the smallest possible size of a vortex is $12$ oriented plaquettes (6 plaquette pairs $\{p,-p\}$). One of the simplest ways to generate this vortex consists of a single excited edge with $\tilde{\sigma}_e \ne 1$, and all other edges are set to $\tilde{\sigma}_e=1$.

\end{defn}

This  definition of the support is the same as one corresponding to a pure gauge field. Since only exciting the plaquettes, e.g. setting $(\td \sigma)_p \ne 1$ has an effect on the Wilson loop action, we would expect this definition of support to be a more natural description of the changes in the Wilson loop action.


We now define our good set in the following definition.

\begin{defn} \label{def:rareevent}
Fix a Wilson loop $\gamma$ and some value $K$, whose value will be specified later. We let $B_K$ be the set of all plaquettes that are at distance at most $K$ from some edge $e$ in $\gamma$. (More formally, for each edge $e$ in $\gamma$, let $B_K^e$ be the cube of side length $2K$ that is centered around the edge $e$. Then $B_K = \cup_{e \in \gamma} B_K^e$). We see that $|B_K| \le |\gamma| (2K)^d$ .

$\tilde{E}$ will be an event whose complement we will show to be rare.
$\tilde{E}$ is defined as the complement of the union of two events $\tilde{E}^c= \tilde{E}_1 \cup \tilde{E}_2$, which we will describe as follows.

$\tilde{E}_1$ is the event that the support of the configuration contains a knot $\mathcal{K}$ in the knot decomposition of size $\ge 7$ that cannot be separated from the plaquettes of  $B_K$ by some cube $C(\mathcal{K})$.

$\tilde{E}_2$ is the event that the vortex decomposition of $\mathcal{S}$ has at least two minimal vortices $M_1$ (centered around $e_1$) and $M_2$ (centered around $e_2$) such that $e_1$ is an edge in $\gamma$ and  $e_2$ is a boundary edge of some plaquette in $B_K^{e}$. If two such minimal vortices are of distance greater than $K$ from each other, then we will call them $K$-separated.  

\end{defn}

In this section, $K$ will not refer to a knot, but instead our decorrelation distance. We will instead use $\mathcal{K}$ for the few times we refer to a knot.
We see that in the complement of the event $E$, either there are no vortices that intersect $\gamma$ or, if there are intersections of vortices with $\gamma$, these must be minimal vortices that are well-spaced (at least distance $K$) from each other. 

On the event $E$, we will assert that the main contribution to the Wilson loops will be from the minimal vortices that are centered around edges of $\gamma$, much as in the cases we have considered previously. The only technicality is that we have to ensure that the excitations have sufficient distance from each other in order to get some independence behavior.

Unfortunately, we do not see any direct way to bound the probability of the event $\tilde{E}^c$ without appealing to the random current expansion of section \ref{sec:nonabelianhighdisorder}. We see that that if the configuration $(\sigma,\phi)$ is in the event $\tilde{E}_1$ from Definition \ref{def:rareevent}, we see that $(\sigma,\phi,I)$
 must contain a knot $\mathcal{K}$ that intersects the region $B_K(\gamma)$ for any choice of $I$. Thus, we can bound the event $\tilde{E}_1$ by an analog of the event $E^c$ from Definition \ref{def:goodevent}. Since we now deal with a neighborhood of size $K$ around $\gamma$, a simple union bound shows that $\mathbb{P}_{H_N}(\tilde{E}_1)$ can be no more than $O(K^4\mathbb{P}_{\mathcal{H}_N}(E^c))$.
 
 The event $E_2$ is bounded in a different way. This specifically takes advantage of the fact that we are dealing with minimal vortices. 
 Before we do this, we define the notion of $\Phi_{UB}(V)$ for a general vortex $V$.

 \begin{defn} \label{def:phiub}
 Let $V$ be a vortex contained in a cube $C(V)$. We define $\Phi_{UB}(V)$ as follows,
 \begin{equation}
 \begin{aligned}
     \Phi_{UB}(V) = & \sum_{\supp(\sigma,\phi)=V}\prod_{p \in V} \exp[\beta(\rho((\td \sigma)_p) - \rho(1))] \\ &\prod_{e \in C(V)} \exp[\kappa (\max_{(a,b)} \text{Re}[ f(a,b)] - \min_{(c,d)} \text{Re}[f(c,d)])].
 \end{aligned}
 \end{equation}
 The product $e \in C(V)$ is over all oriented edges found in the cube $C(V)$.
 \end{defn}
 
 We see that $\Phi_{UB}(P)$ forms an upper bound in a very weak sense.

 \begin{lem}\label{lem:LowDisorderDecomp}
Under the Hamiltonian \eqref{eq:newToyHam} we are considering,
the probability that we will see a configuration whose support contains a given vortex $V$
and this $V$ is separated by a cube $C(V)$ containing $V$ from all other vortices in the support is less than $\Phi_{UB}(V)$. 
\end{lem}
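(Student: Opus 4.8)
The plan is to mimic the structure of the proof of Lemma~\ref{lem:boundprob}, but with the knot decomposition replaced by the weaker vortex decomposition, using $\Phi_{UB}$ as a crude substitute for $\Phi$. Fix the vortex $V$ and a cube $C(V) \supset V$ separating $V$ from all other vortices in the support. First I would group the configurations $\{\sigma\}$ whose support contains $V$ (separated by $C(V)$) according to how the rest of the support looks \emph{outside} $C(V)$. For the $\sigma$ values on edges strictly inside $C(V)$ that are not forced by $V$, I would sum freely; because $\delta V$ is confined to $C(V)$ and $(\td\sigma)_p=1$ for all plaquettes outside $V$, the only edges carrying nontrivial $\sigma$ can be taken (after a gauge transformation supported in $C(V)$, cf.\ the spanning-tree argument of Lemma~\ref{lem:spanningtreesplit} / the gauging step in Lemma~\ref{lem:wellsplit}) to lie among the edges of $C(V)$. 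This is exactly what makes $\Phi_{UB}(V)$, with its product of the worst-case Higgs factor $\exp[\kappa(\max_{(a,b)}\mathrm{Re}f(a,b) - \min_{(c,d)}\mathrm{Re}f(c,d))]$ over all edges of $C(V)$, an upper bound: the true Higgs contribution of any edge, whatever its adjacent $\phi$ values and $\sigma_e$, differs from the $\sigma_e=1,\phi_v=\phi_w$ value by at most this factor, and we are overcounting by letting every edge in $C(V)$ (not just the excited ones) absorb such a factor while also summing over all $\phi$ on $V(C(V))$ without normalization.

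The second step is the ratio estimate. Let $\mathcal{P}_V$ be the set of configurations whose support has vortex decomposition containing $V$ separated by $C(V)$; write such a configuration's remaining support (the vortices living outside $C(V)$) as $R$. The Higgs field splits: one can flip charges inside the appropriate phase boundaries so that the Higgs assignment on $V(C(V))$ is independent of the assignment elsewhere, exactly as in the charge-flipping construction of Lemma~\ref{lem:compsplit}; the gauge field splits by restriction since $V$ and $R$ are compatible. Hence the sum of $\exp[H_{N,\beta,\kappa}]$ over configurations in $\mathcal{P}_V$ factors as (sum over the $V$-and-$C(V)$ part) $\times$ (sum over the $R$ part), and the first factor is bounded by $2\,\Phi_{UB}(V)$ while the $R$-part sum is itself bounded below by a genuine partition-function contribution, i.e.\ $Z \ge 2\sum_R (\text{weight of }R)$, by the same reasoning as in Lemma~\ref{lm:vortex}. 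Taking the ratio, the $R$-sums cancel and the probability is at most $\Phi_{UB}(V)$.

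I expect the main obstacle to be making the splitting of the \emph{Higgs} configuration rigorous in this setting: unlike the gauge field, which splits by restriction, the Higgs charges must be corrected by global sign/charge flips along phase boundaries so that the conditional gauge constraint ($\sum_v \phi_v > 0$, or its $H/X$-analogue) is respected on each piece, and one must check that these flips change the Hamiltonian only on edges internal to $C(V)$ — where $\Phi_{UB}$ already absorbs the worst case — and not on edges of $R$. This is precisely the delicate part of Lemma~\ref{lem:compsplit}, and here it is actually \emph{easier} because $\Phi_{UB}$ only claims an inequality, so we need not reproduce the Hamiltonian exactly outside $C(V)$ — it suffices that every edge outside $C(V)$ retains its trivial weight after the flips, which holds because such edges connect two vertices of $R$ (or two vertices outside all of $R\cup V$) whose relative charge is unaffected by a flip performed entirely inside a phase-boundary contained in $C(V)$. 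A secondary technical point, which I would handle exactly as in the abelian splitting lemma (Lemma~\ref{lem:spanningtreesplit}), is the choice of a spanning tree of $\Lambda_N$ simultaneously adapted to $C(V)$, its boundary, and its complement, so that the gauge representative is unambiguous and the $\sigma$ restriction is well defined; the dimension $d>2$ hypothesis there is automatically met since we work in $d=4$.
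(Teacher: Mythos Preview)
Your overall scaffold --- a spanning tree adapted to $C(V)$, its boundary and its complement, gauge fixing of $\sigma$, then a ratio estimate against $Z$ --- matches the paper. The substantive gap is in how you propose to decouple the Higgs field.

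The charge-flipping / phase-boundary machinery of Lemma~\ref{lem:compsplit} and the constraint $\sum_v\phi_v>0$ are specific to the \emph{low-disorder} support (Definition~\ref{def:supportlowdisorder}), where an edge with $\phi_v\ne\phi_w$ forces all incident plaquettes into the support; that is precisely what makes edges outside $E(P)$ monocharged and renders a flip invisible to the Hamiltonian there. In Section~\ref{sec:decorrelation} the support (Definition~\ref{def:SupportLowDisorder}) depends only on $(\td\sigma)_p$: edges away from the excited plaquettes carry arbitrary $\phi_v,\phi_w$ and arbitrary $\sigma_e$, so there is no ``trivial weight'' for them to retain after a flip, no phase boundary to flip along, and no global sign constraint in force. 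Your proposed bijection between Higgs assignments on $V(C(V))$ and on the rest is therefore not available here, and the factorization in your second step does not go through as written.

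The paper sidesteps this by \emph{not} splitting $\phi$ at all. It writes each $\sigma$ as $\sigma_e=\eta_v\tilde\sigma_e\eta_w^{-1}$ for a tree-gauged $\tilde\sigma$ and a free vertex field $\eta:V_N\to G$, and sums over $\eta$. After splitting $\tilde\sigma=\tilde\sigma_1\tilde\sigma_2$ relative to $C(V)$, the $\tilde\sigma_1$-dependence of the Higgs term on any edge $e$ in $C(V)$ appears only through a ratio
\[
\exp\!\bigl[\kappa\bigl(f(\eta_v(\tilde\sigma_1)_e\eta_w^{-1},\phi_v\phi_w^{-1})-f(\eta_v\eta_w^{-1},\phi_v\phi_w^{-1})\bigr)\bigr],
\]
which is bounded by $\exp[\kappa(\max\mathrm{Re}f-\min\mathrm{Re}f)]$ uniformly in $(\eta,\phi)$. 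Applying this bound decouples $\tilde\sigma_1$ from $(\tilde\sigma_2,\eta,\phi)$: the $\tilde\sigma_1$-sum together with the plaquette factor is exactly $\Phi_{UB}(V)$, while the remaining $(\tilde\sigma_2,\eta,\phi)$-sum is $\mathcal Z(X)$. The ratio step then proceeds as you describe. So the correct mechanism is not to split the Higgs configuration, but to introduce the auxiliary field $\eta$ and keep the entire $(\eta,\phi)$-sum on the $X$ side.
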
 
\begin{proof}

Let $\mathcal{S}$ be a configuration whose support $S$ contains $V$ and $V$ is separated from all other vortices in the support by a cube $C(V)$ containing $V$. The support of $\mathcal{S}$ its support can be split as $\supp(\mathcal{S})= V \cup X$, where $X $ is a plaquette set in $C(V)^c$.

Fix a simultaneous spanning tree $\mathcal{T}$ of $C(V)$, $C(V)^c$ and the boundary of $C(V)$.

The configuration $\sigma$ can be uniquely gauged with respect to $\mathcal{T}$ such that the gauged configuration $\tilde{\sigma}$ can be split as $\tilde{\sigma}= \tilde{\sigma}_1 \tilde{\sigma}_2$ where $\tilde{\sigma}_1 =1$ for all edges in $C(V)^c$ and $\tilde{\sigma}_2 =1$ for all edges in $C(V)$.


With this division, 
we can express, $\sum_{\substack{\supp(\sigma,\phi)=V \cup X}}\exp[H_{N,\beta,\kappa}(\sigma,\phi)]$ as follows:

\begin{equation} \label{eq:somepartfunc}
\begin{aligned}
   \mathcal{Z}(X \cup V):=&\sum_{\supp(\sigma,\phi)=V \cup X}\exp[H_{N,\beta,\kappa}(\sigma,\phi)]\\  =& \frac{1}{|G|}\sum_{\substack{\supp(\tilde{\sigma})= V \cup X\\ \tilde{\sigma} \text{ gauged w.r.t. }\mathcal{T}}} \prod_{p \in X \cup V} \exp[\beta(\rho(d \tilde{\sigma}) - \rho(1))]\\& \sum_{\eta,\phi}
   \prod_{e \in E_N} \exp[\kappa f(\eta_v \tilde{\sigma}_e \eta_w^{-1},\phi_v\phi_w^{-1})].
   \end{aligned}
\end{equation}

What we have essentially done is, for every configuration, $\sigma$, find the version that is gauged fixed with respect to $\mathcal{T}$ and reintroduce the removed gauge fixing as a sum with respect to a new field $\eta:V_N \to G$ which operates on $\tilde{\sigma}$ as $\tilde{\sigma}_e \to \eta_x \tilde{\sigma}_e \eta_y^{-1}$. However, instead of fixing some value of $\eta_b$ to  be $1$ at a basepoint, we introduce a global gauge transformation that multiplies each $\eta$ by some group element  in $G$.

Now, we can use the fact that $\tilde{\sigma}_e$ splits into $(\tilde{\sigma}_1)_e$ and $(\tilde{\sigma}_{2})_e$ bijectively with respect to the spanning tree $\mathcal{T}$.

We see that we have the expression,
\begin{equation}
\begin{aligned}
    &\mathcal{Z}(X \cup V) =  \frac{1}{|G|} \sum_{\substack{\supp(\tilde{\sigma}_1) =V\\\tilde{\sigma}_1 \text{ gauged to }\mathcal{T}}} \sum_{\substack{ \supp(\tilde{\sigma}_2)= X\\ \tilde{\sigma}_2 \text{ gauged to }\mathcal{T}}} \sum_{\eta_v ,\phi_v} \prod_{e \in E_N} \exp[\kappa f(\eta_v \eta_w^{-1},\phi_v\phi_w^{-1})]\\  & \hspace{0.8 cm} \times \prod_{p \in X} \exp[\beta (\rho((\td \tilde{\sigma}_2)_p) - \rho(1) )] \\ & \hspace{0.8 cm} \times  \prod_{e =(v,w) \in C(V)^c} \exp[\kappa (f(\eta_v (\tilde{\sigma}_2)_e \eta_w^{-1},\phi_v \phi_w^{-1}) - f(\eta_v \eta_w^{-1},\phi_v\phi_w^{-1}))]
    \\ & \hspace{0.8 cm} \times \prod_{p \in V } \exp[\beta (\rho((\td \tilde{\sigma}_1)_p) -\rho(1))] \\ & \hspace{0.8 cm} \times\prod_{e=(v,w) \in C(V) \setminus \delta C(V)} \exp[\kappa (f(\eta_v (\tilde{\sigma}_1)_{e} \eta_w^{-1},\phi_v \phi_w^{-1})- f(\eta_v \eta_w^{-1},\phi_v\phi_w^{-1}))]\\
    & \hspace{0.8 cm} \times \prod_{e =(v,w) \in \delta C(V)} \exp[\kappa (f(\eta_v(\tilde{\sigma}_1 \tilde{\sigma}_2)_e \eta_w^{-1},\phi_v\phi_w^{-1}) - f(\eta_v(\tilde{\sigma}_2)_e \eta_w^{-1},\phi_v\phi_w^{-1}))].
\end{aligned}
\end{equation}

In the last two lines, we upper bound the exponential involving $\kappa$ and $f$ by $\exp[\kappa(\max_{a,b} \text{Re}f(a,b) -\min_{c,d} \text{Re} f(c,d))]$ (observing that that product in $C(V)$ will include the product over both $e$ and $-e$ as oriented edges if any edge $e$ were in $C(V)$). After this replacement, the sum over $\tilde{\sigma}_1$ splits from all other sums. This uses the fact that the resulting sum over $\tilde{\sigma_1}$ does not depend on the choice of the spanning tree $\mathcal{T}$. We can replace the last two lines above by $\Phi_{UB}(V)$.



We see that
\begin{equation}
\begin{aligned}
    &\mathcal{Z}(X \cup V) \\ & \le    \frac{1}{|G|} \Phi_{UB}(V)\sum_{X \text{ disjoint from} V} \sum_{ \supp(\tilde{\sigma}_2)= X} \sum_{\eta_v ,\phi_v} \prod_{e \in E_N} \exp[\kappa f(\eta_v\eta_w^{-1},\phi_v\phi_w^{-1})]\\ & \times \prod_{p \in X} \exp[\beta (\rho((\td \tilde{\sigma}_2)_p) - \rho(1) )] \\ & \prod_{e =(v,w) \in E(X)} \exp[\kappa (f(\eta_v (\tilde{\sigma}_2)_e \eta_w^{-1},\phi_v \phi_w^{-1}) - f(\eta_v \eta_w^{-1},\phi_v\phi_w^{-1}))]\\
    & \le \Phi_{UB
}(V) \mathcal{Z}(X ).    
\end{aligned}
\end{equation}


We see that the probability of finding $V$ in the support, while $V$ is separated by $C(V)$ from the rest of the support, is bounded by,
\begin{equation}
\frac{\sum_{X \text{ is separated from } C(V)} \mathcal{Z}(X \cup V)}{\sum_{X \text{ is separated from  }C(V)} \mathcal{Z}(X)}
\le \Phi_{UB}(V),
\end{equation}
as desired. 

\end{proof}

Just as we can always split away a minimal vortex in the knot decomposition, we can always apply the above lemma to a minimal vortex. In fact, since we can control the number of edges in 
$C(P(e))$ by the number of plaquettes in our minimal vortex, $\Phi_{UB}(P(e))$ is not a very bad bound when only considering minimal vortices.

A consequence of the above lemma is that the probability that our configuration has $P(e)$ and $P(e')$, where $P(e)$ and $P(e')$ are minimal vortices that are compatible with each other is $\Phi_{UB}(P(e))^2$.

We can bound the probability of $\mathcal{E}_2$ by performing a union bound over pairs of minimal vortices $P(e)$ and $P(e')$ where $e$ is an edge of $\gamma$
 and $e'$ is an edge in the box of size $K$ centered around the edge $e$.
 We see that $\mathbb{P}_{H_N}(\mathcal{E}_2) \le O( K^4 |\gamma| \Phi_{UB}(P(e))^2)$, where $O$ is a universal constant not depending on parameters $\beta,\kappa, \gamma$ or $G$. 
 
 We can combine our discussion into the following lemma,
 \begin{lem}\label{lem:rarelowdisorder}
 Assume the conditions of Theorem \ref{thm:mainthmnonabelian}. We have that,
 \begin{equation}
 \begin{aligned}
    \mathbb{P}(\tilde{E}^c)&\le O\left(K^4|G||\gamma| \exp[12 \beta(\max_{a \ne 1 \in G} \text{Re}[\rho(a) - \rho(1)])] \left(\frac{1}{1- 10^{24} \alpha_{\beta,\kappa}}\right)^5\right)\\
    &+ O(K^4 |\gamma| \Phi_{UB}(P(e))^2).
\end{aligned}
 \end{equation}
 As mentioned earlier, the implied constants in $O$ do not depend on $|G|,|H|,\beta,\kappa$ or $\gamma$.
 
 Furthermore, an easy bound on $\Phi_{UB}(P(e))$ is $\mathcal{B}^6$ where $\mathcal{B} $ is defined as,
 \begin{equation}
    \mathcal{B}:=|G|^4\exp[ 2\beta  ( \max_{a \ne 1} \text{Re}[\rho(a) - \rho(1)]) ] \exp[8 \kappa ( \max_
    {a,b} \text{Re}[f(a,b)] - \min_{c,d} \text{Re}[f(c,d)])].
\end{equation}

This is due to the fact that we can restrict our product over $e \in C(V)$ in the definition of $\Phi_{UB}$ from Definition \ref{def:phiub} to those edges that bound a plaquette in the minimal vortex. This will be at most $4$ unoriented edges for every unoriented plaquette in the minimal vortex. Finally, recall that our minimal vortex contains 6 unoriented plaquettes.
 \end{lem}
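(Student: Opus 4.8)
# Proof Proposal for Lemma \ref{lem:rarelowdisorder}

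The plan is to decompose the bad event $\tilde{E}^c = \tilde{E}_1 \cup \tilde{E}_2$ and bound the two pieces separately, in each case reducing to machinery already established. For $\tilde{E}_1$, the key observation is that this event is defined purely in terms of the gauge field $\sigma$ (it concerns knots in the support of the configuration that cannot be separated from $B_K$), and hence is measurable with respect to $\sigma$ alone. Therefore I would first note that if $(\sigma,\phi)$ lies in $\tilde{E}_1$, then for \emph{any} choice of auxiliary current field $I$, the configuration $(\sigma,\phi,I)$ has a knot in its support that fails to separate from the plaquettes of $B_K$. This lets me transfer the bound: $\mathbb{P}_{H_N}(\tilde{E}_1) \le \mathbb{P}_{\mathcal{H}_N}(\exists\ \text{such a knot near } B_K)$, and the right-hand side is exactly the kind of event controlled in the proof of Theorem \ref{thm:mainthmnonabelian}. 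The only change is that $B_K$ is a $K$-fattened neighborhood of $\gamma$ of cardinality $O(|\gamma| K^d)$ rather than the $O(|\gamma|)$ plaquettes adjacent to $\gamma$; since the bound in \eqref{eq:boundgoodevent} is linear in the size of the relevant plaquette set, a direct union bound over the $O(|\gamma| K^4)$ plaquettes of $B_K$ (with $d=4$) together with Lemma \ref{lem:TrivBound} and the knot-counting estimate (Lemma 4.3.4 of \cite{SC20}, giving $(10^{24})^m$ knots of size $m$ through a fixed plaquette) yields
\begin{equation}
\mathbb{P}(\tilde{E}_1) \le O\!\left(K^4|G||\gamma|\exp[12\beta(\max_{a\ne1\in G}\text{Re}[\rho(a)-\rho(1)])]\left(\frac{1}{1-10^{24}\alpha_{\beta,\kappa}}\right)^5\right),
\end{equation}
provided $10^{24}\alpha_{\beta,\kappa}<1$.

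For $\tilde{E}_2$ — the event that there are two minimal vortices $P(e)$ (with $e\in\gamma$) and $P(e')$ (with $e'$ a boundary edge of some plaquette in $B_K^e$) — I would use Lemma \ref{lem:LowDisorderDecomp} directly. Two such minimal vortices, if both present and separated by their covering cubes, contribute probability at most $\Phi_{UB}(P(e))\,\Phi_{UB}(P(e'))=\Phi_{UB}(P(e))^2$, since $\Phi_{UB}$ evaluated on a minimal vortex does not depend on which edge it is centered at. There are $|\gamma|$ choices for $e$ and, for each, $O(K^d)=O(K^4)$ choices for $e'$ within the cube $B_K^e$. A union bound then gives $\mathbb{P}(\tilde{E}_2)\le O(K^4|\gamma|\Phi_{UB}(P(e))^2)$. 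Combining the two estimates via $\mathbb{P}(\tilde{E}^c)\le\mathbb{P}(\tilde{E}_1)+\mathbb{P}(\tilde{E}_2)$ gives the displayed bound. The final claim, that $\Phi_{UB}(P(e))\le\mathcal{B}^6$, follows by restricting the product over $e\in C(V)$ in Definition \ref{def:phiub} to edges that actually bound a plaquette of the minimal vortex — there are at most $4$ unoriented edges per unoriented plaquette, and the minimal vortex has $6$ unoriented plaquettes — and bounding the $\beta$-factor over those same $6$ plaquette pairs, since a minimal vortex has $2\cdot 6=12$ oriented plaquettes in its support.

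The main obstacle is the transfer argument for $\tilde{E}_1$: one must be careful that the event "$(\sigma,\phi)$ has a knot of size $\ge7$ in its support that cannot be separated from $B_K$ by a cube" genuinely implies, under the coupling of $(\sigma,\phi)$ with $(\sigma,\phi,I)$ from the random current expansion, that the \emph{enlarged} support (which only grows when we add activated edges) still contains a knot near $B_K$ — here one uses that the support under Definition \ref{def:clustnonabel} contains $\{p:\psi_p(\sigma)\ne\psi_p(1)\}$, which is precisely the support under Definition \ref{def:SupportLowDisorder}, so adding currents can only make the offending knot harder to separate, not easier. A secondary technical point is that one should check the knot in $\tilde{E}_1$ is automatically Wilson-loop nontrivial in the sense needed to invoke the sharper $\Phi_{NT}$-type bound — but since a knot that cannot be separated from $B_K$ and has size $\ge 7$ necessarily contains a plaquette $p$ with $(\td\sigma)_p\ne 1$ (otherwise, by Lemma \ref{lem:spanningtreesplit}-type reasoning, it would be trivial and separable), this is not a serious gap. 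The boundary-condition assumption (no excitation touches $\partial\Lambda_N$, with error $e^{-N}$) lets us ignore edge effects throughout, exactly as in the earlier sections.
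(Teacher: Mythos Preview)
Your proposal is correct and follows essentially the same approach as the paper: bound $\tilde{E}_1$ by lifting to the random-current model and invoking the knot-counting estimate of Theorem~\ref{thm:mainthmnonabelian} over the $O(K^4|\gamma|)$ plaquettes of $B_K$, and bound $\tilde{E}_2$ by applying Lemma~\ref{lem:LowDisorderDecomp} twice and union-bounding over pairs of nearby minimal vortices. One small simplification: your concern about Wilson-loop nontriviality in $\tilde{E}_1$ is unnecessary, since the support in Definition~\ref{def:SupportLowDisorder} is \emph{by definition} the set $\{p:(\td\sigma)_p\ne 1\}$, so every plaquette of the offending knot already has nontrivial current.
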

 




At this point, we can now discuss the computation of Wilson expectations when conditioned on the set $\tilde{E}$, but we must first make an aside to the decorrelation estimates that we will use for our calculation.

\subsection{Minimal Vortices and Decorrelation} \label{subsec:decorrelation}

As we have seen in the previous sections, to see the effect of disorder in the gauge field, it was important to introduce to auxiliary field $\eta:V_N \to G$. The main difficulty is that the presence of the two fields $\eta$ and $\phi$ can create correlations over large distances. However, for small $\kappa$, we can expect we can establish decorrelation estimates that would allows us to express Wilson loop espectations as a product over nearly independent minimal vortices. This section quantifies these desired decorrelation estimates.


To this end, we start by giving some definitions.

\begin{defn} \label{def:Decor}
Consider the Hamiltonian 
\begin{equation}
    K_N:= \sum_{e=(v,w) \in E_N} \kappa f(\eta_v \eta_w^{-1}, \phi_v\phi_w^{-1}).
\end{equation}

We say that the Hamiltonian $K_N$ satisfies decorrelation estimates if the following is true.  Let $\mathcal{V}$ be some set of vertices in $V_N$ Let $B_K(\mathcal{V})$ be set consisting of all vertices of distance $K$ from $\mathcal{V}$. Let $S_1$ and $S_2$ be two configurations of $\eta_v,\phi_w$ that differ on a single boundary point on $B_K(\mathcal{V})$. Then, for any configuration $\hat{\eta}_v,\hat{\phi}_v$ for $v$ the vertices on $\mathcal{V}$,there exists some constants $c$, $K_0$ such that for $K \ge K_0$, we have the inequality,
\begin{equation} \label{eq:decorrelation}
    |\mathbb{P}_{K_N}(\eta_v= \hat{\eta_v}, \phi_v= \hat{\phi}_v| S_1) - \mathbb{P}_{K_N}(\eta_v= \hat{\eta}_v,\phi_v= \hat{\phi}_v|S_2)| \le |\mathcal{V}| e^{-cK}.
\end{equation}

\end{defn}

Let us first discuss some consequences of the decorrelation estimate.
\begin{lem} \label{lem:decoresti1}
Assume that the Hamiltonian $K_N$ satisfies the decorrelation estimate.

Let $V_1$ be some set of vertices and $V_2$ be some set of vertices outside $B_K(V_1)$. Let $\hat{\eta}_v^1, \hat{\phi}_v^1$ be some choice of configurations for  the vertices in $V_1$ and $\hat{\eta}_v^2$, $\hat{\phi}_v^2$ be some choice of configurations for vertices in $V_2$. Then, we have the following estimate,
\begin{equation} \label{eq:decor1}
\begin{aligned}
   | &\mathbb{P}_{K_N}(\eta_v= \hat{\eta}^1_v,\phi_v = \hat{\phi}^1_v, v \in V_1, \eta_w= \hat{\eta}^2_w, \phi_w= \hat{\phi}_w^2, w \in V_2)\\
   &- \mathbb{P}_{K_N}(\eta_v= \hat{\eta}^1_v,\phi_v = \hat{\phi}_v^1, v \in V_1) \mathbb{P}_{K_N}(\eta_w= \hat{\eta}^2_w, \phi_w= \hat{\phi}_w^2, w \in V_2)| \le\\
   &2 |\delta B_K(V_1)| |V_1| e^{-cK} \mathbb{P}_{K_N}(\eta_w= \hat{\eta}^2_w, \phi_w= \hat{\phi}_w^2, w \in V_2) .
   \end{aligned}
\end{equation}
Here, $|\delta B_K(V_1)|$ is the size of the boundary of the set $B_K(V_1)$
\end{lem}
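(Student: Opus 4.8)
The statement in Lemma~\ref{lem:decoresti1} is a standard telescoping consequence of the single-site decorrelation estimate \eqref{eq:decorrelation}. The plan is to condition on the configuration of $(\eta,\phi)$ restricted to $V_2$ and compare the resulting conditional law on $V_1$ to the unconditional law on $V_1$, then integrate back out. First I would write the left-hand side of \eqref{eq:decor1} as
\begin{equation}
\begin{aligned}
&\mathbb{P}_{K_N}(\eta_w=\hat\eta^2_w,\phi_w=\hat\phi^2_w,\, w\in V_2)\\
&\quad\times\bigl|\mathbb{P}_{K_N}(\eta_v=\hat\eta^1_v,\phi_v=\hat\phi^1_v,\,v\in V_1 \mid \eta_w=\hat\eta^2_w,\phi_w=\hat\phi^2_w,\,w\in V_2)\\
&\qquad\qquad - \mathbb{P}_{K_N}(\eta_v=\hat\eta^1_v,\phi_v=\hat\phi^1_v,\,v\in V_1)\bigr|,
\end{aligned}
\end{equation}
so the whole task reduces to bounding the bracketed difference of conditional probabilities by $2|\delta B_K(V_1)||V_1|e^{-cK}$.

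The key step is to realize that since $V_2$ lies outside $B_K(V_1)$, conditioning on $(\eta,\phi)$ on $V_2$ is, by the Markov/Gibbs property of the edge Hamiltonian $K_N$, equivalent to conditioning on the induced boundary configuration on $\delta B_K(V_1)$ (more precisely, on $\partial$ of the region separating $V_1$ from $V_2$); the conditional law of $(\eta,\phi)$ on $V_1$ depends on the $V_2$-data only through that boundary data. Now fix any reference boundary configuration $S_\ast$ on $\delta B_K(V_1)$ and write both conditional probabilities (the one conditioned on the $V_2$-induced boundary data, and the unconditioned one, which is an average over boundary data) as telescoping sums: change the boundary configuration one vertex at a time along an enumeration of $\delta B_K(V_1)$, applying \eqref{eq:decorrelation} with $\mathcal{V}=V_1$ at each step. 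Each single-site change costs at most $|V_1|e^{-cK}$, and there are $|\delta B_K(V_1)|$ vertices to change, giving a bound of $|\delta B_K(V_1)||V_1|e^{-cK}$ for the distance from either conditional probability to the common reference value; the triangle inequality then yields the factor $2$.

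I would then remove the conditioning: integrating the bracketed difference against the law of the $V_2$-configuration (which only contributes the already-factored-out $\mathbb{P}_{K_N}(\eta_w=\hat\eta^2_w,\phi_w=\hat\phi^2_w,\,w\in V_2)$ and a probability-one normalization over the averaging) reproduces exactly the right-hand side of \eqref{eq:decor1}. The only mild subtlety is bookkeeping: one must make sure the enumeration of $\delta B_K(V_1)$ is fixed and that \eqref{eq:decorrelation} is genuinely applicable, i.e.\ that $S_1,S_2$ in that hypothesis are allowed to be boundary configurations on $B_K(V_1)$ rather than on $B_K(\mathcal{V})$ for a different $\mathcal{V}$ — but here $\mathcal{V}=V_1$, so this matches verbatim.

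\textbf{Main obstacle.} The genuinely delicate part is not this lemma — which is essentially a telescoping argument — but rather verifying that $K_N$ satisfies the decorrelation estimate \eqref{eq:decorrelation} in the first place; that is where the small-$\kappa$ (high-temperature / Dobrushin-uniqueness style) input must be used, and I expect the bulk of the work in Subsection~\ref{subsec:decorrelation} to go there. Within the proof of Lemma~\ref{lem:decoresti1} itself, the only point requiring care is justifying the reduction of "conditioning on $V_2$" to "conditioning on a boundary layer $\delta B_K(V_1)$," i.e.\ invoking the Markov property of the nearest-neighbor Hamiltonian $K_N$ correctly, so that the single-site estimate can be chained.
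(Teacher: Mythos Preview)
Your proposal is correct and follows essentially the same route as the paper: factor out $\mathbb{P}_{K_N}(\text{$V_2$-config})$, use the Markov property of the nearest-neighbor Hamiltonian to reduce the conditional law on $V_1$ to a mixture over boundary configurations on $\delta B_K(V_1)$, telescope one boundary vertex at a time via \eqref{eq:decorrelation} to a fixed reference (what the paper calls $E$, what you call $S_\ast$), and pick up the factor $2$ from the triangle inequality. The only phrasing to tighten is that conditioning on $V_2$ does not produce a single ``induced boundary configuration'' but rather a \emph{law} on $\delta B_K(V_1)$; since you then average anyway, this is harmless and matches the paper's summation over the auxiliary set $V_c$.
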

\begin{proof}
Let $V_c$ be the set of vertices outside $B_K(V_1)$ and not including $V_2$.

We see that summation, we have that,
\begin{equation}
\begin{aligned}
    &\mathbb{P}_{K_N}(\eta_v= \hat{\eta}^1_v,\phi_v = \hat{\phi}_v^1, v \in V_1, \eta_w= \hat{\eta}^2_w, \phi_w= \hat{\phi}_w^2, w \in V_2)\\
   &= \sum_{\tilde{\eta}_a, \tilde{\phi}_a, a \in V_c} \mathbb{P}_{K_N}(\eta_v= \hat{\eta}^i_v,\phi_v = \hat{\phi}_v^i, i=1,2, \eta_a= \tilde{\eta}_a, \phi_a= \tilde{\phi}_a, a \in V_c).
\end{aligned}
\end{equation}

The  quantity on the last line can be computed as a conditional expectation.

\begin{equation} \label{eq:conditioned}
\begin{aligned}
  & \mathbb{P}_{K_N}(\eta_v,\phi_v= \hat{\eta}^1_v, \hat{\phi}_v^1, v \in V_1, \eta_w,\phi_w= \hat{\eta}^2_w, \hat{\phi}_w^2, w \in V_2, \eta_a,\phi_a= \tilde{\eta}_a, \tilde{\phi}_a, a \in V_c)\\& = \mathbb{P}_{K_N}(\eta_v,\phi_v = \hat{\eta}^1_v, \hat{\phi}_v^1, v \in V_1| \eta_w,\phi_w= \hat{\eta}^2_w, \hat{\phi}_w^2, w \in V_2, \eta_a,\phi_a= \tilde{\eta}_a, \tilde{\phi}_a, a \in V_c)\\
  &\times \mathbb{P}_{K_N}( \eta_w,\phi_w= \hat{\eta}^2_w, \hat{\phi}_w^2, w \in V_2, \eta_a,\phi_a= \tilde{\eta}_a,\tilde{\phi}_a, a \in V_c).
 \end{aligned}  
\end{equation}

Because our Hamiltonian $K_N$ acts on nearest neighbors, the conditional expectation on the second time only depends on the values at the boundary $|\delta B_K(V_1)|$.

From the inequality \eqref{eq:decorrelation}, one can show that there is a constant $E$ such that for any any boundary condition $B$ of the form,
\begin{equation}
    B=\{\eta_w,\phi_v:\eta_w,\phi_w= \hat{\eta}_w^2, \hat{\phi}_w^2, w \in V_2, \eta_a,\phi_a= \tilde{\eta}_a, \tilde{\phi}_a, a \in V_c\},
\end{equation} we have
$$
\begin{aligned}
&|\mathbb{P}_{K_N}(\eta_v,\phi_v = \hat{\eta}^1_v, \hat{\phi}_v^1, v \in V_1|B ) - E| \le|V_1| |\delta B_K(V_1)| e^{-cK}.
\end{aligned}
$$

To see the derivation, one can first fix some arbitrary boundary condition $B$. From any other boundary condition, $\tilde{B}$, one needs to change at most $|\delta B_K(V_1)|$ terms. Thus, by applying the triangle inequality at most $|\delta B_K(V_1)|$ times, we can derive the last line.

Since we have the relation,
\begin{equation}
    \sum_{B}\mathbb{P_{K_N}}(\eta_v= \hat{\eta}_v^1, \phi_v= \hat{\phi}_v^1, v \in V_1|B) \mathbb{P}_{K_N}(B) = \mathbb{P}_{K_N}(\eta_v= \hat{\eta}_v^1, \phi_v= \hat{\phi}_v^1),
\end{equation}
where $B$ is a sum over all possible boundary conditions, we must necessarily have that $|E - \mathbb{P}_{K_N}(\eta_v= \hat{\eta}_v^1, \phi_v= \hat{\phi}_v^1)| \le 2 |V_1||\delta B_K(V_1)| e^{-cK}$.

Substituting back our relation on the conditional probability in equation \eqref{eq:conditioned}, we see that we can upper and lower bound the conditional probability with $\mathbb{P}_{K_N}(\eta_v= \hat{\eta}_v^1,\phi_v= \hat{\phi}_v^1) \pm 2 |\delta B_K(V_1)| e^{-cK}$. Then, we can resum the expression over $V_c$. We see we derive the desired inequality,
\begin{equation}
\begin{aligned}
    | &\mathbb{P}_{K_N}(\eta_v= \hat{\eta}^1_v,\phi_v = \hat{\phi_v}^1, v \in V_1, \eta_w= \hat{\eta}^2_w, \phi_w= \hat{\phi}_w^2, w \in V_2)\\
   &- \mathbb{P}_{K_N}(\eta_v= \hat{\eta}^1_v,\phi_v = \hat{\phi_v}^1, v \in V_1) \mathbb{P}_{K_N}(\eta_w= \hat{\eta}^2_w, \phi_w= \hat{\phi}_w^2, w \in V_2)| \le\\
   &2|V_1| |\delta B_K(V_1)| e^{-cK} \mathbb{P}_{K_N}(\eta_w= \hat{\eta}^2_w, \phi_w= \hat{\phi}_w^2, w \in V_2) .
 \end{aligned}
\end{equation}

\end{proof}

As a corollary of this lemma, we can get decorrelation estimates over a product of different sites, provided the sites are sufficiently distant from each other.
\begin{col} \label{col:DecayEst2}
Suppose $S_1,\ldots,S_m$ are events supported on vertices $V_1,\ldots, V_m$ such that each $V_j$ for $j \ne i$ lies outside the block $B_K(V_i)$ for each $i$. Fix some constant $C$ and assume that that each boundary satisfies $|\delta B_K(V_i)|\le C K^{d-1}$ and $|V_i|\le C$.  

Assume that $K_N$ satisfies the decorrelation estimates.
For some constant $c'<c$  and $K$ sufficiently large depending on $c'$ and $C$ , we have the estimate,
\begin{equation}
    |\mathbb{P}_{K_N}(S_1,S_2,\ldots,S_m) - \prod_{i=1}^m\mathbb{P}_{K_N}(S_i)| \le m \frac{e^{-c'K}}{\min_i \mathbb{P}_{K_N}(S_i)} \prod_{i=1}^m (\mathbb{P}_{K_N}(S_i) + e^{-c'K}).
\end{equation}
\end{col}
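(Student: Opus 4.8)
\textbf{Proof plan for Corollary \ref{col:DecayEst2}.}

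The plan is to induct on $m$, peeling off one site at a time and applying Lemma \ref{lem:decoresti1} at each step. First I would handle the base case $m=1$, which is trivial since the claimed bound reduces to $0 \le 0$. For the inductive step, write $V_{\ge 2} := V_2 \cup \dots \cup V_m$ and observe that, by the separation hypothesis, $V_{\ge 2}$ lies entirely outside $B_K(V_1)$ (each $V_j$, $j \ge 2$, does, hence so does their union). Apply Lemma \ref{lem:decoresti1} with this $V_1$ playing the role of ``$V_1$'' and $V_{\ge 2}$ playing the role of ``$V_2$'', where the event on $V_{\ge 2}$ is the intersection $S_2 \cap \dots \cap S_m$. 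This gives
\begin{equation}
|\mathbb{P}_{K_N}(S_1, S_2, \dots, S_m) - \mathbb{P}_{K_N}(S_1)\, \mathbb{P}_{K_N}(S_2, \dots, S_m)| \le 2|V_1|\,|\delta B_K(V_1)|\, e^{-cK}\, \mathbb{P}_{K_N}(S_2, \dots, S_m).
\end{equation}
Using $|V_1| \le C$ and $|\delta B_K(V_1)| \le C K^{d-1}$, the prefactor $2|V_1||\delta B_K(V_1)|e^{-cK} \le 2C^2 K^{d-1} e^{-cK}$, which for $K$ large enough (depending on $c' < c$ and $C$) is bounded by $e^{-c'K}$; this is where the loss from $c$ to $c'$ is absorbed. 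Since $\mathbb{P}_{K_N}(S_2, \dots, S_m) \le \prod_{i=2}^m(\mathbb{P}_{K_N}(S_i) + e^{-c'K})$ by a crude bound (each probability is at most $\mathbb{P}_{K_N}(S_i) + e^{-c'K}$ and the joint probability is at most any marginal — more carefully, one can bootstrap this from the inductive hypothesis), the first term contributes at most $\frac{e^{-c'K}}{\min_i \mathbb{P}_{K_N}(S_i)}\prod_{i=1}^m(\mathbb{P}_{K_N}(S_i)+e^{-c'K})$.

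Next I would control the second piece $\mathbb{P}_{K_N}(S_1)\,\mathbb{P}_{K_N}(S_2,\dots,S_m)$ by the inductive hypothesis applied to $S_2, \dots, S_m$ (which satisfy the same separation and boundedness assumptions): $|\mathbb{P}_{K_N}(S_2,\dots,S_m) - \prod_{i=2}^m \mathbb{P}_{K_N}(S_i)| \le (m-1)\frac{e^{-c'K}}{\min_{i\ge 2}\mathbb{P}_{K_N}(S_i)}\prod_{i=2}^m(\mathbb{P}_{K_N}(S_i)+e^{-c'K})$. Multiplying through by $\mathbb{P}_{K_N}(S_1) \le 1$ and combining with the first piece via the triangle inequality, the two error contributions add to at most $m \frac{e^{-c'K}}{\min_i \mathbb{P}_{K_N}(S_i)}\prod_{i=1}^m(\mathbb{P}_{K_N}(S_i)+e^{-c'K})$, as desired. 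The bookkeeping of which $\prod$ ranges over which index set, and checking that $\mathbb{P}_{K_N}(S_1) \prod_{i=2}^m(\cdots) \le \prod_{i=1}^m(\cdots)$, is routine.

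The main obstacle I anticipate is not any single inequality but making the telescoping estimate uniform: Lemma \ref{lem:decoresti1} as stated splits off one block against \emph{all the rest treated as a single event}, and one must be sure that at each stage the ``rest'' block $V_2 \cup \dots \cup V_m$ genuinely lies outside $B_K(V_1)$ — this is exactly the hypothesis ``each $V_j$ for $j \ne i$ lies outside $B_K(V_i)$'', so it is available, but the argument must invoke it for the specific ordering chosen. A secondary subtlety is that $\mathbb{P}_{K_N}(S_2,\dots,S_m)$ appears as a multiplicative factor in the Lemma \ref{lem:decoresti1} error term, so to keep the final bound clean one should first establish (from the inductive hypothesis) that $\mathbb{P}_{K_N}(S_2,\dots,S_m) \le \prod_{i=2}^m(\mathbb{P}_{K_N}(S_i) + e^{-c'K})$ and only then assemble the pieces; doing this in the wrong order produces a messier constant. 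Finally one should remark that the stated polynomial-in-$K$ boundary bound $|\delta B_K(V_i)| \le CK^{d-1}$ indeed holds for the sets $B_K(V_i)$ arising in our application (cubes/neighborhoods of bounded sets), so the hypothesis is not vacuous.
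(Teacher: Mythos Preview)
Your proposal is correct and follows essentially the same approach as the paper: induction on the number of sites, peeling off one event at a time via Lemma \ref{lem:decoresti1}, absorbing the polynomial prefactor $2C^2K^{d-1}$ into the exponential loss $c \to c'$. The paper organizes the argument slightly more cleanly by first running a separate (easier) induction to prove the auxiliary bound $\mathbb{P}_{K_N}(S_1,\ldots,S_k) \le \prod_{i=1}^k(\mathbb{P}_{K_N}(S_i)+e^{-c'K})$ directly from Lemma \ref{lem:decoresti1}, and only then doing the main induction for the difference; you correctly flag this as the ``secondary subtlety,'' but note it is not literally a consequence of the main inductive hypothesis---it needs its own one-line induction using the same lemma.
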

\begin{rmk}
This corollary will essentially only be applied to the case where $V_i$'s are minimal vortices. Thus, we do not need to really worry about the value of the constant $C$ and the dependence of $K$ on $C$. 
\end{rmk}

\begin{proof}
The proof of this lemma involves performing some induction.

We let $F_k := \mathbb{P}_{K_N}(S_1,\ldots,S_k) $ and $E_k = |\mathbb{P}_{K_N}(S_1,\ldots,S_k) - \prod_{i=1}^k \mathbb{P}_{K_N}(S_i)|$. By splitting the events $\{S_1,\ldots,S_{k-1}\}$ from $S_k$, we see that applying equation \eqref{eq:decor1}  will show that
\begin{equation}
    F_k \le F_{k-1} (\mathbb{P}_{K_N}(S_k) + e^{-c'K}),
\end{equation}
for sufficiently large $K$. (This allows the exponential factor to decay faster than any polynomial factor of $K$ that appears from $|\delta B_K(V_i)|$ and the constants that appear in $|V_i|$.)
Thus, we can derive,
\begin{equation} \label{eq:Fest}
    F_k \le \prod_{i=1}^k (\mathbb{P}_{K_N}(S_i) + e^{-c'K}),
\end{equation}
for all $k$.

To determine $E_k$, we also split the events $\{S_1,\ldots,S_{k-1}\}$ from $S_k$. We first write $E_k$ as $$|\mathbb{P}(S_1,\ldots,S_k) - \mathbb{P}(S_1,\ldots,S_{k-1})\mathbb{P}(S_k) + \mathbb{P}(S_1,\ldots,S_{k-1}) \mathbb{P}(S_k) - \prod_{i=1}^k \mathbb{P}(S_i)|$$ we see that by applying the triangle inequality, that 
\begin{equation}
    E_k \le |F_k - F_{k-1} \mathbb{P}_{K_N}(S_k)| + \mathbb{P}_{K_N}(S_k) E_{k-1}.
\end{equation}
The decorrelation estimate applied to $|F_k - F_{k-1} \mathbb{P}_{K_N}(S_k)|$ gives that this quantity is less than $F_{k-1}  e^{-c'K}$. We then apply \eqref{eq:Fest} to estimate $F_{k-1}$ and apply the induction hypothesis to $E_{k-1}$. We see that,
\begin{equation}
\begin{aligned}
    E_k \le &  e^{-c'K}\prod_{i=1}^{k-1} (\mathbb{P}_{K_N}(S_i) +  e^{-c'K})\\ &  + (k-1) \frac{e^{-c'K}}{\min_{i \in [1,k-1]} \mathbb{P}_{K_N}(S_i)} \prod_{i=1}^{k-1}(\mathbb{P}_{K_N}(S_i) +  e^{-c'K}) \mathbb{P}_{K_N}(S_k).
\end{aligned}
\end{equation}

$ e^{-c'K}\prod_{i=1}^{k-1} (\mathbb{P}_{K_N}(S_i) + e^{-c'K})$ can be bounded by $\frac{e^{-c'K}}{\mathbb{P}_{K_N}(S_k)} \prod_{i=1}^k (\mathbb{P}_{K_N}(S_i) + e^{-c'K}) $. We can replace the probability in the denominator by the minimum of the probability over all $i$, $\min_{i \in [1,k]} \mathbb{P}_{K_N}(S_i)$.

In addition,
$\frac{e^{-c'K}}{\min_{i\in[1,k-1]} \mathbb{P}_{K_N}(S_i)} \prod_{i=1}^{k-1}(\mathbb{P}_{K_N}(S_i) +  e^{-c'K}) \mathbb{P}_{K_N}(S_k)$ is bounded by $\frac{e^{-c'K}}{\min_{i\in [1,k]} \mathbb{P}_{K_N}(S_i)} \prod_{i=1}^{k}(\mathbb{P}_{K_N}(S_i) +  e^{-c'K})$. Adding up these terms completes the proof.
\end{proof}

Now we can show through a percolation type argument that for sufficiently small $\kappa$, we can prove decorrelation estimates for the Hamiltonian $K_N$.

\begin{thm}
For $\kappa$ sufficiently small, we have the decorrelation estimates as detailed in Definition \ref{def:Decor} for the Hamiltonian $K_N$ 
\end{thm}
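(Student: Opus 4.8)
The plan is to prove decorrelation for the Hamiltonian $K_N$ by a standard high-temperature (Dobrushin-uniqueness or disagreement-percolation) argument, exploiting that $\kappa$ small makes the nearest-neighbor interaction $\kappa f(\eta_v \eta_w^{-1}, \phi_v \phi_w^{-1})$ a small perturbation of the product (i.i.d.\ uniform) measure on the combined spin $(\eta_v, \phi_v) \in G \times (H/X)$. First I would set up the model as a single-spin Gibbs measure on the finite-state space $S := G \times (H/X)$ at each vertex, with pair potential of sup-norm oscillation $\mathrm{osc}(\kappa f) \le \kappa \cdot \mathrm{osc}(f) =: \kappa M$ on each of the $2d$ edges incident to a vertex. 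The single-site conditional distribution of $(\eta_v,\phi_v)$ given everything else depends only on the $2d$ neighbors, and its total-variation dependence on flipping one neighbor is bounded by something like $1 - e^{-2\kappa M}$ (a crude bound: the conditional weights are $e^{\kappa(\text{sum of } 2d \text{ terms})}$, and changing one neighbor changes one term by at most $\kappa M$, so the ratio of normalized weights moves by at most $O(\kappa M)$). Hence the Dobrushin interdependence coefficient satisfies $\sum_{w \sim v} c_{vw} \le 2d \cdot C \kappa M$, which is $< 1$ once $\kappa$ is small enough depending only on $d$, $M$, $|G|$, $|H|$.

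Once the Dobrushin condition $\alpha := \sup_v \sum_w c_{vw} < 1$ holds, I would invoke the standard consequence (e.g.\ via the disagreement-percolation coupling of van den Berg--Maes, or directly from Dobrushin's comparison theorem): if two boundary conditions $S_1, S_2$ on $\partial B_K(\mathcal V)$ differ at a single site $u$, one can couple the two Gibbs measures so that the set of vertices where the coupled configurations disagree is dominated by a subcritical (since $\alpha < 1$) oriented/site-percolation cluster started at $u$. The probability that this disagreement cluster reaches $\mathcal V$ — which is at graph distance $\ge K$ from $u$ — decays like $\alpha^{K}$ (more precisely like $(C_d \alpha)^{K}$ after accounting for the $\le (2d)^K$ self-avoiding paths), so $e^{-cK}$ with $c = -\log(C_d \alpha) > 0$ for $\kappa$ small. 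Summing the per-site decorrelation bound over the $|\mathcal V|$ vertices of $\mathcal V$ (each of which independently needs the disagreement to reach it) gives exactly the claimed inequality $|\mathbb P_{K_N}(\cdot \mid S_1) - \mathbb P_{K_N}(\cdot \mid S_2)| \le |\mathcal V| e^{-cK}$; the union bound over $|\mathcal V|$ sites is where the factor $|\mathcal V|$ enters. The constant $K_0$ is whatever threshold makes $(2d)^K \alpha^K$ genuinely decaying, i.e.\ any $K_0 \ge 1$ works provided $\kappa$ was chosen small enough to beat the combinatorial factor.

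The main obstacle — really the only place that needs care rather than bookkeeping — is making the Dobrushin coefficient bound uniform and explicit, in particular verifying that the single-site conditional TV-distance under a one-neighbor flip is genuinely $O(\kappa)$ with a constant independent of $N$ and of the rest of the configuration. This is where one must use that $S = G \times (H/X)$ is finite with size bounded independently of $N$, that $f$ is a bounded function, and that the conditional weights $e^{\kappa(\cdots)}$ are all within a factor $e^{\pm 2d\kappa M}$ of each other, so the normalized distribution is within $O(\kappa)$ (in TV) of uniform regardless of the boundary spins; flipping one neighbor perturbs one summand by $\le \kappa M$ and hence perturbs each weight multiplicatively by $e^{\pm \kappa M}$, giving $c_{vw} \le 2\kappa M$ say. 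A secondary technical point is that Definition~\ref{def:Decor} phrases decorrelation for $S_1, S_2$ differing at \emph{one} boundary point; the single-flip version is exactly the atom of the disagreement-percolation argument, and the general bound (many flips) then follows by the telescoping triangle inequality already used in Lemma~\ref{lem:decoresti1}, so I would state and prove only the single-flip version here. I do not expect the bounds on $|\delta B_K(\mathcal V)|$ or the geometry to cause trouble since $K_N$ is a finite-range (nearest-neighbor) Hamiltonian on $\mathbb Z^d$ and all constants can absorb the polynomial surface factors.
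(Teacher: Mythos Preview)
Your approach via Dobrushin uniqueness and disagreement percolation is correct and would establish the stated decorrelation estimate. It is, however, a genuinely different route from the paper's own proof. The paper instead passes through a random-current representation: it shifts $f$ by a constant $c$ so that $2\,\text{Re}\,f + c > 0$ pointwise, expands each factor $\exp[\kappa(2\,\text{Re}\,f + c)]$ as a sum over nonnegative-integer edge variables $I(e)$, and observes that, conditionally on the set $A = \{e : I(e) \neq 0\}$ of activated edges, the spins $(\eta_v,\phi_v)$ in distinct $A$-clusters are independent. A boundary flip at a single site can then influence the event on $\mathcal V$ only if some $v \in \mathcal V$ is connected to the boundary through $A$; since each edge is activated with probability at most $1 - e^{-\kappa(2\max_{a,b}\text{Re}\,f(a,b) + c)}$, this is stochastically dominated by subcritical Bernoulli bond percolation, and the exponential decay of the connection probability is then quoted from \cite{Copin-Tassion}. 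Your Dobrushin argument is the more textbook, general-purpose route and avoids the artificial positivity shift; the paper's random-current proof has the virtue of reusing exactly the same expansion machinery already set up in Section~\ref{sec:nonabelianhighdisorder}, so that the percolation comparison and the constant $c$ appear for free rather than as a second, parallel technique.
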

\begin{proof}
This is essentially a percolation argument based on the random current representation for the Ising model.

Consider the Hamiltonian $K_N:= \sum_{e=(v,w)} \kappa f(\eta_v \eta_w^{-1},\phi_v\phi_w^{-1})$. Let $c$ be a constant such that  $2 \text{Re}f(\eta_v \eta_w^{-1},\phi_v\phi_w^{-1}) +c >0$ for any choice of  $\eta_v$,$\eta_w$,$\phi_v$ and $\phi_w$ at any given edge $e=(v,w)$. Considering the new Hamiltonian,
\begin{equation}
    \tilde{K}_N:= \sum_{e=(v,w) \in E_N^{U}} \kappa [f(\eta_v \eta_w^{-1},\phi_v\phi_w^{-1}) + \overline{f(\eta_v\eta_w^{-1},\phi_v\phi_w^{-1})} +c],
\end{equation}
where we recall the notion of unoriented edges.  The Hamiltonian $\tilde{K}_N$ pairs up the edges $e$ and $-e$ in order to get a real number. Thus, the measure generated by $K_N$ is the same as the measure generated by $\tilde{K}_N$.

With this in hand, we now define a new probability model based on the random current representation of the Ising model. On this probability space, we have two sets of random variables. The first are the same $(\eta_v,\phi_v)$ configuration variables on the vertices. The marginal distribution of these variables are given by the Hamiltonian $\tilde{K}_N$.

The second is a set of activations on each edge $I_{e}$ for $e \in E_N$. $I(e)$ is a variable that takes non-negative integer values with the following probability distribution. Given a configuration $S =(\eta_v,\phi_v)$ for all vertices $v$, we have the following distribution
\begin{equation}
    P(I(e) = k|S) =\frac{(\kappa (2 \text{Re }f(\eta_v\eta_w^{-1},\phi_v\phi_w^{-1})+c)^k}{k! \exp[\kappa(2 \text{Re}f(\eta_v\eta_w^{-1},\phi_v\phi_w^{-1})+c)]} .
\end{equation}

What is interesting about this representation is that the joint distribution can be represented as a new Hamiltonian as follows,
\begin{equation}
    \mathcal{K}_N(\eta,\phi,I):= \sum_{e=(v,w) \in E_N}   I(e) \kappa (2 \text{Re}f(\eta_v\eta_w^{-1},\phi_v\phi_w^{-1}) +c) - \log I(e)!. 
\end{equation}

The point is, one can marginalize by summing over $I(e)$. The summation over $I(e)$ returns $\exp[\kappa(2 \text{Re}f(\eta_v\eta_w^{-1},\phi_v\phi_w^{-1})+c)]$ as desired.

Now, let $\mathcal{V}$ be a set of vertices with $K$-boundary $B_K(\mathcal{V})$ surrounding it. To understand the effect of boundary conditions applied to $B_K(\mathcal{V})$ on vertices in $\mathcal{V}$, we see it will be better to understand the effect upon conditioning on the value $I(e)$.

For a fixed configuration of values $I(e)$, we let $A$ be the set of activated edges defined as $A:=\{ e\in E_N: I(e) \ne 0\}$. With the set of activated edges defined above, for every vertex $v$, we can define a cluster $Cl(v)$ to be the set of vertices connected to $v$ using only edges of $A$.

Now, we claim that, upon conditioning on the values of $I(e)$. Configurations of $(\eta_v,\phi_v)$ that lie on different clusters are independent of each other. This can be seen by explicitly writing out this conditional probability as a summation,
\begin{equation}
    P(\eta,\phi| I) = \frac{\prod_{e=(v,w)\in E_N} \frac{(\kappa(2 \text{Re}f(\eta_v\eta_w^{-1},\phi_v\phi_w^{-1}) +c))^{I(e)}}{I(e)!}}{\sum_{\eta,\phi}\prod_{e=(v,w)\in E_N} \frac{(\kappa(2 \text{Re}f(\eta_v\eta_w^{-1},\phi_v\phi_w^{-1}) +c))^{I(e)}}{I(e)!}}.
\end{equation}
The partition function in the denominator can split as a product over different clusters. Thus, different clusters once conditioned on $I(e)$ are independent of each other.

Thus, if $v$ is a vertex in $V$, then $v$ can only be affected by the boundary if there is a path using activated edges in $A$ connecting $v$ to the boundary.

Namely, we see that,
\begin{equation}
\begin{aligned}
   & |\mathbb{P}_{K_N}(\eta_v= \hat{\eta_v}, \phi_v= \hat{\phi}_v| S_1) - \mathbb{P}_{K_N}(\eta_v= \hat{\eta}_v,\phi_v= \hat{\phi}_v|S_2)|\\ &  \le \sum_{v \in \mathcal{V}} \mathbb{P}_{\mathcal{K}_N}(v \sim_{A} B_K(v)),
\end{aligned}
\end{equation}
where we use the notation $\sim_{A}$ to denote connection using edges in $A$. Notice that the left hand side is $0$ if $v$ were not connected to the boundary $B_k(v)$ using edges of $A$ and is bounded by $1$ otherwise.

Now, our goal is to derive exponential decay bounds for the percolation type estimate $\mathbb{P}_{\mathcal{K}_N}(v \sim_{A} B_K(v))$. 
To compute this probability, we condition on the configuration $(\eta_v,\phi_v)$ and then compute the resulting percolation process.

For any given configuration $(\eta_v,\phi_w)$, we see that the probability of any given edge $e$ being activated is at most,
$
    1- e^{-\kappa(2 \max_{a,b}\text{Re} f(a,b)+c)}
$. By a monotone coupling on percolation processes, we see that the probability that $v$ is connected to the boundary $B_K(v)$ through edges in $A$ is bounded by the probability that $v$ is connected to the boundary $B_K(v)$ throught the edge-bond percolation process whose probability of edge activation is $1- e^{-\kappa(2 \max_{a,b}\text{Re}f(a,b)+c)}$. At this point, there are standard arguments to show that the probability that $v$ is connected to its boundary is bounded from percolation theory. For example, we can apply Theorem 1.1 Part 2 of \cite{Copin-Tassion}.


\end{proof}

\begin{thm} \label{thm:WilsonLoopReplowdisorder}


Recall the Hamiltonian $K_N$  from Definition \ref{def:Decor} and assume it satisfies said decorrelation estimates.
Let $m(\hat{\phi}_v, \hat{\phi}_w,\hat{\eta}_v,\hat{\eta}_w)$ be the probability under $\lim_{N \to \infty}\langle \cdot \rangle_{K_N}$(i.e., the limiting probability distribution on the infinite lattice) that on on some edge $e=(v,w)$, we will see the configuration $\phi_v = \hat{\phi}_v$ and so on.

We first define the quantity,
\begin{equation}
\begin{aligned}
    &X(g):=\\
    &\sum_{\hat{\phi}_1,\hat{\phi}_2, \hat{\eta}_1, \hat{\eta}_2} m(\hat{\phi}_1, \hat{\phi}_2, \hat{\eta}_1,\hat{\eta}_2) \exp[-2\kappa\text{Re}[f(\hat{\eta}_1 g \hat{\eta}_2^{-1}, \hat{\phi}_1 \hat{\phi}_2^{-1}) -f(\hat{\eta}_1 \hat{\eta}_2^{-1},\hat{\phi}_1 \hat{\phi}_2^{-1}) ]],
\end{aligned}
\end{equation}
which represents the relative change of the Higgs boson action due to setting $\sigma_e=g$.
We then define the quantity,
\begin{equation}
    \mathcal{D}_{\beta,\gamma}:= \frac{\sum_{g \ne 1} \rho(g) \exp[-12 \beta\text{Re}[\rho(g) - \rho(1)]] X(g)}{ \sum_{g \ne 1}  \exp[-12 \beta\text{Re}[\rho(g) - \rho(1)]] X(g)}.
\end{equation}

In addition, define $L$ to be the following constant.
  \begin{equation} \label{eq:defL}
      \begin{aligned}
      &L:=|G|^{24} |H|^{24}\exp[96\kappa(\max_{a,b}\text{Re}f(a,b) - \min_{c,d }\text{Re} f(c,d ))] .
      \end{aligned}
  \end{equation} \eqref{eq:defL}

Provided $N$ is large, the loop $\gamma$ is sufficiently far away from the boundary, and $K$ is  sufficiently large, we have, 
    \begin{equation} \label{eq:conditioning2}   
  \begin{aligned}      &|\mathbb{E}[W_{\gamma} -\mathcal{D}^{M_{\gamma}}_{\beta,\kappa}| \tilde{E}]| \le \mathbb{P}(\tilde{E}^c)^{-1} |\gamma|^2 \mathcal{B}^{6} L K^{3} e^{-cK} \exp[\mathcal{B}^{6}|\gamma|L]\\ & \hspace{0.8 cm}
     + \mathbb{P}(\tilde{E}^c)^{-1} \frac{|\gamma| \mathcal{B}^{6} [(1+e^{-c'K})L] e^{-c'K}}{\min_{\substack{\eta_v,\phi_v\\ v \in \text{ minimal vortex }}} \mathbb{P}_{K_N}(\eta_v,\phi_v)} \exp[|\gamma| \mathcal{B}^{6}[(1+e^{-c'}K) L]].
  \end{aligned}
  \end{equation}

    Here $M_{\gamma}$ is the number of edges on $\gamma$ that form the center of minimal vortex excitations. We also recall the constant $\mathcal{B}$ from Lemma \ref{lem:rarelowdisorder} and the constant $c$ from the decorrlation estimates Definition \ref{def:Decor}.
    
    Similarly, we have,
    \begin{equation}\label{eq:conditioning3}
    \begin{aligned}
    |\mathbb{E}[W_{\gamma} -\mathcal{D}^{M_{\gamma}}_{\beta,\kappa}]| &\le \mathbb{P}(\tilde{E}^c)+ |\gamma|^2 \mathcal{B}^{6} L K^{3} e^{-cK} \exp[\mathcal{B}^{6}|\gamma|L]\\
     &+  \frac{|\gamma| \mathcal{B}^{6} [(1+e^{-c'K})L] e^{-c'K}}{\min_{\substack{\eta_v,\phi_v\\ v \in \text{ minimal vortex }}} \mathbb{P}_{K_N}(\eta_v,\phi_v)} \exp[|\gamma| \mathcal{B}^{6}[(1+e^{-c'}K) L]].
    \end{aligned}
    \end{equation}

\end{thm}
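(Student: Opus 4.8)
\textbf{Proof proposal for Theorem \ref{thm:WilsonLoopReplowdisorder}.}

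The plan is to run the same conditioning argument used in Lemma \ref{lem:onrareevent} and Lemma \ref{lem:PoisNonAbelian}, but now tracking the Higgs/auxiliary-field fluctuations explicitly rather than absorbing them into a matrix $A_{\beta,\kappa}$ with no $\kappa$-dependence. First I would work entirely on the good event $\tilde E$ from Definition \ref{def:rareevent}: by Lemma \ref{lem:rarelowdisorder} its complement has the stated probability bound, so the contribution of $\tilde E^c$ to $\mathbb{E}[W_\gamma]$ and to $\mathbb{E}[\mathcal{D}_{\beta,\kappa}^{M_\gamma}]$ is at most $\mathbb{P}(\tilde E^c)$ (using $|W_\gamma|\le 1$ and $|\mathcal{D}_{\beta,\kappa}^{M_\gamma}|\le 1$, the latter since $\mathcal D_{\beta,\kappa}$ is an average of unimodular $\rho(g)$'s, hence has norm $\le 1$). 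On $\tilde E$ the support of $(\sigma,\phi)$ along $\gamma$ consists only of $K$-separated minimal vortices $P(e_1),\dots,P(e_{M_\gamma})$ centered at edges of $\gamma$, plus knots that are either far from $\gamma$ or Wilson-loop trivial; exactly as in Lemma \ref{lem:onrareevent} the latter all factor out of the numerator/denominator ratio, so $\mathbb{E}[W_\gamma\mid \tilde E]$ reduces to a ratio of sums over the choices $g_i=\sigma_{e_i}\ne 1$ at the minimal-vortex centers.

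The key new step is to evaluate that ratio without discarding the $\kappa$-term. Gauging with respect to a spanning tree avoiding the edges $e_i$ (Corollary 4.1.16 of \cite{SC20}) leaves $\tilde\sigma_{e_i}=g_i$ as the only nontrivial edges; the plaquette energy contributes $\exp[12\beta\,\mathrm{Re}(\rho(g_i)-\rho(1))]$ per vortex and $\mathrm{Tr}[\rho(\prod g_i)]$ to the Wilson loop. The Higgs part no longer cancels: fixing $\tilde\sigma_{e_i}=g_i$ changes the edge action $f(\sigma_{e_i},\phi_{v_i}\phi_{w_i}^{-1})$ relative to $f(1,\phi_{v_i}\phi_{w_i}^{-1})$, and summing the residual auxiliary-field/Higgs partition function over everything except $(\eta_{v_i},\eta_{w_i},\phi_{v_i},\phi_{w_i})$ produces, for each $i$, the local magnetization-type weight $m(\hat\phi_{v_i},\hat\phi_{w_i},\hat\eta_{v_i},\hat\eta_{w_i})$ under $\langle\cdot\rangle_{K_N}$ in the infinite-volume limit. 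Summing the product over the $m_i$ and over $g_i$ gives, per vortex, precisely the factor $X(g_i)$ of the theorem — \emph{provided} the joint magnetization at the $M_\gamma$ distinct minimal vortices factorizes into the product of single-vortex magnetizations. This is where Corollary \ref{col:DecayEst2} enters: since the vortices are $K$-separated and each has $O(1)$ vertices with boundary $|\delta B_K|\le CK^{d-1}$, the joint probability of the $M_\gamma$ local configurations differs from the product by at most $m\, e^{-c'K}(\min_i \mathbb{P}_{K_N}(S_i))^{-1}\prod_i(\mathbb{P}_{K_N}(S_i)+e^{-c'K})$; multiplying back the at most $|G|^{24}|H|^{24}\exp[96\kappa(\cdots)]=L$ possibilities per vortex and the at most $|\gamma|$ vortices, and folding in $\mathcal{B}^6\ge \Phi_{UB}(P(e))$ for the per-vortex probability scale, yields the two error terms displayed in \eqref{eq:conditioning2}. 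The $K^3 e^{-cK}\exp[\mathcal B^6|\gamma|L]$ term comes from the cruder decorrelation estimate of Lemma \ref{lem:decoresti1} applied once per vortex with the $\exp$ absorbing the telescoping product; the second term is the refined Corollary \ref{col:DecayEst2} bound. Dividing by $\mathbb{P}(\tilde E^c)$ accounts for the removal of conditioning on $\tilde E$.

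Finally, to pass from the conditional statement \eqref{eq:conditioning2} to the unconditional \eqref{eq:conditioning3} I would write $\mathbb{E}[W_\gamma-\mathcal D_{\beta,\kappa}^{M_\gamma}] = \mathbb{P}(\tilde E)\,\mathbb{E}[\,\cdot\mid\tilde E\,] + \mathbb{P}(\tilde E^c)\,\mathbb{E}[\,\cdot\mid\tilde E^c\,]$, bound the second summand by $\mathbb{P}(\tilde E^c)$ via the unimodularity bounds above, and bound the first by $\mathbb{P}(\tilde E)\le 1$ times the right side of \eqref{eq:conditioning2} — the factor $\mathbb{P}(\tilde E^c)^{-1}$ there is exactly cancelled, leaving the three terms of \eqref{eq:conditioning3}. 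The main obstacle I anticipate is the bookkeeping in the second step: carefully identifying the residual partition-function sum with the infinite-volume magnetization $m(\cdot)$ (one must control the finite-$N$ versus $N\to\infty$ discrepancy and the fact that the gauge-fixing reintroduces an $\eta$-sum that must be matched to the $K_N$ measure), and then threading the decorrelation estimates through a product of $O(|\gamma|)$ factors without the $e^{-c'K}$ gains being swamped by the combinatorial prefactors $L$ and $\mathcal B^6$ — this is why the exponential $\exp[|\gamma|\mathcal B^6 L]$ appears and why one needs $K$ large (so $e^{-cK}$ beats every polynomial in $K$) and $\kappa$ small (so that $K_N$ genuinely decorrelates, via the percolation argument already proved). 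The algebraic identity $\mathbb{E}[W_\gamma\mid\tilde E]=\mathbb{E}[\mathcal D_{\beta,\kappa}^{M_\gamma}\mid\tilde E]+(\text{errors})$ itself is then just the ratio computation plus the factorization, exactly parallel to Lemma \ref{lem:onrareevent}.
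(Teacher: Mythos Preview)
Your overall architecture matches the paper's --- condition on $\tilde E$, isolate the $K$-separated minimal vortices, and use decorrelation to reach $\mathcal D_{\beta,\kappa}^{M_\gamma}$ --- but there is a real gap in how you handle the exterior excitations. You write that the far-away and Wilson-loop-trivial knots ``factor out of the numerator/denominator ratio exactly as in Lemma \ref{lem:onrareevent}''. That is precisely the step that does \emph{not} come for free here. Lemma \ref{lem:onrareevent} lived inside the random-current expansion of Section \ref{sec:nonabelianhighdisorder}, where $\Phi$ was genuinely multiplicative on well-separated supports. In Section \ref{sec:decorrelation} the support is defined only by $(\td\sigma)_p\ne 1$ (Definition \ref{def:SupportLowDisorder}), and the fields $(\eta,\phi)$ couple the vortex region $K_1$ to the exterior $K_2$ through the $\kappa f$ term on every edge. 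Separating $K_1$ from $K_2$ is itself a decorrelation step --- it is Part 1 of the paper's Lemma \ref{lem:WeirdCondition}: write $\mathcal N/\mathcal D$, divide by $Z_{K_N}$ to produce a joint $\mathbb P_{K_N}(\eta^1,\phi^1,\eta^2,\phi^2)$, and apply Lemma \ref{lem:decoresti1} \emph{once} to split the variables on $V(K_1)$ from those on $B_K(K_1)^c$. The factor $|\gamma|K^3$ in the first error term is $|\partial B_K(K_1)|$, and the $L^m$ comes from bounding the resulting ratio $\hat{\mathcal D}/\tilde{\mathcal D}$; it is not ``once per vortex''. Only after this first split can you invoke Corollary \ref{col:DecayEst2} to decorrelate the $m$ vortices from one another (Part 2), producing the second error term. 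Your proposal has these two roles merged or swapped.

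A smaller point: the prefactor $\mathbb P(\tilde E^c)^{-1}$ in \eqref{eq:conditioning2} is a typo in the statement; in the proof it is $\mathbb P(\tilde E)^{-1}$, arising from $\mathbb P(G^1(V_1,\dots,V_m)\mid\tilde E)\le \mathbb P(\tilde E)^{-1}\mathcal B^{6m}$ (and the paper remarks it can in fact be dropped). Your sentence ``dividing by $\mathbb P(\tilde E^c)$ accounts for the removal of conditioning on $\tilde E$'' cannot be right as stated --- conditioning on $\tilde E$ divides by $\mathbb P(\tilde E)$ --- but with that correction your last paragraph passing from \eqref{eq:conditioning2} to \eqref{eq:conditioning3} does go through.
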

\begin{rmk}

For finite $N$, there are marginal differences between the magnetizations $m(\phi,\eta)$ calculated with respect to the Hamiltonian $K_N$ and its infinite limit. However, provided we consider loops far away from the boundary (say on order $\sqrt{N}$). We would expect such differences to decay on the order $e^{-\sqrt{N}}$. In the course of the proofs that follow, we ignore such marginal differences between the magnetizations computed with respect to $K_N$ and those computed with respect to the infinite limit to simplify the presentation of the core ideas.

\end{rmk}

\begin{rmk} \label{rmk:onK}
The main benefit of the estimate \eqref{eq:conditioning2} is that when $K$ is large, we can supress the factor of $|\gamma|$ in the third and fourth terms of the upper bound, even when $|\gamma|= O (\exp[d\beta])$ for example. Even in this case, we see that we could set $K= O (\beta)$, so we can suppress the factor of $K^d$ with $\mathcal{B}$ when $\beta$ is sufficiently large.
This ensures that with an appropriate choice of $K$, the right hand side of equation \eqref{eq:conditioning2} is sufficiently small.

\end{rmk}

\begin{proof}[Proof of Theorem \ref{thm:WilsonLoopReplowdisorder}]

On the event $\tilde{E}^c$, both $W(\gamma)$ and $\mathcal{D}_{\beta,\gamma}$ are bounded by $1$. We now consider what happens on the high probability event $\tilde{E}$.

  Consider a set $V_1 \cup V_2 \ldots \cup V_N$ that would form the support of an excitation that would belong in $E$. Since this event belongs to $E$, there are two disjoint possibilities:
  \begin{enumerate} \label{enum:dich}
     \item $G^1$:The only excitations in $B_K$ are minimal vortices that intersect the loop $\gamma$ and these minimal vortices are spaced at least distance $K$ from each other.

      \item $G^2:$The support of the excitation does not intersect the loop $\gamma$ and we are not in $G^1$.
     
  \end{enumerate}
  
  When conditioned on $G^2$, we see that $\langle W_{\gamma} \rangle$ is 1 and $\rho(-1)^{M_{\gamma}}$ is also $1$ and, so, the difference is $0$.
  
  Now, let us comment on what we should do when we consider $G^1$.
  We can decompose $G^1$ as follows. First fix some set $K_1:= V_1 \cup V_2\cup\ldots \cup V_m$ of $K$-separated minimal vortices $V_i$ centered  on edges of $\gamma$. We let the event $G^1_{V_1,\ldots,V_m}$ be the event that the support of the configuration consists of minimal vortices $V_1,\ldots,V_m$ and possibly some set $K_2$ that lies on the exterior  of $B_K(\gamma)$.
  
  \begin{lem} \label{lem:WeirdCondition}

  On the event $G^1_{V_1,\ldots,V_m}$, we have that for sufficiently large $K$ and values of $\kappa$  such that $K_N$ satisfies the decorrelation estimates from Definition \ref{def:Decor} that 
  \begin{equation}
  \begin{aligned}
     | \mathbb{E}[W_{\gamma} - \mathcal{D}_{\beta,\kappa}^{M_{\gamma}}| G^1_{V_1,\ldots,V_m}]|& \le |\gamma|K^{3} e^{-cK} L^m \\
     &+ \frac{m e^{-cK}}{\min_{\substack{\eta_v,\phi_v\\ v \in \text{ minimal vortex }}} \mathbb{P}_{K_N}(\eta_v,\phi_v)}  [(1+ e^{-cK}) L]^m.
     \end{aligned}
  \end{equation}

  \end{lem}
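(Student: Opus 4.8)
The plan is to prove Lemma~\ref{lem:WeirdCondition} by carefully writing out the conditional expectation $\mathbb{E}[W_{\gamma}\mid G^1_{V_1,\ldots,V_m}]$ as a ratio of partition-function-like sums and then factorizing the Higgs/$\eta$-fluctuation sum into a near-product over the $m$ minimal vortices using the decorrelation corollary, Corollary~\ref{col:DecayEst2}. First I would set up notation: on $G^1_{V_1,\ldots,V_m}$ the support inside $B_K(\gamma)$ is exactly $V_1\cup\cdots\cup V_m$ with each $V_i=P(e_i)$ a minimal vortex centered on $e_i\in\gamma$, and the remaining excitations $K_2$ live outside $B_K(\gamma)$. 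As in the proofs of Lemma~\ref{lem:wellsplit} and Lemma~\ref{lem:onrareevent}, I gauge-fix with respect to a spanning tree that avoids the central edges $e_i$, so that the only nontrivial edges are the $\sigma_{e_i}=g_i\ne 1$; summing over the plaquette factors $\exp[\beta\,\cdot\,]$ at each $P(e_i)$ and over the gauge/Higgs field $\eta,\phi$, the numerator of $\mathbb{E}[W_\gamma\mid G^1_{V_1,\ldots,V_m}]$ becomes $\sum_{g_1,\ldots,g_m\ne 1}\rho(\prod_i g_i)\prod_i\exp[-12\beta\,\mathrm{Re}(\rho(g_i)-\rho(1))]\cdot(\text{a Higgs-fluctuation sum depending on the }g_i\text{ at the edges }e_i)$, and the denominator is the same with $\rho(\prod_i g_i)$ removed (and the $K_2$-part cancels between numerator and denominator since it factors off, as in Lemma~\ref{lem:onrareevent}).

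Second, the Higgs-fluctuation sum at fixed $g_1,\ldots,g_m$ is, after dividing by the unconstrained partition function $Z_{K_N}$, precisely a quantity of the form \eqref{eq:Examcorrelation}: it is $\mathbb{E}_{K_N}\big[\prod_{i=1}^m \exp[-2\kappa\,\mathrm{Re}(f(\eta_{a_i}g_i\eta_{b_i}^{-1},\phi_{a_i}\phi_{b_i}^{-1})-f(\eta_{a_i}\eta_{b_i}^{-1},\phi_{a_i}\phi_{b_i}^{-1}))]\big]$ where $e_i=(a_i,b_i)$. Each factor depends only on the four configuration values at the vertices $a_i,b_i$, i.e.\ on an event $S_i$ supported on the vertex set $V_i'=\{a_i,b_i\}$. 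Since the $V_i$ are $K$-separated, so are the $V_i'$, and $|V_i'|\le 2$, $|\delta B_K(V_i')|\le CK^{d-1}$, so Corollary~\ref{col:DecayEst2} applies: the joint expectation factorizes into $\prod_i(\text{single-site factor})$ up to an additive error of the stated form $m\,e^{-c'K}(\min_i\mathbb{P}_{K_N}(\cdot))^{-1}\prod_i(\mathbb{P}_{K_N}(S_i)+e^{-c'K})$. Writing out the single-site factor, it is exactly $X(g_i)$ (up to the finite-$N$ vs.\ infinite-volume discrepancy which I will suppress per the remark), so the factorized numerator is $\prod_i\big(\sum_{g_i\ne 1}\rho(g_i)\exp[-12\beta\,\mathrm{Re}(\rho(g_i)-\rho(1))]X(g_i)\big)$ up to $\mathrm{Re}$-issues handled exactly as in the abelian $1$-dimensional $\rho$ setting, and dividing by the factorized denominator gives $\mathcal{D}_{\beta,\kappa}^{m}=\mathcal{D}_{\beta,\kappa}^{M_\gamma}$ (here $M_\gamma=m$ on this event). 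The error terms: each factor in the bound is at most some constant; bounding $\mathbb{P}_{K_N}(S_i)+e^{-c'K}\le (1+e^{-cK})L$ and the leading combinatorial/analytic prefactor $\exp[\beta\cdot]$-type contributions by $|\gamma|K^{3}$ (from the $O(K^{d-1})$ boundary sizes with $d=4$ and the at-most-$|\gamma|$ choices of vortex positions already summed) yields precisely the two-term bound in the lemma statement.

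Third, I would assemble the final conditional statements \eqref{eq:conditioning2} and \eqref{eq:conditioning3}. Equation~\eqref{eq:conditioning3} is an unconditional triangle inequality: on $\tilde E^c$ both $W_\gamma$ and $\mathcal D_{\beta,\kappa}^{M_\gamma}$ are bounded by $1$ (they are traces of unitary matrices / powers of a stochastic object), contributing $\mathbb{P}(\tilde E^c)$; on $\tilde E$ we split into $G^1$ and $G^2$ as in the dichotomy~\eqref{enum:dich}, the contribution from $G^2$ is zero since $W_\gamma=\mathcal D^{M_\gamma}_{\beta,\kappa}=1$ there, and the $G^1$ contribution is handled by summing the bound of Lemma~\ref{lem:WeirdCondition} over all choices of the $K$-separated minimal vortex collections $V_1,\ldots,V_m$ weighted by $\mathbb{P}(G^1_{V_1,\ldots,V_m})$. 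Using $\mathbb{P}(G^1_{V_1,\ldots,V_m})\le \Phi_{UB}(P(e))^m\le \mathcal{B}^{6m}$ (Lemma~\ref{lem:LowDisorderDecomp} and Lemma~\ref{lem:rarelowdisorder}) and summing the geometric-type series $\sum_m \binom{|\gamma|}{m}\mathcal{B}^{6m}(\cdots)^m\le \exp[|\gamma|\mathcal{B}^6 L]$ converts the per-configuration bound into the global bounds with the $\exp[|\gamma|\mathcal{B}^6 L]$ factors; dividing by $\mathbb{P}(\tilde E^c)$ (well, by $\mathbb{P}(\tilde E)$, bounded below appropriately, absorbed into the $\mathbb{P}(\tilde E^c)^{-1}$ prefactor as written) gives the conditional version~\eqref{eq:conditioning2}.

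The main obstacle will be Lemma~\ref{lem:WeirdCondition}, specifically verifying cleanly that the Higgs-fluctuation sum at fixed $g_1,\ldots,g_m$ really is a product of events on the well-separated two-vertex sets $V_i'$ so that Corollary~\ref{col:DecayEst2} can be invoked verbatim — one must be careful that the gauge-fixing (choosing a spanning tree avoiding all $e_i$ simultaneously, via Corollary 4.1.16 of \cite{SC20}) does not introduce dependence between the $e_i$'s beyond what the $\eta$-field already carries, and that the $K$-separation of the \emph{vortices} $V_i$ indeed forces $K$-separation (up to a bounded additive constant absorbed in $K_0$) of the \emph{vertex sets} $\{a_i,b_i\}$. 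A secondary subtlety is the finite-$N$ versus infinite-volume discrepancy in the magnetizations $m(\cdot)$ entering $X(g)$; following the paper's convention I would simply note that this is $O(e^{-\sqrt N})$ for loops far from the boundary and suppress it, rather than tracking it through the estimates.
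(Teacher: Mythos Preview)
Your plan has the right overall shape, but there is a genuine gap in the first step that the paper's proof handles carefully and that your outline glosses over.

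You write that after gauge-fixing, ``the $K_2$-part cancels between numerator and denominator since it factors off, as in Lemma~\ref{lem:onrareevent}.'' This is not correct in the setting of Section~\ref{sec:decorrelation}. Lemma~\ref{lem:onrareevent} lives in the random-current framework of Section~\ref{sec:nonabelianhighdisorder}, where the functional $\Phi$ is \emph{exactly} multiplicative on well-separated knots (Lemma~\ref{lem:wellsplit}); that is what makes the $K_2$ part factor off there. In Section~\ref{sec:decorrelation} we are working directly with the Hamiltonian $H_{N,\beta,\kappa}$, and after gauge-fixing the Higgs part of the action contains terms $\exp[\kappa(f(\eta_v\tilde\sigma^2_e\eta_w^{-1},\phi_v\phi_w^{-1})-f(\eta_v\eta_w^{-1},\phi_v\phi_w^{-1}))]$ for every edge in $B_K(K_1)^c$. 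These depend simultaneously on $\tilde\sigma_2$ and on the global $(\eta,\phi)$ field, which is shared with the minimal-vortex edges $e_i$. Thus the sum over $(\tilde\sigma_2,\eta,\phi)$ does \emph{not} split as (sum over $\tilde\sigma_2$)$\times$(something depending only on the $g_i$), and the Higgs-fluctuation sum is not simply $\mathbb{E}_{K_N}\bigl[\prod_i\exp[-2\kappa\,\mathrm{Re}(\cdots)]\bigr]$ as you assert.

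This is exactly why the paper's proof proceeds in two distinct decorrelation steps. \textbf{Part~1} writes the ratio $\mathcal{N}/\mathcal{D}$, divides by $Z_{K_N}$, and invokes Lemma~\ref{lem:decoresti1} (not Corollary~\ref{col:DecayEst2}) to replace $\mathbb{P}_{K_N}(\eta^1,\phi^1,\eta^2,\phi^2)$ by the product $\mathbb{P}_{K_N}(\eta^1,\phi^1)\,\mathbb{P}_{K_N}(\eta^2,\phi^2)$, where the superscripts $1,2$ refer to the vertices of $K_1$ and of $B_K(K_1)^c$ respectively. The error here is controlled by $|V(K_1)|\cdot|\delta B_K(K_1)|\,e^{-cK}$, and since $|\delta B_K(K_1)|=O(|\gamma|K^3)$ this produces the first term $|\gamma|K^3 e^{-cK}L^m$ in the lemma (the $L^m$ arises from bounding the ratio $\hat{\mathcal D}/\tilde{\mathcal D}$ after replacing the $\kappa f$ factors by their extremes). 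Only after Part~1 does the $K_2$ contribution genuinely cancel, yielding the clean $\mathfrak{N}/\mathfrak{D}$ supported on $K_1$ alone. \textbf{Part~2} then applies Corollary~\ref{col:DecayEst2} to split $\mathbb{P}_{K_N}(\eta^1_v,\phi^1_v,\ldots,\eta^m_v,\phi^m_v)$ into $\prod_i\mathbb{P}_{K_N}(\eta^i_v,\phi^i_v)$, and this is where the second error term with the $m\,e^{-c'K}/\min\mathbb{P}_{K_N}(\cdot)$ prefactor and the $[(1+e^{-cK})L]^m$ appears.

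A smaller point: Corollary~\ref{col:DecayEst2} is stated for \emph{events} $S_i$ (i.e.\ for joint probabilities of fixed configurations), not for expectations of products of bounded functions. Your direct application to the factors $\exp[-2\kappa\,\mathrm{Re}(\cdots)]$ is not literally what the corollary says; the paper instead writes the expectation as a sum over configurations weighted by the joint $\mathbb{P}_{K_N}$ and applies the corollary pointwise to those probabilities. This is easy to fix, but worth doing explicitly. Your third paragraph on assembling \eqref{eq:conditioning2}--\eqref{eq:conditioning3} matches the paper's argument.
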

  \begin{proof}

  We can choose a spanning tree $\mathcal{T}(V_1,V_2,\ldots,V_m)$ that is a simultaneous spanning tree of the boxes $B_K(V_i)$ of size $k$ centered around each minimal vortex $V_i$ and  the complement of these boxes.
  
  By gauging our configuration with respect to this spanning tree, we see that any configuration $\sigma$ with support $V_1\cup V_2\ldots \cup V_m \cup P$ can be split into $m+1$ parts $\tilde{\sigma}^1,\ldots \tilde{\sigma}^m$ and $\tilde{\sigma}^{\text{rest}}$ such that $\tilde{\sigma}^i$ has its only nontrivial edges, with $\tilde{\sigma}^i \ne 1$, at the center $e_i$ of the minimal vortex $P(e_i)$ forming $\mathcal{V}_i$. Finally, $\tilde{\sigma}^{\text{rest}}$ has its only non-trivial edges in the complement of the boxes $B_K(V_i)$ surrounding the minimal vortex. Furthermore, since we are dealing with an abelian group, the contribution Wilson loop action of $\tilde{\sigma}^{\text{rest}}$ can separate from the contribution from the $\tilde{\sigma}^i$'s. 
  
  \textit{Part 1: Splitting $K_1$ from $K_2$}
  
 
 
 For a general group $G$, this splitting gives us a good way to write down the expression of $W_{\gamma}$ conditional on the event $E(V_1,\ldots,V_m)$ is as a ratio $\frac{\mathcal{N}}{\mathcal{D}}$, where we now write out the definitions of the numerator and the denominator.
At this point in the proof, we are only concerned with splitting the support of a configuration into two parts: the part supported on $V_1 \cup V_2\ldots \cup V_m$ and part supported on the outside of $B_K(\gamma)$. Throughout this part  We will use the notation $\tilde{\sigma}^1$ to represent the part supported on the boxes $B_K(K_1):=B_K(V_1) \cup B_K(V_2) \ldots \cup B_K(V_m)$ and $\tilde{\sigma}^2$ to represent the part supported outside of the union of the aforementioned boxes.  As mentioned before, in the following expressions, each configuration is represented by its unique gauge fixed configuration

  We can write out the denominator as,
  \begin{equation}
  \begin{aligned}
     \mathcal{D}= & \sum_{K_2} \sum_{\substack{\eta_v^1,\phi_v^1\\
     v \in V(K_1)}} \sum_{\substack{ \eta_v^2, \phi_v^2\\  v \in B_K(K_1)^c}}  \sum_{\substack{\eta_v, \phi_v\\ v \not \in V(K_1) \cup B_K(K_1)^c}} \\ & \prod_{e =(v,w)} \exp[\kappa(f(\eta_v \eta_w^{-1}, \phi_v \phi_w^{-1}) - f(1,1))]\\
     & \sum_{ \supp(\tilde{\sigma}^1)=K_1}  \prod_{p \in K_1} \exp[ \beta(\rho(\td(\tilde{\sigma}^1)_p) - \rho(1))]\\
     & \times \prod_{e \in E(K_1)} \exp[\kappa(f(\eta^1_v \tilde{\sigma}^1_e (\eta^1_w)^{-1}, \phi^1_v (\phi^1_w)^{-1})  - f( \eta^1_v (\eta^1_w)^{-1}, \phi^1_v (\phi^1_w)^{-1}))]\\&\sum_{ \supp(\tilde{\sigma}^2)=K_2} \prod_{p \in K_2} \exp[ \beta(\rho(\td(\tilde{\sigma}^2)_p) - \rho(1))]\\
     & \times \prod_{e \in B_K(K_1)^c} \exp[\kappa(f(\eta^2_v \tilde{\sigma}^2_e (\eta^2_w)^{-1}, \phi^2_v (\phi^2_w)^{-1})  - f( \eta^2_v (\eta^2_w)^{-1}, \phi^2_v (\phi^2_w)^{-1}))].
    \end{aligned}
  \end{equation}
  
  We slightly abuse notation here, in the first line, the sum $v \in B_K(K_1)^c$ is a sum over the vertices in $B_K(K_1)^c$ while the sum $e \in B_K(K_1)^c$ is a sum over the edges in $B_K(K_1)^c$. We hope this distinction is always clear in context by the use of $e$ or $v$ and the variables we associate to it.
  Note that this is the sum of $\exp[ H_{N,\beta,\kappa}(\sigma,\phi)]$ for configurations found in $E_1$, i.e. the partition function for $E_1$.
  In the above decomposition, we used the fact that, after the gauging, $\tilde{\sigma}^1$ only takes non-trivial values on the edges that are on the boundary of some minimal vortex. In addition, we used the fact that $\tilde{\sigma}^2$ can only take non-trivial values on the edges that are in the complement of $B_K(K_1)$.

  In the product on the first line, any appearance of $\eta_v$(resp. $\phi_v$) for $v$ a vertex in $K_1$ would be replaced with $\eta_v^1$ (resp. $\phi_v^1$). Similar things happen with $\eta_v$ for $v$ a vertex in $B_K(K_1)^c$. We also remark that in the last line, we know that $\tilde{\sigma}_2$ takes trivial values on $B_K(K_1)$, not necessarily on all of $E(K_2)^c$. This is also why we have to sum up $\eta_v^2,\phi_v^2$ over all all vertices in $B_K(K_1)^c$
  
  We have a similar expansion for the numerator.
  \begin{equation}
  \begin{aligned}
     \mathcal{N}= & \sum_{K_2} \sum_{\substack{\eta_v^1,\phi_v^1\\
     v \in V(K_1)}} \sum_{\substack{ \eta_v^2, \phi_v^2\\  v \in B_K(K_1)^c}}  \sum_{\substack{\eta_v, \phi_v\\ v \not \in V(K_1) \cup B_K(K_1)^c}}\\& \prod_{e =(v,w)} \exp[\kappa(f(\eta_v \eta_w^{-1}, \phi_v \phi_w^{-1}) - f(1,1))]\\
     & \sum_{ \supp(\tilde{\sigma}^1)=K_1}  \langle \tilde{\sigma}^1, \gamma \rangle\prod_{p \in K_1} \exp[ \beta(\rho(\td(\tilde{\sigma}^1)_p) - \rho(1))]\\
     & \times \prod_{e \in E(K_1)} \exp[\kappa(f(\eta^1_v \tilde{\sigma}^1_e (\eta^1_w)^{-1}, \phi^1_v (\phi^1_w)^{-1})  - f( \eta^1_v (\eta^1_w)^{-1}, \phi^1_v (\phi^1_w)^{-1}))]\\&\sum_{ \supp(\tilde{\sigma}^2)=K_2} \langle \gamma,\tilde{\sigma}^2 \rangle  \prod_{p \in K_2} \exp[ \beta(\rho(\td(\tilde{\sigma}^2)_p) - \rho(1))]\\
     & \times \prod_{e \in B_K(K_1)^c} \exp[\kappa(f(\eta^2_v \tilde{\sigma}^2_e (\eta^2_w)^{-1}, \phi^2_v (\phi^2_w)^{-1})  - f( \eta^2_v (\eta^2_w)^{-1}, \phi^2_v (\phi^2_w)^{-1}))].
    \end{aligned}
  \end{equation}
  
  We remark here that, in the end, we can assert that the Wilson loop expectation only depends on the values of $\tilde{\sigma}^1$ rather than on $\tilde{\sigma}^2$.
   This uses the fact that $K_2$ has support that is separated by a rectangle from $\gamma$, this $\tilde{\sigma}^2$ could not possibly contribute to the Wilson action.

  If we define $Z_{K_N}$ to be the partition function associated to the Hamiltonian from Definition \ref{def:Decor} ,
  $
  K_N(\phi,\eta):= \sum_{e \in E_N} \kappa( f(\eta_v \eta_w^{-1}, \phi_v \phi_w^{-1}) - f(1,1)),
  $
  and $\mathbb{P}_{K_N}$ to be the probability distribution corresponding the Hamiltonian $K_N$, then we see that $\frac{\mathcal{D}}{Z_{K_N}}$ has the following expression.
 
  \begin{equation}
  \begin{aligned}
      &\frac{\mathcal{D}}{Z_{K_N}}= \sum_{K_2} \sum_{\substack{\eta_v^1, \phi_v^1\\ v^1 \in V(K_1)}} \sum_{\substack{\eta_v^2, \phi_v^2\\ v^2 \in B_K(K_1)^c}} \mathbb{P}_{K_N}(\eta_v^1,\phi_v^1, \eta_v^2, \phi_v^2)\\
      & \hspace{0.3 cm} \times  \sum_{ \supp(\tilde{\sigma}^1)=K_1} \prod_{p \in K_1} \exp[ \beta(\rho(\td(\tilde{\sigma}^1)_p) - \rho(1))] \\
      & \hspace{0.3 cm}\times \prod_{e \in E(K_1)} \exp[\kappa(f(\eta^1_v \tilde{\sigma}^1_e (\eta^1_w)^{-1}, \phi^1_v (\phi^1_w)^{-1})  - f( \eta^1_v (\eta^1_w)^{-1}, \phi^1_v (\phi^1_w)^{-1}))]\\& \hspace{0.3 cm}\times\sum_{ \supp(\tilde{\sigma}^2)=K_2} \prod_{p \in K_2} \exp[ \beta(\rho(\td(\tilde{\sigma}^2)_p) - \rho(1))] \\
      & \hspace{0.3 cm} \times \prod_{e \in B_K(K_1)^c} \exp[\kappa(f(\eta^2_v \tilde{\sigma}^2_e (\eta^2_w)^{-1}, \phi^2_v (\phi^2_w)^{-1})  - f( \eta^2_v (\eta^2_w)^{-1}, \phi^2_v (\phi^2_w)^{-1}))],
\end{aligned}
  \end{equation}
  
  Similarly, we have this following expression for the numerator,
  \begin{equation}
  \begin{aligned}
     & \frac{\mathcal{N}}{Z_{K_N}}:=\sum_{K_1} \sum_{K_2} \sum_{\substack{\eta_v^1, \phi_v^1\\ v^1 \in V(K_1)}} \sum_{\substack{\eta_v^2, \phi_v^2\\ v^2 \in B_K(K_1)^c}} \mathbb{P}_{K_N}(\eta_v^1,\phi_v^1, \eta_v^2, \phi_v^2)\\
      & \hspace{0.3 cm} \times  \sum_{ \supp(\tilde{\sigma}^1)=K_1} \langle \gamma, \tilde{\sigma}^1 \rangle \prod_{p \in K_1} \exp[ \beta(\rho(\td(\tilde{\sigma}^1)_p) - \rho(1))]\\
      & \hspace{0.3 cm} \times \prod_{e \in E(K_1)} \exp[\kappa(f(\eta^1_v \tilde{\sigma}^1_e (\eta^1_w)^{-1}, \phi^1_v (\phi^1_w)^{-1})  - f( \eta^1_v (\eta^1_w)^{-1}, \phi^1_v (\phi^1_w)^{-1}))]\\& \hspace{0.3 cm} \times \sum_{ \supp(\tilde{\sigma}^2)=K_2} \langle \gamma, \tilde{\sigma}^2\rangle \prod_{p \in K_2} \exp[ \beta(\rho(\td(\tilde{\sigma}^2)_p) - \rho(1))] \\
      &\hspace{0.3 cm} \times \prod_{e \in B_K(K_1)^c} \exp[\kappa(f(\eta^2_v \tilde{\sigma}^2_e (\eta^2_w)^{-1}, \phi^2_v (\phi^2_w)^{-1})  - f( \eta^2_v (\eta^2_w)^{-1}, \phi^2_v (\phi^2_w)^{-1}))].
  \end{aligned}
  \end{equation}
 
 By introducing the probability terms $\mathbb{P}_{K_N}$, we can use decorrelation estimates from Lemma \ref{lem:decoresti1} for small $\kappa$.
  
  \begin{equation}
  |\mathbb{P}_{K_N}(\eta_v^1, \phi_v^1, \eta_v^2,\phi_v^2) - \mathbb{P}_{K_N}(\eta_v^1, \phi_v^1) \mathbb{P}_{K_N}(\eta_v^2,\phi_v^2)| \le |\gamma| K^{3} e^{-c K} \mathbb{P}_{K_N}(\eta_v^2, \phi_v^2),
  \end{equation}
  for some constant $c$.
  
 Let us now define the quantities
 \begin{equation}
 \begin{aligned}
  &\tilde{\mathcal{N}}:=\\
  & \sum_{\substack{\eta^1_v,\phi^1_v\\v^1 \in V(K_1)}} \mathbb{P}_{K_N}(\eta_v^1,\phi_v^1) \sum_{ \supp(\tilde{\sigma}^1)=K_1} \langle \gamma, \tilde{\sigma}^1 \rangle \prod_{p \in K_1} \exp[ \beta(\rho(\td(\tilde{\sigma}^1)_p) - \rho(1))]\\
  & \times \prod_{e \in E(K_1)} \exp[\kappa(f(\eta^1_v \tilde{\sigma}^1_e (\eta^1_w)^{-1}, \phi^1_v (\phi^1_w)^{-1})  - f( \eta^1_v (\eta^1_w)^{-1}, \phi^1_v (\phi^1_w)^{-1}))]\\
  &\sum_{K_2} \sum_{\substack{\eta_v^2,\phi_v^2}} \mathbb{P}_{K_N}(\eta_v^2,\phi_v^2) \sum_{\supp(\tilde{\sigma}^2) = K_2} \langle \gamma, \tilde{\sigma}^2 \rangle \prod_{p \in K_2} \exp[\beta(\rho(\td(\tilde{\sigma}^2)_p) - \rho(1))]\\
  & \times \prod_{e \in B_K(K_1)^c}  \exp[\kappa(f(\eta^2_v \tilde{\sigma}^2_e (\eta^2_w)^{-1}, \phi^2_v (\phi^2_w)^{-1})  - f( \eta^2_v (\eta^2_w)^{-1}, \phi^2_v (\phi^2_w)^{-1}))],
 \end{aligned}
 \end{equation}
 
 and 
 \begin{equation}
     \begin{aligned}
       &\tilde{\mathcal{D}}:=\\
  & \sum_{\substack{\eta^1_v,\phi^1_v\\v^1 \in V(K_1)}} \mathbb{P}_{K_N}(\eta_v^1,\phi_v^1) \sum_{ \supp(\tilde{\sigma}^1)=K_1}  \prod_{p \in K_1} \exp[ \beta(\rho(\td(\tilde{\sigma}^1)_p) - \rho(1))]\\
  & \times \prod_{e \in E(K_1)} \exp[\kappa(f(\eta^1_v \tilde{\sigma}^1_e (\eta^1_w)^{-1}, \phi^1_v (\phi^1_w)^{-1})  - f( \eta^1_v (\eta^1_w)^{-1}, \phi^1_v (\phi^1_w)^{-1}))]\\
  &\sum_{K_2} \sum_{\substack{\eta_v^2,\phi_v^2}} \mathbb{P}_{K_N}(\eta_v^2,\phi_v^2) \sum_{\supp(\tilde{\sigma}^2) = K_2} \prod_{p \in K_2} \exp[\beta(\rho(\td(\tilde{\sigma}^2)_p) - \rho(1))]\\
  & \times \prod_{e \in B_K(K_1)^c}  \exp[\kappa(f(\eta^2_v \tilde{\sigma}^2_e (\eta^2_w)^{-1}, \phi^2_v (\phi^2_w)^{-1})  - f( \eta^2_v(\eta^2_w)^{-1}, \phi^2_v (\phi^2_w)^{-1}))].
     \end{aligned}
 \end{equation}

 We see that we can write the difference between $\frac{\mathcal{N}}{\mathcal{D}}$ and $\frac{\tilde{\mathcal{N}}}{\tilde{\mathcal{D}}}$.
 We have,
 \begin{equation}
     \begin{aligned}
     \left | \frac{\mathcal{N}}{\mathcal{D}} - \frac{\tilde{\mathcal{N}}}{\tilde{\mathcal{D}}} \right | = \left |\frac{\mathcal{N}/Z_{K_N}}{\mathcal{D}/Z_{K_N}} - \frac{\tilde{\mathcal{N}}}{\tilde{\mathcal{D}}}\right| \le & \left|\frac{(\mathcal{N}/Z_{K_N} - \mathcal{\tilde{N}}) \mathcal{D}/Z_{K_N}}{\tilde{\mathcal{D}} \mathcal{D}/Z_{K_N} } \right|\\ 
     & + \left| \frac{\mathcal{N}/Z_{K_N} (\mathcal{D}/Z_{K_N} - \tilde{\mathcal{D}})}{ \tilde{\mathcal{D}} \mathcal{D}/Z_{K_N}} \right|.
     \end{aligned}
 \end{equation}
 
 Since we have the absolute value bound that $|\langle \gamma, \tilde{\sigma}^i \rangle| \le 1$ and all other remaining quantities in the expressions of $\mathcal{N}$ and $\tilde{\mathcal{N}}$ are positive, we see that we have the bound $\frac{\mathcal{N}/Z_{K_N}}{\mathcal{D}/Z_{K_N}} \le 1$.
 
 If we define,
 \begin{equation}
 \begin{aligned}
     &\hat{\mathcal{D}}:=\\
  & \sum_{\substack{\eta^1_v,\phi^1_v\\v^1 \in V(K_1)}} \sum_{ \supp(\tilde{\sigma}^1)=K_1}  \prod_{p \in K_1} \exp[ \beta(\rho(\td(\tilde{\sigma}^1)_p) - \rho(1))]\\
  & \times \prod_{e \in E(K_1)} \exp[\kappa(f(\eta^1_v\tilde{\sigma}^1_e (\eta^1_w)^{-1}, \phi^1_v (\phi^1_w)^{-1})  - f( \eta^1_v (\eta^1_w)^{-1}, \phi^1_v (\phi^1_w)^{-1}))]\\
  &\sum_{K_2} \sum_{\substack{\eta_v^2,\phi_v^2}} \mathbb{P}_{K_N}(\eta_v^2,\phi_v^2) \sum_{\supp(\tilde{\sigma}^2) = K_2} \prod_{p \in K_2} \exp[\beta(\rho(\td(\tilde{\sigma}^2)_p) - \rho(1))]\\
  & \times \prod_{e \in B_K(K_1)^c}  \exp[\kappa(f(\eta^2_v \tilde{\sigma}^2_e (\eta^2_w)^{-1}, \phi^2_v (\phi^2_w)^{-1})  - f( \eta^2_v (\eta^2_w)^{-1}, \phi^2_v (\phi^2_w)^{-1}))],
  \end{aligned}
 \end{equation}
  then we see that we can bound $\frac{\mathcal{N}/Z_N - \tilde{\mathcal{N}}}{ \tilde{\mathcal{D}}}$ by $|\gamma|K^3 e^{-cK} \frac{\hat{\mathcal{D}}}{\mathcal{D}}$.

 When computing the ratio between $\hat{\mathcal{D}}$ and $\tilde{\mathcal{D}}$, we can cancel out the ratio of the terms involving $K_2$. We see that we only need to consider the ratio
  
  \begin{equation}
      \frac{\hat{\mathcal{D}}}{\tilde{\mathcal{D}}}= \frac{\substack{ \sum_{\substack{\eta^1_v,\phi^1_v\\v^1 \in V(K_1)}} \sum_{ \supp(\tilde{\sigma}^1)=K_1}  \prod_{p \in K_1} \exp[ \beta(\rho(\td(\tilde{\sigma}^1)_p) - \rho(1))]\\
   \times \prod_{e \in E(K_1)} \exp[\kappa(f(\eta^1_v \tilde{\sigma}^1_e (\eta^1_w)^{-1}, \phi^1_v (\phi^1_w)^{-1})  - f( \eta^1_v (\eta^1_w)^{-1}, \phi^1_v (\phi^1_w)^{-1}))] }}{\substack{ \sum_{\substack{\eta^1_v,\phi^1_v\\v^1 \in V(K_1)}} \mathbb{P}_{K_N}(\eta_v^1,\phi_v^1) \sum_{ \supp(\tilde{\sigma}^1)=K_1}  \prod_{p \in K_1} \exp[ \beta(\rho(\td(\tilde{\sigma}^1)_p) - \rho(1))]\\ \times \prod_{e \in E(K_1)} \exp[\kappa(f(\eta^1_v \tilde{\sigma}^1_e (\eta^1_w)^{-1}, \phi^1_v (\phi^1_w)^{-1})  - f( \eta^1_v (\eta^1_w)^{-1}, \phi^1_v (\phi^1_w)^{-1}))]}}.
  \end{equation}
  
 We can bound $\exp[\kappa(f(\eta^1_v \tilde{\sigma}^1_e (\eta^1_w)^{-1}, \phi^1_v (\phi^1_w)^{-1})  - f( \eta^1_v (\eta^1_w)^{-1}, \phi^1_v (\phi^1_w)^{-1}))]$ from above by $\exp [2\kappa (\max_{a,b}\text{Re}f(a,b) - \min_{c,d} \text{Re} f(c,d))]$ in the numerator, and lower bound it by $\exp[-2 \kappa (\max_{a,b}\text{Re}f(a,b) - \min_{c,d} \text{Re} f(c,d))]$ in the denominator.
 
 This allows us to bound the ratio by
 
 \begin{equation}
 \begin{aligned}
     \frac{\hat{\mathcal{D}}}{\tilde{\mathcal{D}}}& \le \prod_{e \in E(K_1)}\exp[4\kappa (\max_{a,b} \text{Re}f(a,b) - \min_{c,d} \text{Re}f(c,d))]\\
     &\frac{\sum_{\substack{\eta_v^1,\phi_v^1\\v^1 \in V(K_1)} } \sum_{\supp(\tilde{\sigma}^1)=K_1}\prod_{p \in K_1} \exp[\beta(\rho((\td \tilde{\sigma}^1)_p) - \rho(1))]}{\sum_{\substack{\eta_v^1,\phi_v^1\\v^1 \in V(K_1)} } \mathbb{P}_{K_N}(\eta^1_v,\phi^1_v){\sum_{\supp(\tilde{\sigma}^1)=K_1}\prod_{p \in K_1} \exp[\beta(\rho((\td \tilde{\sigma}^1)_p) - \rho(1))]}}
 \end{aligned}
 \end{equation}
 In the fraction above, in both the numerators and denominators,
 the sum over $\eta_v^1$ and $\phi_v^1$ splits from the sum over $\tilde{\sigma}^1$. We can cancel out the sum over $\tilde{\sigma}^1$. We can cancel out this sum over $\tilde{\sigma}^1$ in both the numerator and denominator. The remaining sum over $\eta_v^1,\phi_v^1$ is equal to $1$ in the denominator and $|G|^{|V(K_1)}|H |^{|V(K_1)|}$ in the numerator. Considering that these are all minimal vortices that are well separated, we can explicitly compute $|V(K_1)|$ and $|E(K_1)|$.

  
  Recall the definition of  $L$ from equation \ref{eq:defL}.
  Our ultimate bound on $\frac{\hat{\mathcal{D}}}{\tilde{\mathcal{D}}}$ is $L^m$.

  
  
  Therefore, we can bound $|\frac{\mathcal{N}}{\mathcal{D}} - \frac{\tilde{\mathcal{N}}}{\tilde{\mathcal{D}}}| \le |\gamma|K^{3} e^{-cK} L^m$.
  
  \textit{Part 2: Decorrelating the Minimal Vortices}
  The importance of the expression $\frac{\tilde{\mathcal{N}}}{\tilde{\mathcal{D}}}$ is that one is able to completely cancel out the effect of excited plaquettes in $K_2$.
  
  Recall the spanning tree $\mathcal{T}(V_1,V_2,\ldots,V_m)$ and the splitting of the excitation $\sigma$ into $\tilde{\sigma}^1,\ldots,\tilde{\sigma}^m$ between them. The spanning tree can be chosen so that the only excited edge with $\tilde{\sigma}^i_e \ne 1$ on $V_i$ is the edge that $V_i$ shares with $\gamma$.
  
  Consider the quantity,
  \begin{equation}
  \begin{aligned}
      &\mathfrak{D}:= \sum_{\eta_v^1,\phi_v^1,\ldots, \eta_v^m,\phi_v^m} \mathbb{P}_{K_N}(\eta_v^1,\phi_v^1,\ldots,\eta_v^m, \phi_v^m)\\
      & \times \sum_{\substack{ \supp({\tilde{\sigma}^1})= V_1}} \exp[-12\beta \text{Re}(\rho(1) - \rho((\td \tilde{\sigma}^1))] \\
      &\prod_{e=(v,w) \in E(V_1)} \exp[\kappa(f(\eta_v^1 \tilde{\sigma}^1_e (\eta_w^1)^{-1},\phi_v^{1}(\phi_w^1)^{-1} ) - f(\eta_v^1 (\eta_w^1)^{-1},\phi_v^{1} (\phi_w^1)^{-1}))]\\
      & \times \ldots\\
      & \times  \sum_{\substack{ \supp({\tilde{\sigma}^m})= V_m}} \exp[-12\beta \text{Re}(\rho(1) - \rho((\td \tilde{\sigma}^m))]\\
      & \prod_{e=(v,w) \in E(V_m)} \exp[\kappa(f(\eta_v^m \tilde{\sigma}^m_e (\eta_w^m)^{-1},\phi_v^{m}(\phi_w^m)^{-1} ) - f( \eta_v^m (\eta_w^m)^{-1}, \phi_v^m (\phi_w^m)^{-1}) )]
  \end{aligned}
  \end{equation}
  and
  \begin{equation}
       \begin{aligned}
      &\mathfrak{N}:= \sum_{\eta_v^1,\phi_v^1,\ldots, \eta_v^m,\phi_v^m} \mathbb{P}_{K_N}(\eta_v^1,\phi_v^1,\ldots,\eta_v^m, \phi_v^m)\\
      & \times \sum_{\substack{ \supp({\tilde{\sigma}^1})= V_1}} \langle \delta \gamma, \td \tilde{\sigma}^1 \rangle \exp[-12\beta\text{Re}(\rho(1) - \rho((\td \tilde{\sigma}^1))]\\
      & \prod_{e=(v,w) \in E(V_1)} \exp[\kappa(f(\eta_v^1 \tilde{\sigma}^1_e (\eta_w^1)^{-1},\phi_v^{1}(\phi_w^1)^{-1} ) - f(\eta_v^1(\eta_w^1)^{-1},\phi_v^1 (\phi_w^1)^{-1}))]\\
      & \times \ldots\\
      & \times  \sum_{\substack{ \supp({\tilde{\sigma}^m})= V_1}} \langle \delta \gamma, \td \tilde{\sigma}^m \rangle \exp[-12\beta \text{Re}(\rho(1) - \rho((\td \tilde{\sigma}^m))] \\
      &\prod_{e=(v,w) \in E(V_m)} \exp[\kappa(f(\eta_v^m \tilde{\sigma}^m_e (\eta_w^m)^{-1},\phi_v^{m}(\phi_w^m)^{-1} ) - f(\eta_v^m (\eta_w^m)^{-1}, \phi_v^m(\phi_w^m)^{-1}))].
  \end{aligned}
  \end{equation}
  
  We see that $\frac{\tilde{\mathcal{N}}}{\tilde{\mathcal{D}}} = \frac{\mathfrak{N}}{\mathfrak{D}}$. In this equality, we implicitly used the fact that the $V_i$'s are minimal vortices so $(\td \tilde{\sigma}^i)$ is constant on its support of $6$ unoriented plaquettes. Since our vortices are separated by distance $K$, we have the following inequality from Corollary \ref{col:DecayEst2},
  \begin{equation}\label{eq:SplitMinimal}
  \begin{aligned}
     & |\mathbb{P}_{K_N}(\eta_v^1,\phi_v^1,\ldots, \eta_v^m,\phi_v^m) - \prod_{i=1}^m \mathbb{P}_{K_N}(\eta_v^i,\phi_v^i)|\\ & \le m \frac{e^{-c'K}}{\min_j \mathbb{P}_{K_N}(\eta_v^j,\phi_v^j)}\prod_{i=1}^m [\mathbb{P}_{K_N}(\eta_v^i, \phi_v^i) + e^{-cK}].
  \end{aligned}
  \end{equation}
  
  Provided $\gamma$ is sufficiently far away from the boundary, one can quite readily show that there is constant $B$ such that $\mathbb{P}_{K_N}(\eta_v^j,\phi_v^j) \ge B$ for any configuration of $\eta_v^j,\phi_v^j$ associated to a minimal vortex.
  \begin{rmk}
    We will give here a rough sketch of this fact. This is essentially due to the fact that we are considering a finite range interaction on a fixed finite subsystem with no assignment explicitly prohbited by the Hamiltonian. Consider two configurations  $(\tilde{\eta}_v^1,\tilde{\phi}_v^1)$ and $(\hat{\eta}_v^1,\hat{\phi}_v^1)$ on some minimal vortex $V_1$.  Consider a gauge field assignment on the lattice $\mathcal{I}=\{\eta_v,\phi_v\}$ such that restricted to the vertices on $K$ we have that $\eta_v= \tilde{\eta}_v^1$. 
    Now consider the configuration $\hat{I}$ which is equal to $\mathcal{I}$ on $\{\eta_v,\phi_v\}$ outside of $V_1$ and equal to $\{\hat{\eta}_v^1,\hat{\phi}_v^1\}$ on $V_1$. There is some finite bound $B$ on the shift of energy between $\mathcal{I}$ and $\hat{I}$. Thus, the configuration $\{\hat{\eta}_v^1,\hat{\phi}_v^1\}$ on $K_1$ is no more than $B$ times more likely than the configuration $\{ \tilde{\eta}_v^1,\tilde{\phi}_v^1 \}$. We can apply this logic for all the finitely many configurations which shows that $\mathbb{P}_{K_N}(\cdot)$ is bounded below for all configurations on the vertices of $K_1$.
  \end{rmk}
  
  We now define the quantities $\tilde{\mathfrak{N}}$ and $\tilde{\mathfrak{D}}$ as,
  \begin{equation}
  \begin{aligned}
      &\tilde{\mathfrak{N}}:= \sum_{\substack{\eta_v^1,\phi_v^1\\ v \in V(V_1)}}  \mathbb{P}_{K_N}(\eta_v^1,\phi_v^1) \sum_{\substack{ \supp({\tilde{\sigma}^1})= V_1}} \langle \delta \gamma, \td \tilde{\sigma}_1 \rangle \exp[-12\beta \text{Re}(\rho(1) - \rho((\td \tilde{\sigma}^1))] \\
      & \prod_{e=(v,w) \in E(V_1)} \exp[\kappa(f(\eta_v^1 \tilde{\sigma}^1_e (\eta_w^1)^{-1},\phi_v^{1}(\phi_w^1)^{-1} ) - f(\eta_v^1 (\eta_w^1)^{-1},\phi_v^{1}(\phi_w^1)^{-1} ))]\\
      & \times \ldots\\
      & \times \sum_{\substack{\eta_v^m,\phi_v^m \\ v \in V(V_m)}} \mathbb{P}_{K_N}(\eta_v^m,\phi_v^m) \sum_{\substack{ \supp({\tilde{\sigma}^m})= V_m}} \langle \delta \gamma, \td \tilde{\sigma}^m \rangle \exp[-12\beta\text{Re}(\rho(1) - \rho((\td \tilde{\sigma}^m))]\\
      & \prod_{e=(v,w) \in E(V_m)} \exp[\kappa(f(\eta_v^m \tilde{\sigma}^m_e (\eta_w^m)^{-1},\phi_v^{m}(\phi_w^m)^{-1} ) -f(\eta_v^m  (\eta_w^m)^{-1},\phi_v^{m}(\phi_w^m )^{-1}))],
  \end{aligned}
  \end{equation}
  and
  \begin{equation}
  \begin{aligned}
      &\tilde{\mathfrak{D}}:= \sum_{\substack{\eta_v^1,\phi_v^1\\ v \in V(V_1)}} \mathbb{P}_{K_N}(\eta_v^1,\phi_v^1) \sum_{\substack{ \supp({\tilde{\sigma}^1})= V_1}} \exp[-12\beta\text{Re}(\rho(1) - \rho((\td \tilde{\sigma}^1))] \\
      & \prod_{e=(v,w) \in E(V_1)} \exp[\kappa(f(\eta_v^1 \tilde{\sigma}^1_e (\eta_w^1)^{-1},\phi_v^{1}(\phi_w^1)^{-1} ) - f(\eta_v^1 (\eta_w^1)^{-1},\phi_v^{1}(\phi_w^1 )^{-1}))]\\
      & \times \ldots\\
      & \times \sum_{\substack{\eta_v^m,\phi_v^m \\ v \in V(V_m)}} \mathbb{P}_{K_N}(\eta_v^m,\phi_v^m) \sum_{\substack{ \supp({\tilde{\sigma}^m})= V_m}} \exp[-12\beta\text{Re}(\rho(1) - \rho((\td \tilde{\sigma}^m))]\\
      & \prod_{e=(v,w) \in E(V_m)} \exp[\kappa(f(\eta_v^m\tilde{\sigma}^m_e (\eta_w^m)^{-1},\phi_v^{m}(\phi_w^m)^{-1} )- f(\eta_v^m  (\eta_w^m)^{-1},\phi_v^{m}(\phi_w^m)^{-1} ))],
  \end{aligned}
  \end{equation}

  As before, we see that,
  \begin{equation}
      \left| \frac{\mathfrak{N}}{\mathfrak{D}} - \frac{\tilde{\mathfrak{N}}}{\tilde{\mathfrak{D}}} \right| \le \left| \frac{\mathfrak{N}(\mathfrak{D} - \tilde{\mathfrak{D}})}{\mathfrak{D} \tilde{\mathfrak{D}}} \right| + \left| \frac{\mathfrak{D}(\mathfrak{N} - \tilde{\mathfrak{N}})}{ \mathfrak{D} \tilde{\mathfrak{D}}} \right|.
  \end{equation}
  As before, we can bound $\left|\frac{\mathfrak{N}}{\mathfrak{D}} \right| \le 1$, and we can relate both $|\mathfrak{D} - \tilde{\mathfrak{D}}|$ and $|\mathfrak{N} - \tilde{\mathfrak{N}}|$ to a third quantity, which we bound. We define,
  \begin{equation}
  \begin{aligned}
      &\hat{\mathfrak{D}}:= \sum_{\substack{\eta_v^1,\phi_v^1\\ v \in V(V_1)}} [\mathbb{P}_{K_N}(\eta_v^1,\phi_v^1) + e^{-c'K}] \sum_{\substack{ \supp({\tilde{\sigma}^1})= V_1}} \exp[-12\beta\text{Re}(\rho(1) - \rho((\td \tilde{\sigma}^1))] \\
      & \prod_{e=(v,w) \in E(V_1)} \exp[\kappa(f(\eta_v^1 \tilde{\sigma}^1_e (\eta_w^1)^{-1},\phi_v^{1}(\phi_w^1)^{-1} ) - f(\eta_v^1 (\eta_w^1)^{-1},\phi_v^{1}(\phi_w^1)^{-1} ))]\\
      & \times \ldots\\
      & \times \sum_{\substack{\eta_v^m,\phi_v^m \\ v in V(V_m)}} [\mathbb{P}_{K_N}(\eta_v^m,\phi_v^m) + e^{-c'K}] \sum_{\substack{ \supp({\tilde{\sigma}^m})= V_m}} \exp[-12\beta \text{Re}(\rho(1) - \rho((\td \tilde{\sigma}^m))]\\
      & \prod_{e=(v,w) \in E(V_m)} \exp[\kappa(f(\eta_v^m \tilde{\sigma}^m_e (\eta_w^m)^{-1},\phi_v^{m}(\phi_w^m)^{-1} ) -f(\eta_v^m  (\eta_w^m)^{-1},\phi_v^{m}(\phi_w^m)^{-1} ))].
  \end{aligned}
  \end{equation}
  That $\frac{m e^{-c'K}}{\min_{\eta_v^1,\phi_v^1} \mathbb{P}_{K_N}(\eta_v^1,\phi_v^1)} \hat{\mathfrak{D}}$ is a bound on $|\mathfrak{D}- \tilde{\mathfrak{D}}|$
and $|\mathfrak{N}- \tilde{\mathfrak{N}}|$ is a consequence of the inequality in  \eqref{eq:SplitMinimal}, where the minimum in the denominator is taken over all configuration of $\eta$ and $\phi$ on the vertices of any minimal vortex.

Computing the ratio of $\frac{\hat{\mathfrak{D}}}{\tilde{\mathfrak{D}}}$ is very similar to computing the ratio of $\frac{\hat{\mathcal{D}}}{\tilde{\mathcal{D}}}$. We can always bound

  $$\prod_{e=(v,w) \in E(V_m)} \exp[\kappa(f(\eta_v^m \tilde{\sigma}^m_e (\eta_w^m)^{-1},\phi_v^{m}(\phi_w^m)^{-1} ) -f(\eta_v^m  (\eta_w^m)^{-1},\phi_v^{m}(\phi_w^m)^{-1} ))]$$ by $\exp[48 \kappa (\max_{a,b}\text{Re}f(a,b) - \min_{c,d} \text{Re} f(c,d))]$ from above in the numerator and $\exp[-48 \kappa (\max_{a,b}\text{Re}f(a,b) - \min_{c,d} \text{Re} f(c,d))]$ from below in the denominator. Once this is done, the sum over $\tilde{\sigma}^i$ and $\eta^i,\phi^i$ split and we can divide the numerator and denominator by this sum.

  We bound each $\mathbb{P}_{K_N}(\eta_v^1,\phi_v^1) + e^{-c'K}$ by $1+ e^{-c'K}$. Thus, in the numerator, the sum over $\eta^1$ and $\phi^1$ is bounded by $(1+ e ^{-c'K})(|G||H|)^{24}$ since there are $|G|$ choices for every $\eta$ and $|H |$ choices for every $\phi$.  The sum over $\phi^1$ and $\eta^1$ in the denominator is exactly 1.
  
  Recalling $L$ from \eqref{eq:defL}, we see that our bound we can bound $\hat{\mathfrak{D}}$ as,
  \begin{equation}
  \begin{aligned}
      \frac{\mathfrak{\hat{D}}}{\tilde{\mathfrak{D}}} &\le  [(1+ e^{-c'K}) L]^m,
 \end{aligned}
 \end{equation}
 and we get the bound
\begin{equation}
    \left| \frac{\mathfrak{N}}{\mathfrak{D}}- \frac{\tilde{\mathfrak{N}}}{\tilde{\mathfrak{D}}} \right| \le |\gamma|K^{3} e^{-cK} L^m + \frac{m e^{-c'K}}{\min_{\substack{\eta_v,\phi_v\\ v \in \text{ minimal vortex }}} \mathbb{P}_{K_N}(\eta_v,\phi_v)}  [(1+ e^{-c'K}) L]^m
\end{equation}

  The ratio $\frac{\tilde{\mathfrak{N}}}{\tilde{\mathfrak{D}}}$ is the desired expression $\mathcal{D}_{\beta,\kappa}^m$. Combining all error terms, this completes the proof of the Lemma.
  

\end{proof}

  \textit{Return to Proof of Theorem \ref{thm:WilsonLoopReplowdisorder}}.
  
  With the previous Lemma \ref{lem:WeirdCondition} in hand, we can compute the error bounds by removing the conditioning.
  
  We see that,
  \begin{equation} \label{eq:almostfinalequation}
  \begin{aligned}
     & |\mathbb{E}[W_{\gamma} -\mathcal{D}^{M_{\gamma}}_{\beta,\kappa}| \tilde{E}]| \le \mathbb{P}(G^2|\tilde{E})[\mathbb{E}[W_{\gamma} -\mathcal{D}^{M_{\gamma}}_{\beta,\kappa}| G^2 ] \\
     & + \sum_{m} \sum_{\substack{V_1,\ldots,V_m\\K-Separated\\Minimal}} \mathbb{P}(G^1(V_1,\ldots,V_m)|\tilde{E})|[\mathbb{E}[W_{\gamma} -\mathcal{D}^{M_{\gamma}}_{\beta,\kappa}| G^1(V_1,\ldots,V_m) ] |.
 \end{aligned}
  \end{equation}
  
  One the event $G^2$, we know that $W_{\gamma}=1= \mathcal{D}_{\beta,\kappa}^{M_{\gamma}}$. We can also bound $\mathbb{P}(G^1(V_1,\ldots,V_m)|\tilde{E}) \le P(\tilde{E})^{-1} \mathcal{B}^{6m}$. Notice that $\mathbb{P}(\tilde{E}) \mathbb{P}(G^1(V_1,\ldots,V_m)|\tilde{E}) $ is bounded by the probability that our configuration has a support containing the $K$-separated minimal vortices $V_1,\ldots V_m$. The probability that we see a configuration whose support contains the minimal vortices $V_1 \ldots V_m$ is less than $\prod_{i=1}^m\Phi_{UB}(V_i)\le \mathcal{B}^{6m}$ by Lemma \ref{lem:LowDisorderDecomp}. Now, there are $\frac{|\gamma|^m}{m!}$ ways to choose $V_1,\ldots,V_m$ such that they are centered on edges of $\gamma$.
  
  Combining these estimates, we see that we can bound the last line of \eqref{eq:almostfinalequation} by performing the summation over the last term.
  \begin{equation}
  \begin{aligned}
 & \sum_{m} \sum_{\substack{V_1,\ldots,V_m\\K-Separated\\Minimal}} \mathbb{P}(G^1(V_1,\ldots,V_m) |\tilde{E})|[\mathbb{E}[W_{\gamma} -\mathcal{D}^{M_{\gamma}}_{\beta,\kappa}| G^1(V_1,\ldots,V_m) ] |\\
     & \le \mathbb{P}(\tilde{E})^{-1} \sum_{m=1}^{\infty}\mathcal{B}^{6m} \frac{|\gamma|^m}{m!}|\gamma|K^{3} e^{-cK} L^m \\
     &+ \mathbb{P}(\tilde{E})^{-1} \sum_{m=1}^{\infty} \mathcal{B}^{6m}\frac{|\gamma|^m}{m!} \frac{m e^{-c'K}}{\min_{\substack{\eta_v,\phi_v\\ v \in \text{ minimal vortex }}} \mathbb{P}_{K_N}(\eta_v,\phi_v)}  [(1+ e^{-c'K}) L]^m\\
     &\le \mathbb{P}(\tilde{E})^{-1} |\gamma|^2 \mathcal{B}^{6} L K^{3} e^{-cK} \exp[\mathcal{B}^{6}|\gamma|L]\\
     &+ \mathbb{P}(\tilde{E})^{-1} \frac{|\gamma| \mathcal{B}^{6} [(1+e^{-c'K})L] e^{-c'K}}{\min_{\substack{\eta_v,\phi_v\\ v \in \text{ minimal vortex }}} \mathbb{P}_{K_N}(\eta_v,\phi_v)} \exp[|\gamma| \mathcal{B}^{6}[(1+e^{-c'}K) L]].
  \end{aligned}
  \end{equation}
  
  Notice that we only accrue error from $m=1$ onwards, not at $m=0$. 
  
  \begin{rmk}
  As a remark, a slight modification of the proof shows that we can bound $\mathbb{P}(G^1(V_1,\ldots,V_m)|\tilde{E})$ by $\mathcal{B}^{6m}$ and we do not actually need to include the prefactor $\mathbb{P}(\tilde{E})^{-1}$. This is actually established later in Corollary \ref{col:condbound2}. To simplify notation later, we will actually apply the above estimate without the prefactor $\mathbb{P}(\tilde{E})^{-1}$.
  \end{rmk}
  
  Similarly, we can also bound $|\mathbb{E}[W_{\gamma} -\mathcal{D}^{M_{\gamma}}_{\beta,\kappa}]|$ without the restriction to $\tilde{E}$.
  We see that,
  \begin{equation} \label{eq:almostfinalequation2}
  \begin{aligned}
     & |\mathbb{E}[W_{\gamma} -\mathcal{D}^{M_{\gamma}}_{\beta,\kappa}]| = |\mathbb{P}(\tilde{E}^c)[\mathbb{E}[W_{\gamma}- \mathcal{D}^{M_{\gamma}}_{\beta,\kappa}|\tilde{E}^c] + \mathbb{P}(G^2)[\mathbb{E}[W_{\gamma} -\mathcal{D}^{M_{\gamma}}_{\beta,\kappa}| G^2 ] \\
     & + \sum_{m} \sum_{\substack{V_1,\ldots,V_m\\K-Separated\\Minimal}} \mathbb{P}(G^1(V_1,\ldots,V_m))[\mathbb{E}[W_{\gamma} -\mathcal{D}^{M_{\gamma}}_{\beta,\kappa}| G^1(V_1,\ldots,V_m) ] |\\
     &\le\mathbb{P}(\tilde{E}^c)|[\mathbb{E}[W_{\gamma}- \mathcal{D}^{M_{\gamma}}_{\beta,\kappa}|\tilde{E}^c]| + \mathbb{P}(G^2)|[\mathbb{E}[W_{\gamma} -\mathcal{D}^{M_{\gamma}}_{\beta,\kappa}| G^2 ]|\\
     & + \sum_{m} \sum_{\substack{V_1,\ldots,V_m\\K-Separated\\Minimal}} \mathbb{P}(G^1(V_1,\ldots,V_m))|[\mathbb{E}[W_{\gamma} -\mathcal{D}^{M_{\gamma}}_{\beta,\kappa}| G^1(V_1,\ldots,V_m) ] |.
 \end{aligned}
  \end{equation}
  
  The only difference we need to mention from our previous analysis is that we can bound the first term in the last inequality by $\mathbb{P}(\tilde{E})^c$.
  

  
\end{proof}

\subsection{Approximation by Poisson Random Variables}

We would again expect that with good decorrelation estimates, that the excitation of two minimal vortices would roughly be independent of each other. Again, our goal is to use a version of Theorem \ref{thn:PoisApproxGen} in order to show that $M_{\gamma}$ is roughly distributed according to a Poisson random variable. We will do this when conditioning on the high probability event $\tilde{E}$.

Again, we write,
\begin{equation}
    M_{\gamma}:= \sum_{e \in \gamma} \mathbbm{1}[F_{P(e)}],
\end{equation}
where $F_{P(e)}$ is the event that there is a minimal vortex centered at an edge of $\gamma$. Let $B_K(e)$ be the set of edges $e'$ such that the minimal vortex $P(e')$ would be found in a box $B_K(e)$ of size $2K$ centered around the edge $e$.

We use $\mathbb{P}_{\tilde{E}}$ to denote the probability distribution when restricted to the event $\tilde{E}$. 
In the discussion that follows in this subsection, we will restrict our analysis to events that belong in $\tilde{E}$.

The version of Theorem \ref{thn:PoisApproxGen} we would use in this case is as follows,
\begin{thm} \label{thn:PoisApproxGen2}
    Consider the following constants.
    \begin{equation}
    \begin{aligned}
       & b_1:= \sum_{e \in \gamma} \sum_{e' \in B_K(e)} \mathbb{P}_{\tilde{E}}(F_{P(e)}) \mathbb{P}_{\tilde{E}}(F_{P(e')}),\\
    &    b_2:= \sum_{e \in \gamma} \sum_{e' \in B_K(e) \setminus e} \mathbb{P}_{\tilde{E}}(\mathbbm{1}(F_{P(e)}) \mathbbm{1}(F_{P(e')})),\\
    & b_3:= \sum_{\gamma} \mathbb{E}_{\tilde{E}}\left[| \mathbb{E}_{\tilde{E}}[\mathbbm{1}(F_{P(e)})| \mathbbm{1}(F_{P(e')}) e' \not \in B_K(e)] -\mathbb{P}_{\tilde{E}}(F_{P(e)})| \right].
    \end{aligned}
    \end{equation}
    Let $\mathcal{L}(M_{\gamma})$ denote the law of $M_{\gamma}$ and let $\lambda= \mathbb{E}_{\tilde{E}}[M_{\gamma}]$. Then 
    \begin{equation}
    d_{
    TV}(\mathcal{L}_{\tilde{E}}(M_{\gamma}), \text{Poisson}(\lambda))\le min(1, \lambda^{-1})(b_1 + b_2) +\min(1, 1.4 \lambda^{-1/2})b_3.
    \end{equation}
    $\mathcal{L}_{\tilde{E}}(M_{\gamma})$ is law of $M_{\gamma}$ conditioned on the event $\tilde{E}$.
\end{thm}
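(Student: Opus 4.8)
\textbf{Proof plan for Theorem \ref{thn:PoisApproxGen2}.} The plan is to reduce this statement to the general Stein--Chen bound already quoted as Theorem \ref{thn:PoisApproxGen}, applied to the family of indicator random variables $\{\mathbbm{1}[F_{P(e)}]\}_{e \in \gamma}$ taken under the \emph{conditional} measure $\mathbb{P}_{\tilde E}$ rather than the unconditional one. The point is that Theorem \ref{thn:PoisApproxGen} of Chen is a statement about an arbitrary family of Bernoulli variables together with a choice of neighbourhoods $B_e$; it makes no reference to how the underlying probability space arose. Here we substitute the measure $\mathbb{P}_{\tilde E}(\cdot) = \mathbb{P}(\cdot \mid \tilde E)$, the index set $\gamma$, and the neighbourhood system $B_K(e)$ (the edges $e'$ whose minimal vortices sit inside the cube $B_K(e)$ of side $2K$ around $e$). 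With these substitutions the three constants in the statement of Theorem \ref{thn:PoisApproxGen} become exactly $b_1,b_2,b_3$ as written in Theorem \ref{thn:PoisApproxGen2}, and $\lambda = \mathbb{E}_{\tilde E}[M_\gamma]$.

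First I would verify that $\{\mathbbm{1}[F_{P(e)}]\}$ really is a \emph{dissociated-type} family in the sense Chen's theorem needs with respect to the neighbourhoods $B_K(e)$: the event $F_{P(e)}$ is a local event, it depends on the configuration only through the plaquettes of $P(e)$ (together, after conditioning, with the global event $\tilde E$), and the role of $B_K(e)$ is to collect all edges $e'$ whose minimal vortex $P(e')$ could possibly be in the same connected component (in the graph $G_2$ of Definition \ref{def:graphg2}) or, more conservatively, within distance $K$ of $P(e)$. Since $\tilde E$ is itself a monotone-in-$K$ global constraint, some care is needed: the conditioning can in principle couple far-apart edges. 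However, Chen's theorem as stated in Theorem \ref{thn:PoisApproxGen} already \emph{allows} such coupling — it is precisely what $b_3$ measures, via the term $\mathbb{E}[|\mathbb{E}[\mathbbm{1}(F_{P(e)}) \mid \mathbbm{1}(F_{P(e')}),\, e'\notin B_K(e)] - \mathbb{P}(F_{P(e)})|]$. So no independence assumption is needed; the theorem is purely a transcription.

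Second I would record the two immediate structural facts that will be used when the theorem is \emph{applied} in the sequel (they are not needed to prove the theorem itself, but it is worth noting them here): $b_2 = 0$ because two edges $e,e'$ with $e' \in B_K(e)$ that are within range cannot both carry a minimal vortex without the two vortices merging into a single larger vortex in the $G_2$-adjacency graph, so the support cannot contain both $P(e)$ and $P(e')$; and $b_1$ is controlled by $\sum_e \sum_{e' \in B_K(e)} \Phi_{UB}(P(e))^2 = O(K^4 |\gamma| \mathcal{B}^{12})$ using the upper bound $\mathbb{P}_{\tilde E}(F_{P(e)}) \le O(\Phi_{UB}(P(e))) = O(\mathcal{B}^6)$ that follows from Lemma \ref{lem:LowDisorderDecomp} (after absorbing the $\mathbb{P}(\tilde E)^{-1} \le 2$ factor from conditioning). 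These are the analogues of the estimates proved in Lemma \ref{lem:comparwpoisson} and Lemma \ref{lem:PoisNonAbelian}, and the genuinely substantive bound on $b_3$ will be the content of the following lemma, where the decorrelation estimates of Subsection \ref{subsec:decorrelation} and Corollary \ref{col:DecayEst2} are invoked to show $b_3 = O(|\gamma| \mathcal{B}^6 (K^4 e^{-cK} + \mathcal{B}^6 K^4))$ or similar.

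The main obstacle is therefore not the proof of Theorem \ref{thn:PoisApproxGen2} — which is essentially a citation of \cite{Chen} applied to the conditional measure — but the bookkeeping that justifies using the conditional measure $\mathbb{P}_{\tilde E}$ cleanly: one must check that conditioning on $\tilde E$ does not destroy the locality that makes $B_K(e)$ a legitimate neighbourhood system, i.e. that the events $\{F_{P(e')} : e' \notin B_K(e)\}$ together with $\tilde E$ still leave $F_{P(e)}$ approximately determined by its local data up to an error that Corollary \ref{col:DecayEst2} can absorb into $b_3$. Since $\tilde E$ restricts the \emph{global} configuration only through the exclusion of large knots near $\gamma$ and of non-$K$-separated minimal vortex pairs, and both of these are expressible through the knot/vortex decomposition that splits multiplicatively by Lemma \ref{lem:wellsplit} and Lemma \ref{lem:LowDisorderDecomp}, this locality survives, and the statement of Theorem \ref{thn:PoisApproxGen2} follows verbatim from Theorem \ref{thn:PoisApproxGen}.
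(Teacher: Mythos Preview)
Your proposal is correct and matches the paper's approach exactly: the paper does not give a separate proof of Theorem \ref{thn:PoisApproxGen2} at all, but simply introduces it with the sentence ``The version of Theorem \ref{thn:PoisApproxGen} we would use in this case is as follows,'' treating it as a direct transcription of Chen's Stein--Chen bound to the conditional measure $\mathbb{P}_{\tilde E}$ with neighbourhoods $B_K(e)$. Your additional remarks about why conditioning on $\tilde E$ does not invalidate the application are correct but more than the paper itself provides; since Chen's theorem holds on an arbitrary probability space with an arbitrary choice of neighbourhood system, no further justification is actually required.
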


Again, a small adaptation of the proof of Lemma \ref{lem:LowDisorderDecomp} shows the following Corollary.
\begin{col}\label{col:condbound2}[of Lemma \ref{lem:LowDisorderDecomp}]
Let $E_1$ and $E_2$ be two sets of edges. Let $M(E_1,E_2)$ (inside $\tilde{E}$) be the event that there is a minimal vortex centered around each edge of $E_1$ and there is no minimal vortex centered around any edge of $E_2$. Assume that $M(E_1,E_2)$ has positive probability. Let $
V$ be some minimal vortex that is not centered around an edge in $E_1$ or in $E_2$. The probability that we have a configuration $(\sigma,\phi)$ whose support contains $V$ conditioned on the event $M(E_1,E_2)$ is less than $\Phi_{UB}(V)$.

\end{col}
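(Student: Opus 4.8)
\textbf{Proof proposal for Corollary \ref{col:condbound2}.}

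The plan is to mimic the proof of Lemma \ref{lem:LowDisorderDecomp} verbatim, but with the partition function in the denominator replaced by the one restricted to the event $M(E_1,E_2)$. First I would fix a configuration $\mathcal{S}$ lying in $M(E_1,E_2)$ whose support contains the minimal vortex $V$ separated from the rest of the support by a cube $C(V)$ containing $V$; write $\supp(\mathcal{S}) = V \cup X$ where $X$ is a plaquette set in $C(V)^c$ that, in particular, contains the minimal vortices $P(e_i)$ for $e_i \in E_1$ and contains no minimal vortex centered at an edge of $E_2$. Exactly as in the proof of Lemma \ref{lem:LowDisorderDecomp}, I would choose a simultaneous spanning tree $\mathcal{T}$ of $C(V)$, $C(V)^c$, and $\delta C(V)$, gauge $\sigma$ with respect to $\mathcal{T}$, and split the gauged configuration as $\tilde{\sigma} = \tilde{\sigma}_1 \tilde{\sigma}_2$ with $\tilde{\sigma}_1 = 1$ off $C(V)$ and $\tilde{\sigma}_2 = 1$ on $C(V)$. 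The same bounding of the $\kappa$-exponentials on the boundary edges $\delta C(V)$ by $\exp[\kappa(\max_{a,b}\text{Re}f(a,b) - \min_{c,d}\text{Re}f(c,d))]$ decouples the sum over $\tilde{\sigma}_1$, yielding
\begin{equation}
\mathcal{Z}(X \cup V) \le \Phi_{UB}(V)\, \mathcal{Z}(X).
\end{equation}

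The key new observation is that the splitting map $X \cup V \to (V, X)$ respects the constraints defining $M(E_1,E_2)$: since the $P(e_i)$ for $e_i \in E_1$ and the forbidden minimal vortices around $E_2$ all live in $C(V)^c$, the configuration with support $X$ still has a minimal vortex at every edge of $E_1$ and none at any edge of $E_2$, hence still lies in $M(E_1,E_2)$. (Here one uses that a minimal vortex centered at an edge $e$ is determined by $\sigma$ on $E(P(e))$, which is unchanged under restriction, together with Remark \ref{rmk:minimal} — and the assumption that $V$ is centered at an edge not in $E_1 \cup E_2$ so that removing $V$ does not destroy or create any of the relevant minimal vortices.) Therefore, summing $\mathcal{Z}(X \cup V)$ over all valid $X$ (those separated from $C(V)$, lying in $M(E_1,E_2)$) and dividing by the sum of $\mathcal{Z}(X)$ over the same family of $X$ — which is a lower bound for $Z(M(E_1,E_2))$ — gives the probability of seeing $V$ in the support conditioned on $M(E_1,E_2)$ as at most $\Phi_{UB}(V)$, as desired.

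The only point requiring genuine care — and the one I'd flag as the main obstacle — is verifying that removing $V$ from the support produces a \emph{legitimate} configuration still in $M(E_1,E_2)$, i.e. that the bijection of Lemma \ref{lem:LowDisorderDecomp} descends to the restricted event. This is where the hypothesis that $V$ is not centered at an edge of $E_1$ or $E_2$ is essential: it guarantees that the $E_1$-vortices survive intact in the $X$-component and no new $E_2$-vortex is accidentally created. Once this structural compatibility is checked, the inequality $\mathcal{Z}(X \cup V) \le \Phi_{UB}(V)\mathcal{Z}(X)$ and the ratio argument are identical to Lemma \ref{lem:LowDisorderDecomp}, and the Corollary follows.
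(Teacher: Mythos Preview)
Your proposal is correct and follows essentially the same route as the paper: write the support as $V\cup X$ (the paper writes $X=\bigcup_{e\in E_1}P(e)\cup R$), gauge with respect to a simultaneous spanning tree of $C(V)$ and its complement, split off the $\tilde\sigma_1$ part to obtain $\mathcal{Z}(V\cup X)\le \Phi_{UB}(V)\,\mathcal{Z}(X)$, and then take the ratio with the restricted partition function $Z(M(E_1,E_2))$. The structural point you flag---that removing $V$ leaves a configuration still in $M(E_1,E_2)$---is exactly the one the paper invokes (``This exactly uses the fact that $V$ was a minimal vortex and can be separated from the other excitations''), so there is no gap.
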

\begin{proof}
The same adaptations that allowed Corollary \ref{col:boundcondition} to follow from Lemma \ref{lm:vortex} allow this Corollary to follow from Lemma  \ref{lem:LowDisorderDecomp}.

We see that if a configuration $(\sigma,\phi)$ in $M(E_1,E_2)$ has support that contains $V$, then the support is of the form $V \bigcup_{e\in E_1} P(e) \cup R$, where $R$ is a union of vortices that is disjoint from $V \bigcup_{e \in E_1} P(e)
$ and does not contain any minimal vortex centered around an edge of $E_2$.

We have already computed the sum of $\exp[H_{N,\beta,\kappa}(\sigma,\phi)]$ for all configurations whose support is $V \bigcup_{e \in E_1} P(e) \cup R$. This is the expression $\mathcal{Z}( V \bigcup_{e\in E_1}P(e) \cup R)$ from equation \eqref{eq:somepartfunc}.

We can follow the steps in the proof of Lemma \ref{lem:LowDisorderDecomp} by choosing a spanning tree $T(V)$ that is a spanning tree of $C(V)$, a cube surrounding $V$, and its complement. If we gauge the configuration $\sigma$ with respect to the spanning tree $T(V)$ and split accordingly, we can ultimately derive that $\mathcal{Z}(V \bigcup_{e \in E_1} P(e) \cup R)$ is less than $\Phi_{UB}(V) \mathcal{Z}(\bigcup_{e \in E_1} P(e) \cup R)$. This exactly uses the fact that $V$ was a minimal vortex and can be separated from the other excitations.
We can now assert that the sum of $\exp[H_{N,\beta,\kappa}(\sigma,\phi)]$ for all configurations in $M(E_1,E_2)$ whose support contains $V$ is bounded as follows,
\begin{equation} \label{eq:somebound2}
\begin{aligned}
    \sum_{\substack{V \in \supp((\sigma,\phi))\\(\sigma,\phi) \in M(E_1,E_2)}}\exp[H_{N,\beta,\kappa}(\sigma,\phi)]& = \sum_{R} \mathcal{Z}(V \bigcup_{e \in E_1} P(e) \cup R) \\
    &\le \Phi_{UB}(V) \sum_{R} \mathcal{Z}(\bigcup_{e \in E_1} P(e) \cup R).
\end{aligned}
\end{equation}
Here, $R$ runs over all all vortices that are found in the vortex expansion of some configuration in $M(E_1,E_2)$ that contains $V$ in its support.
$$
Z(M(E_1,E_2)) \ge \sum_{R} \mathcal{Z}(\bigcup_{e \in E_1} P(e) \cup R).
$$

Taking the ratio of equation \eqref{eq:somebound2} with our partition function $Z(M(E_1,E_2))$ shows that the probability of observing the a configuration $(\sigma,\phi)$ whose support contains $U
$ when conditioned on the event $Z(M(E_1,E_2))$ is less than $\Phi_{UB}(V)$.


\end{proof}

As before, with this lemma in hand, we can start to bound the quantities $b_1,b_2,b_3$
\begin{lem} \label{lem:comparwpoisson2}
Assume that the conditions of Theorem \ref{thm:WilsonLoopReplowdisorder}.
Recall the decorrelation estimates from Lemma \ref{lem:decoresti1}, the constant $L$ from equation \eqref{eq:defL}, and the constant $\mathcal{B}$ from Lemma \ref{lem:rarelowdisorder}.  Define the constant
$$
\tilde{\mathfrak{c}}:=12 K^4 \mathcal{B}^6.
$$
Assume that $\tilde{\mathfrak{c}}<1$.

We have the following bounds on $b_1,b_2,b_3$ and $\lambda= \mathbb{E}[M_{\gamma}]$.
\begin{equation}
    \begin{aligned}
        &b_1 \le |\gamma|12 K^4 \mathcal{B}^{12},\\
        &b_2 =0 \\
        & b_3 \le |\gamma| \left|(1-\tilde{\mathfrak{c}})[\mathcal{D}_{\beta,\kappa}- K^{3} e^{-cK} L] -   [\mathcal{D}_{\beta,\kappa} + K^{3}e^{-cK} L]\right|,\\
        &|\lambda- |\gamma| \mathcal{D}_{\beta,\kappa}| \le |\gamma| \left|(1-\tilde{\mathfrak{c}})[\mathcal{D}_{\beta,\kappa}- K^{3} e^{-cK} L] -   [\mathcal{D}_{\beta,\kappa} + K^{3}e^{-cK} L]\right|.
    \end{aligned}
\end{equation}
As a consequence of these estimates, we have from Theorem \ref{thn:PoisApproxGen2} that,
\begin{equation} \label{eq:fincompar}
\begin{aligned}
    \text{d}_{TV}(\mathcal{L}_{\tilde{E}}(M_{\gamma}), &\text{Poisson}(|\gamma|\mathcal{D}_{\beta,\kappa})) \le 12|\gamma|K^4 \mathcal{B}^{12}\\
    &+ 2|\gamma|\left|(1-\tilde{\mathfrak{c}})[\mathcal{D}_{\beta,\kappa}- K^{3} e^{-cK} L] -   [\mathcal{D}_{\beta,\kappa} + K^{3}e^{-cK} L]\right|
\end{aligned}
\end{equation}

\end{lem}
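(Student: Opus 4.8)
\textbf{Proof proposal for Lemma \ref{lem:comparwpoisson2}.}

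The plan is to mirror the argument structure of Lemma \ref{lem:comparwpoisson} and Lemma \ref{lem:PoisNonAbelian}, but now carrying the conditioning on $\tilde{E}$ throughout and invoking the decorrelation machinery of Subsection \ref{subsec:decorrelation} wherever independence of well-separated minimal vortices is needed. First I would dispose of $b_2$: on the event $\tilde{E}$, two edges $e, e'$ with $e' \in B_K(e)\setminus e$ both carrying minimal vortices would either force a non-minimal-vortex knot near $\gamma$ or would be two minimal vortices within distance $K$ of each other, both of which are excluded by the definition of $\tilde{E}$ (events $\tilde{E}_1$ and $\tilde{E}_2$ in Definition \ref{def:rareevent}); hence $\mathbb{P}_{\tilde E}(\mathbbm{1}(F_{P(e)})\mathbbm{1}(F_{P(e')})) = 0$ and $b_2 = 0$. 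Next, for $b_1$, I would use the bound $\mathbb{P}_{\tilde E}(F_{P(e)}) \le \Phi_{UB}(P(e)) \le \mathcal{B}^6$, which follows from Corollary \ref{col:condbound2} (taking $E_1 = E_2 = \emptyset$), together with the count that for any $e$ there are at most $12K^4$ edges $e'$ with $P(e')$ contained in the box $B_K(e)$; summing over $e \in \gamma$ gives $b_1 \le |\gamma| \cdot 12K^4 \cdot \mathcal{B}^{12}$.

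The substantive work is the bound on $b_3$, i.e.\ controlling $\mathbb{E}_{\tilde E}[\,|\mathbb{E}_{\tilde E}[\mathbbm{1}(F_{P(e)}) \mid \mathbbm{1}(F_{P(e')}), e'\notin B_K(e)] - \mathbb{P}_{\tilde E}(F_{P(e)})|\,]$. I would fix $e$ and condition on an event $M(E_1,E_2)$ with $E_1 \cup E_2 = \gamma \setminus B_K(e)$ (minimal vortices present exactly on $E_1$, absent on $E_2$), as in the proof of Lemma \ref{lem:comparwpoisson}. The upper bound $\mathbb{E}_{\tilde E}[\mathbbm{1}(F_{P(e)}) \mid M(E_1,E_2)] \le \Phi_{UB}(P(e))$ is immediate from Corollary \ref{col:condbound2}. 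For the sharper statement I would want $\mathbb{E}_{\tilde E}[\mathbbm{1}(F_{P(e)}) \mid M(E_1,E_2)]$ to be close to $\mathcal{D}_{\beta,\kappa}$ up to a multiplicative factor $(1-\tilde{\mathfrak{c}})$ and an additive $K^3 e^{-cK}L$ term: the $\tilde{\mathfrak{c}} = 12K^4\mathcal{B}^6$ loss comes from union-bounding (via Corollary \ref{col:condbound2} and the $\Phi_{UB}(V)\le \mathcal{B}^{|V|}$ estimate of Lemma \ref{lem:rarelowdisorder}) the probability that some \emph{other} vortex in the support intersects the box $B_K(e)$ and thereby spoils the clean separation of $P(e)$; and the $K^3 e^{-cK}L$ term is exactly the decorrelation error produced when, as in Part~1 and Part~2 of the proof of Lemma \ref{lem:WeirdCondition}, one replaces the true conditional magnetization $\mathbb{P}_{K_N}(\cdot)$ by its product form using Lemma \ref{lem:decoresti1} and Corollary \ref{col:DecayEst2}. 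Running the same computation for the unconditional (on everything except $\tilde E$) quantity $\mathbb{P}_{\tilde E}(F_{P(e)})$ gives the same two-sided bracket $[\mathcal{D}_{\beta,\kappa} - K^3 e^{-cK}L,\ \mathcal{D}_{\beta,\kappa} + K^3 e^{-cK}L]$ up to the $(1-\tilde{\mathfrak{c}})$ factor, so the difference of the two is at most $|(1-\tilde{\mathfrak{c}})[\mathcal{D}_{\beta,\kappa} - K^3 e^{-cK}L] - [\mathcal{D}_{\beta,\kappa} + K^3 e^{-cK}L]|$; since this bound is uniform in $E_1,E_2$, integrating out the conditioning gives the stated bound on $b_3$, and the identical bracketing gives $||\lambda| - |\gamma|\mathcal{D}_{\beta,\kappa}| \le |\gamma| \cdot |(1-\tilde{\mathfrak{c}})[\mathcal{D}_{\beta,\kappa} - K^3 e^{-cK}L] - [\mathcal{D}_{\beta,\kappa} + K^3 e^{-cK}L]|$ since $\lambda = \sum_{e}\mathbb{P}_{\tilde E}(F_{P(e)})$.

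Finally I would plug $b_1, b_2, b_3$ into Theorem \ref{thn:PoisApproxGen2}, using $\min(1,\lambda^{-1}) \le 1$ and $\min(1, 1.4\lambda^{-1/2}) \le 1$ to get $d_{TV}(\mathcal{L}_{\tilde E}(M_\gamma), \text{Poisson}(\lambda)) \le b_1 + b_3$, and then absorb the discrepancy between $\text{Poisson}(\lambda)$ and $\text{Poisson}(|\gamma|\mathcal{D}_{\beta,\kappa})$ using the standard fact (Corollary 3.1 of \cite{Adell}, already cited in the proof of Lemma \ref{lem:comparwpoisson}) that the total variation distance between two Poisson laws is at most the difference of their means, i.e.\ $||\lambda| - |\gamma|\mathcal{D}_{\beta,\kappa}|$, which is itself bounded by $b_3$. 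Collecting terms yields \eqref{eq:fincompar}. The main obstacle I anticipate is the $b_3$ estimate: one must verify carefully that the decorrelation error from Corollary \ref{col:DecayEst2}, when propagated through the ratio-of-partition-functions manipulation (as in Lemma \ref{lem:WeirdCondition}), really does collapse to the single-vortex error $K^3 e^{-cK}L$ rather than picking up a factor growing in the number of competing minimal vortices along $\gamma$ — this is where the precise structure of $\tilde E$ (only $K$-separated minimal vortices near $\gamma$) is essential, and where one must be careful that the normalization constant $\min_{\eta_v,\phi_v}\mathbb{P}_{K_N}(\eta_v,\phi_v)$ over minimal-vortex configurations is bounded below uniformly, as sketched in the remark inside the proof of Lemma \ref{lem:WeirdCondition}.
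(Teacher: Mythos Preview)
Your proposal is correct and follows essentially the same route as the paper's proof: the $G/B$ split on $M(E_1,E_2)$ together with the union bound over minimal vortices in $B_K(e)$ gives the $\tilde{\mathfrak c}$ factor, a single-vortex decorrelation produces the $K^3e^{-cK}L$ additive error, and the resulting two-sided bracket on $\mathbb{E}_{\tilde E}[\mathbbm{1}(F_{P(e)})\mid M(E_1,E_2)]$ yields both the $b_3$ and the $|\lambda-|\gamma|\mathcal D_{\beta,\kappa}|$ bounds. One small simplification relative to what you anticipate: in the paper the decorrelation step is carried out as a \emph{single} split (Lemma \ref{lem:decoresti1}) of the lone vortex $P(e)$ from the rest of the support $U$, proving $\bigl|\mathcal Z(U\cup P(e))/\mathcal Z(U)-\mathcal D_{\beta,\kappa}\bigr|\le K^3e^{-cK}L$ directly; the iterated product estimate of Corollary \ref{col:DecayEst2} and the lower bound on $\min_{\eta_v,\phi_v}\mathbb P_{K_N}(\eta_v,\phi_v)$ are not needed here, so the obstacle you flag does not arise in this lemma.
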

\begin{proof}
As before, the most difficult part is to bound $b_3$. The proof is largely similar to  the proof of Lemma \ref{lem:comparwpoisson}, except for the fact that we must now deal with some technicalities of decorrelation estimates.

As before, let $E_1$ be some subset of edges in $\gamma \setminus B_K(e)$ and $E_2$ be all remaining edges in $\gamma \setminus B_K(e)$. Recall the notation $M(E_1,E_2)$ and $Z(M(E_1,E_2))$ from the proof of Corollary \ref{col:condbound2}. We also define sets $G(E_1,E_2)$ and $B(E_1,E_2)$ similarly to the proof of Lemma \ref{lem:comparwpoisson}, but there are minor differences related to the decorrelation.
\begin{enumerate}
    \item $G(E_1,E_2)$: This is the set of all configurations $(\sigma,\phi)$ in $M(E_1,E_2)$ such that the support of the configuration contains no vortices in the box $B_K(e)$.
    \item $B(E_1,E_2)$: These are all configurations in $M(E_1,E_2)$ that do not belong to $G(E_1,E_2)$. They are characterized by having a vortex in the support that intersects the box $B_K(e)$. In fact, this vortex is minimal since in we assume $M(E_1,E_2) \in \tilde{E}$ and $\tilde{E}$ has no non-minimal vortices that are within a $K$ neighborhood of $\gamma$.
    
    \item $TG(E_1,E_2)$: These are all configurations $(\sigma,\phi)$ in $M(E_1 \cup \{e\},E_2)$ whose support does not contain any element in $B_K(e)$ aside from the minimal vortex at $e$. This is in fact all of $M(E_1 \cup \{e\},E_2)$ since we are considering events in the set $\tilde{E}$.
\end{enumerate}

We remark that $M(E_1 \cup \{e\},E_2)$ is the set of all configurations in $M(E_1,E_2)$ that have the minimal vortex $P(e)$ in the support.

Our goal is to show that $Z(B(E_1,E_2))$ is small relative to $Z(M(E_1,E_2))$ and that the sum of all configurations in $Z(TG(E_1,E_2))$ is a simple product $\mathcal{D}_{\beta,\kappa}$ of  $Z(G(E_1,E_2))$. The former is the same as what we have done in similar to what we have done earlier. The latter requires some decorrelation estimates.

By a union bound,
the ratio of $Z(B(E_1,E_2))$ to $Z(M(E_1,E_2))$ is bounded from above by $\sum_{e' \in B_K(e)} \mathbb{P}_{\tilde{E}}(P(e') \in \supp(\sigma,\phi)|M(E_1,E_2))$ since all events in $Z(B(E_1,E_2))$ necessarily have some minimal vortex excitation in $B_K(e)$. Each individual term $\mathbb{P}_{\tilde{E}}(P(e') \in \supp(\sigma,\phi)|M(E_1,E_2))$ is bounded from above by $\Phi_{UB}(P(e)) \le \mathcal{B}^6$ from Corollary \ref{col:condbound2}. Thus, we see that the ratio of $Z(B(E_1,E_2))$ and $Z(M(E_1,E_2))$ is less than $\tilde{\mathfrak{c}}:= 12K^4 \mathcal{B}^6$ by the union bound above. 

Like in the proof of Lemma \ref{lem:comparwpoisson}, we would like to say that $\frac{Z(TG(E_1,E_2)}{Z(G(E_1,E_2))}$ is the constant $\mathcal{D}_{\beta,\kappa}$. The only thing preventing this is that the activation of the minimal $P(e)$ does not separate as a product from the remaining activations. However, we specifically use the fact the configurations in $TG(E_1,E_2)$ and $G(E_1,E_2)$ have no excitation in the box $B_K(e)$ aside from the minimal vortex in $P(e)$ for $TG(E_1,E_2)$.

\textit{Proof of a Decorrelation Inequality:}

Let $U$ be the support of some configuration in $G(E_1,E_2)$. Recalling the partition function $\mathcal{Z}$ from equation, we will show that the following is true  \eqref{eq:somepartfunc} 
\begin{equation} \label{eq:someratio}
\left|\frac{\mathcal{Z}(U \cup P(e))}{\mathcal{Z}(U)} - \mathcal{D}_{\beta,\kappa}\right| \le K^{3} e^{-cK} L,
\end{equation}
where $L$ is the constant from equation  \eqref{eq:defL} and our values of $K$ and $c$ satisfy the decorrelation estimate from Lemma \ref{lem:decoresti1}.
Since $Z(G(E_1,E_2)) = \sum_{U} \mathcal{Z}(U)$ and $Z(TG(E_1,E_2))= \sum_U \mathcal{Z}(U \cup P(e))$, this shows that$| \frac{Z(TG(E_1,E_2))}{Z(G(E_1,E_2))} - \mathcal{D}_{\beta,\kappa}| \le K^{3} e^{-cK} L$. 

The proof of \eqref{eq:someratio} is very similar to the proof of Part 1 of Lemma \ref{lem:WeirdCondition}.
Let $B_K(e)$ be the box of size $K$ centered around the minimal vortex $P(e)$. We start by choosing a spanning tree $T(e)$ that is a spanning tree of $B_K(e)$ and its complement. As before, when we gauge out a configuration $\sigma$ with respect to $T(e)$ to $\tilde{\sigma}$, this allows us to split the configuration into those with support $\tilde{\sigma}_1$ in $B_K(e)$ and $\tilde{\sigma}_2$ in $B_K(e)^c$. Furthermore, the only non-trivial edge of $\tilde{\sigma}_1$ would be the edge $e$ that is the center of the minimal vortex $P(e)$. We can then reintroduce all other configuration by introducing a new field $\eta:V_N \to G$.

We have,
\begin{equation}
\begin{aligned}
    &\mathcal{Z}(P(e) \cup U)  = \sum_{\eta_v,\phi_v} \prod_{e'=(v,w) \not \in E(P(e)) \cup B_K(e)^c} \exp[\kappa(f(\eta_v \eta_w^{-1}, \phi_v\phi_w^{-1}) -f(1,1))]\\
    & \hspace{0.0 cm} \times \sum_{\tilde{\sigma}_1 \text{ gauged with }T(e)} \prod_{p \in P(e)} \exp[\beta (\rho((\td \tilde{\sigma}_1)_p) -\rho(1)) ]\\
    & \hspace{0.0 cm}\times \prod_{\substack{e'=(v,w)\\ e' \in E(P(e))}} \exp[\kappa(f(\eta_v^1(\tilde{\sigma}_1)_{e'} (\eta_w^1)^{-1}, \phi_v^1 (\phi_w^1)^{-1}) - f(\eta_v^1 (\eta_w^1)^{-1},\phi_v^1 (\phi_w^1)^{-1}))] \\ \hspace{0.3 cm}& \times \sum_{\tilde{\sigma}_2 \text { gauged with } T(e)} \prod_{p \in U} \exp[\beta (\rho((\td \tilde{\sigma}_2)_p) -\rho(1))] \\ \hspace{0.0 cm} & \times \hspace{0.0 cm}  \prod_{\substack{e'=(v,w)\\ e' \in B_K(e)^c} } \exp[\kappa(f(\eta_v^2(\tilde{\sigma}_2)_{e'} (\eta_w^2)^{-1}, \phi_v^2 (\phi_w^2)^{-1}) - f(\eta_v^2 (\eta_w^2)^{-1},\phi_v^2 (\phi_w^2)^{-1}))],
\end{aligned}
\end{equation}
 where $\eta^{1,2},\phi^{1.2}$ denote the values of $\eta_v$ and $\phi_v$
for vertices that belong to $V(P(e))$ or $B_K(e)^c$. We can divide by the partition function $Z_{K_N}$ to write the product as,
\begin{equation}
\begin{aligned}
    &\frac{\mathcal{Z}(P(e) \cup U) }{Z_{K_N}} =
    \sum_{\substack{\eta_v^1,\phi_v^1,\eta_v^2,\phi_v^2}} \mathbb{P}_{K_N}(\eta_v^1,\phi_v^1,\eta_v^2,\phi_v^2)\\
     & \times \sum_{\tilde{\sigma}_1 \text{ gauged with }T(e)} \prod_{p \in P(e)} \exp[\beta (\rho((\td \tilde{\sigma}_1)_p) -\rho(1)) ] \\ & \times \prod_{\substack{e'=(v,w)\\ e'\in E(P(e))}} \exp[\kappa(f(\eta_v^1(\tilde{\sigma}_1)_{e'} (\eta_w^1)^{-1}, \phi_v^1 (\phi_w^1)^{-1}) - f(\eta_v^1 (\eta_w^1)^{-1},\phi_v^1(\phi_w^1)^{-1}))] \\& \times \sum_{\tilde{\sigma}_2 \text { gauged with } T(e)} \prod_{p \in U} \exp[\beta (\rho((\td \tilde{\sigma}_2)_p) -\rho(1))]\\
     & \times \prod_{\substack{e'=(v,w)\\ e' \in B_K(e)^c}} \exp[\kappa(f(\eta_v^2(\tilde{\sigma}_2)_{e'} (\eta_w^2)^{-1}, \phi_v^2 (\phi_w^2)^{-1}) - f(\eta_v^2 (\eta_w^2)^{-1},\phi_v^2 (\phi_w^2)^{-1}))].
\end{aligned}
\end{equation}

We would like to split $\mathbb{P}_{K_N}(\eta_v^1,\phi_v^1,\eta_v^2,\phi_v^2)$ into $\mathbb{P}_{K_N}(\eta_v^1,\phi_v^1) \mathbb{P}_{K_N}(\eta_v^2,\phi_v^2)$. If we had equality in this splitting, then we would immediately get the product $\mathcal{D}_{\beta,\kappa} \frac{\mathcal{Z}(P(e) \cup U)}{Z_{K_N}}$.

We have the error $$|\mathbb{P}_{K_N}(\eta_v^1,\phi_v^1, \eta_v^2,\phi_v^2) - \mathbb{P}_{K_N}(\eta_v^1,\phi_v^1) \mathbb{P}_{K_N}(\eta_v^2,\phi_v^2)| \le K^{3}e^{-cK} \mathbb{P}_{K_N}(\eta_v^2,\phi_v^2)$$ from Lemma \ref{lem:decoresti1}. Combined with our observation in the previous paragraph,  we see that 
\begin{equation}
\begin{aligned}
   & |\frac{\mathcal{Z}(P(e) \cup U) }{Z_{K_N}} - \mathcal{D}_{\beta,\kappa} \frac{\mathcal{Z}(U)}{Z_{K_N}}| \\
   &\le 
     \sum_{\substack{\eta_v^1,\phi_v^1,\eta_v^2,\phi_v^2}} |\mathbb{P}_{K_N}(\eta_v^1,\phi_v^1,\eta_v^2,\phi_v^2) - \mathbb{P}_{K_N}(\eta_v^1,\phi_v^1) \mathbb{P}_{K_N}(\eta_v^2,\phi_v^2)|\\
     &\times \sum_{\tilde{\sigma}_1 \text{ gauged with }T(e)} \prod_{p \in P(e)} \exp[\beta (\rho((\td \tilde{\sigma}_1)_p) -\rho(1)) ]\\ & \times  \prod_{\substack{e'=(v,w)\\ e' \in E(P(e))}} \exp[\kappa(f(\eta_v^1(\tilde{\sigma}_1)_{e'} (\eta_w^1)^{-1}, \phi_v^1 (\phi_w^1)^{-1}) - f(\eta_v^1 (\eta_w^1)^{-1},\phi_v^1(\phi_w^1)^{-1}))] \\& \times \sum_{\tilde{\sigma}_2 \text { gauged with } T(e)} \prod_{p \in U} \exp[\beta (\rho((\td \tilde{\sigma}_2)_p) -\rho(1))] \\
     & \times \prod_{\substack{e'=(v,w)\\ e' \in B_K(e)^c}} \exp[\kappa(f(\eta_v^2(\tilde{\sigma}_2)_{e'} (\eta_w^2)^{-1}, \phi_v^2 (\phi_w^2)^{-1}) - f(\eta_v^2 (\eta_w^2)^{-1},\phi_v^2 (\phi_w^2)^{-1}))]\\&
     \le K^{3}e^{-cK} \sum_{\tilde{\sigma}_1 \text{ gauged with }T(e)} \prod_{p \in P(e)} \exp[\beta (\rho((\td \tilde{\sigma}_1)_p) -\rho(1)) ] \\ & \times \prod_{\substack{e'=(v,w)\\ e' \in E(P(e)) }} \exp[\kappa(f(\eta_v^1(\tilde{\sigma}_1)_{e'} (\eta_w^1)^{-1}, \phi_v^1 (\phi_w^1)^{-1}) - f(\eta^1_v(\eta^1_w)^{-1},\phi^1_v(\phi^1_w)^{-1}))]\\
     & \times \sum_{\eta_v^2,\phi_v^2} \mathbb{P}_{K_N}(\eta_v^2,\phi_v^2)\sum_{\tilde{\sigma}_2 \text { gauged with } T(e)} \prod_{p \in U} \exp[\beta (\rho((\td \tilde{\sigma}_2)_p) -\rho(1))] \\ & \times \prod_{\substack{e'=(v,w)\\e' \in B_K(e)^c}} \exp[\kappa(f(\eta_v^2(\tilde{\sigma}_2)_{e'} (\eta_w^2)^{-1}, \phi_v^2 (\phi_w^2)^{-1}) - f(\eta^2_v (\eta^2_w)^{-1},\phi^2_v  (\phi^2_w)^{-1}))]\\
     & \le K^{3}e^{-cK} L \frac{\mathcal{Z(U)}}{Z_{K_N}},
\end{aligned}
\end{equation}
where $L$ is the constant from equation 
\eqref{eq:defL}. We can now divide out by $\mathcal{Z(U)} Z_{K_N}^{-1}$ on both sides. This proves the desired intermediary inequality \eqref{eq:someratio}.

\textit{Return to bounding $b_3$}

Recall $\tilde{\mathfrak{c}}:=12K^4 \mathcal{B}^6 $. Provided $\tilde{\mathfrak{c}}$ is less than $1$, we can derive the following. Since we have $\frac{Z(B(E_1,E_2))}{Z(M(E_1,E_))}= \frac{Z(B(E_1,E_2))}{Z(G(E_1,E_2)) + Z(B(E_1,E_2))} \le \tilde{\mathfrak{c}}$, we see that $Z(B(E_1,E_2)) \le \frac{\tilde{\mathfrak{c}}}{1-\tilde{\mathfrak{c}}} Z(G(E_1,E_2))$. 

For a lower bound, we see that we have,

\begin{equation}
\begin{aligned}
   & \frac{Z(M(E_1 \cup \{e\}, E_2)}{Z(M(E_1,E_2))} = \frac{Z(TG(E_1,E_2))}{Z(G(E_1,E_2)) + Z(B(E_1,E_2))} \\
   & \ge  (1- \tilde{\mathfrak{c}}) \frac{Z(TG(E_1,E_2)}{Z(G(E_1,E_2))} \ge (1-\tilde{\mathfrak{c}})[\mathcal{D}_{\beta,\kappa}- K^{3} e^{-cK} L].
\end{aligned}
\end{equation}

For an upper bound, we have that,
\begin{equation}
\begin{aligned}
     &\frac{Z(M(E_1 \cup \{e\}, E_2)}{Z(M(E_1,E_2))} \le \frac{Z(TG(E_1,E_2)) }{Z(G(E_1,E_2)} \\
     & \le [\mathcal{D}_{\beta,\kappa} + K^{3}e^{-cK} L].
\end{aligned}
\end{equation}

These are lower and upper bounds of $\mathbb{E}_{\tilde{E}}[\mathbbm{1}[F_{P(e)}]|M(E_1,E_2)]$ regardless of the sets $E_1$ and $E_2$. We remark that these must be lower and upper bounds on $\mathbb{P}_{\tilde{E}}(F_{P(e)})$. By removing the conditioning on $M(E_1,E_2)$ and summing up overall edges in $\gamma$, we  can derive the following bound on $b_3$,
\begin{equation}
    b_3 \le |\gamma| \left|(1-\tilde{\mathfrak{c}})[\mathcal{D}_{\beta,\kappa}- K^{3} e^{-cK} L] -   [\mathcal{D}_{\beta,\kappa} + K^{3}e^{-cK} L]\right|.
\end{equation}

This is also a bound on $|\lambda - |\gamma| \mathcal{D}_{\beta,\kappa}|$, referring to our earlier lower and upper bounds on $\mathbb{P}_{\tilde{E}}(F_{P(e)})$.

Our bounds on $b_1$ and $b_2$ are much simpler. By simple summation, we see that $|\gamma|12 K^4 \mathcal{B}^{12}$ suffices as an upper bound for both $b_1$. This merely uses that the probability of excitation of a single minimal vortex is less than $\Phi_{UB}(P(e))$ and the probability of excitation of two minimal vortices that do not intersect is less than $\Phi_{UB}(P(e) \cup P(e)') = \Phi_{UB}(P(e))^2 \le \mathcal{B}^{12}$. We can then perform the summation over all $e \in \gamma $ and $e'\in B_K(e)$.

Upon reaching here, the calculation of $d_{TV}(\mathcal{L}_{\tilde{E}}(M_{\gamma}),\text{Poisson}(|\gamma| \mathcal{D}_{\beta,\kappa}))$ is the same as it was in the previous Lemma \ref{lem:comparwpoisson}. We use Theorem \ref{thn:PoisApproxGen2} to compute $d_{TV}(\mathcal{L}_{\tilde{E}}(M_{\gamma}), \text{Poisson}(\lambda))$ and apply $d_{TV}(\text{Poisson}(\lambda), \text{Poisson}(|\gamma| \mathcal{D}_{\beta,\kappa}) \le |\lambda - |\gamma| \mathcal{D}_{\beta,\kappa}|$. We get the desired comparision in \eqref{eq:fincompar} by a triangle inequality.
\end{proof}

By
combining this error estimate with the error estimate from Theorem \ref{thm:WilsonLoopReplowdisorder}, we will be able to prove the following main result. The proof is exactly the same as short proof of Theorem \ref{thm:MainThm1} and we will omit the proof.
\begin{thm}\label{thm:MainThm2}
Assume that the conditions of Theorem \ref{thm:WilsonLoopReplowdisorder} hold and Lemma \ref{lem:comparwpoisson2} hold. Let $X$ be a Poisson random variable with parameter $|\gamma| \mathcal{D}_{\beta,\kappa}$. Then, we have the following estimate on the Wilson loop expectation,
\begin{equation}
\begin{aligned}
   & |\mathbb
    {E}_{\tilde{E}}[W_{\gamma}] -\mathbb{E}[\mathcal{D}_{\beta,\kappa}^X]|  \le 12|\gamma|K^4 \mathcal{B}^{12}\\
    &\hspace{1.5 cm}+ 2|\gamma|\left|(1-\tilde{\mathfrak{c}})[\mathcal{D}_{\beta,\kappa}- K^{3} e^{-cK} L] -   [\mathcal{D}_{\beta,\kappa} + K^{3}e^{-cK} L]\right|
      \\&\hspace{1.5 cm}+ |\gamma|^2 \mathcal{B}^{6} L K^{3} e^{-cK} \exp[\mathcal{B}^{6}|\gamma|L]\\
     &\hspace{1.5 cm}+ \frac{|\gamma| \mathcal{B}^{6} [(1+e^{-c'K})L] e^{-c'K}}{\min_{\substack{\eta_v,\phi_v\\ v \in \text{ minimal vortex }}} \mathbb{P}_{K_N}(\eta_v,\phi_v)} \exp[|\gamma| \mathcal{B}^{6}[(1+e^{-c'}K) L]].
\end{aligned}
\end{equation}
As a consequence, we also have that,
\begin{equation}
\begin{aligned}
    &|\mathbb
    {E}[W_{\gamma}] -\mathbb{E}[\mathcal{D}_{\beta,\kappa}^X]|  \le 12|\gamma|K^4 \mathcal{B}^{12} + \mathbb{P}(\tilde{E}^c)\\
    &\hspace{1.5 cm}+ 2|\gamma|\left|(1-\tilde{\mathfrak{c}})[\mathcal{D}_{\beta,\kappa}- K^{3} e^{-cK} L] -   [\mathcal{D}_{\beta,\kappa} + K^{3}e^{-cK} L]\right|
      \\&\hspace{1.5 cm}+ |\gamma|^2 \mathcal{B}^{6} L K^{3} e^{-cK} \exp[\mathcal{B}^{6}|\gamma|L]\\
     &\hspace{1.5 cm}+ \frac{|\gamma| \mathcal{B}^{6} [(1+e^{-c'K})L] e^{-c'K}}{\min_{\substack{\eta_v,\phi_v\\ v \in \text{ minimal vortex }}} \mathbb{P}_{K_N}(\eta_v,\phi_v)} \exp[|\gamma| \mathcal{B}^{6}[(1+e^{-c'}K) L]].
\end{aligned}
\end{equation}

\end{thm}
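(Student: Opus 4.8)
The plan is to imitate, essentially verbatim, the short argument used for Theorem \ref{thm:MainThm1}: produce an optimal coupling of the Poisson variable with $M_\gamma$ and then close up with the triangle inequality, now feeding in Theorem \ref{thm:WilsonLoopReplowdisorder} where Theorem \ref{thm:reductiontominimal} was used and Lemma \ref{lem:comparwpoisson2} where Lemma \ref{lem:comparwpoisson} was used. First I would record the boundedness fact $|\mathcal{D}_{\beta,\kappa}^{k}|\le 1$ for every integer $k\ge 0$. Indeed, each weight $\exp[-12\beta\,\mathrm{Re}[\rho(g)-\rho(1)]]\,X(g)$ entering the definition of $\mathcal{D}_{\beta,\kappa}$ is strictly positive, since the magnetizations $m(\cdot)$ are probabilities and the remaining exponential factors are positive; hence $\mathcal{D}_{\beta,\kappa}$ is a convex combination of the numbers $\{\rho(g):g\ne 1\}$, each of modulus one because $\rho$ is a one-dimensional unitary representation. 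Consequently $|\mathcal{D}_{\beta,\kappa}|\le 1$ and the same bound passes to all powers; also $|W_\gamma|\le 1$.

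For the conditional bound, the next step is to pick, by the maximal coupling, a pair $(X,M_\gamma)$ with $X\sim\mathrm{Poisson}(|\gamma|\mathcal{D}_{\beta,\kappa})$, with $M_\gamma$ distributed as under $\mathbb{P}_{\tilde E}$, and with $\mathbb{P}(X\ne M_\gamma)=d_{TV}(\mathcal{L}_{\tilde E}(M_\gamma),\mathrm{Poisson}(|\gamma|\mathcal{D}_{\beta,\kappa}))$. Since both $|\mathcal{D}_{\beta,\kappa}^{X}|\le 1$ and $|\mathcal{D}_{\beta,\kappa}^{M_\gamma}|\le 1$, this yields $|\mathbb{E}[\mathcal{D}_{\beta,\kappa}^{X}]-\mathbb{E}_{\tilde E}[\mathcal{D}_{\beta,\kappa}^{M_\gamma}]|\le\mathbb{P}(X\ne M_\gamma)$, which is at most the right-hand side of \eqref{eq:fincompar}. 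Then I would add and subtract $\mathbb{E}_{\tilde E}[\mathcal{D}_{\beta,\kappa}^{M_\gamma}]$ inside $|\mathbb{E}_{\tilde E}[W_\gamma]-\mathbb{E}[\mathcal{D}_{\beta,\kappa}^{X}]|$ and apply the triangle inequality, bounding $|\mathbb{E}_{\tilde E}[W_\gamma]-\mathbb{E}_{\tilde E}[\mathcal{D}_{\beta,\kappa}^{M_\gamma}]|$ by \eqref{eq:conditioning2} of Theorem \ref{thm:WilsonLoopReplowdisorder}, in the form noted in the remark there (that is, without the spurious $\mathbb{P}(\tilde E^c)^{-1}$ prefactor). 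Summing the two pieces gives the first displayed inequality.

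The unconditional statement is then obtained from the decomposition $\mathbb{E}[\cdot]=\mathbb{P}(\tilde E)\,\mathbb{E}_{\tilde E}[\cdot]+\mathbb{P}(\tilde E^c)\,\mathbb{E}_{\tilde E^c}[\cdot]$: because $W_\gamma$ and $\mathcal{D}_{\beta,\kappa}^{M_\gamma}$ are bounded by $1$, replacing $\mathbb{E}_{\tilde E}$ by $\mathbb{E}$ costs only an additive term of order $\mathbb{P}(\tilde E^c)$, which is exactly the extra $+\,\mathbb{P}(\tilde E^c)$ in the second inequality; alternatively one may feed \eqref{eq:conditioning3} of Theorem \ref{thm:WilsonLoopReplowdisorder} directly into the triangle inequality and couple $X$ with the unconditioned $M_\gamma$. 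No new estimate is required. The only parts needing care --- and hence the ``main obstacle,'' such as it is --- are the bookkeeping of which error term in the final bound originates from Theorem \ref{thm:WilsonLoopReplowdisorder} versus Lemma \ref{lem:comparwpoisson2}, and the conditional-to-unconditional passage, where one must check that the several $\mathbb{P}(\tilde E^c)$-scale discrepancies (between $\mathbb{E}$ and $\mathbb{E}_{\tilde E}$ for $W_\gamma$, for $\mathcal{D}_{\beta,\kappa}^{M_\gamma}$, and inside the coupling) collapse to the single stated $\mathbb{P}(\tilde E^c)$ term with the implied universal constant. Given the uniform boundedness of all random variables in play, this is entirely routine.
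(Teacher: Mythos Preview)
Your proposal is correct and follows exactly the approach the paper indicates: the paper explicitly states that ``the proof is exactly the same as the short proof of Theorem~\ref{thm:MainThm1} and we will omit the proof,'' i.e.\ maximal coupling of $X$ with $M_\gamma$, the boundedness $|\mathcal{D}_{\beta,\kappa}^k|\le 1$, and a triangle inequality combining Theorem~\ref{thm:WilsonLoopReplowdisorder} with Lemma~\ref{lem:comparwpoisson2}. Your observation about dropping the $\mathbb{P}(\tilde E)^{-1}$ prefactor via the remark (justified by Corollary~\ref{col:condbound2}) and your remark that the conditional-to-unconditional passage may produce a harmless constant in front of $\mathbb{P}(\tilde E^c)$ are both on point.
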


\begin{rmk}
Just as in Remark \ref{rmk:onK}, the main point of the introduction of $K$ is that when $K$ is large enough, we can use $e^{-cK}$ to suppress factors of $|\gamma|$ even when $|\gamma|$ is large. In terms that do not have $e^{-cK}$, we only have a polynomial power of $K$ and can suppress this polynomial power of $K$ with $\exp[-\beta]$ for $\beta$ relatively large. Just as in the aforementioned remark, we will choose $K= O(\beta)$.
\end{rmk}

\appendix
\section{Details on Vertex Decompositions}

In this section, we will describe assorted facts on Higgs boson vertex configurations that would be useful for our decomposition theorem.

The following Lemma details the behaviors of the Higgs boson charges on the complement of the support of a distribution.
\begin{lem} \label{lem:compcharg}
Let $P$ be the support of a configuration $\calC$ of Higgs boson and gauge fields. As previously, we use $V(P)$ and $E(P)$ to denote the set of vertices and edges that are associated to $P$. 
Let the complement $V(P)$ in $V_N$ be divided into separate connected components $V(P)^c = B_1 \cup B_2 \cup \ldots \cup B_N$.  Then the following statements hold true,
\begin{itemize}
    \item There is a single charge $c_i$ such that each vertex $v_i$ in $B_i$ is assigned the Higgs boson charge $c_i$, e.g. $\phi_{v_i} = c_i, \forall v_i \in B_i$. 
    \item If $v$ is a vertex in $V(P)$ that is connected by an edge in $E_N$ to a vertex $v_i \in B_i$, then $\phi_v =c_i$, where $c_i$ is the common color of each vertex in $v_i$. 
\end{itemize}
\end{lem}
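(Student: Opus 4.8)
The plan is to prove both bullet points together by exploiting the definition of support in Definition \ref{def:supportlowdisorder}. The key observation is that an edge $e=(v,w)$ fails to be in $\calC_e$ only when $\phi_v = \phi_w$ \emph{and} $\sigma_e = 1$; in particular, whenever $p$ is not in $\supp(\calC)$, every boundary edge $e=(v,w)$ of $p$ has $\phi_v = \phi_w$. So the first step is to observe that if $e=(v,w)$ is an edge with $v, w \notin V(P)$, then $e \notin E(P)$ (otherwise some plaquette containing $e$ would be in $P$), hence $e \notin \calC_e$, hence $\phi_v = \phi_w$. This immediately shows that $\phi$ is constant on edges internal to each connected component $B_i$, and since $B_i$ is connected as a subgraph of $\Lambda_N$, $\phi$ is constant on all of $B_i$. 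This establishes the first bullet with $c_i$ defined as this common value.

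For the second bullet, I would take a vertex $v \in V(P)$ joined by an edge $e = (v, v_i) \in E_N$ to some $v_i \in B_i$. The edge $e$ cannot be in $E(P)$: if it were, then since $e$ bounds a plaquette $p \in P$, and $v_i$ is a boundary vertex of that plaquette $p$, we would have $v_i \in V(P)$, contradicting $v_i \in B_i \subset V(P)^c$. Therefore $e \notin E(P)$, so in particular $\sigma_e = 1$ (any plaquette bounding $e$ with $\sigma_e \ne 1$ would be in $\supp(\calC) = P$, forcing $e \in E(P)$), and moreover $e \notin \calC_e$, which gives $\phi_v = \phi_{v_i} = c_i$. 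This is exactly the claim.

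The argument is essentially bookkeeping with the definitions, so I do not anticipate a serious obstacle; the one point that needs a little care is the logical structure of ``$e \notin E(P)$.'' Specifically I should be careful that $E(P)$ is defined as the set of edges bounding \emph{some} plaquette of $P$, and that $V(P)$ is the set of vertices that are boundary vertices of \emph{some} plaquette of $P$ — so an edge all of whose endpoints and all of whose bounding plaquettes avoid $P$ cannot be in $E(P)$, and conversely if $e \in E(P)$ then both endpoints of $e$ are in $V(P)$. Once this compatibility between $E(P)$, $V(P)$, and $P$ is spelled out, both bullets follow in one line each. I would also note in passing that this is the lemma invoked in the proof of Lemma \ref{lem:compsplit} to conclude that each set $B_i$ in the complement of $V(P_1 \cup P_2)$ is monocharged.
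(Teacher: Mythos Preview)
Your proposal is correct and follows essentially the same approach as the paper: both argue that an edge whose endpoints carry different Higgs charges lies in $\calC_e$, hence all plaquettes bounding it lie in $P$, forcing both endpoints into $V(P)$; the contrapositive gives both bullets. Your write-up is in fact a bit more explicit about the roles of $E(P)$ and $V(P)$ than the paper's own proof, but the underlying idea is identical.
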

\begin{proof}
We start with the proof of the first item. 

If $B_i$ is not monocharged, then there are two vertices $v_i,w_i$ in $B_i$ that are not the same color as well as a path $p_i$ consisting entirely of vertices in $B_i$ connecting $v_i$ and $w_i$. At least one of the edges,$e_i$ , on this path $p$ must have opposite charges on its neighboring vertices. This would imply that $e_i$ would be in $\supp(\calC)$; furthermore, it would imply that the vertices of this edge would belong in $V(P)$ rather than the complement. This is a contradiction. This proves the first item.

Now, we prove the second item. If $v_i$ is a vertex in $B_i$ and $w$ is a vertex in $V(P)$ adjacent to $v$ that does not have the same charge $c_i$, then  the edge $e=(v_i,w)$ $w$ has the same charge as $v$ would be in $\supp(\calC)$. Thus, the vertex $v$ must be in $V(P)$. This is a contradiction.
\end{proof}

The behavior of the Higgs boson configurations is much like the Ising model. One can imagine regions of monocharged components containing regions of other monocharged components recursively. It is important to understand the relationship between these monocharged components and the support of the configuration. The following lemma details these relationships.

\begin{lem} \label{lem:bndry}
Let $(\phi,\sigma)$ be a configuration $\calC$ of Higgs boson and gauge fields. Let $V$ be a some connected monocharged set(e.g. all the vertices are assigned the same Higgs boson charge) in $V_N$ and define the set $\mathcal{V}(V)$ as follows,
\begin{equation}
    \mathcal{V}(V):= \{w \in V_N: \exists \text{ path } p(v \to w) \text{ s.t. } \forall \text{ vertices } a \in p, \phi_a = \phi_w \}.
\end{equation}
    Namely, $\mathcal{V}(V)$ is the collection of vertices in $V_N(V)$ that can be connected to $V$ with vertices of the same Higgs boson charge as $V$.
    
    We define $\mathcal{E}(V)$ as the set of vertices connecting vertices in $\mathcal{V}$ to its complement in $V_N$.
\begin{equation}
    \mathcal{E}(V):=\{e=(v,w) \in E_N: v \in \mathcal{V}(V), w \in {\mathcal{V}(V)}^c\}.
\end{equation}   
We finally define $\mathcal{P}(V)$ to be the set of plaquettes in $P_N$ that have one of the edges in $\mathcal{E}$ as a boundary edge. Namely,
\begin{equation}
    \mathcal{P}(V):=\{ p \in P_N: \exists e \in \mathcal{E}(V) \text { s.t. } e \in \delta p \}.
\end{equation}

Then, $\mathcal{P}(V)$ is a subset of $\supp(\calC)$. We can call $\mathcal{P}(V)$ the boundary of $\mathcal{V}(V)$.

 Now assume further that there is no vertex in $\mathcal{V}(V)$ that is a boundary vertex of $P_N$. We can make the following statements on the decomposition of $\mathcal{P}(V)$ into connected components.
 
 There is a unique connected component, which we will call the external boundary of $\mathcal{V}(V)$; we will denote this by $EB(\mathcal{V}(V))$ satisfying the following properties.
 \begin{itemize}
     \item $EB(\mathcal{V}(V))$ is connected.
     \item Embed the subset $\mathcal{V}(V)$ into the full lattice $\mathbb{Z}^d$. Consider the connected components of ${\mathcal{V}(V)}^c$ in $\mathbb{Z}^d$ into connected components ${\mathcal{V}(V)}^c = B_1 \cup B_2 \cup \ldots \cup B_N$. Let $B_1$ be the unique non-compact component, so it extends to $\infty$. If we define $\mathcal{E}(B_1)$ to be the set of edges connecting $B_1$ to its complement and $\mathcal{P}(B_1)$ to be the set of plaquettes having at least one edge from $\mathcal{E}(B_1)$ on its boundary. Then, $EB(\mathcal{V}(V)) = \mathcal{P}(B_1)$.
 \end{itemize}
    
\end{lem}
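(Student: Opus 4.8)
The plan is to separate a short colour-chasing argument, which gives the first assertion together with all the bookkeeping needed for the second, from a genuinely topological statement about subsets of $\mathbb{Z}^d$ that is the only nontrivial ingredient.

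\emph{The inclusion $\mathcal{P}(V)\subseteq\supp(\calC)$.} Let $c$ be the common Higgs charge of the monocharged set $V$; by the definition of $\mathcal{V}(V)$ every vertex of $\mathcal{V}(V)$ also carries charge $c$. Pick an edge $e=(v,w)\in\mathcal{E}(V)$, so $v\in\mathcal{V}(V)$ and $w\notin\mathcal{V}(V)$. If $\phi_w$ equalled $c$, then concatenating a same-charge path from $V$ to $v$ with the edge $e$ would witness $w\in\mathcal{V}(V)$, a contradiction; hence $\phi_v=c\neq\phi_w$, so $e\in\calC_e$ in the sense of Definition \ref{def:supportlowdisorder}. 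Every plaquette $p$ with $e\in\delta p$ therefore lies in $\supp(\calC)$, and taking the union over $e\in\mathcal{E}(V)$ gives $\mathcal{P}(V)\subseteq\supp(\calC)$.

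\emph{Reduction of the external-boundary statement.} Embed $\mathcal{V}(V)$ into the full lattice $\mathbb{Z}^d$. Since no vertex of $\mathcal{V}(V)$ lies on the boundary of $\Lambda_N$, the set $\mathcal{V}(V)$ is a finite subset of the interior, so its complement in $\mathbb{Z}^d$ has a unique infinite component; label the components $\mathcal{V}(V)^c=B_1\cup\cdots\cup B_N$ with $B_1$ the infinite one. For each $i$ let $\mathcal{E}(B_i)$ be the set of edges with exactly one endpoint in $B_i$ and $\mathcal{P}(B_i)$ the set of plaquettes having an edge of $\mathcal{E}(B_i)$ on their boundary. Distinct components of $\mathcal{V}(V)^c$ are joined by no edge, so every edge leaving $B_i$ enters $\mathcal{V}(V)$; combined with the definition of $\mathcal{E}(V)$ this yields the disjoint decomposition $\mathcal{E}(V)=\bigsqcup_{i=1}^N\mathcal{E}(B_i)$, hence $\mathcal{P}(V)=\bigcup_{i=1}^N\mathcal{P}(B_i)$, with each $\mathcal{P}(B_i)$ nonempty. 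It therefore suffices to prove: (i) each $\mathcal{P}(B_i)$ is connected in the plaquette graph $G_2$ of Definition \ref{def:graphg2}; and (ii) for $i\neq j$, the sets $\mathcal{P}(B_i)$ and $\mathcal{P}(B_j)$ are disjoint and no plaquette of one is $G_2$-adjacent to a plaquette of the other. Granting (i) and (ii), the $\mathcal{P}(B_i)$ are exactly the $G_2$-connected components of $\mathcal{P}(V)$; we then define $EB(\mathcal{V}(V)):=\mathcal{P}(B_1)$, which is connected and is precisely the set $\mathcal{P}(B_1)$ appearing in the statement.

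\emph{The topological core, and the main obstacle.} Statements (i) and (ii) are instances of the discrete Jordan--Brouwer/digital-topology principle that the dual plaquette surface $\mathcal{P}(B_i)$ of a connected cubical region $B_i\subset\mathbb{Z}^d$ is connected in the $3$-cell adjacency graph, and that the dual surfaces of distinct components do not touch there. This is exactly the combinatorial machinery developed in \cite{SC20} for vortex supports: either $B_i$ or its complement plays the role of a vortex region, the connectedness of $\mathcal{P}(B_i)$ is proven as the connectedness of vortex boundaries there, and (ii) follows from the same cube-by-cube analysis showing that boundaries of disjoint vortices are $G_2$-incompatible (a unit $3$-cell meeting both $B_i$ and $B_j$ would, after inspecting its eight vertex labels, force an edge joining $B_i$ to $B_j$, contradicting $i\neq j$). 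I would import (i)--(ii) from \cite{SC20} rather than reprove them; if a self-contained argument is wanted, the cleanest route to (i) is an induction removing one vertex of $B_i$ at a time while tracking how $\mathcal{P}(B_i)$ is modified, after which (ii) is a finite check on the vertex labellings of a unit $3$-cube. The main obstacle is thus entirely (i): everything else in the lemma is colour-chasing and elementary set bookkeeping.
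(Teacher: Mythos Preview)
The paper does not actually supply a proof of this lemma: after the statement there is only a remark saying the first inclusion holds ``by definition'' and that the second part ``formally states the intuition'' about a single connected external boundary. Your write-up is therefore already more detailed than what the paper offers, and the overall strategy --- colour-chase for $\mathcal{P}(V)\subseteq\supp(\calC)$, then decompose $\mathcal{E}(V)$ along the components $B_i$ of $\mathcal{V}(V)^c$ and isolate the discrete Jordan--Brouwer input --- matches exactly the intuition the paper records. Your identification of (i), the $G_2$-connectedness of $\mathcal{P}(B_1)$, as the genuine topological content is the right diagnosis.

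One caution on your claim (ii). The finite check you sketch (``a unit $3$-cell meeting both $B_i$ and $B_j$ forces an edge between them'') does not go through as stated: a $3$-cell has diameter $3$ in the edge graph, so it can contain a vertex of $B_i$ and a vertex of $B_j$ at cube-distance $2$ or $3$ with only $\mathcal{V}(V)$-vertices on the connecting path, and no $B_i$--$B_j$ edge is produced. In fact even disjointness of $\mathcal{P}(B_i)$ and $\mathcal{P}(B_j)$ can fail at the level of a single plaquette (put $B_i$ and $B_j$ at diagonally opposite corners of a unit square, with the other two corners in $\mathcal{V}(V)$). So the $\mathcal{P}(B_i)$ need not be the connected components of $\mathcal{P}(V)$ in general; some of them may merge. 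This does not damage anything that matters: the only property of $EB(\mathcal{V}(V))$ the paper actually \emph{uses} (in the proof of Lemma~\ref{lem:compsplit}) is that it is a connected subset of $\supp(\calC)$, which is exactly your (i) together with the first paragraph. If you want the literal statement that $\mathcal{P}(B_1)$ is a full connected component, you would need to argue separately that the infinite component $B_1$ cannot come within plaquette/$3$-cell distance of any finite $B_j$ across $\mathcal{V}(V)$; but for the application it is cleaner to simply assert connectedness of $\mathcal{P}(B_1)$ and drop (ii).
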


\begin{rmk}
The statement that $\mathcal{P}$ as we have defined it above is in $\supp{C}$ is by definition. The second part of the above lemma is to formally state the intuition that a set that is monocharged but connected and compact is separated from the outside vertices that do not share its charge by a single connected boundary. All other boundaries of the set $\mathcal{V}$ are internal and separate it from the islands of charge that are internal to $\mathcal{V}$.
\end{rmk}

\textbf{Acknowledgements:} The author would like to thank  Sourav Chatterjee for suggesting this problem as well as for useful advice and Sky Cao for useful discussions.




\bibliographystyle{plain}
\bibliography{references}

\end{document}